\newcommand{\mdframeSkip}{15pt}
\newenvironment{functionality}[1]
{\begin{mdframed}[
    frametitle={#1},
    frametitlerule=true,
    frametitlebackgroundcolor=gray!20,
    splittopskip=\mdframeSkip,
    splitbottomskip=\mdframeSkip,
    startinnercode={
        \begingroup
        \scriptsize
        \baselineskip=0cm
    },
    endinnercode={\endgroup}
]}
{\end{mdframed}}
\newtheorem{lemma}{Lemma}
\newtheorem{definition}{Definition}
\newcommand{\cmt}[1]{{\color{blue}  \textbackslash\textbackslash#1}}
\newcommand{\pcur}{\textsf{pid\textsubscript{cur}}\xspace}
\newcommand{\scur}{\textsf{sid\textsubscript{cur}}\xspace}
\newcommand{\xr}{\textsf{FRoll}\xspace}
\begin{document}
\title{A Security Framework for General Blockchain Layer~2 Protocols}

\author{Zeta Avarikioti}
\affiliation{%
  \institution{TU Wien \& Common Prefix}
  \country{}
}

\author{Matteo Maffei}
\affiliation{%
  \institution{TU Wien}
  \country{}
}

\author{Yuheng Wang}
\affiliation{%
  \institution{TU Wien}
  \country{}
}

\begin{abstract}
Layer 2 (L2) protocols, such as payment channels, sidechains, and rollups, are central to blockchain scalability, enabling off-chain execution while preserving on-chain security guarantees. Despite their growing deployment, existing security models remain largely protocol-specific and monolithic, hindering compositional reasoning and preventing principled comparisons of assumptions and requirements, both at the level of functionality and security.

We present a general security framework for L2 protocols based on the IITM-style Universal Composability (iUC) model. Our framework introduces a modular ideal functionality $\mathcal{F}_{\text{layer2}} $ that abstracts away mechanism-specific details while capturing the essential structure of L2 systems through composable subroutines. These subroutines formalize the core phases of joining, submission, updating, reading, and settlement under adversarial conditions, enabling uniform definitions of key security and performance properties, including safety, liveness, and data availability, across a broad class of L2 protocols.

To demonstrate generality, we instantiate the framework for three representative constructions: the Brick payment channel, the Liquid sidechain, and the Arbitrum Nitro rollup. Each case study yields a protocol-specific ideal functionality derived from $\mathcal{F}_{\text{layer2}}$, tailored to its underlying assumptions. Our analysis (i) establishes security via simulation-based proofs, (ii) exposes inherent trade-offs among safety, liveness, and data-availability, and (iii) derives lower bounds that characterize fundamental limitations of each design class.

Finally, we illustrate the framework’s utility as a design tool by presenting \xr, the first optimistic rollup protocol with fast-finality guarantees, together with a security analysis carried out within our model. This case study demonstrates how our framework supports requirement-driven design of L2 protocols.
\end{abstract}




\maketitle




\section{Introduction}
\label{sec:intro}

Blockchain (also known as a distributed ledger) has attracted significant attention in recent years as a novel decentralized system maintained by nodes such as miners~\cite{bitcoin_devguide_mining} or validators~\cite{ethereum_validator_phase0}, whose core functionality for user clients can be summarized into two types of requests: \emph{submit} transactions to the blockchain and \emph{reading} previously executed on-chain transactions~\cite{graf2021security}. Despite their potential to transform existing centralized infrastructures, blockchains suffer from inherent limitations in scalability, transaction costs, and latency. To mitigate these limitations, a broad class of protocols, commonly referred to as Layer~2 (L2) protocols, has been proposed. L2 protocols act as an auxiliary interface layer that processes these requests off-chain, offloading the workload from the Layer~1 (L1) blockchain and interacting with L1 only when necessary~\cite{gudgeon2020sok}.

L2 protocols have emerged as a fundamental building block for blockchain applications and as the primary interaction interface for blockchain clients. Recent studies~\cite{a16z2024state,wilson2023layer2} report that transaction volume on L2s has already surpassed that of L1. Representative L2 protocol examples include payment channels (e.g., the Bitcoin Lightning Network~\cite{poon2016bitcoin}), rollups (e.g., Optimism~\cite{optimism} and Arbitrum~\cite{bousfield2022arbitrum}), and sidechains (e.g., the Liquid Network~\cite{nick2020liquid} and Polygon~\cite{polygon}). However, this rapid and largely informalized emergence and growth of L2s has introduced two key challenges for blockchain application designers seeking to leverage them.

First, what fundamental security properties should a generic L2 protocol satisfy? Without a clear understanding of these properties, the design of new protocols or the integration of existing ones into larger systems can inadvertently introduce subtle vulnerabilities, stemming from oversights caused by insufficient knowledge. Existing security definitions are typically tailored to specific constructions. As a result, L2 protocols are often proven in isolation, without accounting for their interaction with the underlying blockchain. This gap has contributed to several design flaws in recently proposed L2 protocols~\cite{egger2019atomic,DBLP:journals/iacr/KiayiasL21,DBLP:conf/cans/JourenkoLT20,DBLP:conf/ndss/AumayrMKM23}.

Second, since the primary and shared purpose of L2 protocols is to handle submit and read requests from clients to the L1 blockchain more efficiently, designers face two intertwined questions: how to compare different types of L2 protocols and determine which is most suitable for a given requirement, and how to design new L2 protocols (or modify existing ones) when no current solution meets an application's needs. Although formal security analysis of some L2 protocols exists~\cite{kiayias2020composable,aumayr2021generalized}, these works are largely ad hoc and focus on individual protocols, often entangling abstract security guarantees with real-world implementation details. As a result, the differences between distinct L2 paradigms such as sidechains and rollups are rarely examined within a unified framework, and modifying existing protocols or designing new ones to meet application-specific requirements remains complex and error-prone.

In this work, we close these two gaps by introducing \emph{the first security framework for general type L2 protocols}. 
At the core of our framework lies the \textbf{ideal functionality $\mathcal{F}_{\text{layer2}}$}, which formalizes the key security guarantees that L2 protocols should satisfy ($\mathcal{F}^{\Lambda}_{\text{layer2}}$ for protocol $\Lambda$).
The functionality abstracts over concrete implementation details to capture the essential workflow from the client side common to all L2 constructions:  joining the L2, submitting off-chain requests, updating the state, reading executed requests and L2 states, and exiting the L2.
The requirements for each phase of secure L2 interaction are captured by the checks performed within a dedicated subroutine ($\mathcal{F}_{\text{join}}$, $\mathcal{F}_{\text{submit}}$, $\mathcal{F}_{\text{update}}$, $\mathcal{F}_{\text{read}}$, $\mathcal{F}_{\text{settlement}}$), invoked through a central interface machine $\mathcal{F}_{\text{client}}$ (see Figure~\ref{fig:structure}). 
This decomposition enables formal reasoning about security, performance, and trust assumptions across a wide spectrum of designs, within a composable setting.

To address the first challenge, we formally define four security properties: correct L2 initialization, correct L2 settlement, safety, and liveness; and one efficiency property, data availability, based on our framework. The framework's contribution is not only in providing a shell around protocol-specific subroutine checks that can be trivially realized by the real world protocols, but in setting constraints that no realization can escape: how the subroutines are influenced with each other through $\mathcal{F}_{\text{client}}$'s $\mathsf{internalState}$ and how they interact with the blockchain $\mathcal{F}_{\text{ledger}}$ to satisfy the security properties defined by the framework. Within this fixed structure, the acceptance checks supply the paradigm-specific work, while the framework provides the paradigm-independent reasoning that turns them into the four security properties. For instance, all three case studies realize safety with their different consistency checks, e.g., Brick's state sequence number, Liquid's block predecessor pointer. 

For the second challenge, our framework makes precise the trade-offs among safety, liveness, and data availability across L2 paradigms: payment channels provide instant finality but are sensitive to corruption; rollups inherit L1 safety at the cost of latency and on-chain storage; sidechains trade stronger trust assumptions for performance via internal consensus. While these distinctions have long been recognized informally, we formalize them and show they stem from inherent design limits (Tables~\ref{tab:comparison}–\ref{tab:difference2}, Fig.~\ref{fig:Difference}), proving lower bounds (Theorems~\ref{thm:livenessPS}–\ref{thm:datar}) that characterise the constraints each paradigm faces. Building on these insights, we propose \xr, the first optimistic-rollup protocol to achieve fast finality while preserving rollup safety, demonstrating how the framework's modular structure directly guides the design of L2 protocols tailored to application-specific requirements.

In this work, we focus on payment-style transaction requests, which is the core L2 functionality common to payment channels, sidechains, and rollups, and leave arbitrary smart-contract execution as future work.

\textbf{Summary of Contributions.} In summary, our contributions are as follows:

\begin{itemize}[topsep=2pt, partopsep=0pt, parsep=2pt, itemsep=3pt]

     \item We propose the first general security framework for L2 protocols, formalized as an ideal functionality $\mathcal{F}_{\text{layer2}}$ that captures core L2 protocol interactions and supports composable reasoning with modular subroutines (Sec.~\ref{sec:functionality}).

    \item We apply our framework to simplified core models of three qualitatively distinct L2 protocols: Brick (payment channel), Liquid (sidechain), and Arbitrum Nitro (rollup), demonstrating its expressiveness and generality across paradigms (Sec.~\ref{sec:casestudy}).

    \item We leverage our framework to formalize the design differences across these protocols, analyze and prove fundamental trade-offs among safety, liveness, and data availability, showing that these trade-offs arise inherently from specific subroutine instantiations (Sec.~\ref{sec:insight}).

    \item Building on the insights gained from the case studies and comparative analysis, we present and formally analyze a new optimistic-rollup protocol, \xr, which is the first to achieve fast finality while preserving safety (Sec.~\ref{sec:design}).
\end{itemize}

\section{Preliminaries}

\subsection{System Model and Assumptions.}

\textbf{Protocol Participants.} In practice, despite varying terminology across protocols, L2 protocols generally involve three types of core roles (a single participant may take over multiple roles):
\begin{itemize}
\item \textbf{Client:} joins the L2 protocol, submits execution requests to update the off-chain state, settles the L2 state back to L1, and queries the current L2 state.
\item \textbf{Operator:} (also referred to as channel owner, sequencer, or maintainer) orders and processes client requests and manages off-chain state transitions. Also, help the client with the specific steps like protocol joining and state settlement.
\item \textbf{Third party:} auxiliary participants such as \emph{watchtowers}, which monitor L1 on behalf of clients, and \emph{verifiers}, which validate state transitions executed off-chain.
\end{itemize}

We assume static, publicly known sets of operators and third parties as global protocol parameters. For each protocol, we assume at least one honest client, so that the protocol is meaningful, and assume its operator/verifier corruption thresholds hold throughout the corresponding security analysis. All participants are modeled as PPT interactive Turing machines.

\textbf{Cryptographic and Communication.} We assume the existence of ideal functionalities for EUF-CMA signatures, denoted $\mathcal{F}_{\text{sig}}$, and for communication and time-related behavior, denoted $\mathcal{F}_{\text{com}}$, both defined in Appendix~\ref{apdx:sigcom}. We note that different protocols may rely on different off-chain communication network assumptions, which are reflected in the corresponding definition of $\mathcal{F}_{\text{com}}$.

\textbf{L1 blockchain.} The underlying L1 blockchain is modeled as a global ideal functionality $\mathcal{F}_{\text{ledger}}$, instantiated following the general ledger definition of Graf et al.~\cite{graf2021security}. This functionality maintains a totally ordered list of transactions and captures both UTXO-based blockchains such as Bitcoin and account-based blockchains with smart-contract support such as Ethereum. It is parameterized by subroutines that instantiate protocol-specific security properties such as consistency and liveness, thereby subsuming existing blockchain functionalities and their realizations. $\mathcal{F}_{\text{ledger}}$ exposes an L1 client interface $\mathcal{F}_{\text{client\textsubscript{L1}}}$ with \emph{SubmitL1} to submit a transaction and \emph{ReadL1} to read committed transactions and their execution state. An excerpt of the functionality and further explanation are provided in Appendix~\ref{appdx:ledger}.

\textbf{The iUC Framework.} Our framework is built on the iUC framework~\cite{camenisch2019iuc}, which is based on IITM~\cite{kusters2020iitm} and extends Canetti's UC~\cite{canetti2001universally} with a modular structure well suited to modeling real-world protocols, ideal functionalities, and hybrid settings. A protocol $\mathcal{P}$ or ideal functionality $\mathcal{F}$ is specified as a system of interactive Turing machines (ITMs), e.g., \(\mathcal{P}=\{\mathcal{M}_1 | \ldots,\mathcal{M}_n\}\), \(\mathcal{F}=\{\mathcal{M}'_1 | \ldots,\mathcal{M}'_m\}\), where each machine implements one or more \emph{roles} declared as $(\text{role}_{\mathrm{pub},1},\ldots \mid \text{role}_{\mathrm{priv},1},\ldots)$. Every machine exposes \emph{I/O interfaces} for caller–subroutine interaction with other protocol machines and \emph{NET connections} for interaction with the adversary $\mathcal{A}$ (or the simulator $\mathcal{S}$ in the ideal world). I/O interfaces of \emph{public} roles are accessible to other protocols and the environment $\mathcal{E}$, while those of \emph{private} roles are restricted to internal use; protocols can thus be composed only through matching public roles. We write $\mathcal{P} \leq \mathcal{F}$ to denote that $\mathcal{P}$ iUC-realizes $\mathcal{F}$, i.e., for every adversary $\mathcal{A}$ there exists a simulator $\mathcal{S}$ such that $\{\mathcal{E},\mathcal{A},\mathcal{P}\}$ and $\{\mathcal{E},\mathcal{S},\mathcal{F}\}$ are indistinguishable for all $\mathcal{E}$. The composition theorem then permits replacing $\mathcal{F}$ with $\mathcal{P}$ in any higher-level protocol. In our proofs, we let $\mathcal{E}$ subsume the network adversary, yielding an equivalent but simpler formulation.

In this paper, each ideal functionality instance models a single session (\textit{sid}) and manages all parties ($\textit{pid}$) within it through a shared $\mathsf{internalState}$  that is initialized with protocol-specific predefined content, whereas in the real protocol each machine instance corresponds to a single party, reflecting distributed execution. We write $(\textit{pid}_{\text{cur}}, \textit{sid}_{\text{cur}}, \text{role}_{\text{cur}})$ to denote the currently active entity. By default, a machine accepts all incoming messages carrying the same \textit{sid}, as determined by its \textbf{CheckID} function. Upon corruption, the functionality releases only the leakage specified in its \textbf{Corruption behavior} block. A brief introduction to iUC is provided in Appendix~\ref{apdx:iUC}.

\section{The Ideal Functionality for L2 Protocols}
\label{sec:functionality}
In this section, we formalize the core contribution of this work: an ideal functionality $\mathcal{F}_{\text{layer2}}$ that captures the essential behavior and security properties of general L2 protocols. This functionality is defined within the iUC framework and is composed of modular subroutines that correspond to the phases of a L2 protocol’s lifecycle. The overall structure is depicted in Fig.~\ref{fig:structure}.


\begin{figure}[h]  
    \centering
    \includegraphics[width=\columnwidth]{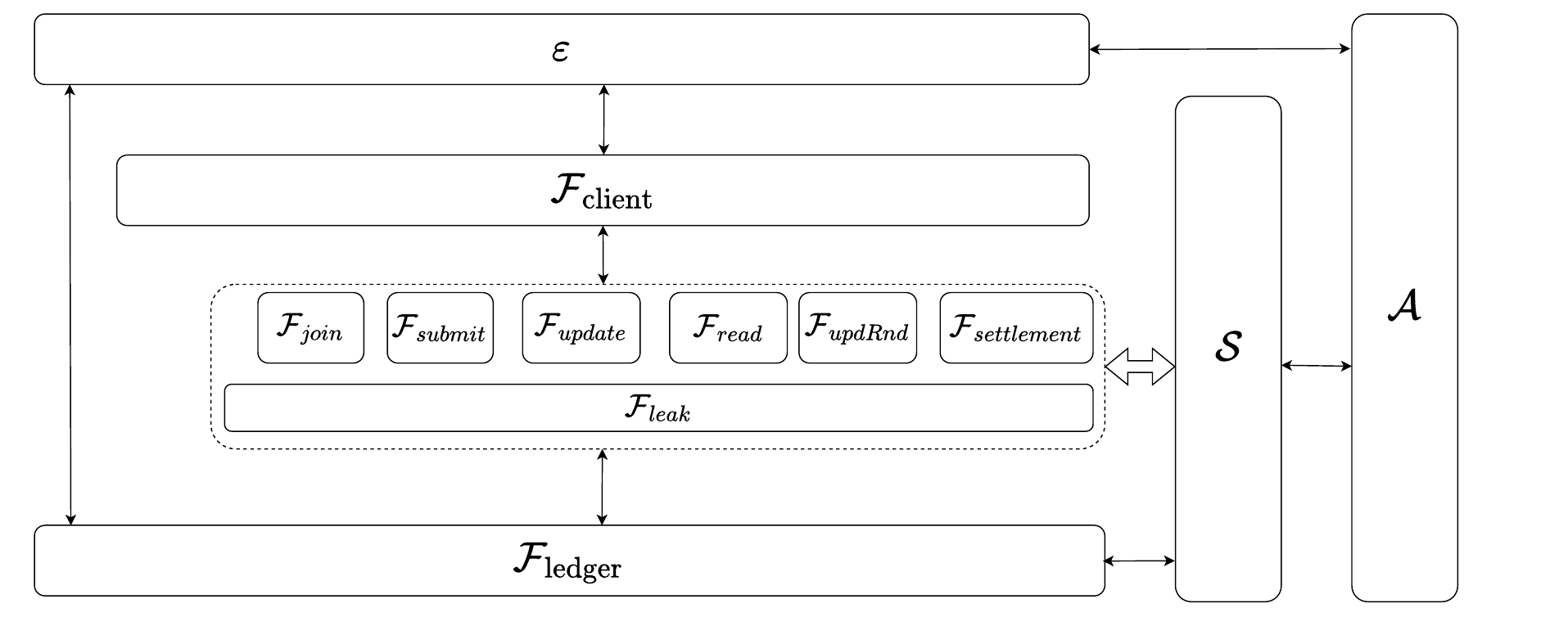}
    \caption{Structure of the ideal functionality $\mathcal{F}_{\text{layer2}}$. $\mathcal{E}$ refers to the environment, $\mathcal{S}$ refers to the simulator and $\mathcal{A}$ refers to the adversary.}
    \label{fig:structure}
\end{figure}

{\textbf{Design Rationale.}}
The ideal functionality $\mathcal{F}_{\text{layer2}}:=(\mathcal{F}_{\text{client}},\\\mathcal{F}_{\text{ledger}}\mid \mathcal{F}_{\text{submit}},\allowbreak\mathcal{F}_{\text{join}},\allowbreak\mathcal{F}_{\text{update}},\allowbreak\mathcal{F}_{\text{read}},\allowbreak\mathcal{F}_{\text{settlement}},\allowbreak\mathcal{F}_{\text{updRnd}},\allowbreak\mathcal{F}_{\text{leak}})$ is designed to capture the core phases of interaction in a broad class of L2 protocols, from payment channels (factories) to sidechains and rollups. Its structure reflects a minimal and modular decomposition of protocol behavior that abstracts away implementation-specific details while preserving the key security and performance considerations common to L2 designs. Each subroutine models a fundamental aspect of L2 operation: $\mathcal{F}_{\text{join}}$ models the act of joining the L2 protocol, such as opening a channel or joining a rollup or sidechain; $\mathcal{F}_{\text{submit}}$ captures what off-chain requests or transactions can be issued by clients; $\mathcal{F}_{\text{update}}$ formalizes how state transitions are agreed upon and executed: whether via unanimous agreement (channels), consensus (sidechains), or sequencer output (rollups); $\mathcal{F}_{\text{read}}$ reflects the accessibility of the L2 executed requests and states, modeling different data availability assumptions; $\mathcal{F}_{\text{settlement}}$ abstracts the final anchoring of L2 state back to the L1 blockchain; $\mathcal{F}_{\text{updRnd}}$ simulates the internal protocol clock to capture communication network; $\mathcal{F}_{\text{leak}}$ defines the information leakage during each protocol phase resulting from corruption.

\subsection{$\mathcal{F}_{\text{client}}$ description and subroutines overview}

At its core, the functionality \(\mathcal{F}_{\text{client}}\) specifies the client interface machine that connects to the external environment \(\mathcal{E}\) and mediates how the protocol returns outputs to higher-level protocols via parameterized subroutines. Importantly, the client interface does not generate outputs on its own, it delegates all request handling and output generation to the appropriate subroutines and replies to \(\mathcal{E}\). In this paper, we focus on the L2 client role, though additional roles can be incorporated as needed. The core logic of the client interface machine in \(\mathcal{F}_{\text{client}}\) is shown in Figure~\ref{fig:interface}.
\vspace{.5em}

\begin{functionality}{Description of functionality $\mathcal{F}_{\text{client}} = (\text{client})$}{

\textbf{Participating roles:} \{client\}

\noindent \textbf{Corruption model:} dynamic corruption, at least one honest
}\end{functionality}
\vspace{.5em}
\begin{functionality}{Description of $\mathcal{M}_{\text{client}}$ of functionality $\mathcal{F}_{\text{client}}$}

{

\textbf{Implemented role(s):} \{client\}

\noindent\textbf{Subroutines:} $\mathcal{F}_{\text{submit}}$: submit, $\mathcal{F}_{\text{join}}$: join, $\mathcal{F}_{\text{update}}$: update, $\mathcal{F}_{\text{read}}$: read, $\mathcal{F}_{\text{settlement}}$: settlement, $\mathcal{F}_{\text{updRnd}}$: updRnd, $\mathcal{F}_{\text{leak}}$: leak

\noindent \textbf{Internal state:} (initial values are protocol-specific)
\begin{itemize}
    \item $\mathsf{round} \in \mathbb{N}_{\geq 0}$ \hfill\cmt{Inner clock round}
    \item $\mathsf{requestQueue} \subset \{0,1\}^{*}$ \hfill\cmt{Submitted unexecuted requests}
    \item $\mathsf{executedRequest} \subset \{0,1\}^{*}$ \hfill\cmt{Executed requests}
    \item $\mathsf{stateList} \subset \{0,1\}^{*}$ \hfill\cmt{L2 state list}
    \item $\mathsf{onchainState} \subset \{0,1\}^{*}$ \hfill\cmt{L1 committed state}
    \item $\mathsf{identities} \subset \{0,1\}^{*}$ \hfill\cmt{Joined identities}
\end{itemize}


\noindent \textbf{CheckID}(\emph{pid}, \emph{sid}, \emph{role}): Accept all messages with the same \emph{sid}.

\noindent \noindent \textbf{Corruption behavior:}\hfill\cmt{Can be further defined with $\mathcal{F}_{\text{leak}}$}
\begin{itemize}
    \item \textbf{DetermineCorrStatus}(\emph{pid}, \emph{sid}, \emph{role}): Return $\mathsf{corr}$.
    \item \textbf{LeakedData}(\emph{pid}, \emph{sid}, \emph{role}): Return $\mathsf{internalState}$. \cmt{Under the plaintext assumption, all $\mathsf{internalState}$ is leaked to $\mathcal{S}$}
\end{itemize}


\vspace{0.5em}
\noindent \textbf{Main:}

\vspace{0.5em}

\textbf{recv} \{Submit, $\mathit{request}$\} \textbf{from} I/O:\hfill\cmt{Input requests from $\mathcal{E}$}

\begin{enumerate}[itemsep=0.5em]

\item \textbf{send} \{Submit, $\mathit{request}$, $\mathsf{internalState}$\} \textbf{to} $(\pcur, \scur, \mathcal{F}_{\text{submit}}:\text{submit})$;
\textbf{wait for} \{Submit, $\mathit{response}$\} s.t. $\mathit{response} \in \{\text{true}, \text{false}\}$;

\item \textbf{if} $\mathit{response} =$ true: $\mathsf{requestQueue}.\text{add}(\mathit{request})$, 
\textbf{send} $\mathit{request}$ \textbf{to} $\mathcal{S}$ via NET;
\end{enumerate}

\hrule
\vspace{0.5em}

\textbf{recv} \{Join, $\mathit{Attachment}$\} \textbf{from} NET:\hfill\cmt{Join notification from $\mathcal{S}$ for subroutine checks}

\begin{enumerate}[itemsep=0.5em]

\item \textbf{send} \{Join, $\mathit{Attachment}$, $\mathsf{internalState}$\} \textbf{to} $(\pcur, \scur, \mathcal{F}_{\text{join}}:\text{join})$;
\textbf{wait for} \{Join, $\mathit{response}$, $s_{\mathit{init}}$\} s.t. $\mathit{response} \in \{\text{true}, \text{false}\}$;

\item \textbf{if} $\mathit{response} =$ true: update $\mathsf{internalState}$ according to $\mathit{Attachment}$, 
\textbf{reply} \{Join, $s_{\mathit{init}}$\} to $\mathcal{E}$ via I/O;

\end{enumerate}

\hrule
\vspace{0.5em}

\textbf{recv} \{Update, $\mathit{Attachment}$\} \textbf{from} NET:\hfill\cmt{Update notification from $\mathcal{S}$ for subroutine checks}

\begin{enumerate}[itemsep=0.5em]
    \item \textbf{send} \{Update, $\mathit{Attachment}$, $\mathsf{internalState}$\} \textbf{to} $(\pcur, \scur, \mathcal{F}_{\text{update}}:\text{update})$,
    \textbf{wait for} \{Update, $\mathit{response}$, $\mathit{newState}$, $\mathit{executedReq}$\};

    \item \textbf{if} $\mathit{response} =$ true: update $\mathsf{internalState}$ according to the reply and $\mathit{Attachment}$;
\end{enumerate}

\hrule
\vspace{0.5em}

\textbf{recv} \{Read\} \textbf{from} I/O:\hfill\cmt{Read request from $\mathcal{E}$}

\begin{enumerate}[itemsep=0.5em]

\item \textbf{send} \{Read, $\mathsf{internalState}$\} \textbf{to} $(\pcur, \scur, \mathcal{F}_{\text{read}}:\text{read})$,
\textbf{wait for} $\mathit{ReadResult}$;

\item \textbf{if} $\mathit{ReadResult} \neq \bot$: \textbf{reply} \{Read, $\mathit{ReadResult}$\} to $\mathcal{E}$ via I/O;

\end{enumerate}

\hrule
\vspace{0.5em}

\textbf{recv} \{Settlement, $\mathit{Attachment}$\} \textbf{from} NET:\hfill\cmt{Settlement notification from $\mathcal{S}$ for subroutine checks}

\begin{enumerate}[itemsep=0.5em]

\item \textbf{send} \{Settlement, $\mathit{Attachment}$, $\mathsf{internalState}$\} \textbf{to} $(\pcur, \scur,\\ \mathcal{F}_{\text{settlement}}:\text{settlement})$,
\textbf{wait for} \{Settlement, $\mathit{response}$, $s_{\mathit{settle}}$\} s.t. $\mathit{response} \in \{\text{true}, \text{false}\}$;

\item \textbf{if} $\mathit{response} =$ true: $\mathsf{onchainState} \leftarrow s_{\mathit{settle}}$,
\textbf{reply} \{Settlement, $s_{\mathit{settle}}$\} to $\mathcal{E}$ via I/O;
\end{enumerate}

\hrule
\vspace{0.5em}

\textbf{recv} \{UpdateRound\} \textbf{from} NET:\hfill\cmt{Round update notification from $\mathcal{S}$ for subroutine checks}

\begin{enumerate}[itemsep=0.5em]

\item \textbf{send} \{UpdateRound, $\mathsf{internalState}$\} \textbf{to} $(\pcur, \scur, \mathcal{F}_{\text{updRnd}}:\text{updRnd})$;
\textbf{wait for} \{UpdateRound, $\mathit{response}$\} s.t. $\mathit{response} \in \{\text{true}, \text{false}\}$;

\item \textbf{if} $\mathit{response} =$ true: $\mathsf{round} \leftarrow \mathsf{round} + 1$;

\end{enumerate}

\hrule
\vspace{0.5em}

\textbf{recv} \{GetCurRound\} \textbf{from} I/O:\hfill\cmt{Round read request from $\mathcal{E}$}

\begin{enumerate}[itemsep=0.5em]
    \item \textbf{reply} \{GetCurRound, $\mathsf{round}$\} to $\mathcal{E}$ via I/O;;
\end{enumerate}




}\end{functionality}
\begin{minipage}{\columnwidth}
    \captionof{figure}{The main logic of $\mathcal{F}_{\text{client}}$ for requests handling. The update of $\mathsf{internalState}$ according to $\mathit{Attachment}$ is protocol specific.} 
    \label{fig:interface}
\end{minipage}
\vspace{1em}

Unlike $\mathcal{F}_{\text{client}}$, which captures the interaction logic common to all L2 protocol types, the subroutine functionalities may vary depending on the underlying scheme employed by a specific protocol. In what follows, we provide a general description of how these subroutines operate based on the requests they receive. We note that all ideal subroutines are assumed to be incorruptible.

\textbf{Input request communication.} The environment $\mathcal{E}$ may instruct an honest party to propose different types of execution requests during its participation in the protocol. Three types of requests are accepted: (1) protocol joining (channel opening), (2) state transition, and (3) state settlement. Upon receiving such a request, $\mathcal{F}_{\text{client}}$ forwards it to the subroutine $\mathcal{F}_{\text{submit}}$, which verifies whether the submitted request is valid and falls within one of the three accepted categories. Based on these checks, $\mathcal{F}_{\text{submit}}$ returns a boolean value indicating whether the submission is accepted. $\mathcal{F}_{\text{client}}$ then updates $\mathsf{requestQueue}$ in $\mathsf{internalState}$ accordingly.

\textbf{Protocol joining (channel opening).} The parameter subroutine $\mathcal{F}_{\text{join}}$ checks with the underlying L1 blockchain $\mathcal{F}_{\text{ledger}}$ whether the requirements for joining a protocol or opening a channel are satisfied. If the checks pass, $\mathcal{F}_{\text{client}}$ updates $\mathsf{requestQueue}$, $\mathsf{executedReq}\\\mathsf{uest}$, $\mathsf{onchainState}$, and $\mathsf{identities}$ in $\mathsf{internalState}$ accordingly, and the corresponding output is delivered to the environment $\mathcal{E}$. To capture adversarial behavior such as ignoring requests or tampering with message delivery, we let the simulator initiate the checking procedure by sending a request to $\mathcal{F}_{\text{client}}$ via the NET connection. This request includes an \emph{Attachment} containing the initial state and other auxiliary data required for the subroutine checks and for producing outputs, such as transactions to be committed on-chain.

\textbf{Protocol state update.} Similar to the joining procedure, upon receiving an update request from the simulator through the NET connection, $\mathcal{F}_{\text{client}}$ forwards it to the subroutine $\mathcal{F}_{\text{update}}$ along with the current $\mathsf{internalState}$. $\mathcal{F}_{\text{client}}$ then waits for a response from $\mathcal{F}_{\text{update}}$, which includes the executed requests and the updated state after checking. $\mathcal{F}_{\text{client}}$ subsequently updates $\mathsf{internalState}$: the $\mathsf{stateList}$ is updated according to the execution result, $\mathsf{requestQueue}$ is updated to reflect which previously submitted requests have been executed, and the evidence for execution is recorded in $\mathsf{executedRequest}$. By customizing the checks within $\mathcal{F}_{\text{update}}$, our framework accommodates a wide range of mechanisms used by different L2 protocols, such as consensus-based agreement in sidechain protocols or L1 interactions in rollup protocols.

\textbf{L2 state settlement.} The environment could instruct a client to exit or close the L2 protocol, which requires the final settlement of the L2 state on the L1 blockchain. After the settlement request is accepted through checks in $\mathcal{F}_{\text{submit}}$, the functionality waits for the simulator to trigger the settlement subroutine $\mathcal{F}_{\text{settlement}}$. This subroutine accesses $\mathcal{F}_{\text{ledger}}$ to verify whether the on-chain state published by participants is consistent with $\mathsf{internalState}$. If the settlement is successful, $\mathcal{F}_{\text{client}}$ updates $\mathsf{onchainState}$ in $\mathsf{internalState}$ and generates the corresponding output to the environment $\mathcal{E}$.

\textbf{Reading L2 information.} a L2 protocol should support read requests that allow clients to query the necessary L2 information, such as participants' states and the executed requests related to state transitions. Upon receiving a read request, $\mathcal{F}_{\text{client}}$ forwards it to the subroutine $\mathcal{F}_{\text{read}}$ along with the identity of the requester and the current $\mathsf{internalState}$. $\mathcal{F}_{\text{read}}$ then determines which information from $\mathsf{onchainState}$, $\mathsf{stateList}$ and $\mathsf{executedRequest}$ the requester is able to access, based on the scheme employed by the protocol and the adversarial influence specified in $\mathcal{F}_{\text{read}}$.

\textbf{L2 clock simulation.} To capture time-related assumptions in L2 protocols, such as bounded message delivery in synchronous off-chain communication or periodic L1 publication by rollup operators, our framework introduces $\mathcal{F}_{\text{updRnd}}$, which simulates an internal clock for the L2 protocol. We note that $\mathcal{F}_{\text{updRnd}}$ can also be defined to access $\mathcal{F}_{\text{ledger}}$ when certain timing requirements are related to L1 blockchain like block number.

\textbf{Information leakage.} For completeness, we introduce the subroutine $\mathcal{F}_{\text{leak}}$ to capture information leaked to the adversary upon corruption, as defined in \textbf{Corruption behavior}. This subroutine is primarily relevant for analyzing protocols with specific privacy properties, which typically rely on dedicated cryptographic primitives rather than the functional logic across L2 designs that is our focus. We therefore assume that parties \emph{have no secret inputs} and receive commands only from the higher-level protocol or the environment, so that the simulation challenge reduces to handling malicious deviations rather than hidden information. Under this assumption, all protocol messages and requests are transmitted in plaintext, and $\mathcal{F}_{\text{leak}}$ simply leaks all information to the simulator; it is therefore not specifically instantiated in our case studies.

\subsection{Security Properties}
\label{sec:def}

While the subroutine functionalities of our framework can be customized to cover the diversity of L2 protocols, any ideal functionality realized by an L2 protocol must enforce a set of fundamental security properties through these subroutines. In what follows, we formally define these properties with the ideal functionality subroutines from the client's perspective, independent of the real-world underlying scheme employed by the protocol.

\textbf{Correct L2 initialization.} A fundamental prerequisite for the subsequent execution of an L2 protocol is the correct initialization of the L2 protocol state after a client requests to join the protocol. In particular, the checks within the subroutine $\mathcal{F}_{\text{join}}$ should guarantee two properties: (1) the initial state is proposed by all relevant honest clients, and (2) a corresponding transaction and state have been committed on the L1 blockchain. Assuming the environment instructs an honest client $c_h$ to join the protocol with an initial state $s_{\text{init}}$, we define the following security property to formally capture the guarantees required for $\mathcal{F}_{\text{join}}$:

\begin{definition}[Correct L2 Initialization]
An ideal functionality $\mathcal{F}^{\Lambda}_{\text{layer2}}$ for an L2 protocol $\Lambda$ built on a secure L1 blockchain $\mathcal{F}_{\text{ledger}}$ satisfies \emph{correct L2 initialization} if its subroutine $\mathcal{F}^{\Lambda}_{\text{join}}$ outputs $\{\text{true}, s_{\text{init}}\}$ to an honest client $c_h$ only when:
\begin{itemize}
  \item The initial L2 state $s_{\text{init}}$ was previously submitted as a request by an honest client $c_h$.
  \item The initial L2 state $s_{\text{init}}$ is committed as a valid state in the output of $\mathcal{F}_{\text{ledger}}$ through the \emph{ReadL1} request.
\end{itemize}
\end{definition}

\textbf{$f$-safety.}
The \emph{safety} of an L2 protocol ensures that, under certain corruption, honest clients observe a consistent order of correctly executed valid requests. The corruption tolerance $f$ is intrinsically a parameter of the real-world realization $\Lambda$ and hard to capture in ideal functionality. However, our framework compositionally separates safety into two sources, using the iUC composition theorem rather than per-protocol arguments. \emph{L1-based safety} is inherited from blockchain $\mathcal{F}_{\text{ledger}}$ under L1 participants corruption, denoted as $f_{L_1}$. \emph{L2-internal safety} is enforced by $\mathcal{F}^{\Lambda}_{\text{update}}$, which ensures each new state and executed-request is consistent with the previous one (e.g.,~via sequence numbers or block pointers) before $\mathcal{F}^{\Lambda}_{\text{read}}$ derives the read result; this holds under L2-side corruption $f_{L_2}$. Letting $\mathcal{R}^{c_h}_r$ denote the read output of correctly executed requests for honest client $c_h$ at round $r$, we define $f$-safety as follows.

\begin{definition}[$f$-Safety]
\label{def:safety}
A real-world L2 protocol $\Lambda$ that iUC-realises an ideal functionality $\mathcal{F}^{\Lambda}_{\text{layer2}}$ over $\mathcal{F}_{\text{ledger}}$ satisfies \emph{$f$-safety} if, under at most $f$ corrupted participants (whether at L1 or L2), for every honest client $c_h$ the correct read output $\mathcal{R}^{c_h}_{r}$ returned at round $r$ by $\mathcal{F}^{\Lambda}_{\text{read}}$ satisfies the following common-prefix property, denoted by $\preceq$:
\begin{itemize}
  \item \textbf{Self-consistency:} For any honest client $c_h$ and any rounds $r_1 \le r_2$, $\mathcal{R}^{c_h}_{r_1} \preceq \mathcal{R}^{c_h}_{r_2}$.
  \item \textbf{View-consistency:} For any two honest clients $c_{h1}, c_{h2}$ and any round $r$, $\mathcal{R}^{c_{h1}}_{r} \preceq \mathcal{R}^{c_{h2}}_{r}$ or $\mathcal{R}^{c_{h2}}_{r} \preceq \mathcal{R}^{c_{h1}}_{r}$.
\end{itemize}
\end{definition}

 \textbf{Correct L2 settlement.} As the final step of a client's life cycle in an L2 protocol, \emph{correct L2 settlement} must be ensured to guarantee that the latest L2 state is eventually committed to the L1 blockchain. In our framework, the completion of the settlement procedure is indicated by the output $\{\text{true}, s^{c_h}_n\}$ from the subroutine $\mathcal{F}_{\text{settlement}}$, where $s^{c_h}_n$ denotes the latest state of an honest client $c_h$ as recorded in $\mathsf{stateList}$ at the time of settlement. We define this property as follows:
\begin{definition}[Correct L2 Settlement]
\label{def:LSC}
An ideal functionality $\mathcal{F}^{\Lambda}_{\text{layer2}}$ for an L2 protocol $\Lambda$ built on a secure L1 blockchain $\mathcal{F}_{\text{ledger}}$ satisfies \emph{correct L2 settlement} if its subroutine $\mathcal{F}^{\Lambda}_{\text{settlement}}$ outputs $\{\text{true}, s^{c_h}_n\}$ to an honest client $c_h$ only when:
\begin{itemize}
  \item The state $s^{c_h}_n$ is the latest state for $c_h$ recorded in $\mathsf{stateList}$.
  \item The state $s^{c_h}_n$ is committed as a valid state in the output of $\mathcal{F}_{\text{ledger}}$ through a \emph{ReadL1} request.
\end{itemize}
\end{definition}

\textbf{$\{f, T\}$-liveness.}
The \emph{liveness} of an L2 protocol ensures that, under bounded corruption, every valid request submitted by an honest client is executed and made visible via a read within a latency bound $T$. As with safety, $f$ cannot be fully captured by the ideal functionality, and the additional difficulty for liveness is that the L2 inner clock update and message delivery are also adversary-scheduled. Our framework distinguishes two sources of the latency bound. \emph{L2-internal latency} $T_{L_2}$ is the aggregate latency of L2 execution and off-chain communication. We use it to represent both a bounded delay under synchrony, or fully controlled by the adversary under asynchrony. \emph{L1-based latency} $T_{L_1}$, the aggregate latency of submitting and committing a transaction on $\mathcal{F}_{\text{ledger}}$ and having it observed by other honest participants. If a fraud-proof scheme is used as the L1 commitment rule for certain transactions, the challenge period \(T_{\text{challenge}}\) is added to \(T_{L_1}\). A single operation may incur one or more such interactions, so we write $T$ as a sum.  $\mathcal{F}^{\Lambda}_{\text{read}}$ and $\mathcal{F}^{\Lambda}_{\text{updRnd}}$ ensure timely visibility of executed requests once these bounds are met. We define $\{f, T\}$-liveness as follows.

\begin{definition}[$\{f, T\}$-Liveness]
\label{def:liveness}
A real-world L2 protocol $\Lambda$ that iUC-realises an ideal functionality $\mathcal{F}^{\Lambda}_{\text{layer2}}$ over $\mathcal{F}_{\text{ledger}}$ satisfies $\{f, T\}$-\emph{liveness}, with $T$ expressed in terms of  $T_{L_1}$, $T_{L_2}$, and optionally $T_{\text{challenge}}$, if, under at most $f$ corrupted participants (whether at L1 or L2), for any valid request $q$ submitted by an honest client $c_h$ at time $t$, the subroutines $\mathcal{F}^{\Lambda}_{\text{updRnd}}$ and $\mathcal{F}^{\Lambda}_{\text{read}}$ jointly guarantee that at any time $t' \geq t + T$, $q \in \mathsf{executedRequest}$ or corresponding output is returned to $c_h$.
\end{definition}

After formally defining the security properties, we can propose the definition for a secure L2 protocol:

\begin{definition}[Secure L2 protocol]
    a L2 protocol $\Lambda$ is secure if and only if its real-world protocol $\mathcal{P}^{\Lambda}$ iUC-realizes the ideal functionality 
$\mathcal{F}^{\Lambda}_{\text{layer2}} = (\mathcal{F}_{\text{client}} \allowbreak, \mathcal{F}_{\text{ledger}} \mid \allowbreak
\mathcal{F}^{\Lambda}_{\text{join}}, \allowbreak
\mathcal{F}^{\Lambda}_{\text{submit}}, \allowbreak
\mathcal{F}^{\Lambda}_{\text{update}}, \allowbreak
\mathcal{F}^{\Lambda}_{\text{read}}, \allowbreak
\mathcal{F}^{\Lambda}_{\text{settlement}}, \allowbreak
\mathcal{F}^{\Lambda}_{\text{updRnd}})$, whose subroutines satisfy \emph{correct initialization}, \emph{safety}, \emph{correct L2 settlement} and \emph{liveness}. 
\end{definition}

\subsection{Efficiency Property}




While different protocols may realize the same security properties, they can differ significantly in efficiency: storage, communication complexity, etc. We focus on \emph{$(G_{L_2}, G_{L_1})$-data availability}, a metric capturing the \emph{minimal} L2 and L1 storage required to maintain security properties. Let $f_{st}(s, e) = s'$ denote the deterministic state-transition function deriving $s'$ from a prior state $s$ and executed requests $e$. Under data-availability constraints, $\mathcal{F}^{\Lambda}_{\text{read}}$ must return enough information to reconstruct the current state: the initial L1-committed state $s_{L1}$, the current state $s'$ at round $r$, and executed requests with evidence $e$ such that $f_{st}(s_{L1}, e) = s'$. We formalize this property as follows:

\begin{definition}[$(G_{L_2}, G_{L_1})$-Data Availability]
An L2 protocol $\Lambda$ built on a secure L1 blockchain $\mathcal{F}_{\text{ledger}}$ realizes \emph{$(G_{L_2}, G_{L_1})$-data availability} if, for every honest client $c_h$ at every round $r$, the data set $\{e, s_{L1}, s'\}$ is accessible via $\mathcal{F}^{\Lambda}_{\text{read}}$ and satisfies $f_{st}(s_{L1}, e) = s'$, while the storage required on L2 and L1 is lower-bounded by $G_{L_2}$ and $G_{L_1}$, respectively.
\end{definition}




\section{Case Studies in L2 Protocols}
\label{sec:casestudy}

After defining secure L2 protocols and their properties, we demonstrate how our framework captures three major L2 paradigms, through case studies of Brick, Liquid, and Arbitrum Nitro (full proofs in Appendix~\ref{apdx:brick}--\ref{apdx:arbitrum}). Each proof has two steps: first, we prove that the according framework-based ideal functionality realizes all four security properties via the checks of its parameter subroutines; second, the real-world protocol iUC-realizes this ideal functionality through seven successive game hops, starting from a dummy functionality that merely forwards messages and progressively migrating each subroutine's behaviour from the simulator into the functionality. Under the plaintext-leakage assumption, indistinguishability for $\mathcal{E}$ at each game hop reduces to showing that the adversary cannot induce divergent observable outputs through dynamic corruption or message-delivery scheduling.

\subsection{Case Study: The Brick Channel}

\subsubsection{The Real Brick Protocol $\mathcal{P}^{\text{Brick}}$}

The Brick channel~\cite{avarikioti2021b} is a two-party payment channel designed for \emph{asynchronous} off-chain networks. Two parties transact through the channel and serve as both \emph{clients} and \emph{operators} (used interchangeably), with at least one assumed honest. A set of third-party \emph{wardens} verifies and stores state updates and assists in unilateral settlement on L1, with more than $2/3$ assumed honest. The protocol operates as follows:

\textbf{Channel opening.} The two clients agree on the initial state, send the agreement to preknown wardens, and altogether deposit collateral on L1. The channel opens once the required deposits are committed on-chain.

\textbf{Channel update.} The clients exchange and sign the new state together with an incremented sequence number, then send the signed state to the wardens. A client commits the update after receiving signatures from at least $2/3$ of the wardens.

\textbf{Channel settlement.} A client can settle on L1 in two ways. \emph{Collaborative:} both clients sign a settlement transaction containing the final state and publish it on L1. \emph{Unilateral:} upon observing an on-chain request, wardens publish their stored latest states; once at least $2/3$ have done so, the channel closes at the state with the highest sequence number.

We define the real-world Brick protocol as $\mathcal{P}^{\text{Brick}} := (\mathcal{P}^{\text{Brick}}_{\text{client}},\\ \mathcal{F}_{\text{ledger}} \mid \mathcal{P}^{\text{Brick}}_{\text{warden}}, \mathcal{F}_{\text{sig}}, \mathcal{F}^{\text{Brick}}_{\text{com}})$, where $\mathcal{P}^{\text{Brick}}_{\text{client}}$ implements client behavior and connects to the environment, $\mathcal{P}^{\text{Brick}}_{\text{warden}}$ implements warden behavior without environment access, and $\mathcal{F}^{\text{Brick}}_{\text{com}}$ captures the assumed asynchronous network.
\subsubsection{The Ideal Functionality $\mathcal{F}^{\text{Brick}}_{\text{layer2}}$}

After showing the real-world protocol of the Brick channel, we then propose the formal definition for the ideal functionality $\mathcal{F}^{\text{Brick}}_{\text{layer2}} := (\mathcal{F}_{\text{client}} \allowbreak, \mathcal{F}_{\text{ledger}} \mid \allowbreak \space 
\mathcal{F}^{\text{Brick}}_{\text{submit}}, \allowbreak 
\mathcal{F}^{\text{Brick}}_{\text{join}}, \allowbreak
\mathcal{F}^{\text{Brick}}_{\text{update}}, \allowbreak
\mathcal{F}^{\text{Brick}}_{\text{read}}, \allowbreak
\mathcal{F}^{\text{Brick}}_{\text{settlement}}, \allowbreak
\mathcal{F}^{\text{Brick}}_{\text{updRnd}})$. To capture the intended security properties of the Brick payment channel, the subroutines are instantiated as follows: $\mathcal{F}^{\text{Brick}}_{\text{submit}}$ accepts only three types of valid requests, namely channel opening, state update, and settlement; $\mathcal{F}^{\text{Brick}}_{\text{join}}$ checks that an opening request from every honest client has been recorded in $\mathsf{requestQueue}$ and that the collateral deposit transaction and initial state has been committed on L1; $\mathcal{F}^{\text{Brick}}_{\text{update}}$ verifies that each state update proposed by $\mathcal{S}$ corresponds to requests from all honest clients recorded in $\mathsf{requestQueue}$ and that the proposed state value and sequence number preserve consistency; $\mathcal{F}^{\text{Brick}}_{\text{read}}$ queries $\mathcal{S}$ via NET to simulate asynchronous delivery and returns read results derived from $\mathsf{internalState}$, including $\mathsf{executedRequest}$, $\mathsf{stateList}$, and $\mathsf{onchainState}$; $\mathcal{F}^{\text{Brick}}_{\text{settlement}}$ produces a successful output only if either (1) the honest clients have proposed the settlement state (as recorded in $\mathsf{requestQueue}$) and the collaborative settlement transaction has been committed on L1, or (2) the unilateral settlement transaction and the latest state has been committed; and $\mathcal{F}^{\text{Brick}}_{\text{updRnd}}$ imposes no additional checks on round updates under the asynchronous communication, except during unilateral closing, which is tied to the L1 blockchain. The following conclusion holds for the ideal functionality $\mathcal{F}^{\text{Brick}}_{\text{layer2}}$:

\begin{restatable}{theorem}
{ThmidealBrick}
    The ideal functionality $\mathcal{F}^{\text{Brick}}_{\text{layer2}}$ guarantees all the security properties of a secure L2 protocol.
\end{restatable}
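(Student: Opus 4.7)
The plan is to decompose the theorem into its four constituent security properties---correct initialization, $f$-safety, $\{f,T\}$-liveness, and correct settlement---and to verify each by inspection of the checks carried out in the corresponding Brick-specific subroutine. Since $\mathcal{F}_{\text{layer2}}$ only produces outputs to the environment after its subroutines return \texttt{True}, it suffices to show that each subroutine's acceptance predicate implies the predicate required by the security definition. The proof is therefore a case analysis over the five request types (open, submit, update, read, settlement) that $\mathcal{M}_{\text{client}}$ handles, in each case exhibiting that the Brick subroutines are ``tight enough'' to enforce the functionality-level guarantees.

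For \emph{correct initialization} I would argue that $\mathcal{F}^{\text{Brick}}_{\text{open}}$ outputs $\{\text{Open}, \texttt{True}, s_{int}\}$ only if two conditions hold: (i) $s_{int}$ matches an entry in \texttt{RequestQueue} placed there by $\mathcal{F}^{\text{Brick}}_{\text{submit}}$ after receiving a well-formed open request from an honest client; and (ii) the corresponding funding transaction, together with all client and warden collateral, appears in the output of $\mathcal{F}_{\text{ledger}}$ through a \texttt{Read}. These correspond exactly to clauses (1) and (2) of the definition. The \emph{correct settlement} case is symmetric: $\mathcal{F}^{\text{Brick}}_{\text{settlement}}$ distinguishes the collaborative and unilateral paths, and in both cases requires that the state appearing in the committed settlement transaction either equals the latest entry of \texttt{StateList} maintained by $\mathcal{F}_{\text{layer2}}$ or equals the state explicitly carried by an honest client's settlement request, together with an $\mathcal{F}_{\text{ledger}}$ read confirming on-chain commitment---which is precisely Definition~\ref{def:LSC}.

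For $f$-safety the key observation is that $\mathcal{F}^{\text{Brick}}_{\text{update}}$ only admits a new state once the proposal carries agreements from all honest clients and at least $2/3$ of the wardens. Under Brick's assumption that strictly fewer than $1/3$ wardens are corrupted, any two states admitted at rounds $r_1 \le r_2$ must share at least one honest warden in their supporting quorums; since subroutine machine instances are incorruptible by assumption, that warden's record pins down a unique ancestor chain, giving both self-consistency and view-consistency once $\mathcal{F}^{\text{Brick}}_{\text{read}}$ is applied to the shared \texttt{StateList}. Safety of the read outputs then follows because $\mathcal{F}^{\text{Brick}}_{\text{read}}$ only returns either the latest admitted state or a strict prefix thereof, preserving the $\preceq$ ordering required by both consistency clauses.

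The delicate part is $\{f,T\}$-liveness, which is where I expect the main obstacle to lie. Because Brick is asynchronous, no finite $T$ can be claimed uniformly, so the argument must be parameterized per request type: open and collaborative settlement inherit liveness from $\mathcal{F}_{\text{ledger}}$ once both clients are honest; update liveness holds as long as the honest $\geq 2/3$ warden quorum remains reachable; and unilateral settlement terminates once that quorum publishes to $\mathcal{F}_{\text{ledger}}$. The subtle step is justifying that the simulator's freedom to decide \emph{when} to trigger $\mathcal{F}^{\text{Brick}}_{\text{open}}$, $\mathcal{F}^{\text{Brick}}_{\text{update}}$, and $\mathcal{F}^{\text{Brick}}_{\text{settlement}}$ via NET does not permit indefinite stalling beyond what asynchronous network semantics already allow the real adversary. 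Here I would rely on the vacuous clock discipline of $\mathcal{F}^{\text{Brick}}_{\text{updRnd}}$ together with the eventual-delivery guarantee of $\mathcal{F}_{\text{ledger}}$, reducing the liveness argument to: whenever honest parties would eventually deliver in the real protocol, the simulator must eventually trigger the subroutine in the ideal world, since otherwise the environment can distinguish by observing a stuck read output that would be unstuck in the real execution. Formalizing this reduction cleanly, and stating the resulting $(f,T)$ parameters request-by-request, is the technical core of the proof.
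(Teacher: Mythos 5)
Your proposal matches the paper's proof in structure and substance: the paper likewise reduces the theorem to per-subroutine lemmas (one for correct initialization via $\mathcal{F}^{\text{Brick}}_{\text{open}}$, one for safety via $\mathcal{F}^{\text{Brick}}_{\text{update}}$ and $\mathcal{F}^{\text{Brick}}_{\text{read}}$, and request-type-specific liveness bounds for open, update, and both settlement modes), each argued by showing the subroutine's acceptance checks imply the property's defining clauses. The only cosmetic difference is that the paper's safety argument rests on the fact that every admitted state carries both clients' agreement (so one honest client suffices to rule out forks), whereas you additionally invoke honest-warden quorum intersection---a valid but not strictly needed strengthening.
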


The security of the real Brick payment channel protocol can then be demonstrated by proving that the real protocol iUC-realizes the ideal functionality. This can be formally defined as follows:

\begin{restatable}{theorem}
{ThmrealizeBrick}
    With existence of ideal functionalities $\mathcal{F}_{\text{ledger}}$, $\mathcal{F}_{\text{sig}}$ and $\mathcal{F}^{\text{Brick}}_{\text{com}}$. The real Brick payment channel protocol $\mathcal{P}^{\text{Brick}}$ iUC-realizes the ideal Brick payment channel functionality $\mathcal{F}^{\text{Brick}}_{\text{layer2}}$.
\end{restatable}

\subsection{Case Study: The Liquid Sidechain} 

\subsubsection{The Real Liquid Protocol $\mathcal{P}^{\text{Liquid}}$}

The Liquid sidechain~\cite{nick2020liquid} is a Bitcoin sidechain operating over a \emph{synchronous} network with two participant classes: \emph{clients}, who receive environment instructions and submit requests, and \emph{operators}. The original design separates operators into block signers (producing sidechain blocks) and watchmen (creating peg-out transactions on L1). For simplicity, we unify them as operators, of which more than $2/3$ are assumed honest. We also abstract away complementary features of the original protocol, such as privacy preservation, and focus on its core functional logic, described from the client's perspective as follows:

\textbf{Sidechain joining.} The client first sends collateral to a public deposit address controlled by the operator federation on L1. Once the collateral is committed, the client submits a peg-in transaction with its initial state to the operators. The client is considered joined once this peg-in is recorded on the sidechain.

\textbf{Sidechain update.} The client submits transactions to the operators, who run a three-phase BFT protocol. A block, which also references the previously committed sidechain block, becomes valid once it obtains a quorum certificate, i.e., signatures from more than $2/3$ of the operators. Once the client's transaction is included in such a certified block, it is executed and the client's state is updated accordingly.

\textbf{Sidechain leaving.} The client submits a peg-out request to the operators. Once this request is recorded on the sidechain, the operators post the corresponding settlement transaction on L1 and execute the transfer according to the latest state, thereby completing the exit procedure.

We define the real-world protocol as $\mathcal{P}^{\text{Liquid}} := (\mathcal{P}^{\text{Liquid}}_{\text{client}}, \mathcal{F}_{\text{ledger}} \mid \mathcal{P}^{\text{Liquid}}_{\text{operator}}, \mathcal{F}_{\text{sig}}, \mathcal{F}^{\text{Liquid}}_{\text{com}})$, where $\mathcal{P}^{\text{Liquid}}_{\text{client}}$ interfaces with the environment and processes L2 requests, $\mathcal{P}^{\text{Liquid}}_{\text{operator}}$ implements the consensus and settlement logic, and $\mathcal{F}^{\text{Liquid}}_{\text{com}}$ captures the assumed synchronous off-chain network.
\subsubsection{The Ideal Functionality $\mathcal{F}^{\text{Liquid}}_{\text{layer2}}$}

After presenting the real-world protocol of the Liquid sidechain, we propose the ideal functionality based on framework $\mathcal{F}^{\text{Liquid}}_{\text{layer2}} := (\mathcal{F}_{\text{client}}, \mathcal{F}_{\text{ledger}} \allowbreak \mid \allowbreak \mathcal{F}^{\text{Liquid}}_{\text{submit}}, \allowbreak \mathcal{F}^{\text{Liquid}}_{\text{join}}, \allowbreak \mathcal{F}^{\text{Liquid}}_{\text{update}}, \allowbreak \mathcal{F}^{\text{Liquid}}_{\text{read}}, \allowbreak \mathcal{F}^{\text{Liquid}}_{\text{settlement}}, \allowbreak \mathcal{F}^{\text{Liquid}}_{\text{updRnd}})$. The subroutines are instantiated as follows: $\mathcal{F}^{\text{Liquid}}_{\text{submit}}$ accepts only three types of valid requests: join, state update (transaction), and settlement; $\mathcal{F}^{\text{Liquid}}_{\text{join}}$ approves a join request only if a corresponding peg-in transaction appears in $\mathsf{executedRequest}$ and the collateral deposit transaction has been committed on L1; $\mathcal{F}^{\text{Liquid}}_{\text{update}}$ accepts a newly proposed updated block only if the requests and state included in the block are executed correctly, the requests from honest clients are recorded in $\mathsf{requestQueue}$, and proposed block includes the previously committed block that is recorded in $\mathsf{executedRequest}$, thereby guaranteeing consistency; $\mathcal{F}^{\text{Liquid}}_{\text{read}}$ returns the corresponding state and executed requests derived from $\mathsf{internalState}$, including $\mathsf{executedRequest}$, $\mathsf{stateList}$, and $\mathsf{onchainState}$, in accordance with the behavior of a synchronous network; $\mathcal{F}^{\text{Liquid}}_{\text{settlement}}$ produces a successful output only if a peg-out request carrying the latest state, according to $\mathsf{stateList}$, has been recorded in $\mathsf{executedRequest}$ and the corresponding settlement transaction has been committed on L1; Finally, $\mathcal{F}^{\text{Liquid}}_{\text{updRnd}}$ refuses to advance the round whenever a request in $\mathsf{requestQueue}$ has been pending for more than the $T_{L_2}$ type latency $(3f{+}4)\delta$, where $\delta$ is the per-round delivery bound for synchronous off-chain communication. The following result holds for the ideal functionality:

\begin{restatable}{theorem}
{ThmidealLiquid}
        The ideal functionality $\mathcal{F}^{\text{Liquid}}_{\text{layer2}}$ guarantees all the security properties of a secure L2 protocol.
\end{restatable}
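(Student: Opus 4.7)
The plan is to verify each of the four security properties of a secure L2 protocol in turn, by unfolding the subroutine checks imposed on $\mathcal{F}^{\text{Liquid}}_{\text{layer2}}$ and matching each check with the corresponding clause of the property. Three ingredients are used throughout: the honest supermajority (more than $\frac{2}{3}$) of operators, the synchronous scheduling enforced by $\mathcal{F}^{\text{Liquid}}_{\text{updRnd}}$, and the safety and liveness of the ideal L1 ledger $\mathcal{F}_{\text{ledger}}$. Since the subroutine machines are incorruptible, the whole argument reduces to inspecting the conditions under which each subroutine returns \texttt{True}.

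\textbf{Correct initialization and correct settlement.} For correct initialization the two conjuncts of the definition follow immediately from the two checks inside $\mathcal{F}^{\text{Liquid}}_{\text{open}}$: the presence of a peg-in entry in $\texttt{StateList}$ (which can only have been populated through an update triggered by an honest client's earlier submission through $\mathcal{F}^{\text{Liquid}}_{\text{submit}}$) and the existence of the corresponding collateral deposit transaction on $\mathcal{F}_{\text{ledger}}$. A symmetric argument handles correct settlement (Definition~\ref{def:LSC}): $\mathcal{F}^{\text{Liquid}}_{\text{settlement}}$ jointly verifies that the settlement request is recorded in $\texttt{ExecutedRequest}$ and that the associated transfer transaction is committed on L1, which are exactly the two required clauses.

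\textbf{Safety.} For $f$-safety I would prove self-consistency and view-consistency separately. Self-consistency is direct, since $\mathcal{F}^{\text{Liquid}}_{\text{update}}$ only appends to $\texttt{StateList}$ and $\mathcal{F}^{\text{Liquid}}_{\text{read}}$ echoes this list in order, so $\mathcal{R}^P_{r_1} \preceq \mathcal{R}^P_{r_2}$ holds for every honest $P$ and $r_1 \leq r_2$. View-consistency is the main obstacle: one must rule out the simulator producing, at the same index of $\texttt{StateList}$, two conflicting blocks both carrying structurally valid quorum certificates. This is where the honest-supermajority assumption enters via a quorum-intersection argument: any two sets of signatures of size exceeding $\frac{2}{3}$ of the operators share at least one honest operator, who by protocol signs at most one block per slot, so the two certificates must certify the same block. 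Formalizing this inside the iUC framework requires treating the honest operators' signing discipline as part of the (incorruptible) subroutine state of $\mathcal{F}^{\text{Liquid}}_{\text{update}}$, which is the most delicate bookkeeping step of the proof.

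\textbf{Liveness.} For $\{f,T\}$-liveness I would handle the three request types (open, update, settlement) in parallel, combining the same three bounds. The internal clock $\mathcal{F}^{\text{Liquid}}_{\text{updRnd}}$ refuses to advance rounds while any request has been pending in $\texttt{RequestQueue}$ for more than $\delta$, preventing the simulator from stalling the environment while the protocol still has work to do. Given an honest supermajority, a valid block carrying any pending update can always be produced within a bounded number of BFT rounds, after which $\mathcal{F}^{\text{Liquid}}_{\text{update}}$ accepts it and $\mathcal{F}^{\text{Liquid}}_{\text{read}}$ exposes it to every honest client thanks to the synchronous delivery assumption. For the open and settlement cases this bound is composed with the L1 confirmation delay inherited from $\mathcal{F}_{\text{ledger}}$, yielding a concrete $T$ that aggregates the BFT delay, the L1 confirmation depth, and $\delta$.
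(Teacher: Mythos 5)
Your proposal is correct and follows essentially the same route as the paper: the paper's proof of this theorem is a one-line appeal to four per-subroutine lemmas (Lemmas~\ref{lem:OpenL} through~\ref{lem:SettleL}), each of which does exactly the unfolding you describe --- matching the checks inside $\mathcal{F}^{\text{Liquid}}_{\text{open}}$, $\mathcal{F}^{\text{Liquid}}_{\text{update}}$, $\mathcal{F}^{\text{Liquid}}_{\text{read}}$, and $\mathcal{F}^{\text{Liquid}}_{\text{settlement}}$ against the clauses of correct initialization, safety, liveness, and correct settlement, under the honest-supermajority, synchrony, and secure-L1 assumptions. Your organization is per-property rather than per-subroutine, but the substance is identical.

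The one place you genuinely diverge is view-consistency. You treat it as the crux and invoke a quorum-intersection argument (two certificates each signed by more than $\tfrac{2}{3}$ of the operators share an honest signer who signs at most one block per slot). The paper instead argues more bluntly that each new state requires $2f+1$ agreements and that $\mathcal{F}^{\text{Liquid}}_{\text{read}}$ always returns the latest entry of \texttt{StateList}, so conflicting reads cannot occur. The reason the paper can get away with less is that, at the level of the ideal functionality, there is a \emph{single} \texttt{Internal state} held by the interface machine $\mathcal{F}_{\text{layer2}}$: updates append to one global \texttt{StateList}, and every honest client's read is served from that same list, so two conflicting blocks at the same index simply cannot coexist in the ideal world. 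Your quorum-intersection argument is the right tool, but it belongs to the \emph{realization} proof (that $\mathcal{P}^{\text{Liquid}}$ iUC-realizes $\mathcal{F}^{\text{Liquid}}_{\text{layer2}}$), where one must argue that the adversary cannot assemble two structurally valid certificates for conflicting blocks in the real protocol and thereby cause the simulation to diverge. Your closing remark about folding the honest operators' signing discipline into the incorruptible subroutine state is therefore solving a harder problem than this theorem asks; it is not wrong, but it is not needed here, and recognizing that the ideal functionality gives you view-consistency by construction would shorten your safety argument to match the paper's.
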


The security of the Liquid sidechain protocol can be demonstrated by proving that the real protocol iUC-realizes the ideal functionality. This can be formally defined as follows:

\begin{restatable}{theorem}
{ThmrealizeLiquid}
    With existence of ideal functionalities $\mathcal{F}_{\text{ledger}}$, $\mathcal{F}_{\text{sig}}$ and $\mathcal{F}^{\text{Liquid}}_{\text{com}}$. The real Liquid sidechain protocol $\mathcal{P}^{\text{Liquid}}$ iUC-realizes the ideal Liquid sidechain functionality $\mathcal{F}^{\text{Liquid}}_{\text{layer2}}$.
\end{restatable}

\subsection{Case Study: The Arbitrum Nitro Rollup}
\label{sec:arbitrum}







Arbitrum Nitro~\cite{bousfield2022arbitrum} is an optimistic rollup protocol, here we focus on its core protocol roles. \emph{Clients} submit transactions to the operator for execution. To mitigate censorship by the operator, clients may alternatively publish transactions directly to the L1 blockchain to force their inclusion. \emph{Operators} order and post transaction batches together with the resulting execution state to L1, without requiring re-execution by the L1 blockchain itself. \emph{Verifiers} check the operator-published state on L1, and at least one honest verifier is assumed. In the analysis, we let clients also act as verifiers; therefore, at least one honest verifier is always holds. The off-chain communication is assumed to be \emph{asynchronous}. The core logic of the Arbitrum Nitro protocol works as follows:

\textbf{Rollup Joining.}
The client first deposits coins on L1 through a deposit transaction and then submits a peg-in transaction to the operator. Upon verifying that the deposit transaction has been committed on the L1 blockchain, the operator includes the peg-in transaction in the next batch to be published and generates the initial state. The client is considered to have joined once this request is included in the operator's published transaction batch and no valid fraud-proof transaction from the verifier is submitted during the challenge period.

\textbf{Rollup Update.}
At regular time period, the operator batches and executes pending client transactions, then posts both the transactions and the resulting state to L1. The resulting state is also published in the transaction data space (e.g., as a blob) and is not re-executed by L1. Verifiers independently re-execute the batch to check its correctness. A valid fraud proof transaction will be published on L1 within the challenge period to invalidate the update, whereas if no such proof appears, the batch and the resulting state are considered committed. Clients may also submit transactions directly to L1 to force their execution when censored by the operator, which also needs to be verified by the verifier.

\textbf{Rollup Leaving.}
The client submits a peg-out transaction to the operator. Once the peg-out transaction has been executed through the update procedure, and the client's rollup state is committed on L1, the client is considered as leaving and can withdraw the coin whenever it wants. Alternatively, a client may leave the protocol independently via the \emph{escape hatch} mechanism, by publishing the peg-out transaction directly on L1 with the latest state, which must also be validated by the verifiers.

Based on the protocol description above, we can define the real-world protocol as 
$\mathcal{P}^{\text{Arbitrum}} := (\mathcal{P}^{\text{Arbitrum}}_{\text{client}}: \text{client},\mathcal{F}_{\text{ledger}}\allowbreak \space \mid \allowbreak \space  \mathcal{P}^{\text{Arbitrum}}_{\text{operator}},\allowbreak \space \mathcal{P}^{\text{Arbitrum}}_{\text{client}}: \text{verifier}, \allowbreak\mathcal{F}_{\text{sig}},\allowbreak \mathcal{F}^{\text{Arbitrum}}_{\text{com}})$, where $\mathcal{P}^{\text{Arbitrum}}_{\text{client}}$ represents the interaction interface with the environment and processes requests to the rollup protocol. $\mathcal{P}^{\text{Arbitrum}}_{\text{operator}}$ includes the machine code that implements the core logic for the operator. $\mathcal{F}^{\text{Arbitrum}}_{\text{com}}$ captures the asynchronous off-chain network assumed in protocol and also models the periodic behavior of operators and verifiers, including the batch publishing on the L1 blockchain by operators and the subsequent verification of these updates by verifiers defined by $\mathcal{P}^{\text{Arbitrum}}_{\text{verifier}}$.

\subsubsection{The Ideal Functionality $\mathcal{F}^{\text{Arbitrum}}_{\text{layer2}}$}

After showing the real-world protocol of the Arbitrum Nitro rollup, we then propose the ideal functionality based on framework $\mathcal{F}^{\text{Arbitrum}}_{\text{layer2}} := (\mathcal{F}_{\text{client}},\mathcal{F}_{\text{ledger}} \allowbreak \space \mid \allowbreak \space 
\mathcal{F}^{\text{Arbitrum}}_{\text{submit}}, \allowbreak 
\mathcal{F}^{\text{Arbitrum}}_{\text{join}}, \allowbreak
\mathcal{F}^{\text{Arbitrum}}_{\text{update}}, \allowbreak
\mathcal{F}^{\text{Arbitrum}}_{\text{read}}, \allowbreak
\mathcal{F}^{\text{Arbitrum}}_{\text{settlement}}, \allowbreak
\mathcal{F}^{\text{Arbitrum}}_{\text{updRnd}})$. 

 For Arbitrum Nitro, $\mathcal{F}^{\text{Arbitrum}}_{\text{submit}}$ accepts only three types of valid requests, namely join, state update (transaction), and settlement. $\mathcal{F}^{\text{Arbitrum}}_{\text{join}}$ approves a join request only if an operator-published batch on L1 includes the corresponding peg-in transaction and the collateral deposit transaction has been committed on L1. $\mathcal{F}^{\text{Arbitrum}}_{\text{update}}$ validates that the executed transaction batch and the resulting state committed on L1 are correctly computed, and that the transactions from honest clients included in the batch are recorded in $\mathsf{requestQueue}$. $\mathcal{F}^{\text{Arbitrum}}_{\text{read}}$ returns the portion of $\mathsf{internalState}$ that is consistent with the read result obtained directly from $\mathcal{F}_{\text{ledger}}$, thereby ensuring the consistency with L1. $\mathcal{F}^{\text{Arbitrum}}_{\text{settlement}}$ produces a successful output only if the peg-out transaction has been executed and committed on L1 through the operator-published batch, and a corresponding settlement transaction has also been committed on L1. Both transactions must be consistent with the client's latest state recorded in $\mathsf{stateList}$. Finally, $\mathcal{F}^{\text{Arbitrum}}_{\text{updRnd}}$ imposes no additional checks for the L2 communication network, but it checks with L1 for force-inclusion client requests such as the escape hatch. The following conclusion holds for the ideal functionality:

\begin{restatable}{theorem}
{ThmidealArbitrum}
        The ideal functionality $\mathcal{F}^{\text{Arbitrum}}_{\text{layer2}}$ guarantees all the security properties of a secure L2 protocol.
\end{restatable}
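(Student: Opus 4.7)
The plan is to verify the four security properties (\emph{correct initialization}, \emph{$f$-safety}, \emph{$\{f,T\}$-liveness}, \emph{correct settlement}) one by one, exploiting the specification of each subroutine of $\mathcal{F}^{\text{Arbitrum}}_{\text{layer2}}$ together with the security guarantees of the underlying $\mathcal{F}_{\text{ledger}}$. Two structural features of the Arbitrum instantiation make the argument qualitatively different from Brick and Liquid: (i) both the joining transaction and every state batch are anchored directly on $\mathcal{F}_{\text{ledger}}$, and (ii) $\mathcal{F}^{\text{Arbitrum}}_{\text{read}}$ is defined to forward read queries to $\mathcal{F}_{\text{ledger}}$ rather than to the locally maintained \texttt{StateList}. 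I will use these two facts as the backbone of the proof.

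First, I would dispatch \emph{correct initialization} and \emph{correct settlement} together, since both are essentially reductions to the security of $\mathcal{F}_{\text{ledger}}$. For initialization, $\mathcal{F}^{\text{Arbitrum}}_{\text{open}}$ only outputs \{Open,\texttt{True}\} when it has observed, through a \texttt{Read} to $\mathcal{F}_{\text{ledger}}$, both the client's collateral deposit and the corresponding peg-in transaction published by an operator; the agreement condition in Definition~(correct initialization) then reduces to the immutability and agreement guarantees of the ledger, and the requirement that $s_{\text{int}}$ was previously requested by an honest party follows because $\mathcal{F}^{\text{Arbitrum}}_{\text{submit}}$ accepts joining requests only from clients who themselves initiated the on-chain deposit. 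The settlement argument is analogous: $\mathcal{F}^{\text{Arbitrum}}_{\text{settlement}}$ accepts only when both the executed leaving request and the corresponding transfer transaction appear on $\mathcal{F}_{\text{ledger}}$, which immediately matches the two clauses of Definition~\ref{def:LSC}.

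Next, I would handle \emph{$f$-safety}. Since $\mathcal{F}^{\text{Arbitrum}}_{\text{read}}$ reads straight from $\mathcal{F}_{\text{ledger}}$, self-consistency for any honest $P$ reduces to the prefix-monotonicity of the ledger (later reads extend earlier ones), and view-consistency across two honest parties $P_1,P_2$ reduces to the fact that $\mathcal{F}_{\text{ledger}}$ exposes the same canonical chain to every reader at any given round. The one subtlety is that $\mathcal{F}^{\text{Arbitrum}}_{\text{update}}$ may label a batch as invalid if a fraud proof appears within $T_{\text{period}}$; I would argue that this only prunes entries that had never been \texttt{True}-confirmed to the read layer (the subroutine outputs \texttt{False} before finalization, so no honest view ever contained the pruned batch), and therefore consistency among finalized batches is preserved. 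The tolerance parameter $f$ here is $n_{\text{op}}-1$ for operators combined with $n_{\text{val}}-1$ for validators, since the argument only needs at least one honest operator (or the client itself, acting through the L1 force-inclusion channel) for progress and at least one honest validator to file fraud proofs against incorrect batches.

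The hardest step, and the one I expect to consume most of the write-up, is \emph{$\{f,T\}$-liveness}. Here I would split the analysis by request type. For state-update requests, I would argue that after submission an honest operator (or, failing that, the client via the force-inclusion path) eventually posts a batch containing the request to $\mathcal{F}_{\text{ledger}}$; within the challenge window $T_{\text{period}}$ the honest validator either confirms the batch or issues a fraud proof. The delicate point is the ``functionality halts'' clause: if no valid fraud proof is posted against a corrupt batch within $T_{\text{period}}$, $\mathcal{F}^{\text{Arbitrum}}_{\text{update}}$ halts, which at first glance breaks liveness. I would neutralize this by arguing that under the stated assumption of at least one honest validator together with the synchronous clock enforced by $\mathcal{F}^{\text{Arbitrum}}_{\text{updRnd}}$, the halting branch is unreachable in any trace, and thus the latency bound $T = T_{\text{batch}} + T_{\text{period}} + T_{\text{L1-conf}}$ holds unconditionally. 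Joining and settlement liveness then follow as special cases, using the same decomposition of delays. Data availability follows as a corollary: the set $\{E,S_{L1},S'\}$ is exactly what operators are required to post on $\mathcal{F}_{\text{ledger}}$ and what $\mathcal{F}^{\text{Arbitrum}}_{\text{read}}$ returns, so the pair $(G_{L_2},G_{L_1})$ degenerates to $(0,|E|+|S_{L1}|+|S'|)$, matching Arbitrum's well-known on-chain storage cost.
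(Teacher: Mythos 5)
Your proposal is correct and follows essentially the same route as the paper: the paper proves this theorem by simply invoking its Lemmas~\ref{lem:OpenA}--\ref{lem:SettleA}, each of which argues property-by-property (correct initialization, safety, liveness, correct settlement) by reducing the corresponding subroutine's checks to the security of $\mathcal{F}_{\text{ledger}}$ plus the existence of one honest operator and one honest validator, exactly as you do. Your explicit handling of the HALT branch in $\mathcal{F}^{\text{Arbitrum}}_{\text{update}}$ is a useful detail the paper leaves implicit, and note that data availability is classified as an efficiency property rather than one of the four security properties this theorem requires.
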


The security of the Arbitrum Nitro rollup protocol can be demonstrated by proving that the real protocol iUC-realizes the ideal functionality. This can be formally defined as follows:

\begin{restatable}{theorem}
{ThmrealizeArbitrum}
    With the existence of ideal functionalities $\mathcal{F}_{\text{ledger}}$, $\mathcal{F}_{\text{sig}}$ and $\mathcal{F}^{\text{Arbitrum}}_{\text{com}}$. Then, the real Arbitrum protocol $\mathcal{P}^{\text{Arbitrum}}$ iUC-realizes the ideal Arbitrum Nitro rollup functionality $\mathcal{F}^{\text{Arbitrum}}_{\text{layer2}}$.
\end{restatable}

\section{Comparative Analysis of L2 Designs}
\label{sec:insight}

In this section, we analyze the design differences and trade-offs among L2 protocols as revealed by instantiating our framework for representative constructions. From the case studies showing how each real-world protocol realizes the corresponding ideal functionality based on our framework, we identify the fundamental differences in parameter subroutines that influence protocol behavior and demonstrate how these differences impact both security guarantees and efficiency properties. Throughout this section, we generalize payment channels to payment channel factories (PCFs), which follow the same operational logic based on unanimous agreement but support a larger set of clients~\cite{burchert2018scalable,avarikioti2018towards}. Omitted proofs are provided in Appendix~\ref{apdx:proof}.

\subsection{Subroutine Comparison}

As observed in the case studies above, the core mechanism by which the framework captures the properties of L2 protocols is through triggering subroutines to determine whether the protocol state, represented by the $\mathsf{internalState}$, should be updated based on protocol-specific requirements. Here we analyze how these subroutines define the structural differences among various types of L2 protocols, thereby revealing the key design choices that differentiate them. Specifically, we focus on five subroutines: $\mathcal{F}_{\text{join}}$, $\mathcal{F}_{\text{submit}}$, $\mathcal{F}_{\text{update}}$, $\mathcal{F}_{\text{read}}$, and $\mathcal{F}_{\text{settlement}}$, as these encapsulate the essential procedures involved in the lifecycle of a L2 protocol. The other two subroutines, $\mathcal{F}_{\text{leak}}$ and $\mathcal{F}_{\text{updRnd}}$, may also vary between protocols. However, their differences primarily stem from underlying assumptions, such as the communication model or cryptographic schemes used. 
These aspects are not directly tied to the fundamental classification of L2 protocol types. 


\begin{table*}[htp]
\caption{Content differences in the main parameter subroutines across L2 protocol types.}
\label{tab:comparison}
\resizebox{\textwidth}{!}{%
\begin{tabular}{|c|c|c|c|c|c|}
\hline
 &
  $\mathcal{F}_{\text{submit}}$ &
  $\mathcal{F}_{\text{join}}$ &
  $\mathcal{F}_{\text{update}}$ &
  $\mathcal{F}_{\text{read}}$ &
  $\mathcal{F}_{\text{settlement}}$ \\ \hline
PCF &
  \begin{tabular}[c]{@{}c@{}}Request state is\\ joint state of all clients\end{tabular} &
  \begin{tabular}[c]{@{}c@{}}All honest clients' requests\\ in $\mathsf{requestQueue}$\\ +\\ L1 commitment check\end{tabular} &
  \begin{tabular}[c]{@{}c@{}}All honest clients' request\\ proposals in $\mathsf{requestQueue}$\\ +\\ off-chain consistency\\ (e.g., sequence number)\end{tabular} &
  \multirow{2}{*}{\begin{tabular}[c]{@{}c@{}}Derived from $\mathsf{internalState}$\\ +\\ According with off-chain communication\\ network delay\end{tabular}} &
  \begin{tabular}[c]{@{}c@{}}Collaborative: all honest clients'\\ requests in $\mathsf{requestQueue}$\\ Unilateral: proposer's request\\ in $\mathsf{requestQueue}$\\ +\\ L1 commitment check\end{tabular} \\ \cline{1-4} \cline{6-6} 
Sidechain &
  \multirow{2}{*}{\begin{tabular}[c]{@{}c@{}}Request state is\\ proposer clients's own state\end{tabular}} &
  \multirow{2}{*}{\begin{tabular}[c]{@{}c@{}}Proposer's request\\ in $\mathsf{executedRequest}$\\ +\\ L1 commitment check\end{tabular}} &
  \begin{tabular}[c]{@{}c@{}}Proposer client's individual request\\ proposals in $\mathsf{requestQueue}$\\ +\\ off-chain consistency\\ (e.g., block  predecessor link)\end{tabular} &
   &
  \multirow{2}{*}{\begin{tabular}[c]{@{}c@{}}Proposer client's request\\ in $\mathsf{executedRequest}$\\ +\\ L1 commitment check\end{tabular}} \\ \cline{1-1} \cline{4-5}
Rollup &
   &
   &
  \begin{tabular}[c]{@{}c@{}}Proposer client's individual request\\ proposals in $\mathsf{requestQueue}$\\ +\\ on-chain consistency\\ (e.g., read from L1)\end{tabular} &
  \begin{tabular}[c]{@{}c@{}}Derived from $\mathsf{internalState}$\\ +\\ According with L1 read result\end{tabular} &
   \\ \hline
\end{tabular}%
}
\end{table*}

\subsubsection{The Subroutine Content} The conditions under which each subroutine generates a positive output are what distinguish L2 protocols from one another, as summarized in Table~\ref{tab:comparison}. In what follows, we describe how these conditions vary across the three types of L2 protocols.

\textbf{Request submission $\mathcal{F}_{\text{submit}}$.} The subroutine $\mathcal{F}_{\text{submit}}$ filters valid requests to be accepted by the ideal functionality and forwards the corresponding leakage to the simulator. All three types of L2 protocols accept the same three categories of requests: protocol joining, transaction submission, and settlement. However, differences arise in the content of the input requests themselves. As revealed by our case studies, protocol joining and transaction submission requests typically contain either a proposed state $s$ or a transaction $TX$ encoding such a state. In PCF protocols such as Brick, the state $s$ encompasses the state of all clients. In contrast, in sidechain and rollup protocols such as Liquid and Arbitrum, $s$ refers only to the proposing client's own state. This difference stems from the underlying design that in PCFs, each client also acts as an operator, whereas in the other two protocol types the operator is realized by a separate, dedicated role.

\textbf{Protocol joining $\mathcal{F}_{\text{join}}$.} In the ideal functionalities of all three L2 protocol types, the subroutine $\mathcal{F}_{\text{join}}$ validates successful protocol joining requests sent from the simulator to the ideal functionality via the network connection NET, and generates the corresponding output if the checks pass. The common check across all three types verifies that the protocol joining transaction encoding the initial state $s_{\text{init}}$ has been committed on the L1 blockchain. However, the remaining conditions differ across the three types. In PCFs, since clients also assume the role of operators, the opening procedure requires explicit agreement from all honest clients, conveyed through their submitted requests and tracked in $\mathsf{requestQueue}$, which is verified by the PCF's ideal functionality. In contrast, in sidechain and rollup protocols, clients only need to apply for joining with their own individual state,  therefore, only the proposing client's request needs to be checked in $\mathsf{executedRequest}$, which is changed based on $\mathsf{requestQueue}$ by $\mathcal{F}_{\text{update}}$.

\textbf{Request execution $\mathcal{F}_{\text{update}}$.} The subroutine $\mathcal{F}_{\text{update}}$ specifies how the ideal functionality's $\mathsf{internalState}$ is updated to guarantee the \emph{safety} properties of view-consistency and cross-consistency. All three paradigms locally verify the correctness of executed requests; the primary distinction lies in how they establish consistency. As revealed by our case studies, PCFs such as the Brick payment channel enforce consistency by checking the sum of each client's new state together with a monotonically increasing sequence number. In sidechains such as Liquid, consistency is instead ensured by requiring each new block to point to the previous block already committed in $\mathsf{internalState}$. Both approaches achieve \emph{safety} without relying on the L1 blockchain. Rollups, by contrast, perform the same local execution check but additionally require executed transactions and resulting state transitions to be published on the L1 blockchain, inheriting consistency directly from $\mathcal{F}_{\text{ledger}}$; without this interaction, or if the L1 blockchain's integrity is compromised, rollup \emph{safety} cannot be guaranteed. Access to $\mathcal{F}_{\text{ledger}}$ therefore constitutes a \emph{necessary condition} for rollup safety. A distinction similar to that in $\mathcal{F}_{\text{join}}$ also arises here: in PCFs, each new state proposed by the simulator must be verified against the proposals of all honest clients in $\mathsf{requestQueue}$, since the proposed state encodes the joint state of all clients; in sidechains and rollups, only the individual state of each honest client needs to be checked.


\textbf{Data reading $\mathcal{F}_{\text{read}}$.} Although the subroutine $\mathcal{F}_{\text{read}}$ maintains the same core logic across all three L2 protocol types, that is deciding the read result based on $\mathsf{internalState}$, the differences still arise because $\mathsf{internalState}$ is updated differently in each type, as discussed above for $\mathcal{F}_{\text{update}}$. In PCFs and sidechains, where the update procedure is realized off-chain, the read result derived from $\mathsf{internalState}$ is influenced solely by off-chain factors such as the underlying communication network. For instance, the Brick payment channel is designed for an asynchronous network. Therefore the read result could lag behind the latest $\mathsf{internalState}$ due to network delays introduced by the adversary. In rollup protocols such as Arbitrum, by contrast, updates to $\mathsf{internalState}$ are tied to the L1 blockchain. Consequently, the corresponding $\mathcal{F}_{\text{read}}$ must additionally check with the L1 blockchain when deriving the read result, as each client's view of the L1 blockchain may itself differ due to varying network connections to L1.

\textbf{L2 State settlement $\mathcal{F}_{\text{settlement}}$.} The subroutine $\mathcal{F}_{\text{settlement}}$ across all three L2 protocol types shares the common task of verifying that the latest state recorded in $\mathsf{internalState}$ has been committed on the L1 blockchain. However, notable differences remain. PCF protocols such as the Brick payment channel typically support two modes of settlement: \emph{collaborative} and \emph{unilateral}. In the collaborative mode, the settlement requests of all honest clients must be present in $\mathsf{requestQueue}$; this reflects the fact that other clients can influence an honest client's collaborative settlement request. In the unilateral mode, by contrast, only the proposing client's request must be checked in $\mathsf{requestQueue}$. Sidechain and rollup protocols follow a checking logic similar to PCF's unilateral mode, except that verification is performed against $\mathsf{executedRequest}$ rather than $\mathsf{requestQueue}$, since $\mathsf{executedRequest}$ is derived from $\mathsf{requestQueue}$ through the protocol's execution. Moreover, the distinction in how $\mathsf{executedRequest}$ is maintained by $\mathcal{F}_{\text{update}}$ in sidechains and rollups, as discussed earlier, carries over to the settlement procedure, and so does the differing reliance on the L1 blockchain.

\subsubsection{The Subroutine Connection} Additionally to the content differences of individual subroutines, our analysis also reveals insights into the connection relationships among subroutines across different types of L2 protocols. These connection differences are illustrated in Figure~\ref{fig:Difference}.

We begin with \emph{direct I/O caller-subroutine connections}, which are the more visible form of caller-subroutine connection. Rollup protocols are distinct from the other two L2 types in that their subroutines $\mathcal{F}_{\text{update}}$ and $\mathcal{F}_{\text{read}}$ require direct interaction with the underlying L1 blockchain functionality $\mathcal{F}_{\text{ledger}}$. Recall that $\mathcal{F}_{\text{update}}$ specifies the requirements for generating a new valid state. In PCFs and sidechains, consistency is established through off-chain checks, such as monotonically increasing sequence numbers in PCFs or hash-based linkage between consecutive sidechain blocks. Rollups, on the contrary, achieve consistency by requiring the executed requests and the resulting state to be published on the L1 blockchain that has consistency itself. Consequently, participants learn the executed requests and the latest state through $\mathcal{F}_{\text{read}}$ by reading from $\mathcal{F}_{\text{ledger}}$ as well.

Beyond direct connections, subroutines also exhibit \emph{indirect dependencies} through the shared $\mathsf{internalState}$. As noted earlier, the three protocol types differ in how their $\mathcal{F}_{\text{join}}$ and $\mathcal{F}_{\text{settlement}}$ subroutines check $\mathsf{requestQueue}$ and $\mathsf{executedRequest}$, which are $\mathsf{internalState}$ variables maintained and updated by different subroutines $\mathcal{F}_{\text{submit}}$ and $\mathcal{F}_{\text{update}}$ accordingly. This mechanism creates an indirect connection between subroutines that do not communicate through I/O links.

\begin{figure*}[t] 
    \centering
    \begin{subfigure}{0.32\textwidth} 
        \centering
        \includegraphics[width=\linewidth]{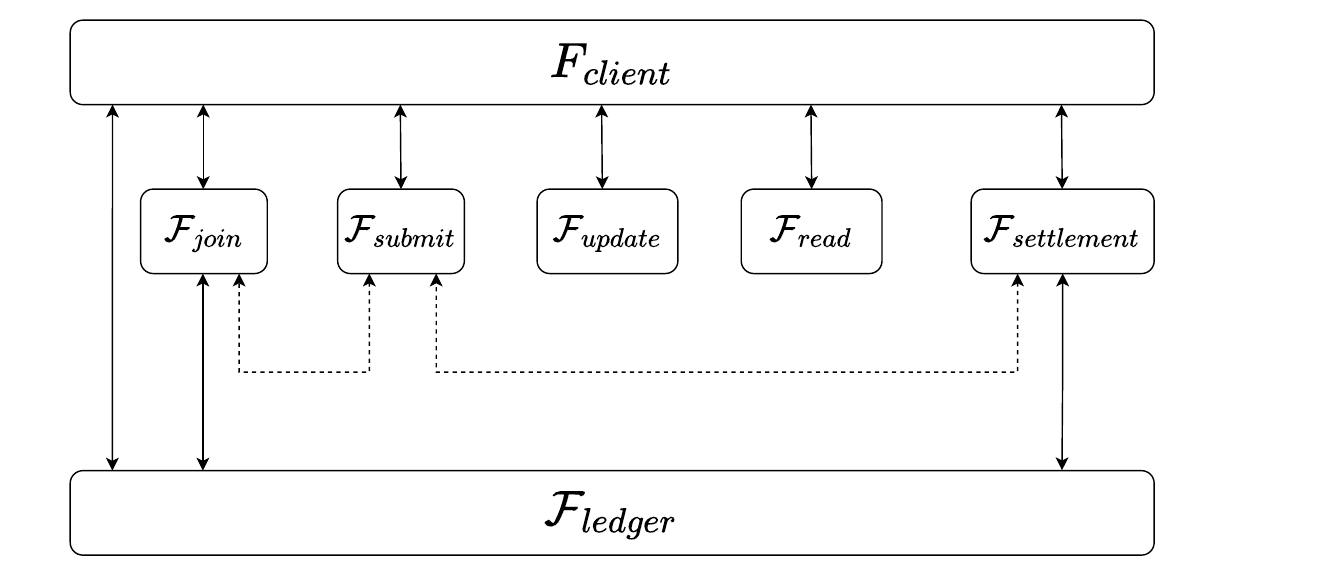} 
        \caption{PCF}
        \label{fig:PCF}
    \end{subfigure}
    \hfill
    \begin{subfigure}{0.32\textwidth}
        \centering
        \includegraphics[width=\linewidth]{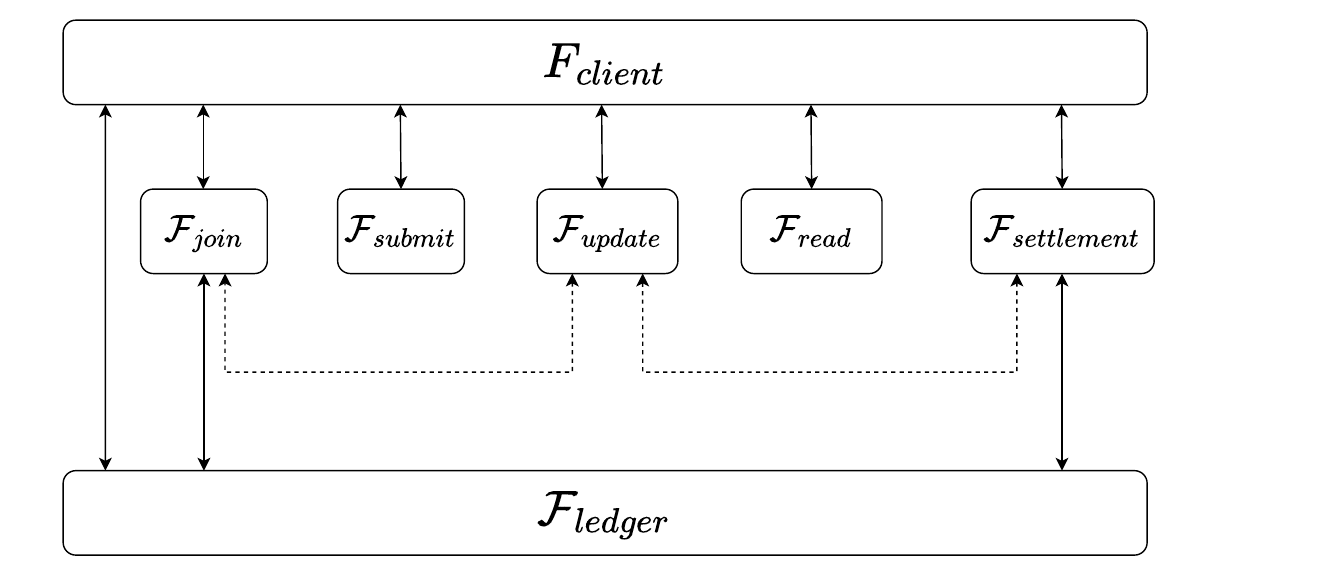}
        \caption{Sidechain}
        \label{fig:sidechain}
    \end{subfigure}
    \hfill
    \begin{subfigure}{0.32\textwidth}
        \centering
        \includegraphics[width=\linewidth]{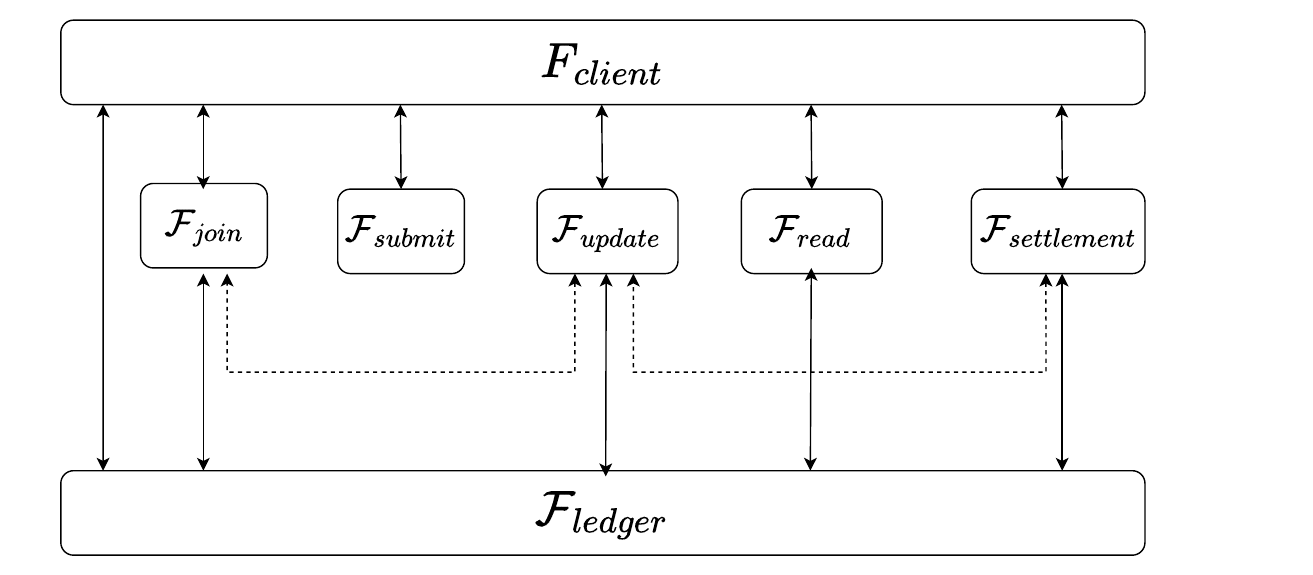}
        \caption{Rollup}
        \label{fig:Rollup}
    \end{subfigure}

    \caption{Connections among the main parameter subroutines. Solid lines indicate direct I/O caller-subroutine connections; dotted lines indicate indirect connections caused by $\mathsf{internalState}$ of $\mathcal{F}_{\text{client}}$.}
    \label{fig:Difference}
\end{figure*}

\subsubsection{Classification and Definition}
According to the analysis above, the primary factor that differentiates L2 protocols from each other is the behavior and structure of their subroutines. Based on the characteristics of these subroutines, we propose the following definitions for the three primary types of L2 protocols:

\begin{definition}[PCF]
\label{def:PCF}
A L2 protocol $\Lambda$ is a \emph{payment channel (factory) protocol} if the ideal functionality $\mathcal{F}^{\Lambda}_{\text{layer2}}$ realised by $\Lambda$ satisfies:
\begin{itemize}
    \item $\mathcal{F}^{\Lambda}_{\text{join}}$ and $\mathcal{F}^{\Lambda}_{\text{settlement}}$ have indirect connection with $\mathcal{F}_{\text{submit}}$.
        \item $\mathcal{F}_{\text{update}}$ and $\mathcal{F}_{\text{read}}$ do not require I/O connection with $\mathcal{F}_{\text{ledger}}$ to guarantee \emph{safety}.
\end{itemize}
\end{definition}

\begin{definition}[Sidechain]
\label{def:sidechain}
A L2 protocol $\Lambda$ is a \emph{sidechain protocol} if the ideal functionality $\mathcal{F}^{\Lambda}_{\text{layer2}}$ realised by $\Lambda$ satisfies:
\begin{itemize}
    \item $\mathcal{F}^{\Lambda}_{\text{join}}$ and $\mathcal{F}^{\Lambda}_{\text{settlement}}$ have indirect connection with $\mathcal{F}_{\text{update}}$.
        \item The subroutine $\mathcal{F}_{\text{update}}$ and $\mathcal{F}_{\text{read}}$ do not require I/O connection with $\mathcal{F}_{\text{ledger}}$ to guarantee \emph{safety}.
\end{itemize}
\end{definition}

\begin{definition}[Rollup]
\label{def:rollup}
A L2 protocol $\Lambda$ is a \emph{rollup protocol} if the ideal functionality $\mathcal{F}^{\Lambda}_{\text{layer2}}$ realised by $\Lambda$ satisfies:
\begin{itemize}
    \item $\mathcal{F}^{\Lambda}_{\text{join}}$ and $\mathcal{F}^{\Lambda}_{\text{settlement}}$ have indirect connection with $\mathcal{F}_{\text{update}}$.
        \item I/O connection with $\mathcal{F}_{\text{ledger}}$ is the \emph{necessary condition} for $\mathcal{F}_{\text{update}}$ and $\mathcal{F}_{\text{read}}$ to guarantee \emph{safety}.
\end{itemize}
\end{definition}

\subsection{Security and Performance Properties}

\subsubsection{Comparison of Properties}
The key distinction between L2 protocol types lies in the structure of their subroutines, which in the end affects how they satisfy security properties. While the realizations of \emph{correct L2 initialization} and \emph{correct L2 settlement} do not differ across protocol types, the performance and assumptions required to satisfy the remaining security and efficiency properties vary due to differing design choices. In what follows, we compare how PCF, sidechain, and rollup protocols realize three key properties: $f$-safety, $(f, T)$-liveness, and data availability. The results are summarized in Table~\ref{tab:difference2}.

\begin{table*}[!htp]
\caption{Comparison across L2 paradigms. We assume $n_P = n_C + n_{OP}$ L2 participants ($n_C$ clients, $n_{OP}$ operators). $f_{L_1}$ denotes L1 corruption tolerated by $\mathcal{F}_{\text{ledger}}$, and $f_{L_2}$ denotes L2 corruption, with $f_{OP}$ the maximum number of corrupted operators tolerated by the sidechain consensus; $m$ is the number of state-update requests executed since protocol start (no clients leave). $T_{L_1}$ and $T_{L_2}$ denote latency sources from L1 and L2 interactions; an additional $T_{\text{challenge}}$ term applies when the protocol enforces correct execution via an L1 dispute mechanism (e.g., a fraud-proof challenge window).}
\label{tab:difference2}
\resizebox{\textwidth}{!}{%
\begin{tabular}{|c|c|cccccc|c|}
\hline
\multicolumn{1}{|l|}{\multirow{2}{*}{}} &
  \multirow{2}{*}{$f$-safety} &
  \multicolumn{6}{c|}{$(f, T)$-Liveness} &
  \multirow{2}{*}{Data availability} \\ \cline{3-8}
\multicolumn{1}{|l|}{} &
   &
  \multicolumn{3}{c|}{$f$} &
  \multicolumn{3}{c|}{$T$ (add $T_{\text{challenge}}$ for fraud-proof designs)} &
   \\ \hline
PCF &
  $(n_{OP}-1)$ &
  \multicolumn{1}{c|}{$0_{OP} + $ $f_{L_1}$} &
  \multicolumn{1}{c|}{$0_{OP}$} &
  \multicolumn{1}{c|}{$0_{OP} + $ $f_{L_1}$ or $=$ $f_{L_1}$} &
  \multicolumn{1}{c|}{$T_{L_1} + T_{L_2}$} &
  \multicolumn{1}{c|}{$T_{L_2}$} &
  $T_{L_1} + T_{L_2}$ or $T_{L_1}$ &
  $\{\Omega(m), \Omega(n_P)\}$ \\ \hline
Sidechain &
  $f_{OP}$ &
  \multicolumn{1}{c|}{$\lfloor \frac{n - (f_{OP}+1)}{2} \rfloor + $ $f_{L_1}$} &
  \multicolumn{1}{c|}{$\lfloor \frac{n - (f_{OP}+1)}{2} \rfloor$} &
  \multicolumn{1}{c|}{$\lfloor \frac{n - (f_{OP}+1)}{2} \rfloor + $ $f_{L_1}$} &
  \multicolumn{1}{c|}{$T_{L_1} + T_{L_2}$} &
  \multicolumn{1}{c|}{$T_{L_2}$} &
  $T_{L_1} + T_{L_2}$ &
  $\{\Omega(m), \Omega(n_P)\}$ \\ \hline
Rollup &
  $f_{L_2}+f_{L_1}$ &
  \multicolumn{3}{c|}{$(n_{OP}-1) + $ $f_{L_1}$ or $f_{L_1}$} &
  \multicolumn{3}{c|}{$T_{L_1} + T_{L_2}$ or $T_{L_1}$} &
  $\{\Omega(1), \Omega(m) + \Omega(n_P)\}$ \\ \hline
\end{tabular}%
}
\end{table*}
\textbf{$f$-Safety} characterizes a protocol's resilience to adversarial corruption while maintaining consistency of the L2 executed requests and states, as captured jointly by $\mathcal{F}_{\text{update}}$ and $\mathcal{F}_{\text{read}}$. Note that it is initially assumed that there is at least one honest client to make the L2 protocol meaningful. As discussed in the analysis of subroutine content differences, safety is strongest in PCFs: each state update must be proposed by \emph{all} honest clients (who also act as operators), so the protocol remains safe as long as at least one client is honest, which is the initial assumption for the L2 protocols. In sidechains, safety is achieved through a consensus protocol executed among the operators independently of the clients, typically tolerating up to $f_{OP}$ operator threshold. In rollups, safety's consistency requirement is anchored in the underlying L1 blockchain via the corruption budget $f_{L_1}$. Additionally, the L2 corruption $f_{L_2}$ must hold to guarantee correct execution, since it does not directly read the L1 state executed by L1 participants, which, in optimistic rollups, coincides with the at least one honest client assumption.


\textbf{$(f, T)$-Liveness} characterizes the protocol's resilience to adversarial blocking of honest requests, as well as the latency that is caused by off-chain interaction or on-chain interaction. In sidechains and rollups, the update logic is reused across three different request types via $\mathcal{F}_{\text{update}}$, resulting in uniform liveness behavior. For sidechains, $T_{L_2}$-bounded liveness is guaranteed for off-chain updates as long as the consensus threshold is met (e.g., a majority of honest operators). However, operations that require anchoring on L1, such as joining or exiting, also depend on the liveness of the L1 blockchain, yielding a total latency of $T_{L_2} + T_{L_1}$. 

Rollups exhibit uniform liveness across all requests, as every update must be committed on L1. When at least one operator is honest and the L1 blockchain remains live, the protocol progress is guaranteed. Rollups are therefore less sensitive to L2 synchrony and achieve optimal fault tolerance. However, they are entirely dependent on L1 throughput and liveness, resulting in a latency of $T_{L_2} + T_{L_1}$ for all operations. For the special case in which censored clients submit requests directly to the L1 blockchain, liveness performance coincides with that of L1 itself with the latency $T_{L_1}$.

In contrast, PCFs exhibit request-specific liveness behavior. The joining and update procedures require active participation from all honest clients; as captured by the ideal functionality, the adversary can influence when these procedures complete either through corruption (under the assumption of at least one honest client) or through communication delays. Consequently, any corrupted or delayed participant can censor progress. Since joining involves interaction with the L1 blockchain while updates are committed entirely off-chain, the corresponding latencies are $T_{L_1} + T_{L_2}$ and $T_{L_2}$, respectively. The \emph{settlement} procedure, by contrast, supports two modes: the collaborative mode, whose liveness can still be affected by other clients, and the unilateral mode, whose liveness depends solely on L1 and thus incurs a latency dominated by $T_{L_1}$. Overall, PCF protocols are highly sensitive to participant behavior, with liveness varying significantly across subroutines.

\textbf{Data Availability.} The \(\mathcal{F}_{\text{read}}\) subroutine reflects where clients can access the state and executed requests. In PCFs and sidechains, data availability relies on the assumption that a sufficient number of honest operators remain online and responsive. Clients reconstruct the current state by combining the initial state that is committed on the L1 blockchain, with off-chain logs of executed requests and states. As a result, both designs require $\Omega(n_p)$ storage on L1, where $n_p$ denotes the number of participants, and $\Omega(m)$ storage off-chain, where $m$ denotes the number of executed requests representing the L2 ledger size.

Rollups, by contrast, enforce data availability through on-chain publication, which are usually stored in BLOB~\cite{buterin2022eip4844}. All executed requests and resulting states are posted to the L1 blockchain, allowing the full state to be reconstructed directly from on-chain data without relying on operator availability. This design yields an L1 storage complexity of \(\Omega(n_p) +\Omega(m)\), while removing the need for persistent off-chain storage to guarantee availability. Accordingly, no lower bound on L2 storage is required, since all state-relevant data is on L1.
We emphasize, however, that this is a design choice. In more decentralized rollup architectures, data availability could instead be ensured by relying on existential honesty among L2 archival nodes: i.e., requiring that at least one node faithfully stores the L2 ledger~\cite{milkomeda,capretto2024fast}. In such designs, the on-chain storage burden may be reduced at the cost of additional trust assumptions on L2 participants. 
Our framework captures this via $\mathcal{F}_{\text{read}}$ depending on local storage (PCFs, sidechains) or $\mathcal{F}_{\text{ledger}}$ (rollups).

\subsubsection{Properties Tradeoffs}
After outlining how L2 protocols differ in security and efficiency properties, we now examine the trade-offs these designs entail. Since the core goal of a L2 protocol is to execute state updates efficiently while maintaining consistency, this objective is captured by the \emph{safety} and \emph{liveness} properties. We show that a fundamental trade-off exists between them for state update requests. Omitted proofs appear in Appendix~\ref{apdx:proof}.


\begin{restatable}{theorem}{ThmlivenessPS}\label{thm:livenessPS}
Any secure PCF or sidechain protocol provides at most \(f_{OP}\)-safety and cannot achieve liveness better than \(\{\lfloor\frac{n_{OP}-(f_{OP}+1)}{2}\rfloor,\\ T_{L_2}\}\) for state-update requests.
\end{restatable}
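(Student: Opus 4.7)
The plan is to derive the bound through a quorum intersection argument on $\mathcal{F}^{\Lambda}_{\text{update}}$, leveraging the structural constraints that Section~\ref{sec:insight} imposes on PCF and sidechain protocols. In both classes, a new state is admitted into \texttt{StateList} purely via operator agreement, without any consultation of $\mathcal{F}_{\text{ledger}}$. The first step is therefore to characterize the abstract ``acceptance set'' of $\mathcal{F}^{\Lambda}_{\text{update}}$: I would argue that any predicate of this form can be represented, without loss of generality, by a monotone family of quorums of operators whose combined endorsement suffices for acceptance, in the spirit of Byzantine Quorum Systems. Let $q$ denote the minimum quorum size admitted by this family.

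Next I would show that $f_{OP}$-safety forces $q \geq \lceil (n_{OP}+f_{OP}+1)/2 \rceil$, and then translate this bound into the claimed liveness limit. The safety bound follows by contradiction: if $2q \leq n_{OP}+f_{OP}$, then two minimum-size quorums $Q_1,Q_2$ exist whose intersection has size at most $f_{OP}$ and can therefore be placed entirely inside the set of corrupted operators. I would then construct an environment and adversary that use exactly those Byzantine endorsements to make $\mathcal{F}^{\Lambda}_{\text{update}}$ accept two conflicting state transitions, one observed by each of two honest clients. Because $\mathcal{F}^{\Lambda}_{\text{read}}$ reports directly from \texttt{StateList} without cross-checking L1, the two honest clients obtain incomparable views, contradicting the view-consistency clause of $f_{OP}$-safety. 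For the liveness half, $\mathcal{F}^{\Lambda}_{\text{update}}$ can return \texttt{True} within $T_{L_2}$ only if at least $q$ operators are honest and responsive inside that window, i.e., $n_{OP}-f \geq q$; substituting the bound on $q$ yields $f \leq \lfloor (n_{OP}-(f_{OP}+1))/2\rfloor$. I would close with a matching adversarial schedule that corrupts the maximal allowed number of operators and silences them, showing the bound is tight.

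The main obstacle I expect is the very first step: within the iUC setting, $\mathcal{F}^{\Lambda}_{\text{update}}$ can in principle be any predicate that avoids $\mathcal{F}_{\text{ledger}}$, so its reduction to a quorum system is not immediate. Establishing that any such predicate tolerating $f_{OP}$ Byzantine operators for safety must in fact admit a quorum characterization---essentially a reduction to the classical Byzantine Quorum Systems impossibility---is where most of the technical work will lie. Once this reduction is in place, the counting arguments in the safety and liveness steps are routine, and the theorem follows by combining them.
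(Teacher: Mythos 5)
Your proposal is correct and follows essentially the same route as the paper's proof: both come down to the same quorum-intersection counting argument, with the paper assuming the stronger liveness threshold $\lfloor\frac{n_{OP}-(f_{OP}+1)}{2}\rfloor+1$ and exhibiting a coalition of $f_{OP}$ Byzantine operators plus half of the honest ones, of size $f_{OP}+\lfloor\frac{n_{OP}-f_{OP}}{2}\rfloor$, that meets the acceptance threshold twice over---which is exactly your $2q \leq n_{OP}+f_{OP}$ intersection violation read in the contrapositive direction. The reduction of an arbitrary update predicate to a threshold/quorum family, which you flag as the main technical obstacle, is not carried out in the paper either; it is implicitly discharged by the classification definitions in Section~\ref{sec:insight}, which stipulate that $\mathcal{F}_{\text{update}}$ requires agreement from all operators (PCF) or from $\theta(n_{OP})$ operators (sidechain).
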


\begin{restatable}{theorem}{ThmlivenessA}\label{thm:livenessA}
Any secure rollup protocol provides ($f_{L_2} + f_{L_1}$)-safety and cannot achieve liveness better than $\{(n_{OP}-1)+f_{L_1},\, T_{L_1} + T_{L_2}\}$ or $\{f_{L_1},\, T_{L_1}\}$ for state-update requests.
\end{restatable}

Beyond the relationship between security properties, our framework also reveals a trade-off between security and efficiency. In particular, recall that under our definition, \emph{liveness} for a given state update request holds if the resulting state can be retrieved via a corresponding read request. As demonstrated in the case studies, different L2 protocols implement the \(\mathcal{F}_{\text{read}}\) subroutine using distinct mechanisms to determine the final read result. These design choices directly affect how L1 and L2 storage are utilized in practice. We summarize this connection below.

\begin{restatable}{theorem}
{ThmdataPS}\label{thm:dataPS}
    Assume $m$ state update requests are executed after the protocol starts. To guarantee \emph{liveness} for secure PCF protocol and sidechain protocol, the \emph{data availability} needs to have $\{\Omega(m), \Omega(n_p)\}$ efficiency.
\end{restatable}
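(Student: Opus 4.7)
The plan is to derive the two lower bounds separately, each exploiting a different structural feature of the $\mathcal{F}_{\text{read}}$ and $\mathcal{F}_{\text{open}}$ subroutines as characterized in the preceding subsection. By the definition of $(G_{L_2},G_{L_1})$-data availability, any secure protocol $\Lambda$ that also realizes liveness must make the triple $\{E,S_{L1},S'\}$ accessible through $\mathcal{F}^{\Lambda}_{\text{read}}$ with $f_{st}(S_{L1},E)=S'$, and by liveness each of the $m$ executed requests must eventually appear in $E$ for every honest reader. The proof then reduces to bounding from below the storage needed to maintain, respectively, $E$ and $S_{L1}$ in the only locations where each datum can legitimately reside.

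First I would establish the L2 bound $\Omega(m)$. The subroutine-classification results show that for PCF and sidechain protocols, $\mathcal{F}^{\Lambda}_{\text{update}}$ and $\mathcal{F}^{\Lambda}_{\text{read}}$ do not interact with $\mathcal{F}_{\text{ledger}}$ to produce the executed-request log $E$ or the current state $S'$; both are derived purely from the \texttt{Internal state} maintained off-chain. Hence any data in $E$ or $S'$ that is required by the read output must be physically held on L2. I would then argue by an adversarial input argument: the environment may submit $m$ pairwise distinguishable update requests (e.g., with distinct identifiers, payloads, or nonces drawn uniformly from a large domain), so that any function retrieving $E$ via $\mathcal{F}^{\Lambda}_{\text{read}}$ must, by an information-theoretic counting argument, have internal L2 state of size $\Omega(m)$; otherwise two distinct request sequences collapse to the same internal state and liveness fails for at least one of them (the missing request cannot be reconstructed by $f_{st}$ from $S_{L1}$ alone).

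Next I would establish the L1 bound $\Omega(n_p)$. By \emph{correct initialization}, $\mathcal{F}^{\Lambda}_{\text{open}}$ only outputs \texttt{True} when $s_{int}$ is committed on $\mathcal{F}_{\text{ledger}}$; for PCF this is the joint collateral of all clients and third-party wardens, and for sidechains each joining client must have its individual peg-in transaction recorded on L1. In both cases the L1 record must discriminate between the $n_p$ participants (distinct addresses, deposit values, or public keys), so an analogous counting argument forces the on-chain footprint of $S_{L1}$ to be $\Omega(n_p)$. Combining the two bounds yields $\{G_{L_2},G_{L_1}\}=\{\Omega(m),\Omega(n_p)\}$.

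The main obstacle I expect is the L2 lower bound, since one must preclude ``clever'' compressions of $E$ that still permit each individual request to be recovered (for instance, succinct cryptographic commitments together with short openings). I would handle this by observing that the definition of data availability demands that the actual set $E$, not a commitment to it, be returned by $\mathcal{F}^{\Lambda}_{\text{read}}$ so that the relation $f_{st}(S_{L1},E)=S')$ can be verified externally; thus any compressed representation still has to expand to $\Omega(m)$ bits of genuine request data somewhere accessible off-chain, which is precisely what the L2 storage bound counts.
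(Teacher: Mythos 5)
Your proposal follows essentially the same route as the paper: the $\Omega(n_p)$ bound on L1 comes from the joining/peg-in transactions that every participant must commit on-chain, and the $\Omega(m)$ bound on L2 comes from a contradiction argument using the fact that PCF and sidechain instantiations of $\mathcal{F}_{\text{read}}$ never consult $\mathcal{F}_{\text{ledger}}$, so an off-chain store holding fewer than $m$ requests leaves some executed request unretrievable and breaks \emph{liveness}. Your information-theoretic counting argument over pairwise-distinguishable requests and the explicit dismissal of succinct-commitment compressions are refinements the paper leaves implicit (it simply pigeonholes on the number of stored requests), but they do not change the underlying argument.
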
 

\begin{restatable}{theorem}
{ThmdataA}\label{thm:datar}
    Assume $m$ state update requests are executed after the protocol starts. To guarantee \emph{liveness} for secure rollup protocol, the \emph{data availability} needs to have $\{\Omega(1), \Omega(m) + \Omega(n_p)\}$ efficiency.
\end{restatable} 

\section{Insights for Designing Suitable Layer 2 Protocols}
\label{sec:design}

As argued in Section~\ref{sec:intro}, real-world scenarios often impose requirements beyond those addressed by the L2 protocols shown in our case studies, for instance. In this section, we show how to leverage insights from our framework to derive a suitable L2 design from existing constructions. Specifically, we present a novel optimistic-rollup protocol.

\subsection{Problem Statement}  

Despite widespread deployment of optimistic rollups in cryptocurrency markets, a primary limitation is high confirmation latency before finality~\cite{Buterin2021IncompleteRollups}. As discussed in Section~\ref{sec:casestudy} and \ref{sec:insight}, this latency also includes the dispute window $T_{\text{challenge}}$, which is usually set to be long enough, for example, Arbitrum uses a dispute period of roughly one week~\cite{ArbitrumDocs2025GentleBoLD}. Such a design is unsuitable for time-sensitive applications. This raises a natural question: how should the rollup design be modified to meet the latency requirements?

\subsection{\xr Overview} 

Our case studies of the Brick payment channel and the Liquid sidechain in Section~\ref{sec:casestudy} provide insights for solving the problem. Both protocols show that introducing a committee of additional parties like BFT operators in Liquid allows the system to trade a stronger trust assumption on that committee for faster, committee-certified finality, with the committee's sole responsibility being to produce a quorum certificate. Guided by this insight, and aiming to preserve the L1-based safety of rollup protocols according to Definition~\ref{def:rollup} without relying on the \emph{off-chain consensus}, we propose \xr, a clean extension that adds \emph{fast-finality} through a third-party watchtower committee. Concretely, the watchtowers issue a quorum certificate with majority signatures guaranteeing timely execution under an honest majority. A single operator still periodically posts all transactions to L1 to preserve consistency between fast-finality and regular rollup transactions, and misbehavior is penalizable using the issued certificates.

\textbf{System Model.}
We consider four roles: (1) \emph{clients}, who submit regular transactions and those requiring liveness guarantees; (2) \emph{operators}, who validate transactions, batch and publish them on-chain, and provide liveness guarantees when required; (3) \emph{verifiers}, who check published batches and submit fraud-proofs (realized by clients here); and (4) \emph{watchtowers}, who monitor operator posts and issue \emph{fast-finality} certificates (signatures) for transactions with special liveness needs. The off-chain communication is assumed to be \emph{asynchronous}.

\textbf{Protocol Joining.}
As in standard rollups, clients deposit funds on L1 and are admitted once the L2 state reflects these deposits. Operators also post collateral to ensure safety. We assume an L1-registered committee of \(2f+1\) watchtowers, with at most \(f\) Byzantine (i.e., at least \(f+1\) honest).

\textbf{Transaction Execution.}
Transactions are considered as committed in two different situations. A \emph{regular} transaction becomes final once it is included in an L1 batch, published by the operator or clients themselves, and remains unchallenged for the full dispute window. A \emph{fast-finality} transaction becomes final when a watchtower committee that continuously monitors the blockchain issues a quorum certificate and posts it to L1. The certificate consists of $f+1$ watchtower signatures over the transaction. Clients can also send transactions by themselves to L1 when they get censored by the operator.

\textbf{Protocol Leaving.}
A client issues a leave request claiming its latest L2 state. Once processed as a regular transaction, the client finalizes by submitting the corresponding L1 withdrawal. A client can also leave unilaterally through \emph{escape-hatch}.

\subsection{Protocol Analysis}


Here we give the conclusion for analysis, the detailed protocol pseudocode and analysis can be found in Appendix~\ref{apdx:cross}. We formally define the real world protocol of the \xr as $\mathcal{P}^{\text{FRoll}} := (\mathcal{P}^{\text{FRoll}}_{\text{client}}: \text{client},\mathcal{F}_{\text{ledger}} \space \mid \allowbreak \space  \mathcal{P}^{\text{FRoll}}_{\text{operator}},\allowbreak \space \mathcal{P}^{\text{FRoll}}_{\text{watchtower}},\allowbreak \space \mathcal{P}^{\text{FRoll}}_{\text{client}}: \text{verifier}, \allowbreak\mathcal{F}_{\text{sig}},\allowbreak \mathcal{F}^{\text{FRoll}}_{\text{com}})$. Based on our framework, we can define the ideal functionality for our \xr protocol as $\mathcal{F}^{\text{FRoll}}_{\text{layer2}} := (\mathcal{F}_{\text{client}},\mathcal{F}_{\text{ledger}} \allowbreak \space \mid \allowbreak \space 
\mathcal{F}^{\text{FRoll}}_{\text{submit}}, \allowbreak 
\mathcal{F}^{\text{FRoll}}_{\text{join}}, \allowbreak
\mathcal{F}^{\text{FRoll}}_{\text{update}}, \allowbreak
\mathcal{F}^{\text{FRoll}}_{\text{read}}, \allowbreak
\mathcal{F}^{\text{FRoll}}_{\text{settlement}}, \allowbreak
\mathcal{F}^{\text{FRoll}}_{\text{updRnd}})$ and we have the following conclusion for the ideal functionality $\mathcal{F}^{\text{FRoll}}_{\text{layer2}}$:

\begin{restatable}{theorem}
{ThmidealCross}
    The ideal functionality $\mathcal{F}^{\text{FRoll}}_{\text{layer2}}$ guarantees all the security properties of a secure L2 protocol.
\end{restatable}

The security of the real \xr protocol can then be demonstrated by proving that the real protocol iUC-realizes the ideal functionality. This can be formally defined as follows:

\begin{restatable}{theorem}
{ThmrealizeCross}
    With the existence of ideal functionalities $\mathcal{F}_{\text{ledger}}$, $\mathcal{F}_{\text{sig}}$ and $\mathcal{F}^{\text{FRoll}}_{\text{com}}$. The real \xr protocol $\mathcal{P}^{\text{FRoll}}$ iUC-realizes the ideal \xr functionality $\mathcal{F}^{\text{FRoll}}_{\text{layer2}}$.
\end{restatable}

\section{Related Work}

\textbf{Payment channels}~\cite{poon2016bitcoin,avarikioti2020cerberus,avarikioti2021b,avarikioti2025thunderdome,aumayr2021blitz,decker2015fast,burchert2018scalable,aumayr2022sleepy,miller2019sprites} enable two parties to conduct off-chain transactions by locking funds on-chain and exchanging signed messages that update their balance, with only the final state submitted to the blockchain for settlement. Security analyses for these protocols typically focus on the closing phase, capturing guarantees such as \emph{balance security}, which ensures honest parties can reclaim their rightful funds. These analyses are protocol-specific and do not generalize to broader L2 architectures. Our framework instead captures payment channels as a special case of a general L2 functionality, enabling uniform reasoning across diverse protocol designs.

\textbf{Sidechains}~\cite{nick2020liquid,gavzi2019proof,polygon} operate as separate blockchains with their own consensus mechanisms, often maintained by federated operators; clients interact with the sidechain while funds remain escrowed on the L1. Existing work defines security in terms of \emph{ledger safety} and \emph{liveness}, drawing from traditional blockchain theory~\cite{garay2017bitcoin}, but typically analyzes the sidechain in isolation, without supporting cross-paradigm comparisons or composable reasoning. Our framework treats sidechains within the same ideal functionality as other L2 protocols, allowing their assumptions and trade-offs to be systematically compared.

\textbf{Rollups}~\cite{bousfield2022arbitrum,zksyncdocs,fisch2024permissionless,gudgeon2020sok,capretto2024fast} outsource execution to an off-chain sequencer that posts commitments and proofs to the L1. Optimistic rollups rely on fraud proofs and timeout-based dispute windows, while zk-rollups use succinct validity proofs. A central challenge is \emph{data availability}, ensuring honest parties can reconstruct and verify protocol state. Although addressed in protocol design and empirical analysis~\cite{fisch2024permissionless,capretto2024fast}, formal models are lacking and existing analyses are protocol-specific. Our framework unifies these models through a common interface, exposing trade-offs between availability, latency, and trust assumptions.

\textbf{Security frameworks} for L2 protocols have focused on specific constructions. Aumayr et al.~\cite{aumayr2021generalized} formalize state channel networks using UC; Kiayias et al.~\cite{kiayias2020composable} study composability in Lightning-style channels. These offer strong guarantees but do not extend to sidechains or rollups. In the L1 setting, Graf et al.~\cite{graf2021security} introduce a general framework for distributed ledgers based on trust and consistency assumptions. We build on this line by presenting the first general security framework for L2 protocols, supporting composable reasoning across paradigms within a single ideal functionality. Jain et al.~\cite{jain2022tiramisu} formalize L2 protocols by modeling L1 as responsible solely for participant registration, an abstraction that does not accurately capture rollups, which rely on L1 for more than registration. Chaliasos et al.~\cite{chaliasos2024towards} focus on additional zk-rollup functionalities such as censorship resistance and protocol upgrades, rather than core execution. \cite{gudgeon2020sok} discusses L2 design trade-offs but addresses only implementation differences without formally analyzing the corresponding security guarantees. Weintraub et al.~\cite{weintraub2024payout} adopt a rigorous formal-methods approach for channel-protocol analysis but remain protocol-specific, lacking the generality of our iUC-based framework.

\section{Conclusion}

This paper introduces the first general security framework for Layer 2 (L2) blockchain protocols, formalized as a modular ideal functionality in the iUC framework. By structuring protocol behavior through well-defined subroutines covering joining, submission, state updates, reading, and settlement, our framework abstracts away concrete implementation mechanisms while preserving the key dimensions of L2 security.

This abstraction enables, for the first time, composable reasoning across structurally diverse L2 designs (payment channels, sidechains, and rollups) and provides a formal basis for comparing their safety, liveness, and data-availability guarantees under a common umbrella. Through detailed case studies and comparative analysis, we demonstrate the framework's generality and both validate known trade-offs and surface implicit design constraints that were previously understood only informally. We further illustrate the framework's design guidance through a blueprint for an optimistic-rollup protocol offering fast-finality.

More broadly, our framework lays the groundwork for the formal study of new challenges in L2 design. Its modular structure supports clean extensions to other settings, such as incorporating rational adversaries via Rational Protocol Design~\cite{garay2013rational} or modeling cross-chain interoperability and cross-protocol L2 composition. We view this work as a foundational step toward security- and incentive-aware design of a scalable blockchain infrastructure.

\section{Acknowledgments}
The work was partially supported by CoBloX Labs, by the European Research Council (ERC) under the European Union’s Horizon 2020 research (grant agreement 771527-BROWSEC), by the Austrian Science Fund (FWF) through the SFB SpyCode project F8510-N and F8512-N, and by the WWTF through the projects 10.47379/ICT22045 and 10.47379/ICT25056.





\bibliographystyle{plain}
\bibliography{reference}

\appendix

\section{The iUC Framework}
\label{apdx:iUC}

In this part, we give a short introduction to the iUC framework, which underlies the specifications and proofs in this paper. We refer to~\cite{camenisch2019iuc} for the formal specification and to~\cite{kusters2020iitm} for the underlying IITM (Inexhaustible Interactive Turing Machine) model. Prior work argues that iUC is particularly suitable for complex, stateful systems such as blockchain protocols~\cite{graf2021security}.

\paragraph{Protocols and machines.}
In iUC, both real-world protocols and ideal functionalities are modeled as systems of Interactive Turing Machines (ITMs) that exchange messages through designated interfaces. A protocol is written as
$\mathcal{P} = (\mathcal{M}_1 \mid \mathcal{M}_2, \ldots, \mathcal{M}_n)$,
where each machine $\mathcal{M}_i$ implements one or more \emph{roles}. A role describes a piece of code that performs a specific task within the protocol (e.g., a client submitting transactions, an operator producing blocks, or a watchtower monitoring disputes). An ideal functionality $\mathcal{F}$ is also a system of ITMs and is written analogously. Protocols and ideal functionalities can equivalently be written by replacing machine names with the functionality or protocol that contains them.

Each machine declares its roles as either public or private, using the notation
$(\text{public roles} \mid \text{private roles})$.
Public roles can be invoked via I/O by external callers, including higher-level protocols and the environment, whereas private roles are internal and accessible only from within the system. Communication among machine instances is realized through two types of connections:
\begin{itemize}
  \item \textbf{I/O connections} link callers to subroutines according to the protocol's role graph. These are the typed, structured connections that implement modular composition, allowing higher-level protocols to invoke operations of their subroutines.
  \item \textbf{NET connections} are controlled by the adversary $\mathcal{A}$ (or by the simulator $\mathcal{S}$). The adversary uses this interface to schedule message delivery, to notify entities of corruption, to send and receive messages on behalf of corrupted entities, and to receive the specified leakage from the protocol.
\end{itemize}
This separation allows iUC to model idealized subroutine calls (I/O) alongside adversarially controlled network behavior (NET).

\paragraph{Instances, entities, and identifiers.}
During a run, multiple instances of a given machine may be created. At any time step, only one instance is \emph{active} (the instance that has most recently received a message); all others are idle. Each instance may manage one or more \emph{entities}, identified by a triple
$(\mathrm{pid},\; \mathrm{sid},\; \mathrm{role})$,
where $\mathrm{pid}$ is a party identifier, $\mathrm{sid}$ is a session identifier, and $\mathrm{role}$ specifies which role the entity executes. When a message arrives addressed to an entity whose role is implemented by machine $\mathcal{M}$, the framework runs a user-specified \emph{CheckID} procedure over the existing instances of $\mathcal{M}$ (in creation order) to determine which instance manages that entity; if none does, CheckID can authorize spawning a new instance. A common convention for ideal functionalities is that a single instance manages all entities for a given session (CheckID accepts all entities with the same $\mathrm{sid}$), whereas real-world protocols typically instantiate one instance per party per session (CheckID accepts entities with a single $\mathrm{pid}$ and $\mathrm{sid}$).

The special variable $(\pcur, \scur, \mathrm{role}_{\mathrm{cur}})$ refers to the currently active entity of the current machine instance. When sending a message to a subroutine, the notation $(\epsilon, \scur, \mathrm{role})$ indicates that the party identifier is left unspecified; this is used when the recipient is a session-wide subroutine (such as $\mathcal{F}_{\text{update}}$) that is not associated with any specific party but rather manages shared state for the entire session. Similarly, the notation $(\_, \mathrm{sid}, \mathrm{role})$ denotes all entities with a given session and role, without restricting to a particular party.

\paragraph{$\mathsf{internalState}$ initialization.}
When an instance accepts its first entity, it executes an \textbf{Initialization} procedure to set up its $\mathsf{internalState}$, including global parameters, policy flags, and initial data structures. In our specifications, we explicitly let the $\mathsf{internalState}$ be initiated with predefined contents. Additionally, per-entity initialization logic can be defined within the initialization procedure when necessary.

\paragraph{Corruption model.}
Corruption in iUC is modeled at the entity level and is highly customizable. An instance records the corruption status of the entities it manages (e.g., honest, statically corrupted, dynamically corrupted) and may alter its behavior accordingly. The protocol designer specifies a \emph{leakage policy} that defines what information is released to the adversary upon corruption. Unless stated otherwise, all machines, environments, and adversaries are required to be probabilistic polynomial-time (PPT).

\paragraph{iUC realization.}
Security is defined by showing a real protocol $\mathcal{P}$ realizing an ideal functionality $\mathcal{F}$. We write
$\mathcal{P} \leq \mathcal{F}$
and say that ``$\mathcal{P}$ iUC-realizes $\mathcal{F}$'' if for every real-world adversary $\mathcal{A}$ there exists a simulator $\mathcal{S}$ (interacting with $\mathcal{F}$ via the NET connection) such that, for all environments $\mathcal{E}$, the executions $\{\mathcal{E}, \mathcal{A}, \mathcal{P}\}$ and $\{\mathcal{E}, \mathcal{S}, \mathcal{F}\}$ are computationally indistinguishable. Equivalently, in the formulation where the environment subsumes the network attacker, $\{\mathcal{E}, \mathcal{P}\}$ is indistinguishable from $\{\mathcal{E}, \mathcal{S}, \mathcal{F}\}$.

\paragraph{Composition.}
A major advantage of iUC is its support for modular composition. Suppose a higher-level protocol $\mathcal{Q}$ uses an ideal subroutine $\mathcal{F}$ through its public-role I/O interface. If a real protocol $\mathcal{P}$ has been proven to realize $\mathcal{F}$ (i.e., $\mathcal{P} \leq \mathcal{F}$), then $\mathcal{F}$ can be replaced by $\mathcal{P}$ inside $\mathcal{Q}$ without re-proving the security of the entire system. Formally, if $\mathcal{R} \in \mathrm{Comb}(\mathcal{Q}, \mathcal{P})$ and $\mathcal{I} \in \mathrm{Comb}(\mathcal{Q}, \mathcal{F})$ are the composed systems with matching public-role interfaces, then $\mathcal{R} \leq \mathcal{I}$. This underpins the modular ``plug-and-prove'' workflow used throughout our case studies: we first prove that each L2 protocol realizes its respective ideal L2 functionality assuming an ideal L1 ledger, and then replace the ideal L1 ledger with a concrete blockchain implementation.

\paragraph{Conventions used in this paper.}
We adopt the following conventions for our specifications:
\begin{itemize}
  \item \emph{Ideal functionalities} use a single machine instance per session (identified by $\mathrm{sid}$), which internally manages all entities and the shared state for that session.
  \item \emph{Real-world protocols} instantiate one machine instance per entity, reflecting that real participants hold independent local state and communicate only through the network.
\end{itemize}

\paragraph{Responsive environments.}
We leverage an iUC feature that forces the adversary to respond immediately to designated network messages. Such messages are indicated by the phrase ``\textbf{send responsively}'' in our specifications. Once the adversary receives such a message, it must supply the requested response in its very next activation before interacting with any other part of the system. This mechanism is used, for example, to leak information to the adversary or to let the adversary determine the corruption status of a new entity without disrupting the intended execution flow of the protocol.

\paragraph{Notation summary.}
We use $(\mathrm{pid}, \mathrm{sid}, \mathrm{role})$ to identify entities; $(\text{public} \mid \text{private})$ to declare role visibility; I/O for caller--subroutine connections; and NET for adversarially controlled network delivery. The variable $(\pcur, \scur, \mathrm{role}_{\mathrm{cur}})$ refers to the currently active entity. We use $\epsilon$ to denote an empty or unspecified field, indicating that the corresponding component is intentionally left empty. The notation $(\_, \mathrm{sid}, \mathrm{role})$ denotes any entities sharing a given session and role, without restricting to a particular party identifier.


\section{Ideal Functionality $\mathcal{F}_{\text{ledger}}$}
\label{appdx:ledger}

We briefly introduce the ideal ledger functionality used to model the underlying L1 blockchain; for a detailed treatment, see~\cite{graf2021security}. The functionality $\mathcal{F}_{\text{ledger}}$ is specified in the iUC framework~\cite{camenisch2019iuc} and follows a modular design: a single client machine $\mathcal{F}_{\text{client}}$ implements the core logic for handling read and write requests, while a set of parameterized subroutines captures the specific security properties of the ledger. Formally,
$\mathcal{F}_{\text{ledger}} = (\mathcal{F}_{\text{client}_{L1}} \mid \mathcal{F}_{\text{submit}_{L1}},\; \mathcal{F}_{\text{update}_{L1}},\; \mathcal{F}_{\text{read}_{L1}},\\\; \mathcal{F}_{\text{updRnd}_{L1}},\; \mathcal{F}_{\text{init}_{L1}},\; \mathcal{F}_{\text{leak}_{L1}}).$

At a high level, $\mathcal{F}_{\text{submit}_{L1}}$ validates incoming transactions and determines what information leaks to the adversary upon submission; $\mathcal{F}_{\text{update}_{L1}}$ governs how the adversary may extend the global transaction list $\mathsf{msglist}$, enforcing properties such as double-spending protection and transaction validity; $\mathcal{F}_{\text{read}_{L1}}$ determines what each party may observe when reading from the ledger, thereby capturing consistency and privacy guarantees; $\mathcal{F}_{\text{updRnd}_{L1}}$ controls the advancement of the built-in clock, enabling the formalization of time-dependent properties such as liveness; $\mathcal{F}_{\text{init}_{L1}}$ specifies the initial state of the ledger (e.g., a genesis block); and $\mathcal{F}_{\text{leak}_{L1}}$ determines what information is disclosed upon corruption of a party.

Higher-level protocols interact with $\mathcal{F}_{\text{ledger}_{L1}}$ exclusively through \textbf{SubmitL1} and \textbf{ReadL1} operations. A Submit request from an honest party is forwarded to $\mathcal{F}_{\text{submit}_{L1}}$, which decides whether the transaction is accepted and, if so, places it in a buffer $\mathsf{requestQueue}$. The adversary may then propose updates via $\mathcal{F}_{\text{update}_{L1}}$ to append accepted transactions to the global list $\mathsf{msglist}$, subject to the validity checks enforced by $\mathcal{F}_{\text{update}_{L1}}$. A Read request is processed by $\mathcal{F}_{\text{read}_{L1}}$, which returns an appropriate view of $\mathsf{msglist}$ to the requesting party. In this paper, we define the output of an L1 read request limited to L2 protocol related information and consist of three components: the transaction set $\{TX\}_{\text{L1}}$, representing committed L1 transactions, which also includes the transactions contained in both block and BLOB; the state set $\{state\}_{\text{L1}}$, representing committed states on the L1 ledger that is correctly executed based on the $\{TX\}_{\text{L1}}$; and the identity set $\{pid\}$, representing the identities of registered L2 participants.

\paragraph{Transaction format.} For simplicity and uniformity across both account-based blockchains (e.g., Ethereum) and UTXO-based block-chains (e.g., Bitcoin), we define a transaction as the tuple
$TX = (\mathit{sender},\; \mathit{receiver},\; \mathit{value},\; \mathit{data}),$
where $\mathit{sender}$ and $\mathit{receiver}$ denote the originating and destination parties (corresponding to sending and receiving accounts in account-based models, or inputs and outputs in UTXO-based models), $\mathit{value}$ denotes the resulting state after $TX$ is committed, and $\mathit{data}$ encodes any additional payload attached to the transaction.

\paragraph{Smart contract support.} The framework $\mathcal{F}_{\text{ledger}}$ fully supports the modeling of smart contracts. Following~\cite{graf2021security}, a smart contract programming language is fixed as a parameter, and smart contracts are represented as bit strings interpreted by the subroutines of $\mathcal{F}_{\text{ledger}}$ according to that language. In particular, contract logic is enforced primarily through $\mathcal{F}_{\text{submit}}$ and $\mathcal{F}_{\text{update}}$: the former validates that submitted transactions satisfy the conditions imposed by referenced contracts, while the latter ensures that only state transitions consistent with correct contract execution are appended to $\mathsf{msglist}$. This approach treats the contract language as an arbitrary but fixed parameter, making the security analysis independent of any specific smart contract language. Since the L2 constructions studied in this paper rely on smart contracts deployed on L1 (e.g., rollup contracts, payment channel contract), we discuss the concrete instantiation of $\mathcal{F}_{\text{submit}}$ and $\mathcal{F}_{\text{update}}$ to capture the relevant contract logic in each case study.
\vspace{1em}
\vspace{1em}\begin{functionality}{Excerpt description of $\mathcal{M}_{\text{client}_{\text{L1}}}$ of $\mathcal{F}_{\text{ledger}}$}{

\textbf{Implemented role(s):} \{client\textsubscript{L1}\}

\vspace{0.5em}
\noindent \textbf{Main (excerpt):}

\vspace{0.5em}

\textbf{recv} \{SubmitL1, $\mathit{msg}$\} \textbf{from} I/O:

\begin{enumerate}[itemsep=0.5em]
  \item \textbf{send} \{Submit, $\mathit{msg}$, $\mathsf{internalState}$\} \textbf{to} $(\pcur, \scur, \mathcal{F}_{\text{submit}_{\text{L1}}}:\text{submit})$,
  \textbf{wait for} \{Submit, $\mathit{response}$\} s.t. $\mathit{response} \in \{\text{true}, \text{false}\}$;

  \item \textbf{if} $\mathit{response} = \text{true}$:
    \begin{itemize}
      \item $\mathsf{reqCtr} \leftarrow \mathsf{reqCtr} + 1$;
      \item $\mathsf{requestQueue}.\text{add}(\mathsf{reqCtr}, \mathsf{round}, \mathit{msg})$;
    \end{itemize}
\end{enumerate}

\hrule
\vspace{0.5em}

\textbf{recv} \{ReadL1, $\mathit{msg}$\} \textbf{from} I/O:

\begin{enumerate}[itemsep=0.5em]
  \item \textbf{send} \{InitRead, $\mathit{msg}$, $\mathsf{internalState}$\} \textbf{to} $(\pcur, \scur, \mathcal{F}_{\text{read}_{\text{L1}}}:\text{read})$,
  \textbf{wait for} \{InitRead, $\mathit{local}$\} s.t. $\mathit{local} \in \{\text{true}, \text{false}\}$;

  \item \textbf{if} $\mathit{local} = \text{true}$:
    \begin{itemize}
      \item[$\star$] \textbf{send responsively} \{InitRead\} \textbf{to} $\mathcal{S}$ via NET;
      \textbf{wait for} \{InitRead,\\ $\mathit{suggestedOutput}$\} from NET;

      \item \textbf{send} \{FinishRead, $\mathit{msg}$, $\mathit{suggestedOutput}$, $\mathsf{internalState}$\} \textbf{to} $(\pcur, \scur, \\ \mathcal{F}_{\text{read}_{\text{L1}}}:\text{read})$;
      \textbf{wait for} \{FinishRead, $\mathit{output} = \{\{\mathit{TX}\}_{\text{L1}}, \{\mathit{state}\}_{\text{L1}}, \{\mathit{pid}\}\}$\};

      \item \textbf{if} $\mathit{output} = \bot$: go to $\star$ and repeat; \cmt{Reset local variables and retry}

      \item \textbf{send responsively} \{FinishRead\} \textbf{to} $\mathcal{S}$ via NET;
      \textbf{wait for} ack;

      \item \textbf{reply} \{Read, $\mathit{output}$\} via I/O;
    \end{itemize}

  \item \textbf{else} ($\mathit{local} = \text{false}$):
    \begin{itemize}
      \item $\mathsf{readCtr} \leftarrow \mathsf{readCtr} + 1$;
      \item $\mathsf{readQueue}.\text{add}(\mathit{pid}, \mathsf{readCtr}, \mathsf{round}, \mathit{msg})$; \cmt{Queue for later delivery}
    \end{itemize}
\end{enumerate}

\hrule
\vspace{0.5em}

\textbf{recv} \{DeliverRead, $\mathsf{readCtr}$, $\mathit{suggestedOutput}$\} \textbf{from} NET, \textbf{s.t.} $(\mathit{pid}, \mathsf{readCtr}, r, \mathit{msg}) \in \mathsf{readQueue}$:

\begin{enumerate}[itemsep=0.5em]
  \item \textbf{send} \{FinishRead, $\mathit{msg}$, $\mathit{suggestedOutput}$, $\mathsf{internalState}$\} \textbf{to} $(\pcur, \scur, \mathcal{F}_{\text{read}_{\text{L1}}}:\text{read})$,
  \textbf{wait for} \{FinishRead, $\mathit{output}$\};

  \item \textbf{if} $\mathit{output} \neq \bot$:
    \begin{itemize}
      \item \textbf{send responsively} \{FinishRead, $\mathsf{readCtr}$\} \textbf{to} $\mathcal{S}$ via NET;
      \textbf{wait for} ack;
      \item $\mathsf{readQueue}.\text{remove}(\mathit{pid}, \mathsf{readCtr}, r, \mathit{msg})$;
      \item \textbf{send} \{Read, $\mathit{output}$\} \textbf{to} $(\mathit{pid}, \scur, \text{I/O})$;
    \end{itemize}

  \item \textbf{else}: \textbf{send} \{nack\} \textbf{to} $\mathcal{S}$ via NET; \cmt{Suggested output rejected by subroutine}
\end{enumerate}

}\end{functionality}\vspace{1em}

\section{Ideal Functionality $\mathcal{F}_{
\text{sig}
}$, $\mathcal{F}_{\text{com}}$}
\label{apdx:sigcom}

Here we formally define the ideal functionality based on the iUC framework for the ideal signature scheme $\mathcal{F}_{\text{sig}}$ that realizes existentially unforgeable under chosen-message attack (EUF-CMA) security, time-related functionalities $\mathcal{F}_{\text{com}}$. 

$\mathcal{F}_{\text{sig}}$ denotes the ideal functionality for a signature scheme with security parameter $k\in \mathbb{N}$, which can be accessed to generate or verify signatures. Two roles participate in this functionality: the \emph{signer} and the \emph{verifier}. The definition of $\mathcal{F}_{\text{sig}}$ is as follows. Since all protocols mentioned in this paper use this functionality as subroutines, we do not specify the subroutine here.

\vspace{1em}\begin{functionality}{Description of protocol $\mathcal{F}_{\text{sig}} = (\text{signer}, \text{verifier})$}{

\textbf{Participating roles:} \{signer, verifier\}

\noindent \textbf{Corruption model:} incorruptible

\noindent \textbf{Protocol parameters:}
\begin{itemize}
    \item $p \in \mathbb{Z}[x]$: polynomial bounding the runtime of adversary-provided algorithms
    \item $\eta \in \mathbb{N}$: security parameter
    \item $\mathsf{gen}$: key generation algorithm; outputs $(\mathit{pk}, \mathit{sk})$ on input $1^{\eta}$
    \item $\mathsf{sig}$: signing algorithm; outputs a signature $\sigma$ of length $\eta$ bits on input $(\mathit{msg}, \mathit{sk})$
    \item $\mathsf{ver}$: signature verification algorithm; outputs a verification result on input $(\mathit{msg}, \sigma, \mathit{pk})$
\end{itemize}

}\end{functionality}
\vspace{.5em}

\begin{functionality}{Description of $\mathcal{M}_{\text{signer,verifier}}$ of $\mathcal{F}_{\text{sig}}$}{

\textbf{Implemented role(s):} \{signer, verifier\}

\noindent \textbf{Internal state:}
\begin{itemize}
    \item $(\mathit{pk}, \mathit{sk}) \in (\{0,1\}^{*} \cup \{\bot\})^{2}$, $(\mathit{pk}, \mathit{sk}) = (\bot, \bot)$ \hfill\cmt{Key pair}
    \item $\mathit{pid}_{\mathit{owner}} \in \{0,1\}^{*} \cup \{\bot\}$, $\mathit{pid}_{\mathit{owner}} = \bot$ \hfill\cmt{Party ID of the key owner}
    \item $\mathsf{msgList} \subset \{0,1\}^{*}$, $\mathsf{msgList} = \emptyset$ \hfill\cmt{Set of recorded signed messages}
    \item $\mathsf{corr} \in \{\text{true}, \text{false}\}$, $\mathsf{corr} = \text{false}$ \hfill\cmt{Signing key corruption status}
\end{itemize}

\noindent \textbf{CheckID}(\emph{pid}, \emph{sid}, \emph{role}):
\begin{itemize}
    \item Check that $\mathit{sid} = (\mathit{pid}', \mathit{sid}')$. If this check fails, output \textbf{reject}.
    \item Otherwise, accept all entities with the same \emph{sid}. \cmt{A single instance manages all parties and roles in a single session, modeling one signature key pair belonging to party $\mathit{pid}'$}
\end{itemize}

\noindent \textbf{Corruption behavior:}
\begin{itemize}
    \item \textbf{DetermineCorrStatus}(\emph{pid}, \emph{sid}, \emph{role}): Return $\mathsf{corr}$.
\end{itemize}

\noindent \textbf{Initialization:}
\begin{itemize}
    \item $(\mathit{pk}, \mathit{sk}) \overset{\$}{\leftarrow} \mathsf{gen}(1^{\eta})$; \cmt{Generate public/secret key pair}
    \item Parse $\scur$ as $(\mathit{pid}, \mathit{sid})$; $\mathit{pid}_{\mathit{owner}} \leftarrow \mathit{pid}$;
\end{itemize}

\vspace{0.5em}
\noindent \textbf{Main:}

\vspace{0.5em}

\textbf{recv} \{Sign, $\mathit{msg}$\} \textbf{from} I/O \textbf{to} $(\mathit{pid}_{\mathit{owner}}, \_, \text{signer})$:

\begin{enumerate}[itemsep=0.5em]
    \item Let $\sigma \leftarrow \mathsf{sig}^{(p)}(\mathit{msg}, \mathit{sk})$; \cmt{Sign the message with the secret key under runtime bound $p$}
    \item $\mathsf{msgList}.\text{add}(\mathit{msg})$; \cmt{Record the signed message for later verification}
    \item \textbf{reply} \{Signature, $\sigma$\};
\end{enumerate}

\hrule
\vspace{0.5em}

\textbf{recv} \{Verify, $\mathit{msg}$, $\sigma$\} \textbf{from} I/O \textbf{to} $(\_, \_, \text{verifier})$:

\begin{enumerate}[itemsep=0.5em]
    \item Let $b \leftarrow \mathsf{ver}^{(p)}(\mathit{msg}, \sigma, \mathit{pk})$; \cmt{Verify the signature with the public key under runtime bound $p$}
    \item \textbf{if} $b = \text{true}$ and $\mathit{msg} \notin \mathsf{msgList}$ and $\mathsf{corr} = \text{false}$:
    \textbf{reply} \{VerResult, false\}; \cmt{Prevent forgery: reject signatures on never-signed messages when the key is not corrupted}
    \item \textbf{else}: \textbf{reply} \{VerResult, $b$\}; \cmt{Return the underlying verification result}
\end{enumerate}

\hrule
\vspace{0.5em}

\textbf{recv} \{CorruptSigKey\} \textbf{from} NET: \cmt{Allow network attacker to corrupt signature keys}

\begin{enumerate}[itemsep=0.5em]
    \item $\mathsf{corr} \leftarrow \text{true}$;
    \item \textbf{reply} \{CorruptSigKey, ok\};
\end{enumerate}

}\end{functionality}
\vspace{1em}
$\mathcal{F}_{\text{com}}$ denotes the ideal functionality that captures the communication network and the inner clock for time-related behavior. 

First, in the Brick channel, the communication network is assumed to be adversary-controlled. Accordingly, we define the functionality \(\mathcal{F}^{\text{Brick}}_{\text{com}}\) whose round can be advanced by the adversary without additional checks. To ensure the wardens are monitoring the blockchain and react to the unilateral closing request, each time the round is updated, it will trigger the wardens to check the blockchain.
\vspace{1em}
\vspace{1em}\begin{functionality}{Description of protocol $\mathcal{F}^{\text{Brick}}_{\text{com}} = (\text{com})$}{

\textbf{Participating roles:} \{com\}

\noindent \textbf{Corruption model:} incorruptible

}\end{functionality}
\vspace{.5em}
\begin{functionality}{Description of $\mathcal{M}^{\text{Brick}}_{\text{com}}$ of $\mathcal{F}^{\text{Brick}}_{\text{com}}$}{

\textbf{Implemented role(s):} \{com\}

\noindent \textbf{Subroutines:} $\mathcal{P}^{\text{Brick}}_{\text{client}}$: \{client, operator\}, $\mathcal{P}^{\text{Brick}}_{\text{warden}}$: warden

\noindent \textbf{Internal state:}
\begin{itemize}
    \item $\mathsf{bufferMsg} \subset \{0,1\}^{*}$, $\mathsf{bufferMsg} = \emptyset$ \hfill\cmt{Buffered messages pending delivery}
    \item $\mathsf{labels} \subset \mathbb{N}$, $\mathsf{labels} = \emptyset$ \hfill\cmt{Labels assigned to buffered messages}
    \item $\mathsf{round} \in \mathbb{N}_{\geq 0}$, $\mathsf{round} = 0$ \hfill\cmt{Current protocol round}
\end{itemize}

\noindent \textbf{CheckID}(\emph{pid}, \emph{sid}, \emph{role}): Accept all messages with the same \emph{sid}.

\vspace{0.5em}
\noindent \textbf{Main:}

\vspace{0.5em}

\textbf{recv} \{Message, $\mathit{msg}$\} \textbf{from} I/O:

\begin{enumerate}[itemsep=0.5em]
    \item Sample $\mathit{label} \overset{\$}{\leftarrow} \mathbb{N}$; $\mathsf{labels}.\text{add}(\mathit{label})$; \cmt{Assign a fresh label to the message}

    \item Parse $\mathit{msg} = (\mathit{round}_{\mathit{send}}, \mathit{pid}_{\mathit{call}}, \mathit{receiver}, m)$;

    \item $\mathsf{bufferMsg}.\text{add}(\mathit{label}, \mathit{round}_{\mathit{send}}, \mathit{pid}_{\mathit{call}}, \mathit{receiver}, m)$; \cmt{Buffer the labeled message}
\end{enumerate}

\hrule
\vspace{0.5em}

\textbf{recv} \{Deliver, $\mathit{label}$\} \textbf{from} NET:

\begin{enumerate}[itemsep=0.5em]
    \item Fetch $(\mathit{label}, \mathit{round}_{\mathit{send}}, \mathit{pid}_{\mathit{call}}, \mathit{receiver}, m) \in \mathsf{bufferMsg}$;

    \item $\mathsf{bufferMsg}.\text{remove}(\mathit{label}, \ldots)$; \cmt{Remove the delivered message from buffer}

    \item \textbf{send} $(m, \mathit{pid}_{\mathit{call}})$ \textbf{to} $\mathit{receiver}$;
\end{enumerate}

\hrule
\vspace{0.5em}

\textbf{recv} \{UpdateRound\} \textbf{from} NET:

\begin{enumerate}[itemsep=0.5em]
    \item $\forall\, \mathit{pid}_W$ running the warden role:
    \textbf{send} \{SettlementCheck\} \textbf{to} $(\mathit{pid}_W, \scur, \mathcal{P}^{\text{Brick}}_{\text{warden}}:\text{warden})$; \cmt{Trigger all wardens to perform settlement checks}

    \item $\mathsf{round} \leftarrow \mathsf{round} + 1$;
    \textbf{reply} \{UpdateRound, true\};
\end{enumerate}

\hrule
\vspace{0.5em}

\textbf{recv} \{GetCurRound\} \textbf{from} NET or I/O:

\begin{enumerate}[itemsep=0.5em]
    \item \textbf{reply} \{GetCurRound, $\mathsf{round}$\};
\end{enumerate}

}\end{functionality}\vspace{1em}

 Secondly, in the Liquid Sidechain, the communication is assumed to be synchronous. Therefore, the adversary must notify whether a message can be delivered within a bounded delay~$\delta$; otherwise, the protocol is not allowed to advance its round. Additionally, we also use $\mathcal{F}_{\text{com}}$ to notify according operator to propose a new block every three rounds, which is decided by a leader selection algorithm \emph{SelectL}(). The functionality is defined as follows:

\vspace{1em}\begin{functionality}{Description of protocol $\mathcal{F}^{\text{Liquid}}_{\text{com}} = (\text{com})$}{

\textbf{Participating roles:} \{com\}

\noindent \textbf{Corruption model:} incorruptible

\noindent \textbf{Protocol parameters:}
\begin{itemize}
    \item $\delta \in \mathbb{N}_{\geq 0}$: message delay bound \cmt{synchronous communication bound}
    \item $\mathsf{SelectL}()$: leader selection algorithm
\end{itemize}

}\end{functionality}

\vspace{.5em}\begin{functionality}{Description of $\mathcal{M}^{\text{Liquid}}_{\text{com}}$ of $\mathcal{F}^{\text{Liquid}}_{\text{com}}$}{

\textbf{Implemented role(s):} \{com\}

\noindent \textbf{Subroutines:} $\mathcal{P}^{\text{Liquid}}_{\text{client}}$: client, $\mathcal{P}^{\text{Liquid}}_{\text{operator}}$: operator

\noindent \textbf{Internal state:}
\begin{itemize}
    \item $\mathsf{bufferMsg} \subset \{0,1\}^{*}$, $\mathsf{bufferMsg} = \emptyset$ \hfill\cmt{Buffered messages pending delivery}
    \item $\mathsf{labels} \subset \mathbb{N}$, $\mathsf{labels} = \emptyset$ \hfill\cmt{Labels assigned to buffered messages}
    \item $\mathsf{round} \in \mathbb{N}_{\geq 0}$, $\mathsf{round} = 0$ \hfill\cmt{Current protocol round}
\end{itemize}

\noindent \textbf{CheckID}(\emph{pid}, \emph{sid}, \emph{role}): Accept all messages with the same \emph{sid}.

\vspace{0.5em}
\noindent \textbf{Main:}

\vspace{0.5em}

\textbf{recv} \{Message, $\mathit{msg}$\} \textbf{from} I/O:

\begin{enumerate}[itemsep=0.5em]
    \item Sample $\mathit{label} \overset{\$}{\leftarrow} \mathbb{N}$; $\mathsf{labels}.\text{add}(\mathit{label})$; \cmt{Assign a fresh label to the message}

    \item Parse $\mathit{msg} = (\mathit{round}_{\mathit{send}}, \mathit{pid}_{\mathit{call}}, \mathit{receiver}, m)$;

    \item $\mathsf{bufferMsg}.\text{add}(\mathit{label}, \mathit{round}_{\mathit{send}}, \mathit{pid}_{\mathit{call}}, \mathit{receiver}, m)$; \cmt{Buffer the labeled message}
\end{enumerate}

\hrule
\vspace{0.5em}

\textbf{recv} \{Deliver, $\mathit{label}$\} \textbf{from} NET:

\begin{enumerate}[itemsep=0.5em]
    \item Fetch $(\mathit{label}, \mathit{round}_{\mathit{send}}, \mathit{pid}_{\mathit{call}}, \mathit{receiver}, m) \in \mathsf{bufferMsg}$;

    \item $\mathsf{bufferMsg}.\text{remove}(\mathit{label}, \ldots)$; \cmt{Remove the delivered message from buffer}

    \item \textbf{send} $(m, \mathit{pid}_{\mathit{call}})$ \textbf{to} $\mathit{receiver}$;
\end{enumerate}

\hrule
\vspace{0.5em}

\textbf{recv} \{UpdateRound\} \textbf{from} NET:

\begin{enumerate}[itemsep=0.5em]
    \item \textbf{if} $\exists\, (\_, \mathit{round}_{\mathit{send}}, \_, \_, \_) \in \mathsf{bufferMsg}$ \textbf{s.t.} $\mathit{round}_{\mathit{send}} + \delta \geq \mathsf{round}$:
    \textbf{reply} \{UpdateRound, false\}; \cmt{Message still within delay bound; round cannot advance}

    \item \textbf{else}:
    \begin{itemize}
        \item \textbf{if} $\mathsf{round} \bmod 3 = 0$:
        let $\mathit{pid}_{\mathit{leader}} \leftarrow \mathsf{SelectL}()$;
        \textbf{send} \{UpdateSideChain\} \textbf{to} $(\mathit{pid}_{\mathit{leader}}, \scur, \mathcal{P}^{\text{Liquid}}_{\text{operator}}:\text{operator})$; \cmt{Trigger the newly selected leader to propose a block}

        \item $\mathsf{round} \leftarrow \mathsf{round} + 1$;
        \textbf{reply} \{UpdateRound, true\};
    \end{itemize}
\end{enumerate}

\hrule
\vspace{0.5em}

\textbf{recv} \{GetCurRound\} \textbf{from} NET or I/O:

\begin{enumerate}[itemsep=0.5em]
    \item \textbf{reply} \{GetCurRound, $\mathsf{round}$\};
\end{enumerate}

}\end{functionality}\vspace{1em}

Thirdly, we also use it for the case study of the Arbitrum Nitro Rollup protocol, which enforces that operators periodically publish data on the L1 blockchain by notifying all operator machines whenever the round is updated and the period $T_{\text{period}}$ is reached. Verifiers will be notified during the $T_{\text{challenge}}$ period. But the communication network is not guaranteed to be synchronous.

\vspace{1em}\begin{functionality}{Description of protocol $\mathcal{F}^{\text{Arbitrum}}_{\text{com}} = (\text{com})$}{

\textbf{Participating roles:} \{com\}

\noindent \textbf{Corruption model:} incorruptible

\noindent \textbf{Protocol parameters:}
\begin{itemize}
    \item $T_{\text{period}} \in \mathbb{N}_{\geq 1}$: batch publication period
    \item $T_{\text{challenge}} \in \mathbb{N}_{\geq 1}$: fraud-proof challenge period
\end{itemize}

}\end{functionality}\vspace{.5em}

\begin{functionality}{Description of $\mathcal{M}^{\text{Arbitrum}}_{\text{com}}$ of $\mathcal{F}^{\text{Arbitrum}}_{\text{com}}$}{

\textbf{Implemented role(s):} \{com\}

\noindent \textbf{Subroutines:} $\mathcal{P}^{\text{Arbitrum}}_{\text{operator}}$: operator, $\mathcal{P}^{\text{Arbitrum}}_{\text{verifier}}$: verifier

\noindent \textbf{Internal state:}
\begin{itemize}
    \item $\mathsf{bufferMsg} \subset \{0,1\}^{*}$, $\mathsf{bufferMsg} = \emptyset$ \hfill\cmt{Buffered messages pending delivery}
    \item $\mathsf{labels} \subset \mathbb{N}$, $\mathsf{labels} = \emptyset$ \hfill\cmt{Labels assigned to buffered messages}
    \item $\mathsf{round} \in \mathbb{N}_{\geq 0}$, $\mathsf{round} = 0$ \hfill\cmt{Current protocol round}
\end{itemize}

\noindent \textbf{CheckID}(\emph{pid}, \emph{sid}, \emph{role}): Accept all messages with the same \emph{sid}.

\vspace{0.5em}
\noindent \textbf{Main:}

\vspace{0.5em}

\textbf{recv} \{Message, $\mathit{msg}$\} \textbf{from} I/O:

\begin{enumerate}[itemsep=0.5em]
    \item Sample $\mathit{label} \overset{\$}{\leftarrow} \mathbb{N}$; $\mathsf{labels}.\text{add}(\mathit{label})$; \cmt{Assign a fresh label to the message}

    \item Parse $\mathit{msg} = (\mathit{round}_{\mathit{send}}, \mathit{pid}_{\mathit{call}}, \mathit{receiver}, m)$;

    \item $\mathsf{bufferMsg}.\text{add}(\mathit{label}, \mathit{round}_{\mathit{send}}, \mathit{pid}_{\mathit{call}}, \mathit{receiver}, m)$; \cmt{Buffer the labeled message}
\end{enumerate}

\hrule
\vspace{0.5em}

\textbf{recv} \{Deliver, $\mathit{label}$\} \textbf{from} NET:

\begin{enumerate}[itemsep=0.5em]
    \item Fetch $(\mathit{label}, \mathit{round}_{\mathit{send}}, \mathit{pid}_{\mathit{call}}, \mathit{receiver}, m) \in \mathsf{bufferMsg}$;

    \item $\mathsf{bufferMsg}.\text{remove}(\mathit{label}, \ldots)$; \cmt{Remove the delivered message from buffer}

    \item \textbf{send} $(m, \mathit{pid}_{\mathit{call}})$ \textbf{to} $\mathit{receiver}$;
\end{enumerate}

\hrule
\vspace{0.5em}

\textbf{recv} \{UpdateRound\} \textbf{from} NET:

\begin{enumerate}[itemsep=0.5em]
    \item \textbf{if} $(\mathsf{round} + 1) \bmod T_{\text{period}} = 0$:
    $\forall\, \mathit{pid}_O$ running the operator role:
    \textbf{send} \{UpdateRequest\} \textbf{to} $(\mathit{pid}_O, \scur, \mathcal{P}^{\text{Arbitrum}}_{\text{operator}}:\text{operator})$; \cmt{Trigger operators to publish a new batch}

    \item \textbf{if} $(\mathsf{round} + 1) \bmod (T_{\text{period}} + T_{\text{challenge}}) \neq 0$:
    $\forall\, \mathit{pid}_V$ running the verifier role:
    \textbf{send} \{UpdateCheck\} \textbf{to} $(\mathit{pid}_V, \scur, \mathcal{P}^{\text{Arbitrum}}_{\text{verifier}}:\text{verifier})$; \cmt{Trigger verifiers to check published batches}

    \item $\mathsf{round} \leftarrow \mathsf{round} + 1$;
    \textbf{reply} \{UpdateRound, true\};
\end{enumerate}

\hrule
\vspace{0.5em}

\textbf{recv} \{GetCurRound\} \textbf{from} NET or I/O:

\begin{enumerate}[itemsep=0.5em]
    \item \textbf{reply} \{GetCurRound, $\mathsf{round}$\};
\end{enumerate}

}\end{functionality}\vspace{1em}

Finally, in addition to the functionality used in Arbitrum Nitro, we further employ the functionality to trigger the watchtowers in our \xr protocol to publish fast-finality guarantees to the L1 blockchain. We denote this functionality by \(\mathcal{F}^{\text{FRoll}}_{\text{com}}\), defined as follows:

\vspace{1em}\begin{functionality}{Description of protocol $\mathcal{F}^{\text{FRoll}}_{\text{com}} = (\text{com})$}{

\textbf{Participating roles:} \{com\}

\noindent \textbf{Corruption model:} incorruptible

\noindent \textbf{Protocol parameters:}
\begin{itemize}
    \item $T_{\text{period}} \in \mathbb{N}_{\geq 1}$: batch publication period
\end{itemize}

}\end{functionality}
\vspace{.5em}

\begin{functionality}{Description of $\mathcal{M}^{\text{FRoll}}_{\text{com}}$ of $\mathcal{F}^{\text{FRoll}}_{\text{com}}$}{

\textbf{Implemented role(s):} \{com\}

\noindent \textbf{Subroutines:} $\mathcal{P}^{\text{FRoll}}_{\text{operator}}$: operator, $\mathcal{P}^{\text{FRoll}}_{\text{client}}$: verifier, $\mathcal{P}^{\text{FRoll}}_{\text{watchtower}}$: watchtower

\noindent \textbf{Internal state:}
\begin{itemize}
    \item $\mathsf{bufferMsg} \subset \{0,1\}^{*}$, $\mathsf{bufferMsg} = \emptyset$ \hfill\cmt{Buffered messages pending delivery}
    \item $\mathsf{labels} \subset \mathbb{N}$, $\mathsf{labels} = \emptyset$ \hfill\cmt{Labels assigned to buffered messages}
    \item $\mathsf{round} \in \mathbb{N}_{\geq 0}$, $\mathsf{round} = 0$ \hfill\cmt{Current protocol round}
\end{itemize}

\noindent \textbf{CheckID}(\emph{pid}, \emph{sid}, \emph{role}): Accept all messages with the same \emph{sid}.

\vspace{0.5em}
\noindent \textbf{Main:}

\vspace{0.5em}

\textbf{recv} \{Message, $\mathit{msg}$\} \textbf{from} I/O:

\begin{enumerate}[itemsep=0.5em]
    \item Sample $\mathit{label} \overset{\$}{\leftarrow} \mathbb{N} \setminus \mathsf{labels}$; $\mathsf{labels}.\text{add}(\mathit{label})$; \cmt{Assign a fresh unique label}

    \item Parse $\mathit{msg} = (\mathit{round}_{\mathit{send}}, \mathit{pid}_{\mathit{call}}, \mathit{receiver}, m)$;

    \item $\mathsf{bufferMsg}.\text{add}(\mathit{label}, \mathit{round}_{\mathit{send}}, \mathit{pid}_{\mathit{call}}, \mathit{receiver}, m)$; \cmt{Buffer the labeled message}
\end{enumerate}

\hrule
\vspace{0.5em}

\textbf{recv} \{Deliver, $\mathit{label}$\} \textbf{from} NET, \textbf{s.t.} $\mathit{label} \in \mathsf{labels}$:

\begin{enumerate}[itemsep=0.5em]
    \item Fetch $(\mathit{label}, \mathit{round}_{\mathit{send}}, \mathit{pid}_{\mathit{call}}, \mathit{receiver}, m) \in \mathsf{bufferMsg}$;

    \item $\mathsf{bufferMsg}.\text{remove}(\mathit{label}, \ldots)$; \cmt{Remove the delivered message from buffer}

    \item \textbf{send} $(m, \mathit{pid}_{\mathit{call}})$ \textbf{to} $\mathit{receiver}$;
\end{enumerate}

\hrule
\vspace{0.5em}

\textbf{recv} \{UpdateRound\} \textbf{from} NET:

\begin{enumerate}[itemsep=0.5em]
    \item \textbf{if} $(\mathsf{round} + 1) \bmod T_{\text{period}} = 0$:
    \begin{itemize}
        \item $\forall\, \mathit{pid}_O$ running the operator role:
        \textbf{send} \{UpdateRequest\} \textbf{to} $(\mathit{pid}_O, \scur,\\ \mathcal{P}^{\text{FRoll}}_{\text{operator}}:\text{operator})$; \cmt{Trigger operators to publish a new batch}

        \item $\forall\, \mathit{pid}_V$ running the verifier role:
        \textbf{send} \{UpdateCheck\} \textbf{to} $(\mathit{pid}_V, \scur,\\ \mathcal{P}^{\text{FRoll}}_{\text{client}}:\text{verifier})$; \cmt{Trigger verifiers to check published batches}

        \item $\forall\, \mathit{pid}_W$ running the watchtower role:
        \textbf{send} \{CheckFastL1\} \textbf{to} $(\mathit{pid}_W, \scur,\\ \mathcal{P}^{\text{FRoll}}_{\text{watchtower}}:\text{watchtower})$; \cmt{Trigger watchtowers to check fast-finality transactions on L1}
    \end{itemize}

    \item $\mathsf{round} \leftarrow \mathsf{round} + 1$;
    \textbf{reply} \{UpdateRound, true\};
\end{enumerate}

\hrule
\vspace{0.5em}

\textbf{recv} \{GetCurRound\} \textbf{from} NET or I/O:

\begin{enumerate}[itemsep=0.5em]
    \item \textbf{reply} \{GetCurRound, $\mathsf{round}$\};
\end{enumerate}

}\end{functionality}\vspace{1em}

\section{Case Study: The Brick Channel}
\label{apdx:brick}

\subsection{$\mathcal{F}_{\text{ledger}}$ instantiation for Brick channel}
\label{apd:BrickLedger}

\subsubsection*{Submission functionality $\mathcal{F}_{\text{submit}_{\text{L1}}}$}

The submission subroutine accepts a request to submit a transaction to L1 if and only if the transaction conforms to one of the Brick transaction types. For the Brick channel, following transactions that are sent to the contract address: channel opening ($\mathit{TX}_{\mathit{open}}$), warden collateral deposits ($\mathit{TX}_{\mathit{collateral}}$), collaborative closing ($\mathit{TX}_{\mathit{close}}$), unilateral closing ($\mathit{TX}_{\mathit{unilateral}}$), warden settlement transactions ($\mathit{TX}_{\mathit{settle}}$), and fraud proofs ($\mathit{TX}_{\mathit{fraud}}$) are accepted by $\mathcal{F}^{\text{Brick}}_{\text{submit}_{\text{L1}}}$.

\subsubsection*{Update functionality $\mathcal{F}_{\text{update}_{\text{L1}}}$}

The update subroutine refines the base $\mathcal{F}^{\text{Brick}}_{\text{update}_{\text{L1}}}$ by enforcing Brick-specific state-commitment semantics when resolving a closing event. Specifically, upon resolving a settlement trigger (i.e., once $\mathit{TX}_{\mathit{close}}$ or $\mathit{TX}_{\mathit{unilateral}}$ reaches finality on L1), $\mathcal{F}^{\text{Brick}}_{\text{update}_{\text{L1}}}$ determines the committed state as follows:
\begin{itemize}
    \item \textbf{Collaborative closing.} If $\mathit{TX}_{\mathit{close}}$ is committed on L1, the committed state $\mathit{State}_{\text{L1}}$ is set to the state $s_L$ carried in $\mathit{TX}_{\mathit{close}}$ (validated by the two client signatures $\sigma, \sigma'$).
    \item \textbf{Unilateral closing.} If $\mathit{TX}_{\mathit{unilateral}}$ is committed on L1,\\ $\mathcal{F}^{\text{Brick}}_{\text{update}_{\text{L1}}}$ collects all warden settlement transactions $\{\mathit{TX}_{\mathit{settle}}^{(j)}\}$ committed on L1 within the settlement window and verifies the attached warden signature on each. The according committed state is then set to $s^{*}$ where $(s^{*}, i^{*})$ is the tuple with the \emph{highest sequence number $i^{*}$} among at least $2f{+}1$ distinct, signature-valid warden settlement transactions. 
    \item \textbf{Fraud proof.} If any $\mathit{TX}_{\mathit{fraud}}$ committed on L1 demonstrates that a warden $W_j$ signed two distinct states with overlapping sequence numbers, the conflicting $\mathit{TX}_{\mathit{settle}}^{(j)}$ is excluded from the settlement aggregation above, and $W_j$'s collateral $c$ is forfeited.
\end{itemize}

\subsubsection*{Preservation of L1 safety and liveness.}

The instantiation preserves the original security guarantees of $\mathcal{F}_{\text{ledger}}$. All Brick-specific logic is additive: new transaction types are accepted as valid L1 payloads, and the settlement-resolution rule is a deterministic function of the transactions already committed on L1. No existing L1 transaction is rejected or reordered by the instantiation, so the underlying ledger's self-consistency and view-consistency (L1 safety) are inherited without change. Similarly, because each Brick transaction is submitted through the unchanged $\text{SubmitL1}$ interface and included in $\{\mathit{TX}\}_{\text{L1}}$ under the same conditions as any other L1 transaction, the liveness bound of the underlying ledger carries over: any honest submission is reflected on L1 within $T_{L_1}$. The highest-sequence-number rule for unilateral settlement is enforced on data that is already final on L1, which means the settlement outcome cannot diverge across honest views so long as L1 itself does not. Consequently, $\mathcal{F}^{\text{Brick}}_{\text{ledger}}$ still realizes the same $(f_{L_1})$-safety and liveness based on the specification above.

\subsection{Brick Channel Real Protocol}
\label{Apd:Brickreal}

We formally define the real-world protocol implementation $\mathcal{P}^{\text{Brick}}$ as
$\mathcal{P}^{\text{Brick}}:=(\mathcal{P}^{\text{Brick}}_{\text{client}} \mid \mathcal{P}^{\text{Brick}}_{\text{warden}},\mathcal{F}_{\text{sig}}, \mathcal{F}^{\text{Brick}}_{\text{com}})$.
The client and warden are the two participating roles. Clients are the main parties instructed by the environment to proceed with the protocol; wardens are service parties that prevent incorrect state settlement to the L1 blockchain. We assume $n=3f+1$ wardens in total, of which the adversary may corrupt at most~$f$. The structure is shown in Figure~\ref{fig:realbrick}.

\subsubsection{Client Protocol}

The client machine $\mathcal{M}^{\text{Brick}}_{\text{client}}$ defines the code run by the two main parties of the Brick channel that perform off-chain transactions.

\vspace{1em}\begin{functionality}{Description of protocol $\mathcal{P}^{\text{Brick}}_{\text{client}} = (\text{client})$}{

\textbf{Participating roles:} \{client\}

\noindent \textbf{Corruption model:} dynamic corruption, at least one honest

\noindent \textbf{Protocol parameters:}
\begin{itemize}
    \item $n = 3f+1$: warden committee size
    \item $f$: corruption threshold for wardens
\end{itemize}

}\end{functionality}\vspace{.5em}
\begin{functionality}{Description of $\mathcal{M}^{\text{Brick}}_{\text{client}}$}{

\textbf{Implemented role(s):} \{client\}

\noindent \textbf{Subroutines:} $\mathcal{F}_{\text{sig}}$: \{signer, verifier\}, $\mathcal{F}^{\text{Brick}}_{\text{com}}$: com, $\mathcal{F}_{\text{ledger}}$: client\textsubscript{L1}, $\mathcal{P}^{\text{Brick}}_{\text{warden}}$: warden

\noindent \textbf{Internal state:}
\begin{itemize}
    \item $\mathsf{round} \in \mathbb{N}_{\geq 0}$, $\mathsf{round} = 0$ \hfill\cmt{Current round}
    \item $\mathsf{requestQueue} \subset \{0,1\}^{*}$, $\mathsf{requestQueue} = \emptyset$ \hfill\cmt{Queued unexecuted requests}
    \item $\mathsf{executedRequest} \subset \{0,1\}^{*}$, $\mathsf{executedRequest} = \emptyset$ \hfill\cmt{Executed requests with certificates}
    \item $\mathsf{stateList} \subset \{0,1\}^{*} \times \mathbb{N}$, $\mathsf{stateList} = \emptyset$ \hfill\cmt{L2 state list of form $(s, i)$}
    \item $\mathsf{onchainState} \subset \{0,1\}^{*}$, $\mathsf{onchainState} = \emptyset$ \hfill\cmt{L1 committed state}
    \item $\mathsf{identities} \subset \{0,1\}^{*}$, $\mathsf{identities} = \{W_1, \ldots, W_n, \emph{ContractAddr}\}$ \hfill\cmt{Registered participants}
    \item $\mathsf{lastReadPointer} \in \mathbb{N}_{\geq 0}$, $\mathsf{lastReadPointer} = 0$ \hfill\cmt{Read location pointer}
    \item $\mathsf{wardenCounter} \in \mathbb{N}_{\geq 0}$, $\mathsf{wardenCounter} = 0$ \hfill\cmt{Warden signature counter}
    \item $\mathsf{wardenSig} \subset \mathbb{N} \times \{0,1\}^{*}$, $\mathsf{wardenSig} = \emptyset$ \hfill\cmt{Collected warden signatures}
\end{itemize}

\noindent \textbf{CheckID}(\emph{pid}, \emph{sid}, \emph{role}): Accept all messages with the same \emph{sid}.

\noindent \textbf{Corruption behavior:}
\begin{itemize}
    \item \textbf{DetermineCorrStatus}(\emph{pid}, \emph{sid}, \emph{role}): Return $\mathsf{corr}$.
    \item \textbf{LeakedData}(\emph{pid}, \emph{sid}, \emph{role}): Return \texttt{Internal State}.
\end{itemize}

\vspace{0.5em}
\noindent \textbf{Main:}

\vspace{0.5em}

\textbf{recv} \{Join, $s_{\mathit{init}}=\{s_{\mathit{int}_A},s_{\mathit{int}_B}\}$, \emph{counterIdentity}\} \textbf{from} I/O, \textbf{s.t.} $\mathsf{stateList} = \emptyset$:

\begin{enumerate}[itemsep=0.5em]

\item \textbf{send} \{Sign, $\{s_{\mathit{init}},1\}$\} \textbf{to} $(\pcur, \scur, \mathcal{F}_{\text{sig}}:\text{signer})$, and
\textbf{wait for} \{Signature, $\sigma$\};

\item \textbf{send} \{Message, \{Join, $s_{\mathit{init}}$, $\sigma$, \emph{receiver}\}\} \textbf{to} $(\pcur, \scur, \mathcal{F}^{\text{Brick}}_{\text{com}}:\text{com})$, where $\mathit{receiver} = (\mathit{counterIdentity}, \scur, \mathcal{P}^{\text{Brick}}_{\text{client}}:\text{client})$; \cmt{Send to counterparty}

\item \texttt{reqeustQueue}.add(\{Join,$s_{\mathit{init}}$\});

\end{enumerate}

\hrule
\vspace{0.5em}

\textbf{recv} \{Join, $s_{\mathit{init}}$, $\sigma'$, $\mathit{pid}_{\mathit{call}}$\} \textbf{from} $(\pcur, \scur, \mathcal{F}^{\text{Brick}}_{\text{com}}: \text{com})$:

\begin{enumerate}[itemsep=0.5em]

\item \textbf{send} \{Verify, $\{s_{\mathit{init}},1\}$, $\sigma'$\} \textbf{to} $(\pcur, \scur, \mathcal{F}_{\text{sig}}:\text{verifier})$,
\textbf{wait for} \{VerResult, $b$\};

\item \textbf{if} $b =$ true and $\exists$ \{Join,\{$s_{\mathit{init}}$\}\}$\in$ \texttt{requestQueue}:\\ \textbf{send} \{Message, \{$s_{\mathit{init}}$, $\sigma$, $\sigma'$, \emph{receiver}\}\} \textbf{to} $(\pcur, \scur, \mathcal{F}^{\text{Brick}}_{\text{com}}:\text{com})$, where $\mathit{receiver} \\= (\_, \scur, \mathcal{P}^{\text{Brick}}_{\text{warden}}: \text{warden})$; \cmt{Send to all wardens}

\end{enumerate}

\hrule
\vspace{0.5em}

\textbf{recv} \{Join, $s_{\mathit{init}}$, $\sigma_{W_i}$, $\mathit{pid}_{\mathit{call}}$\} \textbf{from} $(\pcur, \scur, \mathcal{F}^{\text{Brick}}_{\text{com}}: \text{com})$:

\begin{enumerate}[itemsep=0.5em]

\item \textbf{send} \{Verify, \{$s_{int}$, 1\}, $\sigma_{W_i}$\} \textbf{to} $(\pcur, \scur, \mathcal{F}_{\text{sig}}:\text{verifier})$ and
\textbf{wait for} \{VerResult, $b$\};

\item \textbf{if} $b =$ true:
$\mathsf{wardenCounter} \leftarrow \mathsf{wardenCounter} + 1$;
$\mathsf{wardenSig}.\text{add}(i, \sigma_{W_i})$;

\item \textbf{if} $\mathsf{wardenCounter} = 2f+1$:
\begin{itemize}
    \item \textbf{send} \{SubmitL1, $\mathit{TX}_{\mathit{open}}=\{\pcur, \emph{ContractAddr}, s_{\mathit{int}_A}, \{W_1, \ldots, W_n, f\}\}$\} \textbf{to} $(\pcur,\scur, \mathcal{F}_{\text{ledger}}: \text{client}_{\text{L1}})$;

    \item \textbf{send} \{ReadL1\} \textbf{to} $(\pcur,\scur, \mathcal{F}_{\text{ledger}}: \text{client}_{\text{L1}})$,
    \textbf{wait for} \{ReadL1, $\mathit{output}=\{\{\mathit{TX}_{\text{L1}}\}, \{state_{\text{L1}}\}, \{\mathit{pid}\}\}$\};

    \item \textbf{if} $\mathit{TX}_{\mathit{open}} \in \{\mathit{TX}_{\text{L1}}\}$ and $s_{\mathit{init}} \in \{state_{\text{L1}}\}$:
    $\mathsf{stateList}\leftarrow \{s_{\mathit{init}},1\}$;
    \\$\mathsf{onchainState}\leftarrow s_{\mathit{init}}$;
    $\mathsf{identities}.\text{add}(\mathit{counterIdentity})$;
    $\mathsf{wardenCounter} \leftarrow 0$;
    \textbf{reply} \{Join, $s_{\mathit{init}}$\} via I/O;
\end{itemize}

\end{enumerate}

\hrule
\vspace{0.5em}

\textbf{recv} \{Submit, $\{s=\{s_A,s_B\},i\}$\} \textbf{from} I/O, \textbf{s.t.} $s_A + s_B = s_{\mathit{int}_A} + s_{\mathit{int}_B}$ and $\nexists \{s', j\} \in \mathsf{stateList}$ with $j \geq i$:

\begin{enumerate}[itemsep=0.5em]

\item \textbf{send} \{Sign, $\{s,i\}$\} \textbf{to} $(\pcur, \scur, \mathcal{F}_{\text{sig}}: \text{signer})$,
\textbf{wait for} \{Signature, $\sigma$\};

\item $\mathsf{requestQueue}.\text{add}(\{s,i\})$;

\item \textbf{send} \{Message, \{Submit, $\{s,i\}$, $\sigma$, \emph{receiver}\}\} \textbf{to} $(\pcur, \scur, \mathcal{F}^{\text{Brick}}_{\text{com}}:\text{com})$, where $\mathit{receiver} = (\mathit{pid}_{\mathit{counter}}, \scur, \mathcal{P}^{\text{Brick}}_{\text{client}}:\text{client})$; \cmt{Send to counterparty}
\end{enumerate}

\hrule
\vspace{0.5em}

\textbf{recv} \{Submit, $\{s,i\}$, $\sigma'$, $\mathit{pid}_{\mathit{call}}$\} \textbf{from} $(\pcur, \scur, \mathcal{F}^{\text{Brick}}_{\text{com}}: \text{com})$:

\begin{enumerate}[itemsep=0.5em]
    \item \textbf{send} \{Verify, $\{s,i\}$, $\sigma'$\} \textbf{to} $(\pcur, \scur, \mathcal{F}_{\text{sig}}: \text{verifier})$,
    \textbf{wait for} \{VerResult, $b$\};
    
    \item \textbf{if} $b =$ true: \textbf{send} \{Message, \{$\{s,i\}$, $\sigma$, $\sigma'$, \emph{receiver}\}\} \textbf{to} $(\pcur, \scur, \mathcal{F}^{\text{Brick}}_{\text{com}}:\text{com})$, where $\mathit{receiver} = (\_, \scur, \mathcal{P}^{\text{Brick}}_{\text{warden}}:\text{warden})$; \cmt{Send to all wardens}

\end{enumerate}

\hrule
\vspace{0.5em}

\textbf{recv} \{Update, $\{s, i\}$, $\sigma_{W_j}$, $\mathit{pid}_{\mathit{call}}$\} \textbf{from} $(\_, \scur, \mathcal{F}^{\text{Brick}}_{\text{com}}: \text{com})$:

\begin{enumerate}[itemsep=0.5em]

\item \textbf{send} \{Verify, \{s,i\},$\sigma_{W_j}$\} \textbf{to} $(\pcur, \scur, \mathcal{F}_{\text{sig}}:\text{verifier})$,
\textbf{wait for} \{VerResult, $b$\};

\item \textbf{if} $b =$ true:
$\mathsf{wardenCounter}\leftarrow \mathsf{wardenCounter} + 1$;
$\mathsf{wardenSig}[j]\leftarrow({s,i}, \sigma_{W_j})$;

\item \textbf{if} $\mathsf{wardenCounter} = 2f+1$:
$\mathsf{stateList}\leftarrow \{s, i\} $;
$\mathsf{executedRequest}.\text{add}(\{\{s, i\}, \sigma, \sigma',\\ \{\sigma_W\}_{2f+1}\})$;
$\mathsf{requestQueue}.\text{remove}(\{s, i\})$;

\end{enumerate}

\hrule
\vspace{0.5em}

\textbf{recv} \{Settlement, \emph{collaborate}\} \textbf{from} I/O:

\begin{enumerate}[itemsep=0.5em]

\item Let
$
\{s_L,i_L\} = \mathsf{stateList}
$
be such that
$
\forall \{s,i\}\in \mathsf{stateList},\ i\leq i_L.
$
\Comment{$s_L$ is the latest state in $\mathsf{stateList}$}.

\item $\mathsf{requestQueue}$.add(\{Settlement, \texttt{collaborate}, $\{s_L,i_L\}$\});

\item \textbf{send} \{Sign, $s_L$\} \textbf{to} $(\pcur, \scur, \mathcal{F}_{\text{sig}}: \text{signer})$,
\textbf{wait for} \{Signature, $\sigma$\};

\item \textbf{send} \{Message, \{Settlement, $s_L$, $\sigma$, \emph{receiver}\}\} \textbf{to} $(\pcur, \scur, \mathcal{F}^{\text{Brick}}_{\text{com}}: \text{com})$, where $\mathit{receiver} = (\mathit{pid}_{\mathit{counter}}, \scur, \mathcal{P}^{\text{Brick}}_{\text{client}}: \text{client})$; \cmt{Send to counterparty}
\end{enumerate}

\hrule
\vspace{0.5em}

\textbf{recv} \{Settlement, $s_L$, $\sigma'$, $\mathit{pid}_{\mathit{call}}$\} \textbf{from} $(\_, \scur, \mathcal{F}^{\text{Brick}}_{\text{com}}: \text{com})$:

\begin{enumerate}
    \item \textbf{send} \{Verify, $\sigma'$\} \textbf{to} $(\pcur, \scur, \mathcal{F}_{\text{sig}}:\text{verifier})$,
\textbf{wait for} \{VerResult, $b$\};

    \item If $b$ = True, prepare transaction $\mathit{TX}_{\mathit{close}}=\{\pcur, \emph{ContractAddr}, \epsilon, \{Settlement, s_L,\\ \sigma, \sigma'\}\}$; \cmt{Prepare the channel closing transaction}
    
    \item \textbf{send} \{SubmitL1, $\mathit{TX}_{\mathit{close}}=\{s_L, \sigma, \sigma'\}$\} \textbf{to} $(\pcur, \scur, \mathcal{F}_{\text{ledger}}:\text{client}_{\text{L1}})$;
    
    \item \textbf{send} \{ReadL1\} \textbf{to} $(\pcur, \scur, \mathcal{F}_{\text{ledger}}:\text{client}_{\text{L1}})$, \textbf{wait for} \{ReadL1, $\mathit{output}=\{\{\mathit{TX}_{\text{L1}}\}, \{state_{\text{L1}}\}, \{\mathit{pid}\}\}$\};

    \item \textbf{if} $\mathit{TX}_{\mathit{close}} \in \{\mathit{TX}_{\text{L1}}\}$, then \textbf{reply} \{Settlement, \texttt{collaborate}, $s_L$\} via I/O;
\end{enumerate}

\hrule
\vspace{0.5em}

\textbf{recv} \{Settlement, \emph{unilateral}\} \textbf{from} I/O:

\begin{enumerate}[itemsep=0.5em]

\item Let
$
\{s_L,i_L\} = \mathsf{stateList}
$
be such that
$
\forall \{s,i\}\in \mathsf{stateList},\ i\leq i_L.
$;

\item \textbf{send} \{SubmitL1, $\mathit{TX}_{\mathit{unilateral}}$\} \textbf{to} $(\pcur, \scur, \mathcal{F}_{\text{ledger}}:\text{client}_{\text{L1}})$;

\item \textbf{send} \{ReadL1\} \textbf{to} $(\pcur, \scur, \mathcal{F}_{\text{ledger}}:\text{client}_{\text{L1}})$,
\textbf{wait for} \{ReadL1, $\mathit{output}=\{\{\mathit{TX}_{\text{L1}}\}, \{state_{\text{L1}}\}, \{\mathit{pid}\}\}$\};

\item \textbf{if} $\exists$ at least $2f{+}1$ transactions $\{\mathit{TX}_{\mathit{settle}}^{(j)}\}_{j \in W}$ from $2f{+}1$ distinct wardens $W \subseteq \{W_1, \ldots, W_n\}$ in $\{\mathit{TX}_{\text{L1}}\}$, \textbf{s.t.} $s_L$ is included in each $\mathit{TX}_{\mathit{settle}}^{(j)}$:

\begin{itemize}
    \item \textbf{for all} $j \in W$ \textbf{do}:
\begin{itemize}
    \item \textbf{if} $\exists\, (\sigma'_j, \{s', i'\}) \in \mathsf{wardenSig}[j]$ \textbf{s.t.} $i' > i^{(j)}$:
    \begin{itemize}
        \item Create $\mathit{TX}_{\mathit{fraud}} = \{\pcur, \emph{ContractAddr}, \epsilon,  \{\{s', i'\}, \sigma'_j\}\}$; \cmt{Fraud proof: warden $j$ signed a higher sequence number}
        \item \textbf{send} \{SubmitL1, $\mathit{TX}_{\mathit{fraud}}$\} \textbf{to} $(\pcur, \scur, \mathcal{F}_{\text{ledger}}:\text{client}_{\text{L1}})$;

        \item \textbf{reply} \{Settlement, \texttt{unilateral}, $s_L$\} via I/O;
    \end{itemize}
\end{itemize}
\end{itemize}

\end{enumerate}

\hrule
\vspace{0.5em}

\textbf{recv} \{Read\} \textbf{from} I/O:

\begin{enumerate}[itemsep=0.5em]
\item \textbf{reply} \{Read, $\mathsf{executedRequest}$, $\mathsf{stateList}$, $\mathsf{onchainState}$\} via I/O;
\end{enumerate}

\hrule
\vspace{0.5em}

\textbf{recv} \{GetCurRound\} \textbf{from} I/O:
\begin{enumerate}[itemsep=0.5em]
\item \textbf{send} \{GetCurRound\} \textbf{to} $(\pcur, \scur, \mathcal{F}^{\text{Brick}}_{\text{com}}: \text{clock})$,
\textbf{wait for} \{GetCurRound, $\mathit{round}$\};
\item \textbf{reply} \{GetCurRound, $\mathit{round}$\} via I/O;
\end{enumerate}

}\end{functionality}\vspace{1em}

\begin{figure}[!h]
    \centering
    \includegraphics[width=\columnwidth]{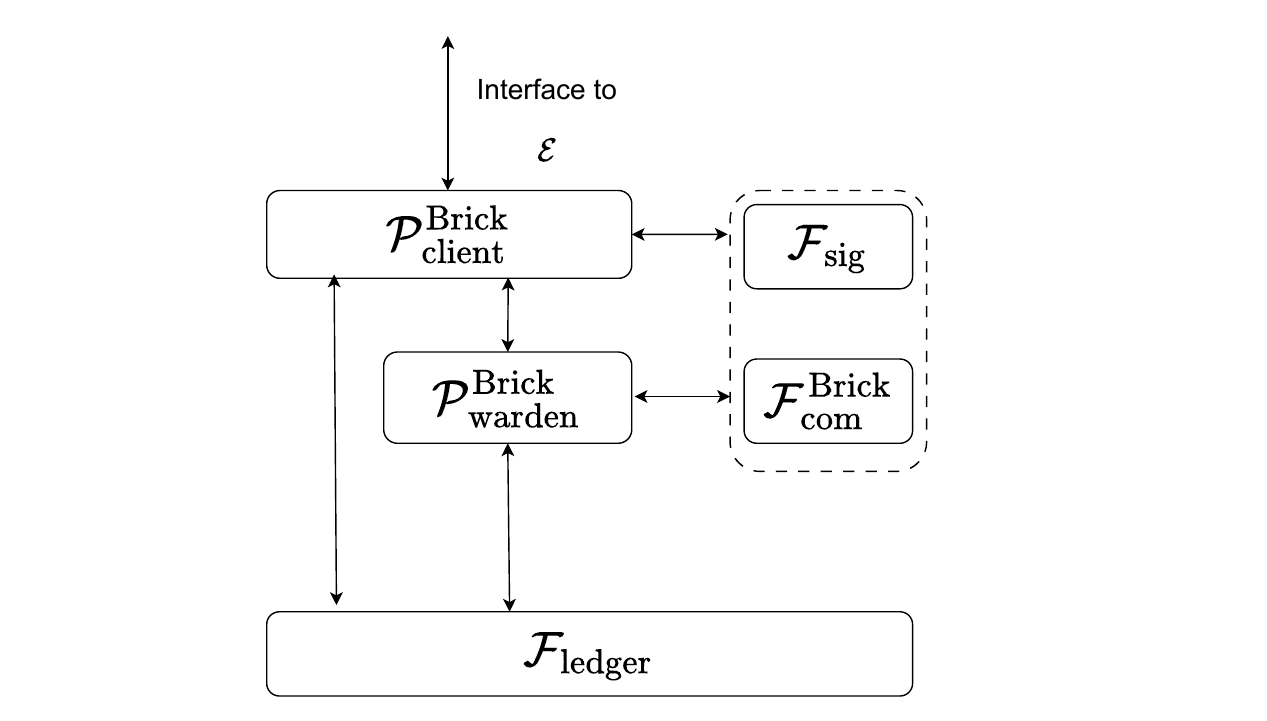}
    \caption{The Brick payment channel protocol}
    \label{fig:realbrick}
\end{figure}

\subsubsection{Warden Protocol}

The warden machine $\mathcal{M}^{\text{Brick}}_{\text{warden}}$ defines the code run by the third-party wardens. Wardens do not receive inputs from the environment directly but only communicate with clients through the asynchronous communication channel $\mathcal{F}^{\text{Brick}}_{\text{com}}$, in which message delivery is controlled by the adversary.

\vspace{1em}\begin{functionality}{Description of protocol $\mathcal{P}^{\text{Brick}}_{\text{warden}} = (\text{warden})$}{

\textbf{Participating roles:} \{warden\}

\noindent \textbf{Corruption model:} corrupt no more than $f$

\noindent \textbf{Protocol parameters:}
\begin{itemize}
    \item $n = 3f+1$: warden committee size
    \item $f$: corruption threshold

    \item $c$: warden collateral
\end{itemize}

}\end{functionality}\vspace{.5em}

\begin{functionality}{Description of $\mathcal{M}^{\text{Brick}}_{\text{warden}}$}{

\textbf{Implemented role(s):} \{warden\}

\noindent \textbf{Subroutines:} $\mathcal{F}_{\text{sig}}$: \{signer, verifier\}, $\mathcal{F}_{\text{ledger}}$: client\textsubscript{L1}, $\mathcal{P}^{\text{Brick}}_{\text{client}}$: client, $\mathcal{F}^{\text{Brick}}_{\text{com}}$: com

\noindent \textbf{Internal state:}
\begin{itemize}
    \item $\mathsf{executedRequest} \subset \{0,1\}^{*}$ 
    \item $\mathsf{stateList} \subset \{0,1\}^{*} \times \mathbb{N}$, $\mathsf{stateList} = \emptyset$ \hfill\cmt{L2 state list of form $(s, i)$}

    \item $\mathsf{identities} \subset \{0,1\}^{*}$, $\mathsf{identities} = \{client_A, client_B, \emph{ContractAddr}\}$ \hfill\cmt{Two clients and contract address}
\end{itemize}

\noindent \textbf{CheckID}(\emph{pid}, \emph{sid}, \emph{role}): Accept all messages with the same \emph{sid}. Only accept \emph{role} = warden.

\vspace{0.5em}
\noindent \textbf{Main:}

\vspace{0.5em}

\textbf{recv} \{Join, $s_{\mathit{init}}$, $\sigma$, $\sigma'$, $\mathit{pid}_{\mathit{call}}$\} \textbf{from} $(\pcur, \scur, \mathcal{F}^{\text{Brick}}_{\text{com}}: \text{com})$:

\begin{enumerate}[itemsep=0.5em]

\item \textbf{send} \{Verify, $\{s_{\mathit{init}},1\}$, $\{\sigma, \sigma'\}$\} \textbf{to} $(\pcur, \scur, \mathcal{F}_{\text{sig}}: \text{verifier})$,
\textbf{wait for} \{VerResult, $b$\};

\item \textbf{if} $b =$ true: \textbf{send} \{Sign, $\{s_{\mathit{init}},1\}$\} \textbf{to} $(\pcur, \scur, \mathcal{F}_{\text{sig}}: \text{signer})$,
\textbf{wait for} \{Signature, $\sigma_W$\};

\item \textbf{send} \{Message, \{$\sigma_W$, \emph{receiver}\}\} \textbf{to} $(\pcur, \scur, \mathcal{F}^{\text{Brick}}_{\text{com}}: \text{com})$, where $\mathit{receiver} = (\mathsf{pid_{sender}}, \scur, \mathcal{P}^{\text{Brick}}_{\text{client}}: \text{client})$; \cmt{Reply to the client}

\item Prepare deposit transaction $\mathit{TX}_{\mathit{collateral}}=\{\pcur, \emph{ContractAddr}, c, \_\}$; 

\item \textbf{send} \{SubmitL1, $\mathit{TX}_{\mathit{collateral}}$\} \textbf{to} $(\pcur, \scur, \mathcal{F}_{\text{ledger}}:\text{client}_{\text{L1}})$; \cmt{Send deposit to blockchain}

\end{enumerate}

\hrule
\vspace{0.5em}

\textbf{recv} \{Update, $\{s,i\}$, $\sigma$, $\sigma'$, $\mathit{pid}_{\mathit{call}}$\} \textbf{from} $(\_, \scur, \mathcal{F}^{\text{Brick}}_{\text{com}}: \text{com})$, \textbf{s.t.} $s_A + s_B = s_{\mathit{int}_A} + s_{\mathit{int}_B}$ and $\nexists \{s', j\} \in \mathsf{stateList}$ with $j \geq i$:

\begin{enumerate}[itemsep=0.5em]

\item \textbf{send} \{Verify, $\{s,i\}$, $\{\sigma, \sigma'\}$\} \textbf{to} $(\pcur, \scur, \mathcal{F}_{\text{sig}}: \text{verifier})$,
\textbf{wait for} \{VerResult, $b$\};

\item \textbf{if} $b =$ true: \textbf{send} \{Sign, $\{s,i\}$\} \textbf{to} $(\pcur, \scur, \mathcal{F}_{\text{sig}}: \text{signer})$,
\textbf{wait for} \{Signature, $\sigma_W$\};

\item \textbf{send} \{Message, \{$\sigma_W$, \emph{receiver}\}\} \textbf{to} $(\pcur, \scur, \mathcal{F}^{\text{Brick}}_{\text{com}}: \text{com})$, where $\mathit{receiver} = (\mathsf{pid_{sender}}, \scur, \mathcal{P}^{\text{Brick}}_{\text{client}}: \text{client})$; \cmt{Reply to the client}

\item $\mathsf{executedRequest}.\text{add}(\{\{s,i\}, \sigma, \sigma', \sigma_W\})$, $\mathsf{stateList}\leftarrow \{s,i\}$;

\end{enumerate}

\hrule
\vspace{0.5em}

\textbf{recv} \{SettlementCheck, $\mathit{pid}_{\mathit{call}}$\} \textbf{from} $(\_, \scur, \mathcal{F}^{\text{Brick}}_{\text{com}}: \text{com})$:

\begin{enumerate}[itemsep=0.5em]

\item \textbf{send} \{ReadL1\} \textbf{to} $(\pcur, \scur, \mathcal{F}_{\text{ledger}}:\text{client}_{\text{L1}})$,
\textbf{wait for} \{ReadL1, $\mathit{output}=\{\{\mathit{TX}_{\text{L1}}\}, \{state_{\text{L1}}\}, \{\mathit{pid}\}\}$\};

\item \textbf{if} $\mathit{TX}_{\mathit{unilateral}}\in \{\mathit{TX}_{\text{L1}}\}$:

\begin{itemize}

\item From $\mathsf{executedRequest}$, choose $q_l=\{\{s,i\}, \sigma, \sigma', \sigma_W\}$ with highest $i$;

\item Prepare $\mathit{TX}_{\mathit{settle}}=\{\pcur, \emph{ContractAddr}, \_, q_l\}$

\item \textbf{send} \{SubmitL1, $\mathit{TX}_{\mathit{settle}}$\} \textbf{to} $(\pcur, \scur, \mathcal{F}_{\text{ledger}}:\text{client}_{\text{L1}})$;
\end{itemize}
\end{enumerate}

}\end{functionality}\vspace{1em}

\subsection{Brick Channel Ideal Functionality}
\label{apd:BrickIdeal}

We define the ideal functionality for the Brick channel as
$\mathcal{F}^{\text{Brick}}_{\text{layer2}}=(\mathcal{F}_{\text{client}},\mathcal{F}_{\text{ledger}} \mid \mathcal{F}^{\text{Brick}}_{\text{join}}, \mathcal{F}^{\text{Brick}}_{\text{submit}}, \mathcal{F}^{\text{Brick}}_{\text{update}}, \mathcal{F}^{\text{Brick}}_{\text{read}}, \mathcal{F}^{\text{Brick}}_{\text{settlement}}, \mathcal{F}^{\text{Brick}}_{\text{updRnd}})$. The subroutines' ideal functionalities are defined as follows.

\subsubsection{Submit Functionality}
\label{apd:Bricksubmit}

The submit subroutine $\mathcal{F}^{\text{Brick}}_{\text{submit}}$ handles requests that change the protocol status. It verifies that incoming requests are valid requests within one of three types: \textit{(i)} open requests to initiate the payment channel, \textit{(ii)} update requests to update the channel state, or \textit{(iii)} settlement requests. The subroutine ensures that no state-update request is accepted before the open request is executed or after a settlement request is executed.

\vspace{1em}\begin{functionality}{Description of subroutine $\mathcal{F}^{\text{Brick}}_{\text{submit}} = (\text{submit})$}{

\textbf{Participating roles:} \{submit\}

\noindent \textbf{Corruption model:} incorruptible

}\end{functionality}\vspace{.5em}

\begin{functionality}{Description of $\mathcal{M}^{\text{Brick}}_{\text{submit}}$}{

\textbf{Implemented role(s):} \{submit\}

\noindent \textbf{CheckID}(\emph{pid}, \emph{sid}, \emph{role}): Accept all messages with the same \emph{sid}.

\vspace{0.5em}
\noindent \textbf{Main:}

\vspace{0.5em}

\textbf{recv} \{Submit, $\mathit{request}$, $\mathsf{internalState}$\} \textbf{from} I/O:

\begin{enumerate}[itemsep=0.5em]
    
\item Check that $\mathit{request}$ belongs to one of the following valid types:
\begin{itemize}
    \item Open request: $\mathit{request} = (\text{Join}, s_{\mathit{init}}, \mathit{counterIdentity})$, \textbf{s.t.}:
    \begin{itemize}
        \item $s_{\mathit{init}} = \{s_{\mathit{int}_A}, s_{\mathit{int}_B}\}$ is a well-formed initial state;
        \item $\mathit{counterIdentity} \in \{0,1\}^{*}$; \cmt{Valid counterparty identifier}
    \end{itemize}
    \item State update request: $\mathit{request} = (\text{Update}, \{s, i\})$ with $s = \{s_A, s_B\}$, \textbf{s.t.}:
    \begin{itemize}
        \item Check that $\mathsf{stateList} \neq \emptyset$; \cmt{The channel has been opened (initial state committed)}

\item Check that $\nexists\, q \in \mathsf{requestQueue}$ \textbf{s.t.} $\mathtt{getType}(q) \in \{\text{settlement-collaborate},\\ \text{settlement-unilateral}\}$; \cmt{the channel is still open}

        \item $s_A + s_B = s_{\mathit{int}_A} + s_{\mathit{int}_B}$; \cmt{Total balance is conserved w.r.t.\ the initial state in $\mathsf{stateList}$}
        \item $i \in \mathbb{N}$ and $i > 0$; \cmt{Valid sequence number}
    \end{itemize}
    \item Settlement request: $\mathit{request} \in \{(\text{Settlement}, \texttt{collaborate}),\\ (\text{Settlement}, \texttt{unilateral})\}$;
\end{itemize}

\item \textbf{if} all checks pass, \textbf{reply} \{Submit, true\};
\item \textbf{else}, \textbf{reply} \{Submit, false\};
\end{enumerate}

}\end{functionality}\vspace{1em}

\subsubsection{Join Functionality}
\label{apd:Brickjoin}

The join subroutine $\mathcal{F}^{\text{Brick}}_{\text{join}}$ validates the channel-opening procedure. It checks that \textit{(i)} all honest clients have submitted matching open requests, \textit{(ii)} the opening transaction is consistent with $s_{\mathit{init}}$, and \textit{(iii)} the opening transaction is committed on the L1 blockchain.

\vspace{1em}\begin{functionality}{Description of subroutine $\mathcal{F}^{\text{Brick}}_{\text{join}} = (\text{join})$}{

\textbf{Participating roles:} \{join\}

\noindent \textbf{Corruption model:} incorruptible

\noindent \textbf{Protocol parameters:} 
\begin{itemize}
    \item $n = 3f+1$: warden committee size
    \item $f$: corruption threshold
\end{itemize}

}\end{functionality}\vspace{.5em}

\begin{functionality}{Description of $\mathcal{M}^{\text{Brick}}_{\text{join}}$}{

\textbf{Implemented role(s):} \{join\}

\noindent \textbf{CheckID}(\emph{pid}, \emph{sid}, \emph{role}): Accept all messages with the same \emph{sid}.

\vspace{0.5em}
\noindent \textbf{Main:}

\vspace{0.5em}

\textbf{recv} \{Join, $\mathit{Attachment}=\{\{\mathit{TX}_{\mathit{open}}\}, s_{\mathit{init}}\}$, $\mathsf{internalState}$\} \textbf{from} I/O:

\begin{enumerate}[itemsep=0.5em]
    
\item Check that $\forall\, \mathit{pid} \in \mathsf{identities}$ with $(\mathit{pid}, \scur, \text{client}) \notin \mathsf{CorruptionSet}$:
\begin{itemize}
    \item $\exists\, (\text{Join}, s_{\mathit{init}}^{(\mathit{pid})}, \mathit{Identities}^{(\mathit{pid})}) \in \mathsf{requestQueue}$ submitted by $\mathit{pid}$;
    \item $\forall\, \mathit{pid}, \mathit{pid'} \notin \mathsf{CorruptionSet}$: $s_{\mathit{init}}^{(\mathit{pid})} = s_{\mathit{init}}^{(\mathit{pid'})}$; \cmt{All honest clients agree on the initial state}
\end{itemize}


\item Check that $\mathit{TX}_{\mathit{open}} = (\mathit{pid}_{\mathit{init}}, \emph{ContractAddr}, s_{\mathit{int}_A'}, \{W_1', \ldots, W_n', f\})$ \textbf{s.t.}:
\begin{itemize}
    \item $s_{\mathit{int}_A'} = s_{\mathit{int}_A}$; \cmt{Opening transaction carries the agreed initial state}
    \item $\mathit{pid}_{\mathit{init}} \in \mathsf{identities}$; \cmt{Initiator is a registered participant}
    \item $\{W_1', \ldots, W_n'\} \subseteq \mathsf{identities}$ and $|\{W_1', \ldots, W_n'\}| = 3f+1$; \cmt{Warden set matches registered wardens}
    \item $\emph{ContractAddr}$ matches the contract identifier recorded in $\mathsf{identities}$; \cmt{Correct contract reference}
\end{itemize}

\item \textbf{send} \{ReadL1\} \textbf{to} $(\pcur, \scur, \mathcal{F}_{\text{ledger}}: \text{client}_{\text{L1}})$,
\textbf{wait for} \{ReadL1, $\mathit{output} = \{\{\mathit{TX}_{\text{L1}}\}, \{\mathit{state}_{\text{L1}}\}, \{\mathit{pid}\}\}$\};

\item Check the following conditions on $\mathit{output}$:
\begin{itemize}
    \item $\mathit{TX}_{\mathit{open}} \in \{\mathit{TX}_{\text{L1}}\}$; \cmt{Opening transaction is committed on L1}
    \item $s_{\mathit{init}} \in \{\mathit{state}_{\text{L1}}\}$; \cmt{Initial state is recorded on L1}
    \item $\forall\, W_j \in \{W_1, \ldots, W_n\}$: $\exists\, \mathit{TX}_{\mathit{collateral}}^{(j)} \in \{\mathit{TX}_{\text{L1}}\}$ \textbf{s.t.} $\mathit{TX}_{\mathit{collateral}}^{(j)}$ is a collateral deposit from warden $W_j$; \cmt{All wardens have deposited collateral on L1}
\end{itemize}

\item \textbf{if} all checks pass: \textbf{reply} \{Join, True, $s_{\mathit{init}}$\};

\end{enumerate}

}\end{functionality}\vspace{1em}

\begin{lemma}
\label{lem:OpenB}
    The subroutine $\mathcal{F}^{\text{Brick}}_{\text{join}}$ guarantees \emph{correct L2 initialization}.
\end{lemma}

\begin{proof}
    We prove \emph{correct L2 initialization} by contradiction. Suppose it is violated, then there could be two situations according to the definition in Section~\ref{sec:def}: (i) some honest participant receives a successful initialization output whose committed initial state differs from the proposed state; (ii) the proposed state is not evetually committed on L1. However, $\mathcal{F}^{\text{Brick}}_{\text{join}}$ outputs success only if (i) the initial state matches all honest clients' proposals recorded in $\mathsf{internalState}$'s $\mathsf{requestQueue}$ via $\mathcal{F}^{\text{Brick}}_{\text{submit}}$, verified in Step~1, and (ii) the initial state is committed on L1 by reading from the L1 blockchain functionality, verified in Step~3 \& 4. As long as $\mathcal{F}_{\text{ledger}}$ provides a secure L1 ledger, no adversary can cause $\mathcal{F}^{\text{Brick}}_{\text{join}}$ to accept a mismatched state. This contradicts the assumption.
\end{proof}

\subsubsection{Update Functionality}

The update subroutine $\mathcal{F}^{\text{Brick}}_{\text{update}}$ verifies that any newly proposed state update from the simulator: \textit{(i)} porposed by all the honest clients, \textit{(ii)} no conflict with previously executed requests, and \textit{(iii)} a sequence number that is successor to the latest one. Notably, $\mathcal{F}^{\text{Brick}}_{\text{update}}$ checks structural well-formedness rather than cryptographic signature validity.

\vspace{1em}\begin{functionality}{Description of subroutine $\mathcal{F}^{\text{Brick}}_{\text{update}} = (\text{update})$}{

\textbf{Participating roles:} \{update\}

\noindent\textbf{Corruption model:} incorruptible


}\end{functionality}\vspace{.5em}

\begin{functionality}{Description of $\mathcal{M}^{\text{Brick}}_{\text{update}}$}{

\textbf{Implemented role(s):} \{update\}

\noindent \textbf{CheckID}(\emph{pid}, \emph{sid}, \emph{role}): Accept all messages with the same \emph{sid}.

\vspace{0.5em}
\noindent \textbf{Main:}

\vspace{0.5em}

\textbf{recv} \{Update, $\mathit{Attachment}=\{\mathit{executedReq}, \mathit{newState}\}$, $\mathsf{internalState}$\} \textbf{from} I/O:

\begin{enumerate}[itemsep=0.5em]

\item Parse $\mathit{executedReq} = \{s, i\}$ from $\mathit{Attachment}$. Check that $\forall\, \mathit{pid} \in \mathsf{identities}$ with $(\mathit{pid}, \scur, \text{client}) \notin \mathsf{CorruptionSet}$:
\begin{itemize}
    \item $\exists\, (\text{Update}, \{s', i'\}) \in \mathsf{requestQueue}$ submitted by $\mathit{pid}$, \textbf{s.t.} $s' = s$ and $i' = i$; \cmt{Each honest client has a matching request in the queue}
    \item $\forall\, \mathit{pid}, \mathit{pid'}$ honest: the request submitted by $\mathit{pid}$ equals the request submitted by $\mathit{pid'}$; \cmt{All honest clients proposes the same state update}
\end{itemize}


\item Check that $\mathit{executedReq} = \{s, i\}$ with $s = \{s_A, s_B\}$ does not conflict with\\ $\mathsf{executedRequest}$, \textbf{s.t.}:
\begin{itemize}
    \item $\nexists\, \{s', i'\} \in \mathsf{executedRequest}$ with $i' = i$ and $s' \neq s$; \cmt{No conflicting entry at the same sequence number}
    \item $\nexists\, \{s', i'\} \in \mathsf{executedRequest}$ with $i' \geq i$; \cmt{No already-executed entry at or beyond this sequence number}
    \item $s_A + s_B = s_{\mathit{int}_A} + s_{\mathit{int}_B}$; \cmt{Total balance is conserved w.r.t.\ the initial state}
\end{itemize}

\item Let $i_{\max} = \max\{i' \mid \{s', i'\} \in \mathsf{executedRequest}\}$. Check:
\begin{itemize}
    \item $i = i_{\max} + 1$; \cmt{Sequence number is the immediate successor of the latest executed request}
\end{itemize}

\item \textbf{if} all checks pass: \textbf{reply} \{Update, true, $\mathit{newState}$, $\mathit{executedReq}$\};

\end{enumerate}

}\end{functionality}\vspace{1em}



\subsubsection{Read Functionality}

The read subroutine $\mathcal{F}^{\text{Brick}}_{\text{read}}$ determines what information a participant can learn from $\mathsf{internalState}$. Due to asynchronous communication, the adversary can influence the read result only by delaying message delivery. Consequently, the read result is either the latest read result or the previously read state.

\vspace{1em}\begin{functionality}{Description of subroutine $\mathcal{F}^{\text{Brick}}_{\text{read}} = (\text{read})$}{

\textbf{Participating roles:} \{read\} \\ \textbf{Corruption model:} incorruptible

}\end{functionality}\vspace{.5em}

\begin{functionality}{Description of $\mathcal{M}^{\text{Brick}}_{\text{read}}$}{

\textbf{Implemented role(s):} \{read\}

\noindent \textbf{CheckID}(\emph{pid}, \emph{sid}, \emph{role}): Accept all messages with the same \emph{sid}.

\vspace{0.5em}
\noindent \textbf{Main:}

\vspace{0.5em}

\textbf{recv} \{Read, $\mathsf{internalState}$\} \textbf{from} I/O:

\begin{enumerate}[itemsep=0.5em]

\item Let $i_{\max} = \max\{i \mid (s, i) \in \mathsf{stateList}\}$ and let $i_{\mathit{ptr}} \leftarrow \mathsf{lastReadPointer}$;

\item \textbf{send responsively} \{ReadDelivery, $i_{\mathit{ptr}}$, $i_{\max}$\} \textbf{to} $\mathcal{S}$ via NET, 
\textbf{wait for} \{ReadDelivery, $\mathit{delivered}$\} \textbf{s.t.} $\mathit{delivered} \in \{\text{true}, \text{false}\}$; \cmt{Query $\mathcal{S}$ for the delivery decision on off-chain updates}

\item \textbf{if} $i_{\mathit{ptr}} = i_{\max}$ or $\mathit{delivered} = \text{true}$: \cmt{No delay}
\begin{itemize}
    \item Let $\mathit{execView} \leftarrow \{e \in \mathsf{executedRequest} \mid e.\mathit{seq} \leq i_{\max}\}$;
    \item Let $\mathit{stateView} \leftarrow  \mathsf{stateList}$;
    \item $\mathsf{lastReadPointer} \leftarrow i_{\max}$; \cmt{Advance pointer to the latest delivered state}
\end{itemize}

\item \textbf{else}: \cmt{$\mathcal{S}$ withholds new off-chain updates; return the view at the current pointer}
\begin{itemize}
    \item Let $\mathit{execView} \leftarrow \{e \in \mathsf{executedRequest} \mid e.\mathit{seq} \leq i_{\mathit{ptr}}\}$; \cmt{Executed requests up to the last delivered pointer}
    \item Let $\mathit{stateView} \leftarrow e_{\mathit{ptr}}.\mathit{state}$ where $e_{\mathit{ptr}} \in \mathsf{executedRequest}$ \textbf{s.t.} $e_{\mathit{ptr}}.\mathit{seq} = i_{\mathit{ptr}}$; \cmt{The L2 state produced by the executed request at sequence $i_{\mathit{ptr}}$; stale w.r.t.\ the latest \textsf{stateList} entry}
\end{itemize}

\item Let $\mathit{ReadResult} = \{\mathit{execView},\; \mathit{stateView},\; \mathsf{onchainState}\}$;

\item \textbf{reply} \{Read, $\mathit{ReadResult}$\};

\end{enumerate}

}\end{functionality}\vspace{1em}

\begin{lemma}
    The subroutines $\mathcal{F}^{\text{Brick}}_{\text{read}}$ and $\mathcal{F}^{\text{Brick}}_{\text{update}}$ jointly guarantee $f_{L_2}$-safety.
\end{lemma}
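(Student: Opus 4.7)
The plan is to establish $f$-safety for $f = n_{OP}-1 = 1$ by proving the two components of the definition—self-consistency and view-consistency—as separate claims, both of which reduce to a single structural fact: every state accepted by $\mathcal{F}^{\text{Brick}}_{\text{update}}$ is unique per sequence number $i$, so the accepted updates form a single linear chain $s_1, s_2, \dots$ rather than a branching structure.

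First I would prove the uniqueness claim. The update subroutine accepts a proposal only when the \texttt{Attachment} carries (i) agreement from both clients and (ii) at least $2f_W+1$ warden agreements out of $n_W = 3f_W+1$ participating wardens. Since at most $n_{OP}-1 = 1$ client is corrupted, at least one client is honest, and by the interface machine's logic an honest client emits exactly one signature per sequence index $i$; hence any two updates accepted for the same $i$ must share the honest client's agreement on the same state. On the warden side, any two quorums of size $2f_W+1$ intersect in at least $f_W+1$ participants, at least one of which is honest under the Brick assumption, and honest wardens refuse to co-sign two distinct states with the same sequence number. Together, these two points rule out two distinct accepted states at the same position.

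With uniqueness in hand, self-consistency follows directly from $\mathcal{F}^{\text{Brick}}_{\text{read}}$: the \texttt{LastReadPointer} bookkeeping and the explicit check that the returned \texttt{LatestState} cannot be older than the pointer guarantee that each subsequent read of an honest client extends its prior read along the chain. View-consistency then follows because both honest clients' read results at round $r$ are prefixes of the same chain—possibly truncated at different positions due to the asynchronous scheduler choosing to delay one client's view but not the other's—and any two prefixes of a common chain are automatically prefix-ordered.

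The main obstacle will be rigorously justifying the uniqueness argument inside the ideal model, since $\mathcal{F}^{\text{Brick}}_{\text{update}}$ deliberately performs only structural checks rather than cryptographic verification. I would argue that the simulator cannot fabricate an honest-client agreement because the \texttt{Internal state} of each honest client is maintained by the uncorruptible interface machine, and honest-client entries in \texttt{RequestQueue} originate only from I/O from the environment; similarly, the warden uniqueness-per-$i$ discipline must be tied to the incorruptibility of the subroutine entities plus the assumed bound of $f_W$ corrupted wardens. I would make these standing assumptions explicit at the outset so that the final conclusion—$(n_{OP}-1)$-safety—falls out as an immediate corollary of Claims A and B.
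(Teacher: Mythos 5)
Your proposal is correct and follows essentially the same route as the paper's proof: view-consistency comes from the fact that $\mathcal{F}^{\text{Brick}}_{\text{update}}$ demands agreement from both clients, so the one guaranteed honest client prevents any fork, and self-consistency comes from the \texttt{LastReadPointer} bookkeeping in $\mathcal{F}^{\text{Brick}}_{\text{read}}$. The only real difference is that you make the argument more explicit than the paper does—isolating a uniqueness-per-sequence-number claim and adding the warden quorum-intersection argument, which the paper's proof omits since the honest client's single agreement per index already suffices.
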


\begin{proof}
    Suppose \emph{Safety} is violated. By the definition in Section~\ref{sec:def}, this means that either self-consistency or view-consistency fails. For self-consistency, the variable $\mathsf{lastReadPointer}$ ensures, through the checks in Steps~1--4, that every earlier read result is a prefix of every later read result, even when the adversary is influenceing the message delivery. Hence, self-consistency cannot be violated.
    
    For view-consistency, assume that two honest clients receive outputs that are not prefix-related. By $\mathcal{F}^{\text{Brick}}_{\text{update}}$, every valid state update requires agreement from both clients (Step~1). Therefore, as long as at least one client is honest, no inconsistent state can be recorded in the $\mathsf{internalState}$, which prevents inconsistency. Hence, the read outputs returned by $\mathcal{F}^{\text{Brick}}_{\text{read}}$ cannot violate view-consistency.
    
    Therefore, neither self-consistency nor view-consistency can be broken, which contradicts the assumption that \emph{Safety} is violated.
\end{proof}

\subsubsection{Settlement Functionality}

The settlement subroutine $\mathcal{F}^{\text{Brick}}_{\text{settlement}}$ handles two types of settlement. For \emph{collaborative} settlement, it checks that all honest clients propose to close the channel with the settlement state and that the closing transaction and the state is committed on L1. For \emph{unilateral} settlement, it verifies that the unilateral settlement transaction and eventually the latest channel state is committed on L1 blockchain.

\vspace{1em}\begin{functionality}{Description of subroutine $\mathcal{F}^{\text{Brick}}_{\text{settlement}} = (\text{settlement})$}{

\textbf{Participating roles:} \{settlement\} 

\noindent\textbf{Corruption model:} incorruptible

}\end{functionality}\vspace{.5em}

\begin{functionality}{Description of $\mathcal{M}^{\text{Brick}}_{\text{settlement}}$}{

\textbf{Implemented role(s):} \{settlement\}

\noindent \textbf{CheckID}(\emph{pid}, \emph{sid}, \emph{role}): Accept all messages with the same \emph{sid}.

\vspace{0.5em}
\noindent \textbf{Main:}

\vspace{0.5em}

\textbf{recv} \{Settlement, $\mathit{settlementType}$, $\mathit{Attachment} = \{\mathit{TX}_{\mathit{close}}, \mathit{TX}_{\mathit{unilateral}}\}$, $\mathsf{internalState}$\} \textbf{from} I/O:

\begin{enumerate}[itemsep=0.5em]

\item Let $i_{\max} = \max\{i \mid (s, i) \in \mathsf{stateList}\}$ and let $s_{\mathit{settle}} = \mathsf{stateList} = (s^{*}, i_{\max})$; \cmt{The final settlement state is the one with the highest sequence number}

\item \textbf{send} \{ReadL1\} \textbf{to} $(\pcur, \scur, \mathcal{F}_{\text{ledger}}: \text{client}_{\text{L1}})$;
\textbf{wait for} \{ReadL1, $\mathit{L1ReadResult} = \{\{\mathit{TX}\}_{\text{L1}}, \mathit{State}_{\text{L1}}, \{\mathit{pid}\}\}$\};

\item \textbf{if} $\mathit{settlementType} = \texttt{collaborate}$:
\begin{itemize}[itemsep=0.3em]
    \item Check that $\forall\, \mathit{pid} \in \mathsf{identities}$ with $(\mathit{pid}, \scur, \text{client}) \notin \mathsf{CorruptionSet}$:
    \begin{itemize}
        \item $\exists\, (\text{Settlement}, \texttt{collaborate}, s_{\mathit{settle}}) \in \mathsf{requestQueue}$ submitted by $\mathit{pid}$; \cmt{Each honest client has requested collaborative settlement on the latest state}
    \end{itemize}
    \item Parse $\mathit{TX}_{\mathit{close}} = (\mathit{pid}, \mathit{ContractAddr}, \epsilon, \{\text{Settlement}, s_{\mathit{close}}, \sigma, \sigma'\})$ from\\ $\mathit{Attachment}$. Check:
    \begin{itemize}
        \item $s_{\mathit{close}} = s^{*}$; \cmt{Closing transaction carries the latest agreed state}
    \end{itemize}
    \item Check the following conditions on $\mathit{L1ReadResult}$:
    \begin{itemize}
        \item $\mathit{TX}_{\mathit{close}} \in \{\mathit{TX}\}_{\text{L1}}$; \cmt{Closing transaction is committed on L1}
        \item $s^{*} \in \mathit{State}_{\text{L1}}$; \cmt{The settlement state is recorded on L1}
    \end{itemize}
\end{itemize}

\item \textbf{if} $\mathit{settlementType} = \texttt{unilateral}$:
\begin{itemize}[itemsep=0.3em]
    \item Check the following conditions on $\mathit{L1ReadResult}$:
    \begin{itemize}
        \item $\mathit{TX}_{\mathit{unilateral}} \in \{\mathit{TX}\}_{\text{L1}}$; \cmt{Unilateral settlement transaction is committed on L1}
        \item $s^{*} \in \mathit{State}_{\text{L1}}$;
    \end{itemize}
\end{itemize}

\item \textbf{if} all checks pass: \textbf{reply} \{Settlement, true, $s_{\mathit{settle}}$\};

\end{enumerate}

}\end{functionality}\vspace{1em}

\begin{lemma}
\label{lem:SettleB}
    The subroutine $\mathcal{F}^{\text{Brick}}_{\text{settlement}}$ guarantees \emph{correct L2 settlement}.
\end{lemma}

\begin{proof}
    According to the definition in Section~\ref{sec:def}, \emph{correct L2 settlement} is violated if either (i) an honest client receives a successful settlement output even though the output state is neither the latest state nor a state agreed upon by honest clients, or (ii) the output state is not committed on the L1 blockchain. However, for both types of channel closing, $\mathcal{F}^{\text{Brick}}_{\text{settlement}}$ outputs success and the closing state only after verifying that the committed state matches $\mathsf{stateList}$ and that the corresponding L1 transactions are committed (Steps~1--3). This contradicts the assumption that correct L2 settlement is violated.
    
\end{proof}

\subsubsection{Update Round Functionality}

Since the Brick channel operates under asynchronous off-chain 
communication, the clock subroutine 
$\mathcal{F}^{\text{Brick}}_{\text{updRnd}}$ does not impose 
constraints on round-update requests for join, state update, 
or collaborative settlement operations, as these require 
cooperation from all parties and the asynchronous network 
provides no finite delivery bound. However, 
$\mathcal{F}^{\text{Brick}}_{\text{updRnd}}$ enforces inclusion 
liveness for \emph{unilateral} settlement requests: an honest 
party can post the latest attested state to the L1 ledger 
without counterparty cooperation, and hence 
$\mathcal{F}^{\text{Brick}}_{\text{updRnd}}$ rejects any 
round advancement that would cause a pending unilateral 
settlement from an honest party to exceed its delay of $T_{L_1}$ 
rounds. This captures the fundamental L2 guarantee that an 
honest party can always exit the channel via L1, even if all 
other participants are corrupted.

\vspace{1em}\begin{functionality}{Description of subroutine $\mathcal{F}^{\text{Brick}}_{\text{updRnd}} = (\text{updRnd})$}{

\textbf{Participating roles:} \{updRnd\} 

\noindent\textbf{Corruption model:} incorruptible

}\end{functionality}\vspace{.5em}

\begin{functionality}{Description of 
  $\mathcal{M}^{\text{Brick}}_{\text{updRnd}}$}{

\textbf{Implemented role(s):} \{updRnd\}

\noindent \textbf{Parameters:} 
$T_{commit}$: L1 blockchain liveness bound.

\vspace{0.5em}
\noindent \textbf{CheckID}(\emph{pid}, \emph{sid}, 
\emph{role}): Accept all messages with the same \emph{sid}.

\vspace{0.5em}
\noindent \textbf{Main:}
\vspace{0.5em}

\textbf{recv} \{UpdateRound, $\mathsf{internalState}$\} 
\textbf{from} I/O:

Parse $\mathsf{round}$, $\mathsf{requestQueue}$, 
and $\mathsf{CorruptionSet}$ 
from $\mathsf{internalState}$.

\begin{enumerate}[itemsep=0.5em]

  \item $\forall$ pending request $q \in \mathsf{requestQueue}$ submitted by some $\mathit{pid}$ at round $t_{\mathit{sub}}$ \textbf{s.t.} $(\mathit{pid}, \scur, \text{client}) \notin \mathsf{CorruptionSet}$ and $\mathtt{getType}(q) = \text{settle-unilateral}$:
    \begin{enumerate}
      \item If $\mathsf{round} + 1 > 
        t_{\mathrm{sub}} + 2*T_{\mathrm{commit}}$:\cmt{can be changed for specific implementation requirement}
        \begin{enumerate}
          \item \textbf{reply} 
            \{UpdateRound, false, $\bot$\};
            \hfill $\triangleright$ 
            \emph{Unilateral settlement liveness 
            violated}
        \end{enumerate}
    \end{enumerate}

  \item \textbf{reply} \{UpdateRound, true, $\bot$\};
    \hfill $\triangleright$ 
    \emph{No overdue unilateral settlements; 
    all other requests unconstrained}

\end{enumerate}
}\end{functionality}\vspace{1em}

\begin{lemma}
\label{lem:LiveB}
The ideal functionality $\mathcal{F}^{\text{Brick}}_{\text{layer2}}$ 
guarantees the following liveness properties:
\begin{enumerate}
    \item \textbf{(Channel opening.)}
    $(f_{L_2} + f_{L_1},\, T_{L_2} + T_{L_1})$-liveness: under $f_{L_2}$ and $f_{L_1},$ corruption, an accepted Join request results in output $\{\text{Join}, s_{\mathit{init}}\}$ within $T_{L_2} + T_{L_1}$. The off-chain latency $T_{L_2}$ is not enforced by $\mathcal{F}^{\text{Brick}}_{\text{updRnd}}$ due to asynchronous communication; the L1-anchoring phase $T_{L_1}$ is latency inherited from $\mathcal{F}_{\text{ledger}}$.

    \item \textbf{(State update.)}
    $(f_{L_2},\, T_{L_2})$-liveness: under $f_{L_2}$ corruption, an accepted Update request results in the corresponding entry being added to $\mathsf{executedRequest}$ and changes in $\mathsf{stateList}$ within $T_{L_2}$. The off-chain latency $T_{L_2}$ is not enforced by $\mathcal{F}^{\text{Brick}}_{\text{updRnd}}$ due to asynchronous communication.

    \item \textbf{(Collaborative settlement.)}
    $(f_{L_2},\, T_{L_2} + T_{L_1})$-liveness: under $f_{L_2}$ corruption, an accepted collaborative Settlement request results in $\mathsf{onchainState}$ reflecting $s_{\mathit{settle}}$ and $\{\text{Settlement}, s_{\mathit{settle}}\}$ output within $T_{L_2} + T_{L_1}$. The off-chain latency $T_{L_2}$ is not enforced by $\mathcal{F}^{\text{Brick}}_{\text{updRnd}}$ due to asynchronous communication; the L1 latency $T_{L_1}$ is inherited from $\mathcal{F}_{\text{ledger}}$.

    \item \textbf{(Unilateral settlement.)}
    $(f_{L_1},\, T_{L_1})$-liveness: under $f_{L_1}$ an accepted unilateral Settlement request results in $\mathsf{onchainState}$ reflecting $s_{\mathit{settle}}$ and $\{\text{Settlement}, s_{\mathit{settle}}\}$ output within $T_{L_1}$. The L1 latency $T_{L_1}$ is inherited from $\mathcal{F}_{\text{ledger}}$
\end{enumerate}
\end{lemma}

\begin{proof}
We prove each part seperatively.

\medskip
\noindent\textbf{(1) Channel opening.}
$\mathcal{F}^{\text{Brick}}_{\text{join}}$ outputs 
success only if (i)~matching open requests from all 
honest clients exist in $\mathsf{RequestQueue}$, 
(ii)~$2f+1$ warden agreements are present, and 
(iii)~the funding transactions are committed on L1. If 
no client is corrupted and the corruption threshold 
holds for wardens, the adversary can at most delay 
off-chain message delivery but cannot prevent the 
opening procedure from completing. The off-chain 
negotiation phase takes at most $T_{L_2}$ (conditional 
on delivery), and the on-chain funding adds at most 
$T_{L_1}$, yielding a total conditional bound of 
$T_{L_2} + T_{L_1}$. 

\medskip
\noindent\textbf{(2) State update.}
$\mathcal{F}^{\text{Brick}}_{\text{update}}$ outputs 
success when the proposed update includes such request 
from all honest clients. In that case, the simulator 
can always construct a valid update once the off-chain 
messages are delivered as decided by the adversary. Since state updates are purely 
off-chain and require no L1 interaction, therefore the latency is 
$T_{L_2}$. However, the asynchronous network provides 
no finite delivery guarantee, so there is no bound for $T_{L_2}$. To simulate such influece, $\mathcal{F}^{\text{Brick}}_{\text{updRnd}}$ 
does not enforce any time constraint for state updates.

\medskip
\noindent\textbf{(3) Collaborative settlement.}
Suppose both clients are honest and have submitted 
settlement requests recorded in $\mathsf{RequestQueue}$. 
$\mathcal{F}^{\text{Brick}}_{\text{settlement}}$ checks 
that settlement requests exist from all honest clients 
and that the closing transaction 
$\mathit{TX}_{\mathit{close}}$ is consistent with the 
latest state and committed on L1. If both clients 
cooperate and messages are delivered within the 
asynchronous network, the off-chain agreement phase 
completes within $T_{L_2}$, and L1 inclusion adds at 
most $T_{L_1}$, yielding a total bound of 
$T_{L_2} + T_{L_1}$. However, since the off-chain 
communication is asynchronous, the adversary can delay 
messages indefinitely without violating any protocol 
invariant. Therefore, 
$\mathcal{F}^{\text{Brick}}_{\text{updRnd}}$ does not 
enforce this bound; the property holds only conditionally 
on message delivery.

\medskip
\noindent\textbf{(4) Unilateral settlement.}
By definition, $\mathcal{F}^{\text{Brick}}_{\text{updRnd}}$ 
rejects any request to advance time beyond 
$t + 2*T_{commit}$, where $T_{commit}$ is related to L1 interaction types of latency $T_{L_1}$, whenever there exists a pending unilateral 
settlement request from an honest client submitted at 
time~$t$ that has not yet been reflected in 
$\mathsf{onchainState}$. This forces the simulator to 
trigger the settlement before 
time can advance past the deadline.


\end{proof}


    



\subsection{ Security Proof}
\label{apd:BrickProof}

After proposing the ideal functionality and real-world implementation, we now show the security of the Brick channel protocol. To start with we first show the ideal functionality for Brick captures all the security properties with the following conclusion:

\ThmidealBrick*

\begin{proof}
    According to Lemma~\ref{lem:OpenB}-~\ref{lem:LiveB}, the ideal functionality $\mathcal{F}^{\text{Brick}}_{\text{layer2}}$ guarantees all security properties.
\end{proof}

After defining the ideal functionality $\mathcal{F}^{\text{Brick}}_{\text{layer2}}$, we prove that the real Brick protocol iUC-realizes it. The proof is done in 7 steps of successive game replacement. We first define a simulator $\mathcal{S}_{\mathcal{P}^{\text{Brick}}}$ that internally simulates a full run of $\mathcal{P}^{\text{Brick}}$, and a dummy functionality $\mathcal{F}^{\text{Brick}}_{\text{dummy}}$ that relays messages between $\mathcal{E}$ and $\mathcal{S}_{\mathcal{P}^{\text{Brick}}}$. This base ideal execution yields the same distribution of messages to $\mathcal{E}$ as the real execution. We use the execution ensemble $\mathsf{EXEC}$ to denote the messages observed by $\mathcal{E}$, including the output for the input request and the leakage to adversary, when interacting with adversary $\mathcal{A}$, real protocol $\mathcal{P}$, ideal functionality $\mathcal{F}$ and simulator $\mathcal{S}$ in the proofs that follow.

In each subsequent step, we incrementally add interaction between the simulator and the ideal functionality and extend the functionality, thereby forming the corresponding subroutines, while keeping the changes transparent to both $\mathcal{E}$ and $\mathcal{A}$. We continue until we obtain the target functionality $\mathcal{F}^{\text{Brick}}_{\text{layer2}}$ defined by our framework. At every step, the simulator is adjusted so that the new ideal execution is indistinguishable from the previous one. For each transition, we discuss the differences relative to the prior step and prove that, given the same inputs from $\mathcal{E}$ and $\mathcal{A}$, the resulting outputs remain the same up to computational indistinguishability under any adversarial influence strategy.

We begin by defining the dummy ideal functionality
$\mathcal{F}^{\text{Brick}}_{\text{dummy}}=(\mathcal{F}_{\text{client-dummy}}, \mathcal{F}_{\text{ledger}} \mid \perp)$
and the simulator $\mathcal{S}_{\text{Brick}}$ as follows. The dummy functionality that forwards every request from $\mathcal{E}$ to the simulator and returns the simulator's response unchanged. The simulator $\mathcal{S}_{\text{Brick}}$ runs $\mathcal{P'}^{\text{Brick}}$ internally and produces identical outputs.

\vspace{1em}\begin{functionality}{Description of $\mathcal{M}_{\text{client-dummy}}$ of $\mathcal{F}^{\text{Brick}}_{\text{dummy}}$}{

\textbf{Implemented role(s):} \{client-dummy\}

\noindent \textbf{Main:}

\textbf{recv} any request \textbf{from} I/O:
\begin{enumerate}[itemsep=0.5em]
    \item Forward request to $\mathcal{S}$ through NET;
\end{enumerate}

\hrule
\vspace{1em}

\textbf{recv} any message \textbf{from} NET:

\begin{enumerate}
    \item Output the message to $\mathcal{E}$ through I/O;
\end{enumerate}

}\end{functionality}\vspace{.5em}

\begin{functionality}{Description of simulator $\mathcal{S}_{\text{Brick}}$}{

$\mathcal{S}_{\text{Brick}}$ internally simulates $\mathcal{P'}^{\text{Brick}}$, a copy of the real protocol $\mathcal{P}^{\text{Brick}}$ as defined in Section~\ref{Apd:Brickreal}.

\vspace{0.5em}
\textbf{Real protocol simulation:}
\begin{itemize}[itemsep=0.3em]
    \item $\mathcal{S}_{\text{Brick}}$ simulates honest clients' and wardens' actions inside $\mathcal{P'}^{\text{Brick}}$ according to the real protocol.
    \item If participants are corrupted, $\mathcal{S}_{\text{Brick}}$ leaks the corresponding messages sent to corrupted entities to the adversary $\mathcal{A}$ and continues simulating honest parties based on $\mathcal{A}$'s instructions.
\end{itemize}

\textbf{Network communication from/to the environment:}
\begin{itemize}[itemsep=0.3em]
    \item Messages that $\mathcal{S}_{\text{Brick}}$ receives on the network interface (from $\mathcal{E}$/$\mathcal{A}$) are forwarded to $\mathcal{P'}^{\text{Brick}}$.
    \item Messages sent by $\mathcal{P'}^{\text{Brick}}$ on its network interface (to $\mathcal{E}$/$\mathcal{A}$) are forwarded to the environment.
\end{itemize}

\textbf{Input requests and outputs:}
\begin{itemize}[itemsep=0.3em]
    \item Unlike $\mathcal{P}^{\text{Brick}}$, which receives inputs directly from $\mathcal{E}$, the simulation $\mathcal{P'}^{\text{Brick}}$ receives requests forwarded from $\mathcal{F}^{\text{Brick}}_{\text{layer2}}$. Instead of sending outputs directly to $\mathcal{E}$, $\mathcal{S}_{\text{Brick}}$ sends them to $\mathcal{F}^{\text{Brick}}_{\text{layer2}}$.
\end{itemize}

\textbf{Message delivery:}
\begin{itemize}[itemsep=0.3em]
    \item The Brick channel assumes asynchronous communication via $\mathcal{F}^{\text{Brick}}_{\text{com}}$. The simulator bookkeeps all messages in $\mathcal{P'}^{\text{Brick}}$ and triggers delivery according to the adversary's scheduling decisions, mirroring the real-world protocol.
\end{itemize}

\textbf{Corruption handling:}
\begin{itemize}[itemsep=0.3em]
    \item $\mathcal{S}_{\text{Brick}}$ keeps the corruption status of entities in $\mathcal{P}^{\text{Brick}}$, $\mathcal{P'}^{\text{Brick}}$ and $\mathcal{F}^{\text{Brick}}_{\text{layer2}}$ synchronized. When an entity in $\mathcal{P'}^{\text{Brick}}$ becomes corrupted, $\mathcal{S}_{\text{Brick}}$ corrupts the corresponding entity in $\mathcal{F}^{\text{Brick}}_{\text{layer2}}$ before continuing.
    \item Adversarial commands for corrupted participants (e.g., publishing on $\mathcal{F}_{\text{ledger}}$, sending via $\mathcal{F}^{\text{Brick}}_{\text{com}}$) are forwarded to $\mathcal{P'}^{\text{Brick}}$.
    \item When a corrupted participant in $\mathcal{P'}^{\text{Brick}}$ wants to output to $\mathcal{E}$, $\mathcal{S}_{\text{Brick}}$ instructs the corresponding entity in $\mathcal{F}^{\text{Brick}}_{\text{layer2}}$ to output.
\end{itemize}
}\end{functionality}\vspace{1em}

\begin{lemma}
\label{lem:Brick1}
    For all PPT adversaries $\mathcal{A}$, there exists a PPT simulator $\mathcal{S}_{\text{Brick}}$ such that for all PPT environments $\mathcal{E}$ and all security parameters $k\in\mathbb{N}$,
    $\mathsf{EXEC}^{\mathcal{P}^{\text{Brick}}}_{\mathcal{A},\mathcal{E}}(k)\ \stackrel{c}{\approx}\
    \mathsf{EXEC}^{\mathcal{F}^{\text{Brick}}_{\text{dummy}}}_{\mathcal{S}_{\text{Brick}},\mathcal{E}}(k)$,
    where $\stackrel{c}{\approx}$ denotes computational indistinguishability.
\end{lemma}

\begin{proof}
    Fix an arbitrary PPT environment $\mathcal{E}$ and adversary $\mathcal{A}$. We argue that the execution ensembles in the real and ideal worlds are computationally indistinguishable by analyzing the three components observable by $\mathcal{E}$.

    \medskip
    \noindent\textbf{Observable components.} In the real world, the execution ensemble $\mathsf{EXEC}^{\mathcal{P}^{\text{Brick}}}_{\mathcal{A},\mathcal{E}}(k)$ consists of:
    \begin{enumerate}
        \item \emph{I/O outputs} delivered to $\mathcal{E}$ by $\mathcal{P}^{\text{Brick}}_{\text{client}}$:
        $\{\text{Join}, s_{\mathit{init}}\}$,
        $\{\text{Settlement},\\\texttt{collaborate}, s_L\}$,
        $\{\text{Settlement}, \texttt{unilateral}, s_L\}$,\\
        $\{\text{Read}, \mathsf{executedRequest}, \mathsf{stateList}\}$,
        $\{\text{GetCurRound}, \mathit{round}\}$.
        \item \emph{On-chain transactions} committed on $\mathcal{F}_{\text{ledger}}$ during execution:
        $\mathit{TX}_{\mathit{open}}$, $\mathit{TX}_{\mathit{close}}$, $\mathit{TX}_{\mathit{collateral}}$, $\mathit{TX}_{\mathit{fraud}}$, $\mathit{TX}_{\mathit{settle}}$.
        \item \emph{Adversarial leakage}, comprising messages received by corrupted parties, their internal state, their outputs, and the transactions they publish on L1.
    \end{enumerate}

    In the ideal world, the simulator $\mathcal{S}_{\text{Brick}}$ internally runs $\mathcal{P'}^{\text{Brick}}$ and interacts with the dummy functionality $\mathcal{F}^{\text{Brick}}_{\text{dummy}}$, which by definition forwards every request from $\mathcal{E}$ to $\mathcal{S}_{\text{Brick}}$ unchanged and relays the simulator's responses back to $\mathcal{E}$. We show that each component is computationally indistinguishable across the two worlds.

    \medskip
    \noindent\textbf{(1) I/O outputs.} Since $\mathcal{F}^{\text{Brick}}_{\text{dummy}}$ acts as a transparent relay, $\mathcal{S}_{\text{Brick}}$ receives exactly the same sequence of requests as $\mathcal{P}^{\text{Brick}}_{\text{client}}$ would in the real world. By construction, $\mathcal{S}_{\text{Brick}}$ executes the same client and warden logic inside $\mathcal{P'}^{\text{Brick}}$ under the same adversarial scheduling, producing the same I/O outputs. The only potential difference arises from the randomness of $\mathcal{F}_{\text{sig}}$: signature strings in read results and settlement outputs may differ between the two worlds because fresh randomness is sampled independently. However, since $\mathcal{F}_{\text{sig}}$ realizes EUF-CMA security (as defined in Appendix~\ref{apdx:sigcom}), signatures generated on the same messages are computationally indistinguishable. Hence the I/O outputs are computationally indistinguishable.

    \medskip
    \noindent\textbf{(2) On-chain transactions.} Since $\mathcal{F}^{\text{Brick}}_{\text{dummy}}$ forwards all requests to $\mathcal{S}_{\text{Brick}}$, the simulated protocol $\mathcal{P'}^{\text{Brick}}$ generates and publishes the same set of transactions to $\mathcal{F}_{\text{ledger}}$ as $\mathcal{P}^{\text{Brick}}$ would in the real world. Transactions may contain different signature values due to independent randomness in $\mathcal{F}_{\text{sig}}$, but by the EUF-CMA security of the signature scheme, the transaction distributions are computationally indistinguishable.

    \medskip
    \noindent\textbf{(3) Adversarial leakage.} By the definition of $\mathcal{S}_{\text{Brick}}$, the corruption status of all entities is kept synchronized between $\mathcal{P'}^{\text{Brick}}$ and $\mathcal{F}^{\text{Brick}}_{\text{dummy}}$. Since $\mathcal{F}^{\text{Brick}}_{\text{dummy}}$ forwards all $\mathcal{E}$-requests to the simulator, $\mathcal{S}_{\text{Brick}}$ can reconstruct the same internal state and message history for corrupted parties as in the real execution. Consequently, the leakage delivered to $\mathcal{A}$ (and hence observable by $\mathcal{E}$, if $\mathcal{E}$ subsumes $\mathcal{A}$) is identical up to signature randomness, which is again computationally indistinguishable by EUF-CMA security.

    \medskip
    Conclusively, we have:
    $\mathsf{EXEC}^{\mathcal{P}^{\text{Brick}}}_{\mathcal{A}, \mathcal{E}}(k)\stackrel{c}{\approx}\mathsf{EXEC}^{\mathcal{F}^{\text{Brick}}_{\text{dummy}}}_{\mathcal{S}_{\text{Brick}}, \mathcal{E}}(k)$.
\end{proof}

Next, we extend the dummy functionality $\mathcal{F}^{\text{Brick}}_{\text{dummy}}$ with the submission subroutine, yielding
$\mathcal{F}^{\text{Brick}}_{\text{layer2-submit}} = (\mathcal{F}_{\text{client-submit}}, \mathcal{F}_{\text{ledger}} \mid \mathcal{F}^{\text{Brick}}_{\text{submit}})$.
The subroutine $\mathcal{F}^{\text{Brick}}_{\text{submit}}$ is defined in Appendix~\ref{apd:Bricksubmit}, while $\mathcal{F}_{\text{client-submit}}$ is given below. We also define the corresponding simulator $\mathcal{S}_{\text{Brick-submit}}$ as follows:

\vspace{1em}\begin{functionality}{Description of $\mathcal{M}_{\text{client-submit}}$ of $\mathcal{F}^{\text{Brick}}_{\text{layer2-submit}}$}{

\textbf{Implemented role(s):} \{client-submit\}

\noindent \textbf{Main:}

\vspace{0.5em}

\textbf{recv} \{Submit, $\mathit{request}$\} \textbf{from} I/O: \label{request:submit}

\begin{enumerate}[itemsep=0.5em]
\item \textbf{send} \{Submit, $\mathit{request}$, $\mathsf{internalState}$\} \textbf{to} $(\pcur, \scur, \mathcal{F}^{\text{Brick}}_{\text{submit}}:\text{submit})$,
\textbf{wait for} \{Submit, $\mathit{response}$\} s.t. $\mathit{response} \in \{\text{true}, \text{false}\}$;
\item \textbf{if} $\mathit{response} =$ true: $\mathsf{requestQueue}.\text{add}(\mathit{request})$;
\textbf{send} $\mathit{request}$ \textbf{to} $\mathcal{S}$ via NET;
\end{enumerate}

\hrule
\vspace{0.5em}

\textbf{recv} any other request \textbf{from} I/O:
\begin{enumerate}[itemsep=0.5em]
    \item Forward request to $\mathcal{S}$ through NET;
\end{enumerate}

\hrule
\vspace{1em}

\textbf{recv} any other request \textbf{from} NET:

\begin{enumerate}
    \item Output the message to $\mathcal{E}$ through I/O;
\end{enumerate}

}\end{functionality}\vspace{.5em}\begin{functionality}{Description of simulator $\mathcal{S}_{\text{Brick-submit}}$}{

The simulator $\mathcal{S}_{\text{Brick-submit}}$ behaves the same as $\mathcal{S}_{\text{Brick}}$.

}\end{functionality}\vspace{1em}

\begin{lemma}
\label{lem:Brick2}
    For all PPT adversaries $\mathcal{A}$, there exists a PPT simulator $\mathcal{S}_{\text{Brick-submit}}$ such that for all PPT environments $\mathcal{E}$ and all security parameters $k\in\mathbb{N}$,
    \[
    \mathsf{EXEC}^{\mathcal{F}^{\text{Brick}}_{\text{dummy}}}_{\mathcal{S}_{\text{Brick}},\mathcal{E}}(k)
    \ \stackrel{c}{\approx}\
    \mathsf{EXEC}^{\mathcal{F}^{\text{Brick}}_{\text{layer2-submit}}}_{\mathcal{S}_{\text{Brick-submit}},\mathcal{E}}(k),
    \]
    where $\stackrel{c}{\approx}$ denotes computational indistinguishability.
\end{lemma}

\begin{proof}
    Fix an arbitrary PPT environment $\mathcal{E}$. The simulator logic is identical in both executions, the only difference is the ideal functionality through which requests are forwarded to the simulator. We analyze the three observable components of the execution ensemble. In $\mathcal{F}^{\text{Brick}}_{\text{dummy}}$, every request from $\mathcal{E}$ is forwarded directly to $\mathcal{S}_{\text{Brick}}$. In $\mathcal{F}^{\text{Brick}}_{\text{layer2-submit}}$, the subroutine $\mathcal{F}^{\text{Brick}}_{\text{submit}}$ intercepts each request and checks semantic validity before forwarding to $\mathcal{S}_{\text{Brick-submit}}$. Concretely, $\mathcal{F}^{\text{Brick}}_{\text{submit}}$ verifies:
    \begin{itemize}
        \item The request belongs to a valid type (Join, Update, or Settlement);
        \item For submit requests: the balance conservation invariant $s_A + s_B = s_{\mathit{int}_A} + s_{\mathit{int}_B}$ holds;
        \item No state-update request is accepted before the join is executed or after a settlement request is pending.
    \end{itemize}
    Requests that fail these checks are rejected by $\mathcal{F}^{\text{Brick}}_{\text{submit}}$ and never reach the simulator.

    \medskip
    \noindent\textbf{(1) I/O outputs.} In the real protocol $\mathcal{P}^{\text{Brick}}$ (and hence in the simulated $\mathcal{P'}^{\text{Brick}}$), the client machine already enforces the same validity checks when receiving the request: malformed requests, requests violating balance conservation, and requests issued in an invalid protocol phase produce no output. Therefore, any request rejected by $\mathcal{F}^{\text{Brick}}_{\text{submit}}$ would also produce no output inside the simulation, since $\mathcal{P'}^{\text{Brick}}$ would silently discard it. For accepted requests, both simulators execute the same client logic and produce the same I/O outputs. Hence, the I/O outputs are identical.

    \medskip
    \noindent\textbf{(2) On-chain transactions.} Since only accepted requests trigger protocol actions in $\mathcal{P'}^{\text{Brick}}$, and the set of accepted requests is the same in both executions, the transactions published to $\mathcal{F}_{\text{ledger}}$ are identical except the signature part. Since the ideal signature functionality $\mathcal{F}_{\text{sig}}$ guarantees the EUF-CMA security, the on-chain transactions are computationally indistinguishable. 

    \medskip
    \noindent\textbf{(3) Adversarial leakage.} In $\mathsf{EXEC}^{\mathcal{F}^{\text{Brick}}_{\text{dummy}}}_{\mathcal{S}_{\text{Brick}},\mathcal{E}}(k)$, the simulator receives all requests (including invalid ones) but produces no observable effect for invalid requests. In $\mathsf{EXEC}^{\mathcal{F}^{\text{Brick}}_{\text{layer2-submit}}}_{\mathcal{S}_{\text{Brick-submit}},\mathcal{E}}(k)$, invalid requests are filtered before reaching the simulator. Since invalid requests generate no leakage in either world (no messages are sent to corrupted parties, no state changes occur, and no transactions are published).  Additionally, the corruption status is synchronized, and the existence of ideal signature functionality, the leakage delivered to $\mathcal{A}$ is computationally indistinguishable.

    \medskip
    Conclusively, the execution ensemble observed by $\mathcal{E}$ are computationally indistinguishable in both executions:
    $\mathsf{EXEC}^{\mathcal{F}^{\text{Brick}}_{\text{dummy}}}_{\mathcal{S}_{\text{Brick}}, \mathcal{E}}(k) \stackrel{c}{\approx} \mathsf{EXEC}^{\mathcal{F}^{\text{Brick}}_{\text{layer2-submit}}}_{\mathcal{S}_{\text{Brick-submit}}, \mathcal{E}}(k)$.
\end{proof}

Next step, we extend the ideal functionality $\mathcal{F}^{\text{Brick}}_{\text{layer2-submit}}$ with the subroutine $\mathcal{F}^{\text{Brick}}_{\text{join}}$ and defined as $\mathcal{F}^{\text{Brick}}_{\text{layer2,join}} = (\mathcal{F}_{\text{client-join}}, \mathcal{F}_{\text{ledger}} \mid \mathcal{F}^{\text{Brick}}_{\text{submit}}, \mathcal{F}^{\text{Brick}}_{\text{join}})$, the according simulator $\mathcal{S}_{\text{Brick-join}}$ defined as follow:

\vspace{1em}\begin{functionality}{Description of $\mathcal{M}_{\text{client-join}}$ of $\mathcal{F}^{\text{Brick}}_{\text{layer2-join}}$}{

\textbf{Implemented role(s):} \{client-join\}

\noindent \textbf{Main:}

\vspace{0.5em}

\textbf{recv} \{Submit, $\mathit{request}$\} \textbf{from} I/O: \label{request:submit}

\begin{enumerate}[itemsep=0.5em]
\item \textbf{send} \{Submit, $\mathit{request}$, $\mathsf{internalState}$\} \textbf{to} $(\pcur, \scur, \mathcal{F}^{\text{Brick}}_{\text{submit}}:\text{submit})$,
\textbf{wait for} \{Submit, $\mathit{response}$\} s.t. $\mathit{response} \in \{\text{true}, \text{false}\}$;
\item \textbf{if} $\mathit{response} =$ true: $\mathsf{requestQueue}.\text{add}(\mathit{request})$;
\textbf{send} $\mathit{request}$ \textbf{to} $\mathcal{S}$ via NET;
\end{enumerate}

\hrule
\vspace{0.5em}

\textbf{recv} \{Join, $\mathit{Attachment}$\} \textbf{from} NET: \label{request:join}

\begin{enumerate}[itemsep=0.5em]
\item \textbf{send} \{Join, $\mathit{Attachment}$, $\mathsf{internalState}$\} \textbf{to} $(\pcur, \scur, \mathcal{F}^{\text{Brick}}_{\text{join}}:\text{join})$,
\textbf{wait for} \{Join, $\mathit{response}$\} s.t. $\mathit{response} \in \{\text{true}, \text{false}\}$;
\item \textbf{if} $\mathit{response} =$ true: update $\mathsf{internalState}$ according to $\mathit{Attachment}$;
\textbf{reply} \{Join, $s_{\mathit{init}}$\} via I/O;
\end{enumerate}

\hrule
\vspace{1em}

\textbf{recv} any other request \textbf{from} I/O:
\begin{enumerate}[itemsep=0.5em]
    \item Forward request to $\mathcal{S}$ through NET;
\end{enumerate}

\hrule
\vspace{1em}

\textbf{recv} any other request \textbf{from} NET:

\begin{enumerate}
    \item Output the message to $\mathcal{E}$ through I/O;
\end{enumerate}

}\end{functionality}\vspace{.5em}

\begin{functionality}{Description of simulator $\mathcal{S}_{\text{Brick-join}}$}{

The simulator $\mathcal{S}_{\text{Brick-join}}$ behaves identically to $\mathcal{S}_{\text{Brick-submit}}$, except for the following additional behavior upon detecting a completed channel-opening procedure:

\vspace{0.5em}
\textbf{Channel opening interaction with $\mathcal{F}^{\text{Brick}}_{\text{layer2-join}}$:}

\begin{enumerate}[itemsep=0.5em]

\item $\mathcal{S}_{\text{Brick-join}}$ monitors the simulated protocol $\mathcal{P'}^{\text{Brick}}$. When it detects that a client entity is about to produce the I/O output $\{\text{Join}, s_{\mathit{init}}\}$, $\mathcal{S}_{\text{Brick-join}}$ intercepts this output and proceeds as follows.

\item $\mathcal{S}_{\text{Brick-join}}$ prepares $\mathit{Attachment}$ by extracting from the simulation state:
\begin{itemize}
    \item $s_{\mathit{init}}$: the initial state included in the simulated client's Join output;
    \item $\mathit{TX}_{\mathit{open}}$: the channel-opening transaction formed by the simulated client and committed on $\mathcal{F}_{\text{ledger}}$ in $\mathcal{P'}^{\text{Brick}}$;
\end{itemize}

\item $\mathcal{S}_{\text{Brick-join}}$ sends $\{\text{Join}, \mathit{Attachment}\}$ to $\mathcal{F}^{\text{Brick}}_{\text{layer2-join}}$ via NET.

\end{enumerate}

}\end{functionality}\vspace{1em}

\begin{lemma}
\label{lem:Brick3}
    For all PPT adversaries $\mathcal{A}$, there exists a PPT simulator $\mathcal{S}_{\text{Brick-join}}$ such that for all PPT environments $\mathcal{E}$ and all security parameters $k\in\mathbb{N}$,
    \[
    \mathsf{EXEC}^{\mathcal{F}^{\text{Brick}}_{\text{layer2-submit}}}_{\mathcal{S}_{\text{Brick-submit}},\mathcal{E}}(k)
    \ \stackrel{c}{\approx}\
    \mathsf{EXEC}^{\mathcal{F}^{\text{Brick}}_{\text{layer2-join}}}_{\mathcal{S}_{\text{Brick-join}},\mathcal{E}}(k),
    \]
    where $\stackrel{c}{\approx}$ denotes computational indistinguishability.
\end{lemma}

\begin{proof}
    Fix an arbitrary PPT environment $\mathcal{E}$. The only difference between the two execution is that $\mathcal{F}^{\text{Brick}}_{\text{layer2-join}}$ forward the Join request (from simulator thtough NET) to the subroutine $\mathcal{F}^{\text{Brick}}_{\text{join}}$ to decide output, whereas in $\mathcal{F}^{\text{Brick}}_{\text{layer2-submit}}$ the join output is produced entirely by the simulator. We analyze the three observable components, considering the two adversarial strategies that could create a difference: \emph{(i)}~influencing message delivery (the off-chain communication is asynchronous), and \emph{(ii)}~corrupt parties to deviate from the protocol.

    \medskip
    \noindent\textbf{(1) I/O outputs.} The only I/O output affected by this game hop is $\{\text{Join}, s_{\mathit{init}}\}$. We consider two cases.

    \emph{Case~1: Successful join.} In $\mathsf{EXEC}^{\mathcal{F}^{\text{Brick}}_{\text{layer2-submit}}}_{\mathcal{S}_{\text{Brick-submit}},\mathcal{E}}(k)$, the simulator $\mathcal{S}_{\text{Brick-submit}}$ produces a join output to $\mathcal{E}$ when the simulated $\mathcal{P'}^{\text{Brick}}$ completes the opening procedure, and the output is directly forwarded to $\mathcal{E}$. In $\mathsf{EXEC}^{\mathcal{F}^{\text{Brick}}_{\text{layer2-join}}}_{\mathcal{S}_{\text{Brick-join}},\mathcal{E}}(k)$, the simulator $\mathcal{S}_{\text{Brick-join}}$ instead prepares an $\mathit{Attachment}$ and sends it to $\mathcal{F}^{\text{Brick}}_{\text{join}}$, which outputs to $\mathcal{E}$ only if all defined checks pass. Since $\mathcal{S}_{\text{Brick-join}}$ sends the $\mathit{Attachment}$ precisely when notice $\mathcal{P'}^{\text{Brick}}$ completes the opening, and a successful opening in $\mathcal{P'}^{\text{Brick}}$ implies:
    \begin{itemize}
        \item $\forall$ honest client $\mathit{pid}$: a matching Join request exists in $\mathsf{requestQueue}$ (validated by $\mathcal{F}^{\text{Brick}}_{\text{submit}}$ in both games);
        \item $\mathit{TX}_{\mathit{open}}$ is consistent with $s_{\mathit{init}}$ and committed on L1;
    \end{itemize}
    these are exactly the checks in $\mathcal{F}^{\text{Brick}}_{\text{join}}$. Hence $\mathcal{F}^{\text{Brick}}_{\text{join}}$ accepts whenever $\mathcal{P'}^{\text{Brick}}$ completes, and the I/O output $\{\text{Join}, s_{\mathit{init}}\}$ is produced in both games at the same time with the same content.

    \emph{Case~2: Failed join.} The adversary may attempt to prevent or corrupt the opening via two strategies:
    \begin{itemize}
        \item \emph{Message delay:} Both simulators record $\mathcal{A}$'s message-delivery decisions and trigger the join only when $\mathcal{P'}^{\text{Brick}}$ completes the join procedure. Since the $\mathit{Attachment}$ is sent to $\mathcal{F}^{\text{Brick}}_{\text{join}}$ only upon completion, delivery delays affect both execution identically.
        \item \emph{Off-chain deviation:} If corrupted parties send invalid messages or withhold signatures, the simulated $\mathcal{P'}^{\text{Brick}}$ cannot collect $2f{+}1$ valid warden agreements (assuming the corruption threshold $f$ holds and $\mathcal{F}_{\text{sig}}$ provides EUF-CMA security). Thus no successful output is produced in $\mathsf{EXEC}^{\mathcal{F}^{\text{Brick}}_{\text{layer2-submit}}}_{\mathcal{S}_{\text{Brick-submit}},\mathcal{E}}(k)$. Correspondingly, $\mathcal{S}_{\text{Brick-join}}$ will not prepare an Join request, so no output is produced in $\mathcal{F}^{\text{Brick}}_{\text{join}}$. Hence $\mathcal{F}^{\text{Brick}}_{\text{join}}$.
        \item \emph{On-chain deviation:} If corrupted participants do not publish required L1 transactions, the honest clients in $\mathcal{P'}^{\text{Brick}}$ do not complete the opening and generates output in\\ $\mathsf{EXEC}^{\mathcal{F}^{\text{Brick}}_{\text{layer2-submit}}}_{\mathcal{S}_{\text{Brick-submit}},\mathcal{E}}(k)$, and the L1 check in $\mathcal{F}^{\text{Brick}}_{\text{join}}$ fails in $\mathcal{F}^{\text{Brick}}_{\text{join}}$. Hence $\mathcal{F}^{\text{Brick}}_{\text{join}}$. Hence no output is produced in either game.
    \end{itemize}
    In all cases, the I/O outputs are identical.

    \medskip
    \noindent\textbf{(2) On-chain transactions.} The set of transactions published to $\mathcal{F}_{\text{ledger}}$ is determined by the simulated protocol $\mathcal{P'}^{\text{Brick}}$, which runs identically in both simulators with the same forwarded requests. The game hop here only affects when the ideal functionality generates I/O output, not which transactions are published. Hence, on-chain transactions are still computationally indistinguishable.

    \medskip
    \noindent\textbf{(3) Adversarial leakage.} Both simulators maintain synchronized corruption status and execute the same protocol logic inside $\mathcal{P'}^{\text{Brick}}$. Since the game hop only interposes $\mathcal{F}^{\text{Brick}}_{\text{join}}$ between the simulator and the I/O output (and does not alter the simulator's internal execution or its interaction with corrupted parties), the leakage delivered to $\mathcal{A}$ is still computationally indistinguishable.

    \medskip
    Conclusively,
    $\mathsf{EXEC}^{\mathcal{F}^{\text{Brick}}_{\text{layer2-submit}}}_{\mathcal{S}_{\text{Brick-submit}}, \mathcal{E}}(k)\stackrel{c}{\approx}\mathsf{EXEC}^{\mathcal{F}^{\text{Brick}}_{\text{layer2-join}}}_{\mathcal{S}_{\text{Brick-join}}, \mathcal{E}}(k)$.
\end{proof}

Next step, we extend the ideal functionality $\mathcal{F}^{\text{Brick}}_{\text{layer2-join}}$ with the subroutine $\mathcal{F}^{\text{Brick}}_{\text{update}}$ and defined as $\mathcal{F}^{\text{Brick}}_{\text{layer2,update}} = (\mathcal{F}_{\text{client-update}},\\ \mathcal{F}_{\text{ledger}} \mid \mathcal{F}^{\text{Brick}}_{\text{submit}}, \mathcal{F}^{\text{Brick}}_{\text{join}}, \mathcal{F}^{\text{Brick}}_{\text{update}})$, the according simulator $\mathcal{S}_{\text{Brick-update}}$ defined as follow:

\vspace{1em}\begin{functionality}{Description of $\mathcal{M}_{\text{client-update}}$ of $\mathcal{F}^{\text{Brick}}_{\text{layer2-update}}$}{

\textbf{Implemented role(s):} \{client-update\}

\noindent \textbf{Main:}

\vspace{0.5em}

\textbf{recv} \{Submit, $\mathit{request}$\} \textbf{from} I/O: \label{request:submit}

\begin{enumerate}[itemsep=0.5em]
\item \textbf{send} \{Submit, $\mathit{request}$, $\mathsf{internalState}$\} \textbf{to} $(\pcur, \scur, \mathcal{F}^{\text{Brick}}_{\text{submit}}:\text{submit})$,
\textbf{wait for} \{Submit, $\mathit{response}$\} s.t. $\mathit{response} \in \{\text{true}, \text{false}\}$;
\item \textbf{if} $\mathit{response} =$ true: $\mathsf{requestQueue}.\text{add}(\mathit{request})$;
\textbf{send} $\mathit{request}$ \textbf{to} $\mathcal{S}$ via NET;
\end{enumerate}

\hrule
\vspace{0.5em}

\textbf{recv} \{Join, $\mathit{Attachment}$\} \textbf{from} NET: \label{request:join}

\begin{enumerate}[itemsep=0.5em]
\item \textbf{send} \{Join, $\mathit{Attachment}$, $\mathsf{internalState}$\} \textbf{to} $(\pcur, \scur, \mathcal{F}^{\text{Brick}}_{\text{join}}:\text{join})$,
\textbf{wait for} \{Join, $\mathit{response}$\} s.t. $\mathit{response} \in \{\text{true}, \text{false}\}$;
\item \textbf{if} $\mathit{response} =$ true: update $\mathsf{internalState}$ according to $\mathit{Attachment}$;
\textbf{reply} \{Join, $s_{\mathit{init}}$\} via I/O;
\end{enumerate}

\hrule
\vspace{1em}

\textbf{recv} \{Update, $\mathit{Attachment}$\} \textbf{from} NET: \label{request:update}

\begin{enumerate}[itemsep=0.5em]
\item \textbf{send} \{Update, $\mathit{Attachment}$, $\mathsf{internalState}$\} \textbf{to} $(\pcur, \scur, \mathcal{F}^{\text{Brick}}_{\text{update}}:\text{update})$,
\textbf{wait for} \{Update, $\mathit{response}$, $\mathit{newState}$, $\mathit{executedReq}$\};
\item \textbf{if} $\mathit{response} =$ true: update $\mathsf{internalState}$ with $\mathit{newState}$ and $\mathit{executedReq}$;
\end{enumerate}

\hrule
\vspace{0.5em}

\textbf{recv} any other request \textbf{from} I/O:
\begin{enumerate}[itemsep=0.5em]
    \item Forward request to $\mathcal{S}$ through NET;
\end{enumerate}

\hrule
\vspace{1em}

\textbf{recv} any other request \textbf{from} NET:

\begin{enumerate}
    \item Output the message to $\mathcal{E}$ through I/O;
\end{enumerate}

}\end{functionality}\vspace{.5em}

\begin{functionality}{Description of simulator $\mathcal{S}_{\text{Brick-update}}$}{

The simulator $\mathcal{S}_{\text{Brick-update}}$ behaves identically to $\mathcal{S}_{\text{Brick-join}}$, except for the following additional behavior upon detecting a completed state update:

\vspace{0.5em}
\textbf{State update interaction with $\mathcal{F}^{\text{Brick}}_{\text{layer2-update}}$:}

\begin{enumerate}[itemsep=0.5em]

\item In addition to simulating the honest clients' and wardens' actions as defined in Appendix~\ref{Apd:Brickreal}, $\mathcal{S}_{\text{Brick-update}}$ monitors the simulated protocol $\mathcal{P'}^{\text{Brick}}$ for the completion of state updates. Specifically, it detects when a new state $\{s, i\}$ is marked as updated in the simulated client's $\mathsf{stateList}$.

\item Upon detecting a completed state update, $\mathcal{S}_{\text{Brick-update}}$ prepares $\mathit{Attachment}$ by extracting from the simulation state:
\begin{itemize}
    \item $\mathit{executedReq} = \{s, i\}$: the executed request together with its sequence number;
    \item $\mathit{newState}$: the updated state recorded in the simulated client's $\mathsf{stateList}$;
    
\end{itemize}

\item $\mathcal{S}_{\text{Brick-update}}$ sends $\{\text{Update}, \mathit{Attachment}\}$ to $\mathcal{F}^{\text{Brick}}_{\text{layer2-update}}$ via NET;

\end{enumerate}

}\end{functionality}\vspace{1em}

\begin{lemma}
\label{lem:Brick4}
    For all PPT adversaries $\mathcal{A}$, there exists a PPT simulator $\mathcal{S}_{\text{Brick-update}}$ such that for all PPT environments $\mathcal{E}$ and all security parameters $k\in\mathbb{N}$,
    \[
    \mathsf{EXEC}^{\mathcal{F}^{\text{Brick}}_{\text{layer2-join}}}_{\mathcal{S}_{\text{Brick-join}},\mathcal{E}}(k)
    \ \stackrel{c}{\approx}\
    \mathsf{EXEC}^{\mathcal{F}^{\text{Brick}}_{\text{layer2-update}}}_{\mathcal{S}_{\text{Brick-update}},\mathcal{E}}(k),
    \]
    where $\stackrel{c}{\approx}$ denotes computational indistinguishability.
\end{lemma}

\begin{proof}
    Fix an arbitrary PPT environment $\mathcal{E}$. The only difference between the two games is that $\mathcal{F}^{\text{Brick}}_{\text{layer2-update}}$ routes the Update request from simualtor through NET to the subroutine $\mathcal{F}^{\text{Brick}}_{\text{update}}$, whereas in $\mathcal{F}^{\text{Brick}}_{\text{layer2-join}}$ state updates are handled entirely by the simulator. We analyze the three observable components.

    \medskip
    \noindent As defined in both ideal functionalities, the Update request does not directly produce I/O outputs to $\mathcal{E}$, it only modifies $\mathsf{internalState}$. In $\mathsf{EXEC}^{\mathcal{F}^{\text{Brick}}_{\text{layer2-join}}}_{\mathcal{S}_{\text{Brick-join}},\mathcal{E}}(k)$, only the $\mathsf{internalState}$ of the client in $\mathcal{P'}^{\text{Brick}}$ will be changed during the simulation. In $\mathsf{EXEC}^{\mathcal{F}^{\text{Brick}}_{\text{layer2-update}}}_{\mathcal{S}_{\text{Brick-update}},\mathcal{E}}(k)$, $\mathcal{F}^{\text{Brick}}_{\text{update}}$ additionally verifies the update before applying it. The subroutine checks:
    \begin{itemize}
        \item $\forall$ honest client $\mathit{pid}$: a matching Update request for $\{s, i\}$ exists in $\mathsf{requestQueue}$, and all honest clients agree on the same $\{s, i\}$;
        \item No conflict with $\mathsf{executedRequest}$, and balance conservation $s_A + s_B = s_{\mathit{int}_A} + s_{\mathit{int}_B}$;
        \item The sequence number $i$ is the immediate successor of the latest in $\mathsf{executedRequest}$.
    \end{itemize}

    Then we discuss the three types of execution ensemble:
    
    \noindent\textbf{(1) I/O outputs.} Although the Update request itself produces no I/O output, a divergence in $\mathsf{internalState}$ would cause observable differences in subsequent Read or Settlement outputs. Here we prove that $\mathsf{internalState}$ changes identically in both executions.

    In $\mathcal{P'}^{\text{Brick}}$, an honest client considers a state $\{s, i\}$ as successfully updated only after: (i)~both clients have signed $\{s, i\}$, and (ii)~at least $2f{+}1$ wardens have returned valid signatures. The honest client will reject conflicting transactions and incorrect sequence numbers, honest wardens likewise refuse to sign such transactions. These are precisely the checks enforced by $\mathcal{F}^{\text{Brick}}_{\text{update}}$. Thus $\mathcal{F}^{\text{Brick}}_{\text{update}}$ accepts an update if and only if the corresponding update was accepted in $\mathcal{P'}^{\text{Brick}}$, and $\mathsf{internalState}$ changes identically.

    \medskip
    \noindent\textbf{(2) On-chain transactions.} The Update procedure in the Brick channel does not publish transactions to $\mathcal{F}_{\text{ledger}}$ (state updates are purely off-chain). Hence, on-chain transactions are identical in both executions.

    \medskip
    \noindent\textbf{(3) Adversarial leakage.} Both simulators receive the same set of accepted requests (determined by $\mathcal{F}^{\text{Brick}}_{\text{submit}}$), run the same protocol logic inside $\mathcal{P'}^{\text{Brick}}$, and maintain synchronized corruption status. The game hop only interposes $\mathcal{F}^{\text{Brick}}_{\text{update}}$ as a filter on the Update request. Since the subroutine accepts exactly those updates that $\mathcal{P'}^{\text{Brick}}$ would accept and cause the $\mathsf{internalState}$ of the ideal functionalities to change, additionally, the ideal signature functionality guarantees the signatures are indistinguishable, therefore the leakage is computationally indistinguishable.

    Conclusively,
    $\mathsf{EXEC}^{\mathcal{F}^{\text{Brick}}_{\text{layer2-join}}}_{\mathcal{S}_{\text{Brick-join}}, \mathcal{E}}(k)
    \stackrel{c}{\approx}
    \mathsf{EXEC}^{\mathcal{F}^{\text{Brick}}_{\text{layer2-update}}}_{\mathcal{S}_{\text{Brick-update}}, \mathcal{E}}(k)$.\end{proof}

Next step, we extend the ideal functionality $\mathcal{F}^{\text{Brick}}_{\text{layer2-update}}$ with the subroutine $\mathcal{F}^{\text{Brick}}_{\text{read}}$ and defined as $\mathcal{F}^{\text{Brick}}_{\text{layer2,read}} = (\mathcal{F}_{\text{client-read}}, \mathcal{F}_{\text{ledger}} \mid \mathcal{F}^{\text{Brick}}_{\text{submit}}, \mathcal{F}^{\text{Brick}}_{\text{join}}, \mathcal{F}^{\text{Brick}}_{\text{update}}, \mathcal{F}^{\text{Brick}}_{\text{read}})$, the according simulator $\mathcal{S}_{\text{Brick-read}}$ defined as follow:

\vspace{1em}\begin{functionality}{Description of $\mathcal{M}_{\text{client-read}}$ of $\mathcal{F}^{\text{Brick}}_{\text{layer2-read}}$}{

\textbf{Implemented role(s):} \{client-read\}

\noindent \textbf{Main:}

\vspace{0.5em}

\textbf{recv} \{Submit, $\mathit{request}$\} \textbf{from} I/O: \label{request:submit}

\begin{enumerate}[itemsep=0.5em]
\item \textbf{send} \{Submit, $\mathit{request}$, $\mathsf{internalState}$\} \textbf{to} $(\pcur, \scur, \mathcal{F}^{\text{Brick}}_{\text{submit}}:\text{submit})$,
\textbf{wait for} \{Submit, $\mathit{response}$\} s.t. $\mathit{response} \in \{\text{true}, \text{false}\}$;
\item \textbf{if} $\mathit{response} =$ true: $\mathsf{requestQueue}.\text{add}(\mathit{request})$;
\textbf{send} $\mathit{request}$ \textbf{to} $\mathcal{S}$ via NET;
\end{enumerate}

\hrule
\vspace{0.5em}

\textbf{recv} \{Join, $\mathit{Attachment}$\} \textbf{from} NET: \label{request:join}

\begin{enumerate}[itemsep=0.5em]
\item \textbf{send} \{Join, $\mathit{Attachment}$, $\mathsf{internalState}$\} \textbf{to} $(\pcur, \scur, \mathcal{F}^{\text{Brick}}_{\text{join}}:\text{join})$,
\textbf{wait for} \{Join, $\mathit{response}$\} s.t. $\mathit{response} \in \{\text{true}, \text{false}\}$;
\item \textbf{if} $\mathit{response} =$ true: update $\mathsf{internalState}$ according to $\mathit{Attachment}$;
\textbf{reply} \{Join, $s_{\mathit{init}}$\} via I/O;
\end{enumerate}

\hrule
\vspace{1em}

\textbf{recv} \{Update, $\mathit{Attachment}$\} \textbf{from} NET: \label{request:update}

\begin{enumerate}[itemsep=0.5em]
\item \textbf{send} \{Update, $\mathit{Attachment}$, $\mathsf{internalState}$\} \textbf{to} $(\pcur, \scur, \mathcal{F}^{\text{Brick}}_{\text{update}}:\text{update})$,
\textbf{wait for} \{Update, $\mathit{response}$, $\mathit{newState}$, $\mathit{executedReq}$\};
\item \textbf{if} $\mathit{response} =$ true: update $\mathsf{internalState}$ with $\mathit{newState}$ and $\mathit{executedReq}$;
\end{enumerate}

\hrule
\vspace{0.5em}

\textbf{recv} \{Read\} \textbf{from} I/O: \label{request:read}

\begin{enumerate}[itemsep=0.5em]
\item \textbf{send} \{Read, $\mathsf{internalState}$\} \textbf{to} $(\pcur, \scur, \mathcal{F}^{\text{Brick}}_{\text{read}}:\text{read})$,
\textbf{wait for} $\mathit{ReadResult}$;
\item \textbf{if} $\mathit{ReadResult} \neq \bot$: \textbf{reply} \{Read, $\mathit{ReadResult}$\} via I/O;
\end{enumerate}

\hrule
\vspace{0.5em}

\textbf{recv} any other request \textbf{from} I/O:
\begin{enumerate}[itemsep=0.5em]
    \item Forward request to $\mathcal{S}$ through NET;
\end{enumerate}

\hrule
\vspace{1em}

\textbf{recv} any other request \textbf{from} NET:

\begin{enumerate}
    \item Output the message to $\mathcal{E}$ through I/O;
\end{enumerate}

}\end{functionality}\vspace{.5em}

\begin{functionality}{Description of simulator $\mathcal{S}_{\text{Brick-read}}$}{

The simulator $\mathcal{S}_{\text{Brick-read}}$ behaves identically to $\mathcal{S}_{\text{Brick-update}}$, except for the following additional behavior when handling read-delivery queries from the ideal functionality:

\vspace{0.5em}
\textbf{Read delivery interaction with $\mathcal{F}^{\text{Brick}}_{\text{layer2-read}}$:}

\begin{enumerate}[itemsep=0.5em]

\item When $\mathcal{F}^{\text{Brick}}_{\text{read}}$ sends a responsive message $\{\text{ReadDelivery}, i_{\mathit{ptr}}, i_{\max}\}$ to $\mathcal{S}_{\text{Brick-read}}$ via NET (querying whether a pending state update should be considered delivered to the reading client), $\mathcal{S}_{\text{Brick-read}}$ determines the delivery status based on $\mathcal{A}$'s scheduling decisions in the simulated protocol $\mathcal{P'}^{\text{Brick}}$.

\item Concretely, $\mathcal{S}_{\text{Brick-read}}$ checks whether, in $\mathcal{P'}^{\text{Brick}}$, the message carrying the state update at sequence number $i_{\max}$ has been delivered to the requesting client according to $\mathcal{A}$'s message-delivery decisions:
\begin{itemize}
    \item \textbf{if} delivered: $\mathcal{S}_{\text{Brick-read}}$ replies $\{\text{ReadDelivery}, \text{true}\}$; \cmt{Client observes the latest state}
    \item \textbf{else}: $\mathcal{S}_{\text{Brick-read}}$ replies $\{\text{ReadDelivery}, \text{false}\}$; \cmt{Client observes the previously read state}
\end{itemize}

\item The ideal functionality $\mathcal{F}^{\text{Brick}}_{\text{read}}$ then produces the I/O output to $\mathcal{E}$ accordingly.

\end{enumerate}

}\end{functionality}\vspace{1em}
\begin{lemma}
\label{lem:Brick5}
    For all PPT adversaries $\mathcal{A}$, there exists a PPT simulator $\mathcal{S}_{\text{Brick-read}}$ such that for all PPT environments $\mathcal{E}$ and all security parameters $k\in\mathbb{N}$,
    \[
    \mathsf{EXEC}^{\mathcal{F}^{\text{Brick}}_{\text{layer2-update}}}_{\mathcal{S}_{\text{Brick-update}},\mathcal{E}}(k)
    \ \stackrel{c}{\approx}\
    \mathsf{EXEC}^{\mathcal{F}^{\text{Brick}}_{\text{layer2-read}}}_{\mathcal{S}_{\text{Brick-read}},\mathcal{E}}(k),
    \]
    where $\stackrel{c}{\approx}$ denotes computational indistinguishability.
\end{lemma}

\begin{proof}
    Fix an arbitrary PPT environment $\mathcal{E}$. The only difference between the two execution is that in $\mathsf{EXEC}^{\mathcal{F}^{\text{Brick}}_{\text{layer2-read}}}_{\mathcal{S}_{\text{Brick-read}},\mathcal{E}}(k)$ the readresult is dedecided by $\mathcal{F}^{\text{Brick}}_{\text{read}}$ based on the ideal functionality's $\mathsf{internalState}$, whereas in $\mathsf{EXEC}^{\mathcal{F}^{\text{Brick}}_{\text{layer2-update}}}_{\mathcal{S}_{\text{Brick-update}},\mathcal{E}}(k)$ read results are produced entirely by the simulator. We analyze the three observable components of the execution ensemble.

    In both executions, a read request does not modify $\mathsf{internalState}$, it only returns records already stored. In $\mathsf{EXEC}^{\mathcal{F}^{\text{Brick}}_{\text{layer2-update}}}_{\mathcal{S}_{\text{Brick-update}},\mathcal{E}}(k)$, the read result is extracted from the local state of the simulated client inside $\mathcal{S}_{\text{Brick-update}}$ and forwarded to $\mathcal{E}$ by the ideal functionality. In $\mathsf{EXEC}^{\mathcal{F}^{\text{Brick}}_{\text{layer2-read}}}_{\mathcal{S}_{\text{Brick-read}},\mathcal{E}}(k)$, the read result is produced by $\mathcal{F}^{\text{Brick}}_{\text{read}}$ from $\mathsf{internalState}$, where $\mathsf{internalState}$ is updated exclusively by requests passing $\mathcal{F}^{\text{Brick}}_{\text{submit}}$ and updates passing $\mathcal{F}^{\text{Brick}}_{\text{update}}$. Two situations could cause a distinguishable difference: \emph{(i)}~inconsistency between the simulated client's local state and $\mathsf{internalState}$, or \emph{(ii)}~adversarial network control causing different read results.

    \medskip
    \noindent\textbf{(1) I/O outputs.} For situation~(i), as has been proved in Lemma~\ref{lem:Brick4}, $\mathsf{internalState}$ changes identically in both executions. Hence the I/O outputs for Read requests are computationally indistinguishable.

     As for situation~(ii), the adversarial network control affects only the timing of message delivery, not the content of delivered messages. By definition, both simulators use the same delivery schedule determined by $\mathcal{A}$, and $\mathsf{internalState}$ changes only after validated events. In $\mathsf{EXEC}^{\mathcal{F}^{\text{Brick}}_{\text{layer2-read}}}_{\mathcal{S}_{\text{Brick-read}},\mathcal{E}}(k)$, $\mathcal{F}^{\text{Brick}}_{\text{read}}$ queries $\mathcal{S}_{\text{Brick-read}}$ responsively for a boolean delivery decision; $\mathcal{S}_{\text{Brick-read}}$ replies based on whether the relevant message has been delivered in $\mathcal{P'}^{\text{Brick}}$ according to $\mathcal{A}$'s choices. These are the same choices that $\mathcal{S}_{\text{Brick-update}}$ uses in $\mathsf{EXEC}^{\mathcal{F}^{\text{Brick}}_{\text{layer2-update}}}_{\mathcal{S}_{\text{Brick-update}},\mathcal{E}}(k)$ to determine the simulated client's read output. Hence, the read outputs coincide under all adversarial delivery schedules.

    \medskip
    \noindent\textbf{(2) On-chain transactions.} Read requests do not publish any transactions to $\mathcal{F}_{\text{ledger}}$. Hence on-chain transactions are identical in both games.

    \medskip
    \noindent\textbf{(3) Adversarial leakage.} The game hop only interposes $\mathcal{F}^{\text{Brick}}_{\text{read}}$ on how the read result is generated, which is an I/O-facing operation that does not create network messages to corrupted parties. Both simulators maintain synchronized corruption status and execute the same protocol logic inside $\mathcal{P'}^{\text{Brick}}$. Hence, the leakage delivered to $\mathcal{A}$ is computationally indistinguishable.

    \medskip
    Conclusively,
    $\mathsf{EXEC}^{\mathcal{F}^{\text{Brick}}_{\text{layer2-update}}}_{\mathcal{S}_{\text{Brick-update}}, \mathcal{E}}(k)
    \stackrel{c}{\approx}
    \mathsf{EXEC}^{\mathcal{F}^{\text{Brick}}_{\text{layer2-read}}}_{\mathcal{S}_{\text{Brick-read}}, \mathcal{E}}(k)$.
\end{proof}

Next step, we extend the ideal functionality $\mathcal{F}^{\text{Brick}}_{\text{layer2-read}}$ with the subroutine $\mathcal{F}^{\text{Brick}}_{\text{settlement}}$ and defined as $\mathcal{F}^{\text{Brick}}_{\text{layer2,settlement}} = (\mathcal{F}_{\text{client-settlement}},\\ \mathcal{F}_{\text{ledger}} \mid \mathcal{F}^{\text{Brick}}_{\text{submit}}, \mathcal{F}^{\text{Brick}}_{\text{join}}, \mathcal{F}^{\text{Brick}}_{\text{update}}, \mathcal{F}^{\text{Brick}}_{\text{read}}, \mathcal{F}^{\text{Brick}}_{\text{settlement}})$, the according simulator $\mathcal{S}_{\text{Brick-settlement}}$ defined as follow:

\vspace{1em}\begin{functionality}{Description of $\mathcal{M}_{\text{client-settlement}}$ of $\mathcal{F}^{\text{Brick}}_{\text{layer2-settlement}}$}{

\textbf{Implemented role(s):} \{client-settlement\}

\noindent \textbf{Main:}

\vspace{0.5em}

\textbf{recv} \{Submit, $\mathit{request}$\} \textbf{from} I/O: \label{request:submit}

\begin{enumerate}[itemsep=0.5em]
\item \textbf{send} \{Submit, $\mathit{request}$, $\mathsf{internalState}$\} \textbf{to} $(\pcur, \scur, \mathcal{F}^{\text{Brick}}_{\text{submit}}:\text{submit})$,
\textbf{wait for} \{Submit, $\mathit{response}$\} s.t. $\mathit{response} \in \{\text{true}, \text{false}\}$;
\item \textbf{if} $\mathit{response} =$ true: $\mathsf{requestQueue}.\text{add}(\mathit{request})$;
\textbf{send} $\mathit{request}$ \textbf{to} $\mathcal{S}$ via NET;
\end{enumerate}

\hrule
\vspace{0.5em}

\textbf{recv} \{Join, $\mathit{Attachment}$\} \textbf{from} NET: \label{request:join}

\begin{enumerate}[itemsep=0.5em]
\item \textbf{send} \{Join, $\mathit{Attachment}$, $\mathsf{internalState}$\} \textbf{to} $(\pcur, \scur, \mathcal{F}^{\text{Brick}}_{\text{join}}:\text{join})$,
\textbf{wait for} \{Join, $\mathit{response}$\} s.t. $\mathit{response} \in \{\text{true}, \text{false}\}$;
\item \textbf{if} $\mathit{response} =$ true: update $\mathsf{internalState}$ according to $\mathit{Attachment}$;
\textbf{reply} \{Join, $s_{\mathit{init}}$\} via I/O;
\end{enumerate}

\hrule
\vspace{1em}

\textbf{recv} \{Update, $\mathit{Attachment}$\} \textbf{from} NET: \label{request:update}

\begin{enumerate}[itemsep=0.5em]
\item \textbf{send} \{Update, $\mathit{Attachment}$, $\mathsf{internalState}$\} \textbf{to} $(\pcur, \scur, \mathcal{F}^{\text{Brick}}_{\text{update}}:\text{update})$,
\textbf{wait for} \{Update, $\mathit{response}$, $\mathit{newState}$, $\mathit{executedReq}$\};
\item \textbf{if} $\mathit{response} =$ true: update $\mathsf{internalState}$ with $\mathit{newState}$ and $\mathit{executedReq}$;
\end{enumerate}

\hrule
\vspace{0.5em}

\textbf{recv} \{Read\} \textbf{from} I/O: \label{request:read}

\begin{enumerate}[itemsep=0.5em]
\item \textbf{send} \{Read, $\mathsf{internalState}$\} \textbf{to} $(\pcur, \scur, \mathcal{F}^{\text{Brick}}_{\text{read}}:\text{read})$,
\textbf{wait for} $\mathit{ReadResult}$;
\item \textbf{if} $\mathit{ReadResult} \neq \bot$: \textbf{reply} \{Read, $\mathit{ReadResult}$\} via I/O;
\end{enumerate}

\hrule
\vspace{0.5em}

\textbf{recv} \{Settlement, $\mathit{Attachment}$\} \textbf{from} NET: \label{request:settlement}

\begin{enumerate}[itemsep=0.5em]
\item \textbf{send} \{Settlement, $\mathit{Attachment}$, $\mathsf{internalState}$\} \textbf{to} $(\pcur, \scur, \mathcal{F}^{\text{Brick}}_{\text{settlement}}:\\\text{settlement})$,
\textbf{wait for} \{Settlement, $\mathit{response}$, $s_{\mathit{settle}}$\} s.t. $\mathit{response} \in \{\text{true}, \text{false}\}$;
\item \textbf{if} $\mathit{response} =$ true: update $\mathsf{internalState}$;
\textbf{reply} \{Settlement, $s_{\mathit{settle}}$\} via I/O;
\end{enumerate}

\hrule
\vspace{0.5em}

\textbf{recv} any other request \textbf{from} I/O:
\begin{enumerate}[itemsep=0.5em]
    \item Forward request to $\mathcal{S}$ through NET;
\end{enumerate}

\hrule
\vspace{1em}

\textbf{recv} any other request \textbf{from} NET:

\begin{enumerate}
    \item Output the message to $\mathcal{E}$ through I/O;
\end{enumerate}

}\end{functionality}\vspace{.5em}

\begin{functionality}{Description of simulator $\mathcal{S}_{\text{Brick-settlement}}$}{

The simulator $\mathcal{S}_{\text{Brick-settlement}}$ behaves identically to $\mathcal{S}_{\text{Brick-read}}$, except for the following additional behavior upon detecting a completed channel-settlement procedure:

\vspace{0.5em}
\textbf{Settlement interaction with $\mathcal{F}^{\text{Brick}}_{\text{layer2-settlement}}$:}

\begin{enumerate}[itemsep=0.5em]

\item $\mathcal{S}_{\text{Brick-settlement}}$ monitors the simulated protocol $\mathcal{P'}^{\text{Brick}}$. When it detects that a client entity is about to produce a successful settlement output inside the simulation, it intercepts this output and proceeds according to the settlement type.

\item \textbf{if} the simulated output is $\{\text{Settlement}, \texttt{collaborate}, s_L\}$:
\begin{itemize}[itemsep=0.3em]
    \item $\mathcal{S}_{\text{Brick-settlement}}$ prepares $\mathit{Attachment}$ by extracting from the simulation state:
    \begin{itemize}
        \item $\mathit{settlementType} \leftarrow \texttt{collaborate}$;
        \item $\mathit{TX}_{\mathit{close}}$: the collaborative closing transaction formed by the simulated clients, carrying the latest agreed state $s_L$ and signed by both clients;
    \end{itemize}
\end{itemize}

\item \textbf{if} the simulated output is $\{\text{Settlement}, \texttt{unilateral}, s_L\}$:
\begin{itemize}[itemsep=0.3em]
    \item $\mathcal{S}_{\text{Brick-settlement}}$ prepares $\mathit{Attachment}$ by extracting from the simulation state:
    \begin{itemize}
        \item $\mathit{settlementType} \leftarrow \texttt{unilateral}$;
    \end{itemize}
\end{itemize}

\item $\mathcal{S}_{\text{Brick-settlement}}$ sends $\{\text{Settlement}, \mathit{settlementType}, \mathit{Attachment}\}$ to\\ $\mathcal{F}^{\text{Brick}}_{\text{layer2-settlement}}$ via NET.

\end{enumerate}

}\end{functionality}\vspace{1em}

\begin{lemma}
\label{lem:Brick6}
    For all PPT adversaries $\mathcal{A}$, there exists a PPT simulator $\mathcal{S}_{\text{Brick-settlement}}$ such that for all PPT environments $\mathcal{E}$ and all security parameters $k\in\mathbb{N}$,
    \[
    \mathsf{EXEC}^{\mathcal{F}^{\text{Brick}}_{\text{layer2-read}}}_{\mathcal{S}_{\text{Brick-read}},\mathcal{E}}(k)
    \ \stackrel{c}{\approx}\
    \mathsf{EXEC}^{\mathcal{F}^{\text{Brick}}_{\text{layer2-settlement}}}_{\mathcal{S}_{\text{Brick-settlement}},\mathcal{E}}(k),
    \]
    where $\stackrel{c}{\approx}$ denotes computational indistinguishability.
\end{lemma}

\begin{proof}
    Fix an arbitrary PPT environment $\mathcal{E}$. The only difference between the two executions is that in $\mathsf{EXEC}^{\mathcal{F}^{\text{Brick}}_{\text{layer2-settlement}}}_{\mathcal{S}_{\text{Brick-settlement}},\mathcal{E}}(k)$, the Settlement request from simulator through NET is forwarded to the subroutine $\mathcal{F}^{\text{Brick}}_{\text{settlement}}$ to generate output, whereas in $\mathsf{EXEC}^{\mathcal{F}^{\text{Brick}}_{\text{layer2-read}}}_{\mathcal{S}_{\text{Brick-read}},\mathcal{E}}(k)$ the settlement output is produced entirely by the simulator. We analyze the three observable components, considering the two adversarial strategies: \emph{(i)}~influencing message delivery (the network is asynchronous), and \emph{(ii)}~causing parties to deviate from the protocol. We treat the two settlement types separately.

    \medskip
    \noindent\textbf{(1.1) I/O outputs for collaborative settlement.} The I/O output affected is $\{\text{Settlement}, \texttt{collaborate}, s_L\}$. We consider two cases.

    \emph{Case~1: Successful collaborative settlement.} In \\$\mathsf{EXEC}^{\mathcal{F}^{\text{Brick}}_{\text{layer2-read}}}_{\mathcal{S}_{\text{Brick-read}},\mathcal{E}}(k)$, the simulator $\mathcal{S}_{\text{Brick-read}}$ produces a settlement output when the simulated $\mathcal{P'}^{\text{Brick}}$ completes the collaborative closing. In $\mathsf{EXEC}^{\mathcal{F}^{\text{Brick}}_{\text{layer2-settlement}}}_{\mathcal{S}_{\text{Brick-settlement}},\mathcal{E}}(k)$, the simulator $\mathcal{S}_{\text{Brick-settlement}}$ instead prepares an $\mathit{Attachment}$ and sends it to $\mathcal{F}^{\text{Brick}}_{\text{settlement}}$, which outputs to $\mathcal{E}$ only if all checks pass. A successful collaborative closing in $\mathcal{P'}^{\text{Brick}}$ implies:
    \begin{itemize}
        \item $\forall$ honest client $\mathit{pid}$, there is a matching $(\text{Settlement}, \texttt{collaborate})$ request exists in $\mathsf{requestQueue}$;
        \item $\mathit{TX}_{\mathit{close}}$ carries the latest agreed state $ s_{\mathit{settle}}$ according to $\mathsf{stateList}$;
        \item $\mathit{TX}_{\mathit{close}}$ is committed on $\mathcal{F}_{\text{ledger}}$;
    \end{itemize}
    these are exactly the checks in $\mathcal{F}^{\text{Brick}}_{\text{settlement}}$ for the collaborative case. Hence $\mathcal{F}^{\text{Brick}}_{\text{settlement}}$ accepts whenever $\mathcal{P'}^{\text{Brick}}$ completes, and the output is produced at the same time with the same content.

    \emph{Case~2: Failed collaborative settlement (strategy~(ii), off-chain).} The adversary may cause a corrupted client to send an invalid closing message (e.g., a non-latest state or a forged signature). Since $\mathcal{F}_{\text{sig}}$ provides EUF-CMA security, the adversary cannot forge a valid signature on behalf of an honest client. The honest client in $\mathcal{P'}^{\text{Brick}}$ will reject any closing message carrying a state it did not agree to. Therefore, no successful output is produced in $\mathsf{EXEC}^{\mathcal{F}^{\text{Brick}}_{\text{layer2-read}}}_{\mathcal{S}_{\text{Brick-read}},\mathcal{E}}(k)$. Correspondingly, $\mathcal{S}_{\text{Brick-settlement}}$ cannot prepare an $\mathit{Attachment}$ in which the collaborative checks of $\mathcal{F}^{\text{Brick}}_{\text{settlement}}$ pass, so no output is produced in $\mathsf{EXEC}^{\mathcal{F}^{\text{Brick}}_{\text{layer2-settlement}}}_{\mathcal{S}_{\text{Brick-settlement}},\mathcal{E}}(k)$ also.

    \medskip
    \noindent\textbf{(1.2) I/O outputs for unilateral settlement.} The I/O output affected is $\{\text{Settlement}, \texttt{unilateral}, s_L\}$. We consider two cases.

    \emph{Case~1: Successful unilateral settlement.} In $\mathcal{P'}^{\text{Brick}}$, a unilateral settlement succeeds when $\mathit{TX}_{\mathit{unilateral}}$ is committed on L1, at least $2f{+}1$ wardens publish their settlement transactions $\{\mathit{TX}_{\mathit{settle}}\}$, and the committed state corresponds to the latest state. In $\mathsf{EXEC}^{\mathcal{F}^{\text{Brick}}_{\text{layer2-settlement}}}_{\mathcal{S}_{\text{Brick-settlement}},\mathcal{E}}(k)$, $\mathcal{F}^{\text{Brick}}_{\text{settlement}}$ checks:
    \begin{itemize}
    
        \item $\mathit{TX}_{\mathit{unilateral}} \in \mathit{L1ReadResult}.\{\mathit{TX}\}_{\text{L1}}$;
        \item $s_{\mathit{settle}} = \mathit{L1ReadResult}.\mathit{State}_{\text{L1}}$;
    \end{itemize}
    Since $\mathcal{S}_{\text{Brick-settlement}}$ triggers the settlement message to $\mathcal{F}^{\text{Brick}}_{\text{settlement}}$ precisely when $\mathcal{P'}^{\text{Brick}}$ completes the unilateral procedure, and a successful completion implies all the above conditions hold, $\mathcal{F}^{\text{Brick}}_{\text{settlement}}$ accepts and the output is identical.

    \emph{Case~2: Corrupted wardens publish a old state or do not react (strategy~(ii), on-chain).} Suppose a coalition of corrupted wardens publishes settlement transactions containing a non-latest state on L1. Under the corruption threshold ($\leq f$ corrupted wardens out of $3f{+}1$), at least $2f{+}1$ honest wardens remain, who will publish their stored state on L1. At least one honest warden holds the latest executed state and will publish it on L1. In $\mathcal{P'}^{\text{Brick}}$, the pubslihed old state will not be counted since the seqeunce number is lower than the latest state published by honest warden, resulting in the latest state being committed. In $\mathsf{EXEC}^{\mathcal{F}^{\text{Brick}}_{\text{layer2-settlement}}}_{\mathcal{S}_{\text{Brick-settlement}},\mathcal{E}}(k)$, $\mathcal{F}^{\text{Brick}}_{\text{settlement}}$ checks that $s_{\mathit{settle}}$ matches the committed L1 state, which holds after the fraud-proof mechanism resolves the dispute. Since the simulator keeps corruption status synchronized, $\mathcal{A}$ cannot cause corrupted wardens to behave differently across the two games. Hence the outputs coincide.

    \medskip
    \noindent\textbf{(1.3) I/O outputs for message delivery (strategy~(i)).} For both settlement types, both simulators record $\mathcal{A}$'s message-delivery decisions and trigger the settlement message to $\mathcal{F}^{\text{Brick}}_{\text{settlement}}$ only when $\mathcal{P'}^{\text{Brick}}$ completes the procedure. Delivery delays affect both games identically and cannot create a distinguishable difference.

    \medskip
    \noindent\textbf{(2) On-chain transactions.} The settlement procedure publishes $\mathit{TX}_{\mathit{close}}$ (collaborative), or $\mathit{TX}_{\mathit{unilateral}}$, $\{\mathit{TX}_{\mathit{settle}}\}$, and $\{\mathit{TX}_{\mathit{fraud}}\}$ (unilateral) to $\mathcal{F}_{\text{ledger}}$. These are generated by the simulated $\mathcal{P'}^{\text{Brick}}$, which runs identically in both simulators. The game hop only affects when the ideal functionality generates I/O output, not which transactions are published, since the message delivery and corruption are synchronized. Transactions may contain different signature values due to independent randomness in $\mathcal{F}_{\text{sig}}$, but by EUF-CMA security the distributions are computationally indistinguishable.

    \medskip
    \noindent\textbf{(3) Adversarial leakage.} Both simulators maintain synchronized corruption status and execute the same protocol logic inside $\mathcal{P'}^{\text{Brick}}$ with same input request from the ideal functioanlity. The game hop interposes $\mathcal{F}^{\text{Brick}}_{\text{settlement}}$ between the simulator and the I/O output but does not alter the simulator's internal execution or its interaction with corrupted parties. Hence the leakage delivered to $\mathcal{A}$ is identical.

    \medskip
    Conclusively,
    $\mathsf{EXEC}^{\mathcal{F}^{\text{Brick}}_{\text{layer2-read}}}_{\mathcal{S}_{\text{Brick-read}}, \mathcal{E}}(k)\stackrel{c}{\approx}\mathsf{EXEC}^{\mathcal{F}^{\text{Brick}}_{\text{layer2-settlement}}}_{\mathcal{S}_{\text{Brick-settlement}}, \mathcal{E}}(k)$.
\end{proof}

As a final step, we extend the ideal functionality $\mathcal{F}^{\text{Brick}}_{\text{layer2-settlement}}$ with the subroutine $\mathcal{F}^{\text{Brick}}_{\text{updRnd}}$ to reach $\mathcal{F}^{\text{Brick}}_{\text{layer2}} = (\mathcal{F}_{\text{client}}, \mathcal{F}_{\text{ledger}} \mid \mathcal{F}^{\text{Brick}}_{\text{submit}}, \mathcal{F}^{\text{Brick}}_{\text{join}}, \mathcal{F}^{\text{Brick}}_{\text{update}}, \mathcal{F}^{\text{Brick}}_{\text{read}}, \mathcal{F}^{\text{Brick}}_{\text{settlement}}, \mathcal{F}^{\text{Brick}}_{\text{updRnd}})$, the according simulator $\mathcal{S}_{\text{Brick-updRnd}}$ defined as follow:

\vspace{1em}\begin{functionality}{Description of $\mathcal{M}_{\text{client}}$ of $\mathcal{F}^{\text{Brick}}_{\text{layer2}}$}{

\textbf{Implemented role(s):} \{client\}

\noindent \textbf{Main:}

\vspace{0.5em}

\textbf{recv} \{Submit, $\mathit{request}$\} \textbf{from} I/O: \label{request:submit}

\begin{enumerate}[itemsep=0.5em]
\item \textbf{send} \{Submit, $\mathit{request}$, $\mathsf{internalState}$\} \textbf{to} $(\pcur, \scur, \mathcal{F}^{\text{Brick}}_{\text{submit}}:\text{submit})$,
\textbf{wait for} \{Submit, $\mathit{response}$\} s.t. $\mathit{response} \in \{\text{true}, \text{false}\}$;
\item \textbf{if} $\mathit{response} =$ true: $\mathsf{requestQueue}.\text{add}(\mathit{request})$;
\textbf{send} $\mathit{request}$ \textbf{to} $\mathcal{S}$ via NET;
\end{enumerate}

\hrule
\vspace{0.5em}

\textbf{recv} \{Join, $\mathit{Attachment}$\} \textbf{from} NET: \label{request:join}

\begin{enumerate}[itemsep=0.5em]
\item \textbf{send} \{Join, $\mathit{Attachment}$, $\mathsf{internalState}$\} \textbf{to} $(\pcur, \scur, \mathcal{F}^{\text{Brick}}_{\text{join}}:\text{join})$,
\textbf{wait for} \{Join, $\mathit{response}$\} s.t. $\mathit{response} \in \{\text{true}, \text{false}\}$;
\item \textbf{if} $\mathit{response} =$ true: update $\mathsf{internalState}$ according to $\mathit{Attachment}$;
\textbf{reply} \{Join, $s_{\mathit{init}}$\} via I/O;
\end{enumerate}

\hrule
\vspace{0.5em}

\textbf{recv} \{Update, $\mathit{Attachment}$\} \textbf{from} NET: \label{request:update}

\begin{enumerate}[itemsep=0.5em]
\item \textbf{send} \{Update, $\mathit{Attachment}$, $\mathsf{internalState}$\} \textbf{to} $(\pcur, \scur, \mathcal{F}^{\text{Brick}}_{\text{update}}:\text{update})$,
\textbf{wait for} \{Update, $\mathit{response}$, $\mathit{newState}$, $\mathit{executedReq}$\};
\item \textbf{if} $\mathit{response} =$ true: update $\mathsf{internalState}$ with $\mathit{newState}$ and $\mathit{executedReq}$;
\end{enumerate}

\hrule
\vspace{0.5em}

\textbf{recv} \{Read\} \textbf{from} I/O: \label{request:read}

\begin{enumerate}[itemsep=0.5em]
\item \textbf{send} \{Read, $\mathsf{internalState}$\} \textbf{to} $(\pcur, \scur, \mathcal{F}^{\text{Brick}}_{\text{read}}:\text{read})$,
\textbf{wait for} $\mathit{ReadResult}$;
\item \textbf{if} $\mathit{ReadResult} \neq \bot$: \textbf{reply} \{Read, $\mathit{ReadResult}$\} via I/O;
\end{enumerate}

\hrule
\vspace{0.5em}

\textbf{recv} \{Settlement, $\mathit{Attachment}$\} \textbf{from} NET: \label{request:settlement}

\begin{enumerate}[itemsep=0.5em]
\item \textbf{send} \{Settlement, $\mathit{Attachment}$, $\mathsf{internalState}$\} \textbf{to} $(\pcur, \scur, \mathcal{F}^{\text{Brick}}_{\text{settlement}}:\\\text{settlement})$,
\textbf{wait for} \{Settlement, $\mathit{response}$, $s_{\mathit{settle}}$\} s.t. $\mathit{response} \in \{\text{true}, \text{false}\}$;
\item \textbf{if} $\mathit{response} =$ true: update $\mathsf{internalState}$;
\textbf{reply} \{Settlement, $s_{\mathit{settle}}$\} via I/O;
\end{enumerate}

\hrule
\vspace{0.5em}

\textbf{recv} \{UpdateRound\} \textbf{from} NET: \label{request:updRnd}

\begin{enumerate}[itemsep=0.5em]
\item \textbf{send} \{UpdateRound, $\mathsf{internalState}$\} \textbf{to} $(\pcur, \scur, \mathcal{F}^{\text{Brick}}_{\text{updRnd}}:\text{updRnd})$,
\textbf{wait for} \{UpdateRound, $\mathit{response}$\} s.t. $\mathit{response} \in \{\text{true}, \text{false}\}$;
\item \textbf{if} $\mathit{response} =$ true: $\mathsf{round} \leftarrow \mathsf{round} + 1$;
\textbf{reply} \{UpdateRound, $\mathit{response}$\} via NET;
\end{enumerate}

\hrule
\vspace{0.5em}

\textbf{recv} \{GetCurRound\} \textbf{from} I/O or NET:
\begin{enumerate}[itemsep=0.5em]
    \item \textbf{reply} \{GetCurRound, $\mathsf{round}$\};
\end{enumerate}

}\end{functionality}\vspace{.5em}

\begin{functionality}{Description of simulator $\mathcal{S}_{\text{Brick-updRnd}}$}{

The simulator $\mathcal{S}_{\text{Brick-updRnd}}$ behaves identically to $\mathcal{S}_{\text{Brick-settlement}}$, except for the following additional behavior when handling round-update requests:

\vspace{0.5em}
\textbf{Round update interaction with $\mathcal{F}^{\text{Brick}}_{\text{layer2}}$:}

\begin{enumerate}[itemsep=0.5em]

\item Whenever $\mathcal{S}_{\text{Brick-updRnd}}$ receives a round-update instruction from $\mathcal{A}$ (i.e., $\mathcal{A}$ advances the clock in the simulated $\mathcal{P'}^{\text{Brick}}$), $\mathcal{S}_{\text{Brick-updRnd}}$ simultaneously sends $\{\text{UpdateRound}\}$ to $\mathcal{F}^{\text{Brick}}_{\text{layer2}}$ via NET.

\end{enumerate}

}\end{functionality}\vspace{1em}

\begin{lemma}
\label{lem:Brick7}
    For all PPT adversaries $\mathcal{A}$, there exists a PPT simulator $\mathcal{S}_{\text{Brick-updRnd}}$ such that for all PPT environments $\mathcal{E}$ and all security parameters $k\in\mathbb{N}$,
    \[
    \mathsf{EXEC}^{\mathcal{F}^{\text{Brick}}_{\text{layer2-settlement}}}_{\mathcal{S}_{\text{Brick-settlement}},\mathcal{E}}(k)
    \ \stackrel{c}{\approx}\
    \mathsf{EXEC}^{\mathcal{F}^{\text{Brick}}_{\text{layer2}}}_{\mathcal{S}_{\text{Brick-updRnd}},\mathcal{E}}(k),
    \]
    where $\stackrel{c}{\approx}$ denotes computational indistinguishability.
\end{lemma}

\begin{proof}
    Fix an arbitrary PPT environment $\mathcal{E}$. The only difference between the two games is that $\mathsf{EXEC}^{\mathcal{F}^{\text{Brick}}_{\text{layer2}}}_{\mathcal{S}_{\text{Brick-updRnd}},\mathcal{E}}(k)$ forward the UpdateRound request from simualtor through NET to the subroutine $\mathcal{F}^{\text{Brick}}_{\text{updRnd}}$ to decide the round update, whereas in $\mathcal{F}^{\text{Brick}}_{\text{layer2-settlement}}$ round updates are handled entirely by the simulator. We analyze the three observable components.

    The UpdateRound request affects a single observable quantity: the value of $\mathsf{round}$ returned by the GetCurRound request. In \\$\mathsf{EXEC}^{\mathcal{F}^{\text{Brick}}_{\text{layer2-settlement}}}_{\mathcal{S}_{\text{Brick-settlement}},\mathcal{E}}(k)$, $\mathcal{S}_{\text{Brick-settlement}}$ maintains the round counter inside the simulated $\mathcal{P'}^{\text{Brick}}$, whose clock is advanced by $\mathcal{A}$, and forwards the round value to $\mathcal{E}$ upon a GetCurRound request. In $\mathsf{EXEC}^{\mathcal{F}^{\text{Brick}}_{\text{layer2}}}_{\mathcal{S}_{\text{Brick-updRnd}},\mathcal{E}}(k)$, $\mathcal{S}_{\text{Brick-updRnd}}$ additionally forwards each round-update request from $\mathcal{A}$ to $\mathcal{F}^{\text{Brick}}_{\text{updRnd}}$, which updates $\mathsf{round}$ in $\mathsf{internalState}$.

    \medskip
    \noindent\textbf{(1) I/O outputs.} The only I/O output affected is $\{\text{GetCurRound}, \mathsf{round}\}$. Under asynchronous communication, $\mathcal{F}^{\text{Brick}}_{\text{updRnd}}$ imposes no time bound on collaborative operations (join, state update, collaborative settlement), and unconditionally admits round advancement for these request types. The only liveness constraint enforced by $\mathcal{F}^{\text{Brick}}_{\text{updRnd}}$ is on unilateral settlement: a round-update request is rejected whenever there exists a pending unilateral settlement request from an honest client at time~$t$ whose corresponding $s_{\mathit{settle}}$ is not yet reflected in $\mathsf{onchainState}$ within $T_{L_1}$.

We argue that this check is always satisfied in the simulated game. By the definition of $\mathcal{S}_{\text{Brick-updRnd}}$, once the real protocol $\mathcal{P'}^{\text{Brick}}$ completes the unilateral settlement procedure, i.e., the corresponding $\mathit{TX}_{\mathit{unilateral}}$ and at least $2f{+}1$ warden settlement transactions are committed on L1, the simulator notifies $\mathcal{F}^{\text{Brick}}_{\text{settlement}}$, which updates $\mathsf{onchainState}$ to the resolved settlement state. The simulator only permits the clock in $\mathcal{P'}^{\text{Brick}}$ to advance past the unilateral-settlement deadline after this on-chain publication has occurred, so by the time a \{UpdateRound\} request reaches $\mathcal{F}^{\text{Brick}}_{\text{updRnd}}$, the pending-unilateral check passes. Every round-update instruction from $\mathcal{A}$ is therefore accepted, and $\mathsf{round}$ in $\mathsf{internalState}$ evolves in lockstep with the simulated clock in $\mathcal{P'}^{\text{Brick}}$. The GetCurRound output delivered to $\mathcal{E}$ is identical in both games. All other I/O outputs (Join, Read, Settlement) are unaffected by this game hop.

    \medskip
    \noindent\textbf{(2) On-chain transactions.} The UpdateRound request does not publish any transactions to $\mathcal{F}_{\text{ledger}}$. Hence on-chain transactions are identical in both games.

    \medskip
    \noindent\textbf{(3) Adversarial leakage.} The game hop only interposes $\mathcal{F}^{\text{Brick}}_{\text{updRnd}}$ on the UpdateRound request. Since $\mathcal{F}^{\text{Brick}}_{\text{updRnd}}$ unconditionally accepts and does not generate any network messages to corrupted parties, the simulator's internal execution of $\mathcal{P'}^{\text{Brick}}$ and its interaction with corrupted parties are unaffected. The leakage delivered to $\mathcal{A}$ is identical.

    \medskip
    Combining (1)--(3):
    $\mathsf{EXEC}^{\mathcal{F}^{\text{Brick}}_{\text{layer2-settlement}}}_{\mathcal{S}_{\text{Brick-settlement}}, \mathcal{E}}(k)\stackrel{c}{\approx}\mathsf{EXEC}^{\mathcal{F}^{\text{Brick}}_{\text{layer2}}}_{\mathcal{S}_{\text{Brick-updRnd}}, \mathcal{E}}(k)$.
\end{proof}

\ThmrealizeBrick*
\begin{proof}
    Let $\mathcal{A}$ be any PPT adversary and let $\mathcal{E}$ be any PPT environment. By Lemmas~\ref{lem:Brick1}--\ref{lem:Brick7}, the sequence of game hops yields a chain of computationally indistinguishable execution ensembles, each adjacent pair differing only in how one subroutine is implemented:
    \[
    \mathsf{EXEC}^{\mathcal{P}^{\text{Brick}}}_{\mathcal{A},\mathcal{E}}
    \ \stackrel{c}{\approx}\ 
    \mathsf{EXEC}^{\mathcal{F}^{\text{Brick}}_{\text{dummy}}}_{\mathcal{S}_{\text{Brick}},\mathcal{E}}
    \ \stackrel{c}{\approx}\ 
    \cdots
    \ \stackrel{c}{\approx}\ 
    \mathsf{EXEC}^{\mathcal{F}^{\text{Brick}}_{\text{layer2}}}_{\mathcal{S}_{\text{Brick-updRnd}},\mathcal{E}}.
    \]
    By transitivity of computational indistinguishability, we obtain
    $
    \mathsf{EXEC}^{\mathcal{P}^{\text{Brick}}}_{\mathcal{A},\mathcal{E}}
    \stackrel{c}{\approx}
    \mathsf{EXEC}^{\mathcal{F}^{\text{Brick}}_{\text{layer2}}}_{\mathcal{S}_{\text{Brick-updRnd}},\mathcal{E}}$, which proves that $\mathcal{P}^{\text{Brick}}$ iUC-realizes $\mathcal{F}^{\text{Brick}}_{\text{layer2}}$.
\end{proof}
\section{Case Study: The Liquid Sidechain}
\label{apdx:Liquid}

\subsection{$\mathcal{F}_{\text{ledger}}$ instantiation for Liquid sidechain}
\label{apd:LiquidLedger}

\subsubsection*{Submission functionality $\mathcal{F}_{\text{submit}_{\text{L1}}}$}

The submission subroutine accepts a request to submit a transaction to L1 if and only if the transaction conforms to one of the Liquid transaction types. For the Liquid sidechain, the following transactions are accepted by $\mathcal{F}^{\text{Liquid}}_{\text{submit}_{\text{L1}}}$: client deposit transactions ($\mathit{TX}_{\mathit{deposit}}$), which lock funds on L1 and carry the initial sidechain state; and operator-submitted settlment transactions $\mathit{TX}_{\mathit{settlement}}$ for the client, which release funds from the sidechain back to L1 at settlement according to the included L2 state. Sidechain chain transactions, that are submitted by clients to operators through $\mathcal{F}^{\text{Liquid}}_{\text{com}}$ and BFT block-consensus messages (\{UpdatePropose\}, \{UpdatePrecommitment\}, \{UpdateFinal\}) are handled off-chain and are \emph{not} submitted to L1.

\subsubsection*{Update functionality $\mathcal{F}_{\text{update}_{\text{L1}}}$}

The update subroutine refines the base $\mathcal{F}_{\text{update}_{\text{L1}}}$ by enforcing Liquid-specific commitment semantics for sidechain entry and exit:

\begin{itemize}
    \item \textbf{Deposit confirmation.} A deposit transaction $\mathit{TX}_{\mathit{deposit}}$ is treated as committed for the purposes of joining the sidechain only after it has been committed on L1 for at least $100$ blocks. Before this bound is reached, operators do not enqueue the corresponding peg-in request.
    \item \textbf{Sidechain state settlement.} If $\mathit{TX}_{\mathit{settlement}}$ is committed on L1, the committed state $\mathit{State}_{\text{L1}}$ is set to the settlement state carried in $\mathit{TX}_{\mathit{settlement}}$, which must match the latest sidechain state agreed upon by the BFT quorum ($2f{+}1$ operator signatures on the block containing the corresponding request). Any peg-out transaction whose state is inconsistent with the latest BFT-agreed state is rejected at the update stage.
\end{itemize}

\subsubsection*{Preservation of L1 safety and liveness.}

The instantiation preserves the original security guarantees of $\mathcal{F}_{\text{ledger}}$. Liquid-specific logic is purely additive: the new transaction types ($\mathit{TX}_{\mathit{deposit}}$, $\mathit{TX}_{\mathit{settlement}}$) are accepted as valid L1 payloads, and the deposit-confirmation and peg-out-resolution rules are deterministic predicates over transactions already committed on L1. No existing L1 transaction is rejected or reordered by the instantiation, so the underlying ledger's security still holds. 

\subsection{Liquid Sidechain Real Protocol}
\label{apd:LiquidReal}

We formally define the real-world protocol implementation $\mathcal{P}^{\text{Liquid}}$ as
$\mathcal{P}^{\text{Liquid}} := (\mathcal{P}^{\text{Liquid}}_{\text{client}} \mid \mathcal{P}^{\text{Liquid}}_{\text{operator}}, \mathcal{F}^{\text{Liquid}}_{\text{com}}, \mathcal{F}_{\text{sig}})$.
The client and operator are the two participating roles. Clients are the main parties instructed by the environment to send out requests and operators are responsible for maintaining the Liquid sidechain via a three-phase BFT consensus protocol. We assume $n = 3f+1$ operators in total, of which the adversary may corrupt at most~$f$. The structure is shown in Figure~\ref{fig:liquid}.

\begin{figure}
    \centering
    \includegraphics[width=\linewidth]{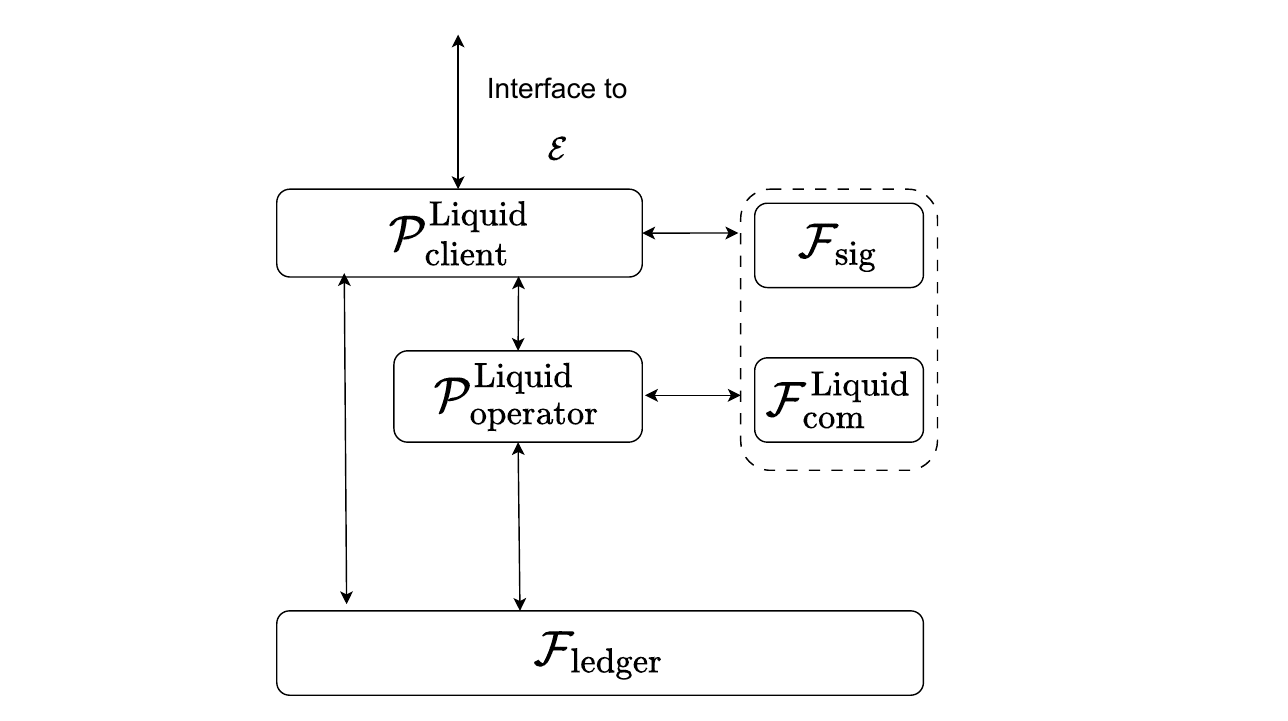}
    \caption{The Liquid sidechain protocol}
    \label{fig:liquid}
\end{figure}

\subsubsection{Client Protocol}

The client machine $\mathcal{M}^{\text{Liquid}}_{\text{client}}$ specifies the behavior of an honest client in the Liquid sidechain.

\vspace{1em}\begin{functionality}{Description of protocol $\mathcal{P}^{\text{Liquid}}_{\text{client}} = (\text{client})$}{

\textbf{Participating roles:} \{client\}

\noindent \textbf{Corruption model:} dynamic corruption, at least one honest

\noindent \textbf{Protocol parameters:}
\begin{itemize}
    \item $n = 3f+1$: operator committee size
    \item $f$: corruption threshold for operators
\end{itemize}

}\end{functionality}\vspace{.5em}
\begin{functionality}{Description of $\mathcal{M}^{\text{Liquid}}_{\text{client}}$}{

\textbf{Implemented role(s):} \{client\}

\noindent \textbf{Subroutines:} $\mathcal{F}_{\text{sig}}$: \{signer, verifier\}, $\mathcal{F}^{\text{Liquid}}_{\text{com}}$: com, $\mathcal{F}_{\text{ledger}}$: client\textsubscript{L1}, $\mathcal{P}^{\text{Liquid}}_{\text{operator}}$: operator

\noindent \textbf{Internal state:}
\begin{itemize}
    \item $\mathsf{round} \in \mathbb{N}_{\geq 0}$, $\mathsf{round} = 0$ \hfill\cmt{Current round}
    \item $\mathsf{requestQueue} \subset \{0,1\}^{*}$, $\mathsf{requestQueue} = \emptyset$ \hfill\cmt{Queued unexecuted requests}
    \item $\mathsf{executedRequest} \subset \{0,1\}^{*}$, $\mathsf{executedRequest} = \emptyset$ \hfill\cmt{Executed requests}
    \item $\mathsf{stateList} \subset \{0,1\}^{*}$, $\mathsf{stateList} = \emptyset$ \hfill\cmt{L2 state list}
    \item $\mathsf{onchainState} \subset \{0,1\}^{*}$, $\mathsf{onchainState} = \emptyset$ \hfill\cmt{L1 committed state}
    \item $\mathsf{identities} \subset \{0,1\}^{*}$, $\mathsf{identities} = \{Address, \{pid_{op}\}\}$ \hfill\cmt{Sidechain operators and sidechain L1 address}
\end{itemize}

\noindent \textbf{CheckID}(\emph{pid}, \emph{sid}, \emph{role}): Accept all messages with the same \emph{sid}.

\noindent \textbf{Corruption behavior:}
\begin{itemize}
    \item \textbf{DetermineCorrStatus}(\emph{pid}, \emph{sid}, \emph{role}): Return $\mathsf{corr}$.
    \item \textbf{LeakedData}(\emph{pid}, \emph{sid}, \emph{role}): Return \texttt{Internal State}.
\end{itemize}

\vspace{0.5em}
\noindent \textbf{Main:}

\vspace{0.5em}

\textbf{recv} \{Join, Identities, $s_{\mathit{init}}$\} \textbf{from} I/O:

\begin{enumerate}[itemsep=0.5em]
    \item Prepare deposit transaction $\mathit{TX}_{\mathit{deposit}}=\{\pcur, Address, s_{\mathit{init}}, \epsilon\}$ from $s_{\mathit{init}}$ \cmt{ parse destination address $\mathit{Address}$ from Identities}

    \item \textbf{send} \{SubmitL1, $\mathit{TX}_{\mathit{deposit}}$, $\mathit{Address}$\} \textbf{to} $(\pcur, \scur, \mathcal{F}_{\text{ledger}}: \text{client}_{\text{L1}})$;

    \item \textbf{send} \{ReadL1\} \textbf{to} $(\pcur, \scur, \mathcal{F}_{\text{ledger}}:\text{client}_{\text{L1}})$;
\textbf{wait for} \{ReadL1, $\mathit{L1ReadResult} = \{\{\mathit{TX}\}_{\text{L1}}, \mathit{State}_{\text{L1}}, \{\mathit{pid}\}\}$\};

\item \textbf{if} $\mathit{TX}_{\mathit{deposit}} \in \{\mathit{TX}\}_{\text{L1}}$, \textbf{send} \{Sign, $\mathit{TX}_{\mathit{peg\text{-}in}}$\} \textbf{to} $(\pcur, \scur, \mathcal{F}_{\text{sig}}: \text{signer})$;
    \textbf{wait for} \{Signature, $\sigma$\};

    \item Prepare $\mathit{TX}_{\mathit{peg\text{-}in}}=\{\epsilon, \pcur, s_{\mathit{init}}, \mathit{TX}_{\mathit{deposit}}\}$ \textbf{send} \{Message, \{Join, $\mathit{TX}_{\mathit{peg\text{-}in}}$, $\sigma$, \emph{receiver}\}\} \textbf{to} $(\pcur, \scur, \mathcal{F}^{\text{Liquid}}_{\text{com}}: \text{com})$, where $\mathit{receiver} = (\_, \scur, \mathcal{P}^{\text{Liquid}}_{\text{operator}}:\text{operator})$; \cmt{Send to all operators}

    \item \textbf{send} \{Read, \emph{receiver}\} \textbf{to} $(\pcur, \scur, \mathcal{F}^{\text{Liquid}}_{\text{com}}: \text{com})$, where $\mathit{receiver} = (\_, \scur,\\ \mathcal{P}^{\text{Liquid}}_{\text{operator}}: \text{operator})$;
\textbf{wait for} \{Read, $\mathit{L2ReadResult} = \{\mathsf{executedRequest}_{\text{op}},\\ \mathsf{stateList}_{\text{op}}\}$\}; \cmt{Read sidechain view from operators}

\item \textbf{if} $\mathit{TX}_{\mathit{peg\text{-}in}} \in \mathit{L2ReadResult}.\mathsf{executedRequest}_{\text{op}}$: \cmt{Peg-in confirmed in the operator-side sidechain history}
\begin{itemize}
    \item $\mathsf{stateList} \leftarrow \mathit{L2ReadResult}.\mathsf{stateList}_{\text{op}}$; \cmt{Adopt operator-confirmed L2 state}
    \item $\mathsf{onchainState} \leftarrow s_{\mathit{init}}$; \cmt{Initial state committed on L1}
    \item $\mathsf{identities}.\text{add}(\pcur)$; \cmt{Register self as a participating client}
    \item \textbf{reply} \{Join, $s_{\mathit{init}}$\} via I/O;
\end{itemize}
\end{enumerate}

\hrule
\vspace{0.5em}

\textbf{recv} \{Submit, $\mathit{TX}$\} \textbf{from} I/O, \textbf{s.t.} request is valid:

\begin{enumerate}[itemsep=0.5em]
\item \textbf{send} \{Sign, $\mathit{TX}$\} \textbf{to} $(\pcur, \scur, \mathcal{F}_{\text{sig}}: \text{signer})$,
\textbf{wait for} \{Signature, $\sigma$\};

\item \textbf{send} \{Message, \{Submit, $\mathit{TX}$, $\sigma$, \emph{receiver}\}\} \textbf{to} $(\pcur, \scur, \mathcal{F}^{\text{Liquid}}_{\text{com}}: \text{com})$, where $\mathit{receiver} = (\_, \scur, \mathcal{P}^{\text{Liquid}}_{\text{operator}}: \text{operator})$; \cmt{Send to all operators}

\item $\mathsf{requestQueue}.\text{add}(\mathit{TX})$;
\end{enumerate}

\hrule
\vspace{0.5em}

\textbf{recv} \{Settlement\} \textbf{from} I/O:

\begin{enumerate}[itemsep=0.5em]

\item Prepare peg-out transaction $\mathit{TX}_{\mathit{peg\text{-}out}}=\{\pcur, \epsilon, s_{\mathit{settle}}, \epsilon\}$;

\item \textbf{send} \{Sign, $\mathit{TX}_{\mathit{peg\text{-}out}}$\} \textbf{to} $(\pcur, \scur, \mathcal{F}_{\text{sig}}: \text{signer})$;
\textbf{wait for} \{Signature, $\sigma$\};

\item \textbf{send} \{Message, \{Settlement, $\mathit{TX}_{\mathit{peg\text{-}out}}$, $\sigma$, \emph{receiver}\}\} \textbf{to} $(\pcur, \scur, \mathcal{F}^{\text{Liquid}}_{\text{com}}: \text{com})$, where $\mathit{receiver} = (\_, \scur, \mathcal{P}^{\text{Liquid}}_{\text{operator}}: \text{operator})$; \cmt{Submit peg-out to all operators}

\item \textbf{send} \{Read, \emph{receiver}\} \textbf{to} $(\pcur, \scur, \mathcal{F}^{\text{Liquid}}_{\text{com}}: \text{com})$, where $\mathit{receiver} = (\_, \scur,\\ \mathcal{P}^{\text{Liquid}}_{\text{operator}}: \text{operator})$,
\textbf{wait for} \{Read, $\mathit{L2ReadResult} = \{\mathsf{executedRequest}_{\text{op}},\\ \mathsf{stateList}_{\text{op}}, \mathsf{onchainState}_{\text{op}}\}$\}; \cmt{Read sidechain view from operators}

\item \textbf{send} \{ReadL1\} \textbf{to} $(\pcur, \scur, \mathcal{F}_{\text{ledger}}: \text{client}_{\text{L1}})$;
\textbf{wait for} \{ReadL1, $\mathit{L1ReadResult} = \{\{\mathit{TX}\}_{\text{L1}}, \mathit{State}_{\text{L1}}, \{\mathit{pid}\}\}$\};

\item Check that the operator has confirmed and finalized the peg-out:
\begin{itemize}
    \item $\mathit{TX}_{\mathit{peg\text{-}out}} \in \mathit{L2ReadResult}.\mathsf{executedRequest}_{\text{op}}$; \cmt{Peg-out is included in the operator-confirmed sidechain history}
    \item $\exists\, \mathit{TX}_{\mathit{settle}} \in \mathit{L1ReadResult}.\{\mathit{TX}\}_{\text{L1}}$ \textbf{s.t.} $\mathit{TX}_{\mathit{settle}}$ is an operator-published settlement transaction carrying $s_{\mathit{settle}}$; \cmt{Operator has posted the settlement transaction on L1}
    \item $s_{\mathit{settle}} \in \mathit{L1ReadResult}.\mathit{State}_{\text{L1}}$; \cmt{Settlement state is recorded on L1}
\end{itemize}

\item \textbf{if} all checks pass:
\begin{itemize}
    \item $\mathsf{onchainState} \leftarrow s_{\mathit{settle}}$; \cmt{Update L1-committed state}
    \item $\forall\, \mathit{pid}' \in \mathit{L1ReadResult}.\{\mathit{pid}\}$: $\mathsf{identities}.\text{add}(\mathit{pid}')$; \cmt{Register newly visible participants}
    \item \textbf{reply} \{Settlement, $s_{\mathit{settle}}$\} via I/O;
\end{itemize}

\end{enumerate}

\hrule
\vspace{0.5em}

\textbf{recv} \{Read\} \textbf{from} I/O:

\begin{enumerate}[itemsep=0.5em]

\item \textbf{send} \{Read, \emph{receiver}\} \textbf{to} $(\pcur, \scur,\\ \mathcal{F}^{\text{Liquid}}_{\text{com}}: \text{com})$, where $\mathit{receiver} = (\_, \scur, \mathcal{P}^{\text{Liquid}}_{\text{operator}}: \text{operator})$;
\textbf{wait for} \{Read, $\mathit{L2ReadResult} = \{\mathsf{executedRequest}_{\text{op}}, \mathsf{stateList}_{\text{op}}\}$\};

\item Reconstruct the sidechain from $\mathit{L2ReadResult}$: $\forall\, \mathit{Block}_k\\ \in \mathit{L2ReadResult}.\mathsf{executedRequest}_{\text{op}}$ \textbf{s.t.} $\mathit{Block}_k$ carries $\geq 2f{+}1$ operator signatures: \cmt{Accept blocks finalized by BFT quorum}
\begin{itemize}
    \item $\forall\, \mathit{TX} \in \mathit{Block}_k$ with $\mathit{TX} \in \mathsf{requestQueue}$: $\mathsf{requestQueue}.\text{remove}(\mathit{TX})$, \\$\mathsf{executedRequest}.\text{add}(\mathit{TX})$;
\end{itemize}

\item $\mathsf{stateList} \leftarrow \mathit{L2ReadResult}.\mathsf{stateList}_{\text{op}}$;

\item \textbf{reply} \{Read, $\mathit{ReadResult} = \{\mathsf{executedRequest}, \mathsf{stateList}, \mathsf{onchainState}\}$\} via I/O;

\end{enumerate}

\hrule
\vspace{0.5em}

\textbf{recv} \{GetCurRound\} \textbf{from} I/O:
\begin{enumerate}[itemsep=0.5em]
\item \textbf{send} \{GetCurRound\} \textbf{to} $(\pcur, \scur, \mathcal{F}^{\text{Liquid}}_{\text{com}}: \text{clock})$,
\textbf{wait for} \{GetCurRound, $\mathit{round}$\};
\item \textbf{reply} \{GetCurRound, $\mathit{round}$\} via I/O;
\end{enumerate}

}\end{functionality}\vspace{1em}
\subsubsection{Operator Protocol}

The operator machine $\mathcal{M}^{\text{Liquid}}_{\text{operator}}$ maintains the Liquid sidechain. Operators run a simplified three-round BFT consensus: every three rounds, $\mathcal{F}^{\text{Liquid}}_{\text{com}}$ notifies the leader to propose a new block (\{UpdateSideChain\}); other operators precommit (\{UpdatePrecommitment\}); after $2f{+}1$ precommitments, a final commit is broadcast (\{UpdateFinal\}). Transaction validity is checked via the function $\mathsf{Val}()$. 

\vspace{1em}\begin{functionality}{Description of protocol $\mathcal{P}^{\text{Liquid}}_{\text{operator}} = (\text{operator})$}{

\textbf{Participating roles:} \{operator\}

\noindent \textbf{Corruption model:} dynamic corruption

\noindent \textbf{Protocol parameters:}
\begin{itemize}
    \item $n = 3f+1$: operator committee size; $f$: corruption threshold
    \item $\mathsf{Val}()$: transaction validity checking algorithm
\end{itemize}

}\end{functionality}\vspace{.5em}
\begin{functionality}{Description of $\mathcal{M}^{\text{Liquid}}_{\text{operator}}$}{

\textbf{Implemented role(s):} \{operator\}

\noindent \textbf{Subroutines:} $\mathcal{F}_{\text{sig}}$: \{signer, verifier\}, $\mathcal{F}^{\text{Liquid}}_{\text{com}}$: com, $\mathcal{F}_{\text{ledger}}$: client\textsubscript{L1}

\noindent \textbf{Internal state:}
\begin{itemize}
    \item $\mathsf{requestQueue} \subset \{0,1\}^{*}$, $\mathsf{requestQueue} = \emptyset$ \hfill\cmt{Queued requests}
    \item $\mathsf{executedRequest} \subset \{0,1\}^{*}$, $\mathsf{executedRequest} = \emptyset$ \hfill\cmt{Executed requests}
    \item $\mathsf{stateList} \subset \{0,1\}^{*}$, $\mathsf{stateList} = \emptyset$ \hfill\cmt{L2 state list}
    \item $\mathsf{onchainState} \subset \{0,1\}^{*}$, $\mathsf{onchainState} = \emptyset$ \hfill\cmt{L1 committed state}
    \item $\mathsf{identities} \subset \{0,1\}^{*}$, $\mathsf{identities} = \emptyset$ \hfill\cmt{Registered identities}
\end{itemize}

\noindent \textbf{CheckID}(\emph{pid}, \emph{sid}, \emph{role}): Accept all messages with the same \emph{sid}. Only accept \emph{role} = operator.

\vspace{0.5em}
\noindent \textbf{Main:}

\vspace{0.5em}

\textbf{recv} \{Join, $\mathit{TX}_{\mathit{peg\text{-}in}}$, $\sigma$, $\mathit{pid}_{\mathit{call}}$\} \textbf{from} $(\pcur, \scur, \mathcal{F}^{\text{Liquid}}_{\text{com}}: \text{com})$:

\begin{enumerate}[itemsep=0.5em]

\item \textbf{send} \{Verify, $\mathit{TX}_{\mathit{peg\text{-}in}}$, $\sigma$\} \textbf{to} $(\pcur, \scur, \mathcal{F}_{\text{sig}}: \text{verifier})$;
\textbf{wait for} \{VerResult, $b_{\sigma}$\}; \cmt{Verify caller's signature on the peg-in transaction}

\item Check that $\mathsf{Val}(\mathit{TX}_{\mathit{peg\text{-}in}}) = \text{true}$; \cmt{Peg-in transaction is well-formed}

\item \textbf{send} \{ReadL1\} \textbf{to} $(\pcur, \scur, \mathcal{F}_{\text{ledger}}:\text{client}_{\text{L1}})$;
\textbf{wait for} \{ReadL1, $\mathit{L1ReadResult} = \{\{\mathit{TX}\}_{\text{L1}}, \mathit{State}_{\text{L1}}, \{\mathit{pid}\}\}$\};

\item Check that $\mathit{TX}_{\mathit{deposit}} \in \{\mathit{TX}\}_{\text{L1}}$; \cmt{Peg-in transaction is committed on L1}

\item \textbf{if} $b_{\sigma} = \text{true}$ and all checks pass: $\mathsf{requestQueue}.\text{add}(\mathit{TX}_{\mathit{peg\text{-}in}})$; \cmt{Enqueue peg-in for BFT consensus}

\end{enumerate}

\hrule
\vspace{0.5em}

\textbf{recv} \{Submit, $\mathit{TX}$, $\sigma$, $\mathit{pid}_{\mathit{call}}$\} \textbf{from} $(\pcur, \scur, \mathcal{F}^{\text{Liquid}}_{\text{com}}: \text{com})$:

\begin{enumerate}[itemsep=0.5em]
\item \textbf{send} \{Verify, $\mathit{TX}$, $\sigma$\} \textbf{to} $(\pcur, \scur, \mathcal{F}_{\text{sig}}: \text{verifier})$;
\textbf{wait for} \{VerResult, $b_{\sigma}$\}; \cmt{Verify caller's signature on the transaction}

\item Check that $\mathsf{Val}(\mathit{TX}) = \text{true}$; \cmt{Transaction is well-formed and satisfies validity predicate}

\item \textbf{if} $b_{\sigma} = \text{true}$ and all checks pass: $\mathsf{requestQueue}.\text{add}(\mathit{TX})$;
\end{enumerate}

\hrule
\vspace{0.5em}

\textbf{recv} \{Settlement, $\mathit{TX}_{\mathit{peg\text{-}out}}$, $\sigma$, $\mathit{pid}_{\mathit{call}}$\} \textbf{from} $(\pcur, \scur, \mathcal{F}^{\text{Liquid}}_{\text{com}}: \text{com})$:

\begin{enumerate}[itemsep=0.5em]
\item \textbf{send} \{Verify, $\mathit{TX}_{\mathit{peg\text{-}out}}$, $\sigma$\} \textbf{to} $(\pcur, \scur, \mathcal{F}_{\text{sig}}: \text{verifier})$;
\textbf{wait for} \{VerResult, $b_{\sigma}$\}; \cmt{Verify caller's signature on the peg-out transaction}

\item Check that $\mathsf{Val}(\mathit{TX}_{\mathit{peg\text{-}out}}) = \text{true}$; \cmt{Peg-out transaction is well-formed}

\item \textbf{if} $b_{\sigma} = \text{true}$ and all checks pass: $\mathsf{requestQueue}.\text{add}(\mathit{TX}_{\mathit{peg\text{-}out}})$;
\end{enumerate}

\hrule
\vspace{0.5em}

\textbf{recv} \{Read, $\mathit{pid}_{\mathit{call}}$\} \textbf{from} $(\pcur, \scur, \mathcal{F}^{\text{Liquid}}_{\text{com}}: \text{com})$:

\begin{enumerate}[itemsep=0.5em]
\item \textbf{reply} \{Read, $\mathit{L2ReadResult} = \{\mathsf{executedRequest}, \mathsf{stateList}\}$\} to the caller via\\ $(\pcur, \scur, \mathcal{F}^{\text{Liquid}}_{\text{com}}: \text{com})$;
\end{enumerate}

\hrule
\vspace{0.5em}

\textbf{recv} \{UpdateSideChain, $\mathit{pid}_{\mathit{call}}$\} \textbf{from} $(\pcur, \scur, \mathcal{F}^{\text{Liquid}}_{\text{com}}: \text{com})$:
\begin{enumerate}[itemsep=0.5em]
\item Generate a valid $\mathit{Block}$ from all valid requests in $\mathsf{requestQueue}$; check no double-spending;
\item \textbf{send} \{Message, \{UpdatePropose, $\mathit{Block}$, \emph{receiver}\}\} \textbf{to} $(\pcur, \scur, \mathcal{F}^{\text{Liquid}}_{\text{com}}: \text{com})$, where $\mathit{receiver} = (\_, \scur, \mathcal{P}^{\text{Liquid}}_{\text{operator}}: \text{operator})$; \cmt{Send to all operators}
\end{enumerate}

\hrule
\vspace{0.5em}

\textbf{recv} \{UpdatePropose, $\mathit{Block}$, $\sigma_{\text{leader}}$, $\mathit{pid}_{\mathit{call}}$\} \textbf{from} $(\pcur, \scur, \mathcal{F}^{\text{Liquid}}_{\text{com}}: \text{com})$:
\begin{enumerate}[itemsep=0.5em]
\item \textbf{send} \{Verify, $\mathit{Block}$, $\sigma_{\text{leader}}$\} \textbf{to} $(\pcur, \scur, \mathcal{F}_{\text{sig}}: \text{verifier})$;
\textbf{wait for} \{VerResult, $b_{\sigma}$\}; \cmt{Verify proposing operator's signature on the block}

\item Check validity of $\mathit{Block}$: correct form, valid signatures on included transactions, no double-spending;

\item \textbf{if} $b_{\sigma} = \text{true}$ and all checks pass: \textbf{send} \{Sign, $\mathit{Block}$\} \textbf{to} $(\pcur, \scur, \mathcal{F}_{\text{sig}}: \text{signer})$,
\textbf{wait for} \{Signature, $\mathit{precommitment}$\};

\item \textbf{send} \{Message, \{UpdatePrecommitment, $\mathit{Block}$, $\mathit{precommitment}$, \emph{receiver}\}\} \textbf{to}\\ $(\pcur, \scur, \mathcal{F}^{\text{Liquid}}_{\text{com}}: \text{com})$, where $\mathit{receiver} = (\_, \scur, \mathcal{P}^{\text{Liquid}}_{\text{operator}}: \text{operator})$;
\end{enumerate}

\hrule
\vspace{0.5em}

\textbf{recv} \{UpdatePrecommitment, $\mathit{Block}$, $\mathit{precommitment}$, $\mathit{pid}_{\mathit{call}}$\} \textbf{from} $(\pcur, \scur, \mathcal{F}^{\text{Liquid}}_{\text{com}}: \text{com})$:
\begin{enumerate}[itemsep=0.5em]
\item \textbf{send} \{Verify, $\mathit{Block}$, $\mathit{precommitment}$\} \textbf{to} $(\pcur, \scur, \mathcal{F}_{\text{sig}}: \text{verifier})$;
\textbf{wait for} \{VerResult, $b_{\sigma}$\}; \cmt{Verify peer operator's precommitment signature}

\item \textbf{if} $b_{\sigma} = \text{true}$ and already received $\geq 2f$ valid precommitments for $\mathit{Block}$: \textbf{send} \{Sign, $\mathit{Block}$\} \textbf{to} $(\pcur, \scur, \mathcal{F}_{\text{sig}}: \text{signer})$,
\textbf{wait for} \{Signature, $\sigma$\};

\item \textbf{send} \{Message, \{UpdateFinal, $\mathit{Block}$, $\sigma$, \emph{receiver}\}\} \textbf{to} $(\pcur, \scur, \mathcal{F}^{\text{Liquid}}_{\text{com}}: \text{com})$, where $\mathit{receiver} = (\_, \scur, \mathcal{P}^{\text{Liquid}}_{\text{operator}}: \text{operator})$;
\end{enumerate}

\hrule
\vspace{0.5em}

\textbf{recv} \{UpdateFinal, $\mathit{Block}$, $\sigma$, $\mathit{pid}_{\mathit{call}}$\} \textbf{from} $(\pcur, \scur, \mathcal{F}^{\text{Liquid}}_{\text{com}}: \text{com})$:
\begin{enumerate}[itemsep=0.5em]
\item \textbf{send} \{Verify, $\mathit{Block}$, $\sigma$\} \textbf{to} $(\pcur, \scur, \mathcal{F}_{\text{sig}}: \text{verifier})$;
\textbf{wait for} \{VerResult, $b_{\sigma}$\}; \cmt{Verify peer operator's final-commit signature}

\item \textbf{if} $b_{\sigma} = \text{true}$ and already received $\geq 2f$ valid final-commit signatures for $\mathit{Block}$:
\begin{itemize}
    \item Update $\mathsf{stateList}$, $\mathsf{executedRequest}$, $\mathsf{requestQueue}$ according to $\mathit{Block}$;
    \item \textbf{send} \{ReadL1\} \textbf{to} $(\pcur, \scur, \mathcal{F}_{\text{ledger}}: \text{client}_{\text{L1}})$;
    \textbf{wait for} \{ReadL1,\\ $\mathit{L1ReadResult}$\};
    \item For peg-in transactions in $\mathit{Block}$ committed on L1: update $\mathsf{onchainState}$ and $\mathsf{identities}$;
    \item For peg-out transactions in $\mathit{Block}$: \textbf{send} \{SubmitL1, $\mathit{TX}_{\mathit{peg\text{-}out}}$\} \textbf{to}\\ $(\pcur, \scur, \mathcal{F}_{\text{ledger}}: \text{client}_{\text{L1}})$; update $\mathsf{onchainState}$ and $\mathsf{identities}$;
\end{itemize}
\end{enumerate}

}\end{functionality}\vspace{1em}

\subsection{Liquid Sidechain Ideal Functionality}
\label{apd:LiquidIdeal}

We define the ideal functionality for the Liquid sidechain as
$\mathcal{F}^{\text{Liquid}}_{\text{layer2}}=(\mathcal{F}_{\text{client}}, \mathcal{F}_{\text{ledger}} \mid \mathcal{F}^{\text{Liquid}}_{\text{join}}, \mathcal{F}^{\text{Liquid}}_{\text{submit}}, \mathcal{F}^{\text{Liquid}}_{\text{update}}, \mathcal{F}^{\text{Liquid}}_{\text{read}},
\mathcal{F}^{\text{Liquid}}_{\text{settlement}}, \mathcal{F}^{\text{Liquid}}_{\text{updRnd}})$. The subroutines' ideal functionalities are defined as follows. To better specify the Liquid sidechain ideal functionality, the $\mathsf{executedRequest}$ is extended to include a block identifier for each request entry.

\subsubsection{Submit Functionality}
\label{apd:Liquidsubmit}

The submit subroutine $\mathcal{F}^{\text{Liquid}}_{\text{submit}}$ verifies that incoming requests are one of three types valid requests: \textit{(i)}~join requests to enter the sidechain, \textit{(ii)}~transaction requests to update the sidechain state, or \textit{(iii)}~settlement requests to leave the sidechain. 

\vspace{1em}\begin{functionality}{Description of subroutine $\mathcal{F}^{\text{Liquid}}_{\text{submit}} = (\text{submit})$}{

\textbf{Participating roles:} \{submit\}

\noindent \textbf{Corruption model:} incorruptible

}\end{functionality}\vspace{.5em}

\begin{functionality}{Description of $\mathcal{M}^{\text{Liquid}}_{\text{submit}}$}{

\textbf{Implemented role(s):} \{submit\}

\noindent \textbf{CheckID}(\emph{pid}, \emph{sid}, \emph{role}): Accept all messages with the same \emph{sid}.

\vspace{0.5em}
\noindent \textbf{Main:}

\vspace{0.5em}

\textbf{recv} \{Submit, $\mathit{request}$, $\mathsf{internalState}$\} \textbf{from} I/O:

\begin{enumerate}[itemsep=0.5em]

\item Check that $\mathit{request}$ belongs to one of the following valid types:
\begin{itemize}
    \item Join request: $\mathit{request} = (\text{Join}, s_{\mathit{init}}, \mathit{Identities})$, \textbf{s.t.}:
    \begin{itemize}
        \item $s_{\mathit{init}} \in \mathbb{N}_{\geq 0}$; \cmt{Well-formed initial balance}
    \end{itemize}

    \item State update request: $\mathit{request} = (\text{Submit}, \mathit{TX})$ with $\mathit{TX} = (\mathit{sender}, \mathit{receiver}, \mathit{value},\\ \mathit{data})$, \textbf{s.t.}:
    \begin{itemize}
        \item $\mathit{sender}, \mathit{receiver} \in \{0,1\}^{*}$ and $\mathit{value} \in \mathbb{N}_{\geq 0}$; \cmt{Well-formed transaction}
        \item $\mathit{sender} \in \mathsf{identities}$; \cmt{Sender is a registered participant}
        \item Let $\mathit{bal}(\mathit{sender})$ denote the balance of $\mathit{sender}$ derived from $\mathsf{stateList}$. Then $\mathit{value} \leq \mathit{bal}(\mathit{sender})$; \cmt{Sender has sufficient balance}
        \item $\nexists\, \mathit{TX}' \in \mathsf{executedRequest} \cup \mathsf{requestQueue}$ that conflicts with $\mathit{TX}$; \cmt{No double-spending}
    \end{itemize}

    \item Settlement request: $\mathit{request} = (\text{Settlement})$;
\end{itemize}

\item Check that the join request has been executed (i.e., $\mathsf{stateList} \neq \emptyset$) or $\mathit{request}$ is a join request; \cmt{No action before joining}

\item Check that $\nexists$ a settlement request from the caller already pending in $\mathsf{requestQueue}$; \cmt{At most one pending settlement per client}

\item \textbf{if} all checks pass: \textbf{reply} \{Submit, true\};
\item \textbf{else}: \textbf{reply} \{Submit, false\};

\end{enumerate}

}\end{functionality}\vspace{1em}

\subsubsection{Join Functionality}
\label{apd:Liquidjoin}

The join subroutine $\mathcal{F}^{\text{Liquid}}_{\text{join}}$ validates the sidechain-joining procedure. It checks that \textit{(i)}~the peg-in transaction is consistent with the initial state, \textit{(ii)}~the sidechain state (maintained by operators) already reflects the peg-in transaction, and \textit{(iii)}~the deposit transaction is committed on L1.

\vspace{1em}\begin{functionality}{Description of subroutine $\mathcal{F}^{\text{Liquid}}_{\text{join}} = (\text{join})$}{

\textbf{Participating roles:} \{join\}

\noindent \textbf{Corruption model:} incorruptible


}\end{functionality}\vspace{.5em}

\begin{functionality}{Description of $\mathcal{M}^{\text{Liquid}}_{\text{join}}$}{

\textbf{Implemented role(s):} \{join\}

\noindent \textbf{CheckID}(\emph{pid}, \emph{sid}, \emph{role}): Accept all messages with the same \emph{sid}.

\vspace{0.5em}
\noindent \textbf{Main:}

\vspace{0.5em}

\textbf{recv} \{Join, $\mathit{Attachment}=\{s_{\mathit{init}}, \mathit{TX}_{\mathit{peg\text{-}in}}\}$, $\mathsf{internalState}$\} \textbf{from} I/O:

\begin{enumerate}[itemsep=0.5em]

\item Parse $\mathit{TX}_{\mathit{peg\text{-}in}} = (\mathit{pid}_{\mathit{init}}, \epsilon, s_{\mathit{init}}', \epsilon)$. Check:
\begin{itemize}
    \item $s_{\mathit{init}}' = s_{\mathit{init}}$; \cmt{Peg-in carries the agreed initial state}
    \item $\mathit{pid}_{\mathit{init}} \in \mathsf{identities}$; \cmt{Initiator is a registered participant}
\end{itemize}

\item Check the following conditions on $\mathsf{internalState}$:
\begin{itemize}
    \item $s_{\mathit{init}} \in \mathsf{stateList}$; \cmt{Initial state is recorded in the sidechain}
    \item $\exists\, \mathit{req} \in \mathsf{executedRequest}$ \textbf{s.t.} $\mathit{TX}_{\mathit{peg\text{-}in}} \in \mathit{req}$; \cmt{A request containing the peg-in has been executed by operators}
\end{itemize}

\item \textbf{send} \{ReadL1\} \textbf{to} $(\pcur, \scur, \mathcal{F}_{\text{ledger}}: \text{client}_{\text{L1}})$,
\textbf{wait for} \{ReadL1, $\mathit{output} = \{\{\mathit{TX}\}_{\text{L1}}, \mathit{State}_{\text{L1}}, \{\mathit{pid}\}\}$\};

\item Check the following conditions on $\mathit{output}$:
\begin{itemize}
    \item $\exists\, \mathit{TX}_{\mathit{deposit}} \in \{\mathit{TX}\}_{\text{L1}}$ \textbf{s.t.} $\mathit{TX}_{\mathit{deposit}}$ is a deposit from $\mathit{pid}_{\mathit{init}}$ consistent with $s_{\mathit{init}}$; \cmt{Deposit transaction is committed on L1}
\end{itemize}

\item \textbf{if} all checks pass: \textbf{reply} \{Join, true\};

\end{enumerate}

}\end{functionality}\vspace{1em}

\begin{lemma}
\label{lem:OpenL}
    The subroutine $\mathcal{F}^{\text{Liquid}}_{\text{join}}$ guarantees \emph{correct L2 initialization}.
\end{lemma}

\begin{proof}
    We prove by contradiction. Suppose correct L2 initialization is violated: either (i)~some honest participant receives a successful initialization output whose committed initial state differs from the proposed state, or (ii)~the proposed state is not eventually committed on L1. However, $\mathcal{F}^{\text{Liquid}}_{\text{join}}$ outputs success only if (i)~the peg-in transaction is consistent with $s_{\mathit{init}}$ (Step~1), (ii)~the sidechain state reflects the peg-in (Step~2), and (iii)~$\mathit{TX}_{\mathit{peg\text{-}in}}$ is committed on L1 (Steps~3--4). As long as fewer than $f_{\mathit{OP}} = \frac{1}{3}n_{\mathit{OP}}$ operators are corrupted, $\mathcal{F}_{\text{sig}}$ prevents signature forgery, and $\mathcal{F}_{\text{ledger}}$ provides a secure L1 ledger, no adversary can cause $\mathcal{F}^{\text{Liquid}}_{\text{join}}$ to accept a mismatched state. This contradicts the assumption.
\end{proof}

\subsubsection{Update Functionality}

The update subroutine $\mathcal{F}^{\text{Liquid}}_{\text{update}}$ verifies state updates in the form of newly committed blocks. It checks that \textit{(i)}~the previous block reference is consistent with the newly proposed block, \textit{(ii)}~executed requests do not conflict with existing state, and \textit{(iii)}~the new state is correctly computed.

\vspace{1em}\begin{functionality}{Description of subroutine $\mathcal{F}^{\text{Liquid}}_{\text{update}} = (\text{update})$}{

\textbf{Participating roles:} \{update\}

\noindent\textbf{Corruption model:} incorruptible

\noindent \textbf{Protocol parameters:}
\begin{itemize}
    \item $n = 3f+1$: operator committee size
    \item $f$: corruption threshold
\end{itemize}

}\end{functionality}\vspace{.5em}

\begin{functionality}{Description of $\mathcal{M}^{\text{Liquid}}_{\text{update}}$}{

\textbf{Implemented role(s):} \{update\}

\noindent \textbf{CheckID}(\emph{pid}, \emph{sid}, \emph{role}): Accept all messages with the same \emph{sid}.

\vspace{0.5em}
\noindent \textbf{Main:}

\vspace{0.5em}

\textbf{recv} \{Update, $\mathit{Attachment}=\{\mathit{Block}=\{\mathit{prevBlock}, \mathit{newStateList}, \mathit{executedReq}\}\}$, $\mathsf{internalState}$\} \textbf{from} I/O:

\begin{enumerate}[itemsep=0.5em]

\item Check that $\mathit{prevBlock}$ matches the latest block in $\mathsf{executedRequest}$; \cmt{Block chain continuity}

\item Check that $\forall\, \mathit{TX}_j = (\mathit{sender}_j, \mathit{receiver}_j, \mathit{value}_j, \mathit{data}_j) \in \mathit{executedReq}$:
\begin{itemize}
    \item $\nexists\, \mathit{TX}' \in \mathsf{executedRequest}$ that conflicts with $\mathit{TX}_j$; \cmt{No conflict with already-executed requests}
    \item Let $\mathit{bal}(\mathit{sender}_j)$ denote the balance of $\mathit{sender}_j$ derived from $\mathsf{stateList}$. Then $\mathit{value}_j \leq \mathit{bal}(\mathit{sender}_j)$; \cmt{Sender has sufficient balance}
\end{itemize}

\item Check that $\forall\, \mathit{TX}_j \in \mathit{executedReq}$ with $(\mathit{sender}_j, \scur, \text{client}) \notin \mathsf{CorruptionSet}$: $\exists\, \mathit{TX}' \in \mathsf{requestQueue}$ \textbf{s.t.} $\mathit{TX}' = \mathit{TX}_j$; \cmt{Each honest sender's transaction is recorded in the request queue}

\item Check that $\mathit{newStateList}$ is the correct output of sequentially executing all $\mathit{TX} \in \mathit{executedReq}$ starting from $\mathsf{stateList}$; \cmt{State transition is correctly computed}

\item \textbf{if} all checks pass: \textbf{reply} \{Update, true, $\mathit{newStateList}$, $\mathit{executedReq}$\};

\end{enumerate}

}\end{functionality}\vspace{1em}

\begin{lemma}
    The subroutines $\mathcal{F}^{\text{Liquid}}_{\text{update}}$ and $\mathcal{F}^{\text{Liquid}}_{\text{read}}$ jointly guarantee $f_{L_2}$-safety.
\end{lemma}

\begin{proof}
    Suppose \emph{safety} is violated. By definition, this means that either self-consistency or view-consistency fails. Since $\mathcal{F}^{\text{Liquid}}_{\text{read}}$ always selects the latest state recorded in $\mathsf{stateList}$ as the read result, and each new conflicting state will not be recorded. Thus, view-consistency holds for different honest clients. Self-consistency follows from the synchronous communication assumption and the guarantee that at least one honest operator responds with the latest sidechain state. Hence all read results are prefix-comparable, contradicting the assumption.
\end{proof}

\subsubsection{Read Functionality}

Under synchronous communication and the assumption that a client is connected to at least one honest operator, the read subroutine $\mathcal{F}^{\text{Liquid}}_{\text{read}}$ returns the sidechain blocks including transaction data and states, according to the influence of the adversary.

\vspace{1em}\begin{functionality}{Description of subroutine $\mathcal{F}^{\text{Liquid}}_{\text{read}} = (\text{read})$}{

\textbf{Participating roles:} \{read\} \\ \textbf{Corruption model:} incorruptible

\noindent \textbf{Protocol parameters:} $n = 3f+1$

}\end{functionality}\vspace{.5em}

\begin{functionality}{Description of $\mathcal{M}^{\text{Liquid}}_{\text{read}}$}{

\textbf{Implemented role(s):} \{read\}

\noindent \textbf{CheckID}(\emph{pid}, \emph{sid}, \emph{role}): Accept all messages with the same \emph{sid}.

\vspace{0.5em}
\noindent \textbf{Main:}

\vspace{0.5em}

\textbf{recv} \{Read, $\mathsf{internalState}$\} \textbf{from} I/O:

\begin{enumerate}[itemsep=0.5em]

\item Let $i_{\mathit{ptr}} \leftarrow \mathsf{lastReadPointer}$; let $i_{\max}$ be the index of the latest executed block in $\mathsf{executedRequest}$;

\item \textbf{send responsively} \{ReadDelivery, $i_{\mathit{ptr}}$, $i_{\max}$\} \textbf{to} $\mathcal{S}$ via NET;
\textbf{wait for} \{ReadDelivery, $b$\} s.t. $b \in \{\text{true}, \text{false}\}$; \cmt{Query $\mathcal{S}$ for delivery decision}

\item \textbf{if} $b = \text{true}$: \cmt{Adversary delivers up to the latest executed request}
\begin{itemize}
    \item Let $\mathit{latestState}$ be the on-chain state corresponding to index $i_{\max}$ from $\mathsf{stateList}$, i.e., the state after applying all blocks up to $i_{\max}$; \cmt{Latest state reflected on L1}
    \item Let $\mathit{transitionData} \leftarrow \{\mathit{Block}_k \in \mathsf{executedRequest} \mid i_{\mathit{ptr}} < k \leq i_{\max}\}$; \cmt{All blocks from last read pointer to the latest executed block}
    \item $\mathsf{lastReadPointer} \leftarrow i_{\max}$; \cmt{Advance read pointer to the latest delivered block}
\end{itemize}

\item \textbf{if} $b = \text{false}$: \cmt{Adversary withholds new blocks; return the state at the current read pointer}
\begin{itemize}
    \item Let $\mathit{latestState}$ be the state at index $i_{\mathit{ptr}}$ from $\mathsf{stateList}$; \cmt{State at the last delivered block}
    \item Let $\mathit{transitionData} \leftarrow \emptyset$; \cmt{No new blocks delivered}
\end{itemize}

\item \textbf{if} $\mathit{latestState} \neq \bot$:
\textbf{reply} \{Read, $\mathit{ReadResult} = \{\mathit{latestState}, \mathit{transitionData},\\ \mathsf{onchainState}\}$\};

\item \textbf{else}: \textbf{reply} \{Read, $\bot$\};

\end{enumerate}

}\end{functionality}\vspace{1em}

\subsubsection{Settlement Functionality}

The settlement subroutine $\mathcal{F}^{\text{Liquid}}_{\text{settlement}}$ verifies that \textit{(i)}~a peg-out transaction request exists in\\ $\mathsf{executedRequest}$, \textit{(ii)}~the settlement transaciont is committed on L1, \textit{(iii)}~the peg-out transaction and settlement transaction is consistent with the latest state, and 

\vspace{1em}\begin{functionality}{Description of subroutine $\mathcal{F}^{\text{Liquid}}_{\text{settlement}} = (\text{settlement})$}{

\textbf{Participating roles:} \{settlement\}

\noindent\textbf{Corruption model:} incorruptible

}\end{functionality}\vspace{.5em}

\begin{functionality}{Description of $\mathcal{M}^{\text{Liquid}}_{\text{settlement}}$}{

\textbf{Implemented role(s):} \{settlement\}

\noindent \textbf{CheckID}(\emph{pid}, \emph{sid}, \emph{role}): Accept all messages with the same \emph{sid}.

\vspace{0.5em}
\noindent \textbf{Main:}

\vspace{0.5em}

\textbf{recv} \{Settlement, $\mathit{Attachment} = \{\mathit{TX}_{\mathit{peg\text{-}out}}, \mathit{TX}_{\mathit{settle}}\}$, $\mathsf{internalState}$\} \textbf{from} I/O:

\begin{enumerate}[itemsep=0.5em]

\item Parse $\mathit{TX}_{\mathit{peg\text{-}out}} = (\mathit{pid}_{\mathit{client}}, \epsilon, s_{\mathit{peg\text{-}out}}, \epsilon)$ and $\mathit{TX}_{\mathit{settle}} = (\mathit{pid}_{\mathit{op}}, Address, s_{\mathit{settle}}, \epsilon)$;

\item Let $s^{*}$ be the latest state in $\mathsf{stateList}$. Check the off-chain peg-out conditions:
\begin{itemize}
    \item $\exists\, \mathit{req} \in \mathsf{executedRequest}$ \textbf{s.t.} $\mathit{TX}_{\mathit{peg\text{-}out}} \in \mathit{req}$; \cmt{Peg-out request has been executed and recorded by operators in the sidechain history}
    \item $\mathit{pid}_{\mathit{client}} \in \mathsf{identities}$; \cmt{Settling client is a registered participant}
    \item $s_{\mathit{peg\text{-}out}} = s^{*}$; \cmt{Peg-out carries the latest valid L2 state}
\end{itemize}

\item \textbf{send} \{ReadL1\} \textbf{to} $(\pcur, \scur, \mathcal{F}_{\text{ledger}}: \text{client}_{\text{L1}})$;
\textbf{wait for} \{ReadL1, $\mathit{output} = \{\{\mathit{TX}\}_{\text{L1}}, \mathit{State}_{\text{L1}}, \{\mathit{pid}\}\}$\};

\item Check the on-chain settlement conditions on $\mathit{output}$:
\begin{itemize}
    \item $\mathit{TX}_{\mathit{settle}} \in \{\mathit{TX}\}_{\text{L1}}$; \cmt{Settlement transaction is committed on L1}
    \item $\mathit{pid}_{\mathit{op}} \in \mathsf{identities}$ and $\mathit{pid}_{\mathit{op}}$ is an operator; \cmt{Settlement transaction is published by an operator}
    \item $s_{\mathit{settle}} = s^{*}$; \cmt{Settlement transaction carries the latest L2 state}
    \item $s^{*} \in \mathit{State}_{\text{L1}}$; \cmt{Latest L2 state is recorded on L1}
\end{itemize}

\item \textbf{if} all checks pass: \textbf{reply} \{Settlement, true, $s^{*}$\};

\end{enumerate}

}\end{functionality}\vspace{1em}

\begin{lemma}
\label{lem:SettleL}
    The subroutine $\mathcal{F}^{\text{Liquid}}_{\text{settlement}}$ guarantees \emph{correct L2 settlement}.
\end{lemma}

\begin{proof}
    According to the definition, \emph{correct L2 settlement} is violated if either (i)~an honest client receives a successful settlement output even though the output state is neither the latest state nor a state agreed upon by honest clients, or (ii)~the output state is not committed on the L1 blockchain. However, $\mathcal{F}^{\text{Liquid}}_{\text{settlement}}$ outputs success only after verifying both off-chain and on-chain conditions: off-chain, the peg-out is recorded in $\mathsf{executedRequest}$ by the operator quorum, the settling client is registered, and the peg-out carries the latest state $s^{*}$ from $\mathsf{stateList}$ (Step~2); on-chain, the operator-published settlement transaction $\mathit{TX}_{\mathit{settle}}$ is committed on L1, is issued by a registered operator, carries the same $s^{*}$, and $s^{*}$ is reflected in $\mathit{State}_{\text{L1}}$ (Step~4). Therefore correct L2 settlement holds here.
\end{proof}

\subsubsection{Update Round Functionality}

Since the Liquid sidechain assumes synchronous communication with a delay bound $\delta$, the round-update subroutine further enforces that no honest client's request remains uncommitted beyond $T_{L_2}=(3f+4)\delta$ rounds, which aims to be realized by the 3-phase BFT protocol under synchronous communication.

\vspace{1em}\begin{functionality}{Description of subroutine $\mathcal{F}^{\text{Liquid}}_{\text{updRnd}} = (\text{updRnd})$}{

\textbf{Participating roles:} \{updRnd\}

\noindent\textbf{Corruption model:} incorruptible

\noindent \textbf{Protocol parameters:}
\begin{itemize}
    \item $T_{L_2} = (3f+4)\delta$: off-chain commitment latency goal
\end{itemize}

}\end{functionality}\vspace{.5em}

\begin{functionality}{Description of $\mathcal{M}^{\text{Liquid}}_{\text{updRnd}}$}{

\textbf{Implemented role(s):} \{updRnd\}

\noindent \textbf{CheckID}(\emph{pid}, \emph{sid}, \emph{role}): Accept all messages with the same \emph{sid}.

\vspace{0.5em}
\noindent \textbf{Main:}

\vspace{0.5em}

\textbf{recv} \{UpdateRound, $\mathsf{internalState}$\} \textbf{from} I/O:

\begin{enumerate}[itemsep=0.5em]

\item Let $\mathit{round}_{\mathit{cur}}$ denote the current round from $\mathsf{internalState}$;

\item Check that $\forall\, \mathit{req} \in \mathsf{requestQueue}$ \textbf{s.t.} the sender of $\mathit{req}$ satisfies $(\mathit{sender}, \scur, \text{client})\\ \notin \mathsf{CorruptionSet}$:
\begin{itemize}
    \item Let $\mathit{round}_{\mathit{sub}}$ denote the round at which $\mathit{req}$ was added to $\mathsf{requestQueue}$;
    \item $\mathit{round}_{\mathit{cur}} - \mathit{round}_{\mathit{sub}} \leq T_{L_2}$; \cmt{No honest client's request has been pending beyond the delay, which is aimed to be realized by 3-round BFT.}
\end{itemize}

\item \textbf{if} check passes: \textbf{reply} \{UpdateRound, true\};
\item \textbf{else}: \textbf{reply} \{UpdateRound, false\};

\end{enumerate}

}\end{functionality}\vspace{1em}

\begin{lemma}
\label{lem:LiveL}
The ideal functionality 
$\mathcal{F}^{\text{Liquid}}_{\text{layer2}}$ guarantees 
the following liveness properties:
\begin{enumerate}
    \item \textbf{(Sidechain joining.)}
    $(f_{L_2} + f_{L_1},\, T_{L_2} + T_{L_1})$-liveness: under $f_{L_2}$ and $f_{L_1}$ corruption, an accepted Join request results in output $\{\text{Join}, s_{\mathit{init}}\}$ within $T_{L_2} + T_{L_1}$. 

    \item \textbf{(State update.)}
    $(f_{L_2},\, T_{L_2})$-liveness: under $f_{L_2}$ corruption, an accepted Update request results in the corresponding entry being added to $\mathsf{executedRequest}$ and changes in $\mathsf{stateList}$ within $T_{L_2}$. 

    \item \textbf{(Settlement.)}
    $(f_{L_2} + f_{L_1},\, T_{L_2} + T_{L_1})$-liveness: under $f_{L_2}$ and $f_{L_1}$ corruption, an accepted Settlement request results in $\mathsf{onchainState}$ reflecting $s_{\mathit{settle}}$ and $\{\text{Settlement}, s_{\mathit{settle}}\}$ output within $T_{L_2} + T_{L_1}$. 
\end{enumerate}
\end{lemma}

\begin{proof}
We prove each property in turn.

\medskip
\noindent\textbf{(1) Sidechain joining.}
Suppose liveness for join requests is violated. A join request involves two phases: (i)~on-chain deposit on L1, requiring $T_{L_1}$ for inclusion, and (ii)~sidechain-side processing by operators, requiring at most $T_{L_2}$ under our trust assumption. $\mathcal{F}^{\text{Liquid}}_{\text{join}}$ outputs success only if the peg-in is consistent with $s_{\mathit{init}}$, the sidechain state reflects the peg-in (recorded in $\mathsf{stateList}$ and $\mathsf{executedRequest}$), and the deposit is committed on L1. Under at most $f_{L_2} + f_{L_1}$ corruption, the deposit could be observed on-chain and the corresponding peg-in is included in a sidechain block within $T_{L_2}$. The L1 liveness of $\mathcal{F}_{\text{ledger}}$ guarantees deposit inclusion within $T_{L_1}$. $\mathcal{F}^{\text{Liquid}}_{\text{updRnd}}$ ensures the off-chain execution phase does not exceed $T_{L_2}$. The total end-to-end bound is $T_{L_2} + T_{L_1}$, contradicting the assumption.

\medskip
\noindent\textbf{(2) State update.}
Suppose liveness for state updates is violated, meaning $\mathcal{F}^{\text{Liquid}}_{\text{update}}$ never produces a positive output under at most $f_{L_2}$ corrupted operators. Since $\mathcal{F}^{\text{Liquid}}_{\text{updRnd}}$ prevents time from advancing past the $T_{L_2}$-round deadline before the request is included in $\mathsf{executedRequest}$ within $T_{L_2}$. For output liveness, $\mathcal{F}^{\text{Liquid}}_{\text{read}}$ returns the latest state from $\mathsf{stateList}$; under our assumption, the read result reflects the updated state within $T_{L_2}$. This contradicts the assumption.

\medskip
\noindent\textbf{(3) Settlement.}
Suppose liveness for settlement requests cannot be satisfied. $\mathcal{F}^{\text{Liquid}}_{\text{settlement}}$ outputs success only after verifying that the peg-out is recorded in $\mathsf{executedRequest}$, the settlement state matches the latest state in $\mathsf{stateList}$, and the published settlement transaction is committed on L1 carrying that latest state. The L2-side condition (peg-out in $\mathsf{executedRequest}$ with consistent state) is captured by $\mathcal{F}^{\text{Liquid}}_{\text{update}}$, which under $f_{L_2}$ corruption admits the request within $T_{L_2}$ as enforced by $\mathcal{F}^{\text{Liquid}}_{\text{updRnd}}$. The L1-side condition is captured by $\mathcal{F}_{\text{ledger}}$, which under $f_{L_1}$ corruption commits the settlement transaction within $T_{L_1}$ via its own liveness guarantee. The simulator therefore sends the corresponding message to $\mathcal{F}^{\text{Liquid}}_{\text{settlement}}$ within $T_{L_2} + T_{L_1}$, at which point all checks pass. The total bound is $T_{L_2} + T_{L_1}$, contradicting the assumption.

\end{proof}


\subsection{Security Proof}
\label{apd:LiquidProof}

After proposing the ideal functionality and real-world implementation, we now show the security of the Liquid sidechain protocol. To start with we first show the ideal functionality captures all the security properties:

\ThmidealLiquid*

\begin{proof}
    According to Lemma~\ref{lem:OpenL}--\ref{lem:LiveL}, the ideal functionality $\mathcal{F}^{\text{Liquid}}_{\text{layer2}}$ guarantees all security properties.
\end{proof}

After defining the ideal functionality $\mathcal{F}^{\text{Liquid}}_{\text{layer2}}$, we prove that the real Liquid protocol iUC-realizes it. The proof is done in 7 steps of successive game replacement. We first define a simulator $\mathcal{S}_{\text{Liquid}}$ that internally simulates a full run of $\mathcal{P}^{\text{Liquid}}$, and a dummy functionality $\mathcal{F}^{\text{Liquid}}_{\text{dummy}}$ that relays messages between $\mathcal{E}$ and $\mathcal{S}_{\text{Liquid}}$. This base ideal execution yields the same distribution of messages to $\mathcal{E}$ as the real execution. We use the execution ensemble $\mathsf{EXEC}$ to denote the messages observed by $\mathcal{E}$, including the output for the input request and the leakage to adversary, when interacting with adversary $\mathcal{A}$, real protocol $\mathcal{P}$, ideal functionality $\mathcal{F}$ and simulator $\mathcal{S}$ in the proofs that follow.

In each subsequent step, we incrementally add interaction between the simulator and the ideal functionality and extend the functionality, thereby forming the corresponding subroutines, while keeping the changes transparent to both $\mathcal{E}$ and $\mathcal{A}$. We continue until we obtain the target functionality $\mathcal{F}^{\text{Liquid}}_{\text{layer2}}$ defined by our framework. At every step, the simulator is adjusted so that the new ideal execution is indistinguishable from the previous one. For each transition, we discuss the differences relative to the prior step and prove that, given the same inputs from $\mathcal{E}$ and $\mathcal{A}$, the resulting outputs remain the same up to computational indistinguishability under any adversarial influence strategy.

We begin by defining the dummy ideal functionality
$\mathcal{F}^{\text{Liquid}}_{\text{dummy}}=(\mathcal{F}_{\text{client-dummy}}, \mathcal{F}_{\text{ledger}} \mid \perp)$
and the simulator $\mathcal{S}_{\text{Liquid}}$ as follows. The dummy functionality forwards every request from $\mathcal{E}$ to the simulator and returns the simulator's response unchanged. The simulator $\mathcal{S}_{\text{Liquid}}$ runs $\mathcal{P'}^{\text{Liquid}}$ internally and produces identical outputs.

\vspace{1em}\begin{functionality}{Description of $\mathcal{M}_{\text{client-dummy}}$ of $\mathcal{F}^{\text{Liquid}}_{\text{dummy}}$}{

\textbf{Implemented role(s):} \{client-dummy\}

\noindent \textbf{Main:}

\textbf{recv} any request \textbf{from} I/O:
\begin{enumerate}[itemsep=0.5em]
    \item Forward request to $\mathcal{S}$ through NET;
\end{enumerate}

\hrule\vspace{0.5em}

\textbf{recv} any message \textbf{from} NET:
\begin{enumerate}
    \item Output the message to $\mathcal{E}$ through I/O;
\end{enumerate}

}\end{functionality}\vspace{1em}

\vspace{1em}\begin{functionality}{Description of simulator $\mathcal{S}_{\text{Liquid}}$}{

$\mathcal{S}_{\text{Liquid}}$ internally simulates $\mathcal{P'}^{\text{Liquid}}$, a copy of the real protocol $\mathcal{P}^{\text{Liquid}}$ as defined in Section~\ref{apd:LiquidReal}.

\vspace{0.5em}
\textbf{Real protocol simulation:}
\begin{itemize}[itemsep=0.3em]
    \item $\mathcal{S}_{\text{Liquid}}$ simulates honest clients' and operators' actions inside $\mathcal{P'}^{\text{Liquid}}$ according to the real protocol, including the three-round BFT consensus.
    \item If participants are corrupted, $\mathcal{S}_{\text{Liquid}}$ leaks the corresponding messages sent to corrupted entities to the adversary $\mathcal{A}$ and continues simulating honest parties based on $\mathcal{A}$'s instructions.
\end{itemize}

\textbf{Network communication from/to the environment:}
\begin{itemize}[itemsep=0.3em]
    \item Messages that $\mathcal{S}_{\text{Liquid}}$ receives on the network interface (from $\mathcal{E}$/$\mathcal{A}$) are forwarded to $\mathcal{P'}^{\text{Liquid}}$.
    \item Messages sent by $\mathcal{P'}^{\text{Liquid}}$ on its network interface (to $\mathcal{E}$/$\mathcal{A}$) are forwarded to the environment.
\end{itemize}

\textbf{Input requests and outputs:}
\begin{itemize}[itemsep=0.3em]
    \item Unlike $\mathcal{P}^{\text{Liquid}}$, which receives inputs directly from $\mathcal{E}$, the simulation $\mathcal{P'}^{\text{Liquid}}$ receives requests forwarded from $\mathcal{F}^{\text{Liquid}}_{\text{layer2}}$. Instead of sending outputs directly to $\mathcal{E}$, $\mathcal{S}_{\text{Liquid}}$ sends them to $\mathcal{F}^{\text{Liquid}}_{\text{layer2}}$.
\end{itemize}

\textbf{Message delivery:}
\begin{itemize}[itemsep=0.3em]
    \item The Liquid sidechain assumes synchronous communication via $\mathcal{F}^{\text{Liquid}}_{\text{com}}$ with delay bound $\delta$. The simulator bookkeeps all messages in $\mathcal{P'}^{\text{Liquid}}$ and triggers delivery according to the adversary's scheduling decisions, mirroring the real-world protocol.
\end{itemize}

\textbf{Corruption handling:}
\begin{itemize}[itemsep=0.3em]
    \item $\mathcal{S}_{\text{Liquid}}$ keeps the corruption status of entities in $\mathcal{P}^{\text{Liquid}}$, $\mathcal{P'}^{\text{Liquid}}$ and $\mathcal{F}^{\text{Liquid}}_{\text{layer2}}$ synchronized. When an entity in $\mathcal{P'}^{\text{Liquid}}$ becomes corrupted, $\mathcal{S}_{\text{Liquid}}$ corrupts the corresponding entity in $\mathcal{F}^{\text{Liquid}}_{\text{layer2}}$ before continuing.
    \item Adversarial commands for corrupted participants (e.g., publishing on $\mathcal{F}_{\text{ledger}}$, sending via $\mathcal{F}^{\text{Liquid}}_{\text{com}}$) are forwarded to $\mathcal{P'}^{\text{Liquid}}$.
    \item When a corrupted participant in $\mathcal{P'}^{\text{Liquid}}$ wants to output to $\mathcal{E}$, $\mathcal{S}_{\text{Liquid}}$ instructs the corresponding entity in $\mathcal{F}^{\text{Liquid}}_{\text{layer2}}$ to output.
\end{itemize}
}\end{functionality}\vspace{1em}

\begin{lemma}
\label{lem:Liquid1}
    For all PPT adversaries $\mathcal{A}$, there exists a PPT simulator $\mathcal{S}_{\text{Liquid}}$ such that for all PPT environments $\mathcal{E}$ and all security parameters $k\in\mathbb{N}$,
    $\mathsf{EXEC}^{\mathcal{P}^{\text{Liquid}}}_{\mathcal{A},\mathcal{E}}(k)\ \stackrel{c}{\approx}\
    \mathsf{EXEC}^{\mathcal{F}^{\text{Liquid}}_{\text{dummy}}}_{\mathcal{S}_{\text{Liquid}},\mathcal{E}}(k)$,
    where $\stackrel{c}{\approx}$ denotes computational indistinguishability.
\end{lemma}

\begin{proof}
    Fix an arbitrary PPT environment $\mathcal{E}$ and adversary $\mathcal{A}$. We argue that the execution ensembles in the real and ideal worlds are computationally indistinguishable by analyzing the three components observable by $\mathcal{E}$.

    \medskip
    \noindent\textbf{Observable components.} In the real world, the execution ensemble $\mathsf{EXEC}^{\mathcal{P}^{\text{Liquid}}}_{\mathcal{A},\mathcal{E}}(k)$ consists of:
    \begin{enumerate}
        \item \emph{I/O outputs} delivered to $\mathcal{E}$ by $\mathcal{P}^{\text{Liquid}}_{\text{client}}$:
        $\{\text{Join}, s_{\mathit{init}}\}$,
        $\{\text{Settlement}, s_{\mathit{settle}}\}$,
        $\{\text{Read}, \mathit{ReadResult}\}$,
        $\{\text{GetCurRound}, \mathit{round}\}$.
        \item \emph{On-chain transactions} committed on $\mathcal{F}_{\text{ledger}}$ during execution:
        $\mathit{TX}_{\mathit{deposit}}$ and $\mathit{TX}_{\mathit{peg\text{-}settlement}}$.
        \item \emph{Adversarial leakage}, comprising messages received by corrupted parties (clients and operators) during the protocol execution, the corrupted parties' internal state, and the transactions they publish on L1.
    \end{enumerate}

    In the ideal world, the simulator $\mathcal{S}_{\text{Liquid}}$ internally runs $\mathcal{P'}^{\text{Liquid}}$ and interacts with the dummy functionality $\mathcal{F}^{\text{Liquid}}_{\text{dummy}}$, which by definition forwards every request from $\mathcal{E}$ to $\mathcal{S}_{\text{Liquid}}$ unchanged and relays the simulator's responses back to $\mathcal{E}$. We show that each component is computationally indistinguishable across the two worlds.

    \medskip
    \noindent\textbf{(1) I/O outputs.} Since $\mathcal{F}^{\text{Liquid}}_{\text{dummy}}$ acts as a transparent relay, $\mathcal{S}_{\text{Liquid}}$ receives exactly the same sequence of requests as $\mathcal{P}^{\text{Liquid}}_{\text{client}}$ would in the real world. By construction, $\mathcal{S}_{\text{Liquid}}$ executes the same client and operator logic inside $\mathcal{P'}^{\text{Liquid}}$ under the same adversarial scheduling, producing the same I/O outputs. The only potential difference arises from the randomness of $\mathcal{F}_{\text{sig}}$: signature strings carried by client transactions and operator-signed peg-out transactions may differ between the two worlds because fresh randomness is sampled independently. However, since $\mathcal{F}_{\text{sig}}$ realizes EUF-CMA security, signatures generated on the same messages are computationally indistinguishable. Hence the I/O outputs are computationally indistinguishable.

    \medskip
    \noindent\textbf{(2) On-chain transactions.} Since $\mathcal{F}^{\text{Liquid}}_{\text{dummy}}$ forwards all requests to $\mathcal{S}_{\text{Liquid}}$, the simulated protocol $\mathcal{P'}^{\text{Liquid}}$ generates and publishes the same set of transactions to $\mathcal{F}_{\text{ledger}}$ as $\mathcal{P}^{\text{Liquid}}$ would in the real world, including client deposit transactions and operator-signed peg-out settlement transactions. Transactions may contain different signature values due to independent randomness in $\mathcal{F}_{\text{sig}}$, but by the EUF-CMA security of the signature scheme, the transaction distributions are computationally indistinguishable.

    \medskip
    \noindent\textbf{(3) Adversarial leakage.} By the definition of $\mathcal{S}_{\text{Liquid}}$, the corruption status of all entities (clients and operators) is kept synchronized between $\mathcal{P'}^{\text{Liquid}}$ and $\mathcal{F}^{\text{Liquid}}_{\text{dummy}}$. Since $\mathcal{F}^{\text{Liquid}}_{\text{dummy}}$ forwards all $\mathcal{E}$-requests to the simulator, $\mathcal{S}_{\text{Liquid}}$ can reconstruct the same internal state and message history for corrupted parties as in the real execution. Consequently, the leakage delivered to $\mathcal{A}$ is identical up to signature randomness, which is again computationally indistinguishable by EUF-CMA security.

    \medskip
    Conclusively, we have:
    $\mathsf{EXEC}^{\mathcal{P}^{\text{Liquid}}}_{\mathcal{A}, \mathcal{E}}(k)\stackrel{c}{\approx}\mathsf{EXEC}^{\mathcal{F}^{\text{Liquid}}_{\text{dummy}}}_{\mathcal{S}_{\text{Liquid}}, \mathcal{E}}(k)$.
\end{proof}

Next, we extend the dummy functionality with the submission subroutine, yielding
$\mathcal{F}^{\text{Liquid}}_{\text{layer2-submit}} = (\mathcal{F}_{\text{client-submit}}, \mathcal{F}_{\text{ledger}} \mid \mathcal{F}^{\text{Liquid}}_{\text{submit}})$.
The subroutine $\mathcal{F}^{\text{Liquid}}_{\text{submit}}$ is defined in Appendix~\ref{apd:Liquidsubmit}. The intermediate client functionality adds the Submit request routing through $\mathcal{F}^{\text{Liquid}}_{\text{submit}}$; all other requests are forwarded to $\mathcal{S}$ unchanged.

\vspace{1em}\begin{functionality}{Description of $\mathcal{M}_{\text{client-submit}}$ of $\mathcal{F}^{\text{Liquid}}_{\text{layer2-submit}}$}{

\textbf{Implemented role(s):} \{client-submit\}

\noindent\textbf{Subroutines:}
$\mathcal{F}^{\text{Liquid}}_{\text{submit}}$: submit,
$\mathcal{F}_{\text{ledger}}$: client\textsubscript{L1}

\noindent \textbf{Internal state:} (same as $\mathcal{F}_{\text{client}}$)

\vspace{0.5em}
\noindent \textbf{Main:}

\vspace{0.5em}

\textbf{recv} \{Submit, $\mathit{request}$\} \textbf{from} I/O:
\begin{enumerate}[itemsep=0.5em]
\item \textbf{send} \{Submit, $\mathit{request}$, $\mathsf{internalState}$\} \textbf{to} $(\pcur, \scur, \mathcal{F}^{\text{Liquid}}_{\text{submit}}:\text{submit})$,
\textbf{wait for} \{Submit, $\mathit{response}$\} s.t. $\mathit{response} \in \{\text{true}, \text{false}\}$;
\item \textbf{if} $\mathit{response} =$ true: $\mathsf{requestQueue}.\text{add}(\mathit{request})$;
\textbf{send} $\mathit{request}$ \textbf{to} $\mathcal{S}$ via NET;
\end{enumerate}

\hrule\vspace{0.5em}

\textbf{recv} \{ReadL1\} \textbf{from} I/O or NET:
\begin{enumerate}[itemsep=0.5em]
    \item \textbf{send} \{Read\} \textbf{to} $(\pcur, \scur, \mathcal{F}_{\text{ledger}}: \text{client}_{\text{L1}})$;
    \textbf{wait for} $\mathit{L1ReadResult}$;
    \item \textbf{reply} \{ReadL1, $\mathit{L1ReadResult}$\} via I/O;
\end{enumerate}

\hrule\vspace{0.5em}

\textbf{recv} any other message \textbf{from} I/O:
\begin{enumerate}
    \item Forward to $\mathcal{S}$ through NET;
\end{enumerate}

\hrule\vspace{0.5em}

\textbf{recv} any message \textbf{from} NET:
\begin{enumerate}
    \item Output the message to $\mathcal{E}$ through I/O;
\end{enumerate}

}\end{functionality}\vspace{.5em}

\begin{functionality}{Description of simulator $\mathcal{S}_{\text{Liquid-submit}}$}{
The simulator $\mathcal{S}_{\text{Liquid-submit}}$ behaves the same as $\mathcal{S}_{\text{Liquid}}$.
}\end{functionality}\vspace{1em}

\begin{lemma}
\label{lem:Liquid2}
    For all PPT adversaries $\mathcal{A}$, there exists a PPT simulator $\mathcal{S}_{\text{Liquid-submit}}$ such that for all PPT environments $\mathcal{E}$ and all security parameters $k\in\mathbb{N}$,
    \[
    \mathsf{EXEC}^{\mathcal{F}^{\text{Liquid}}_{\text{dummy}}}_{\mathcal{S}_{\text{Liquid}},\mathcal{E}}(k)
    \ \stackrel{c}{\approx}\
    \mathsf{EXEC}^{\mathcal{F}^{\text{Liquid}}_{\text{layer2-submit}}}_{\mathcal{S}_{\text{Liquid-submit}},\mathcal{E}}(k),
    \]
    where $\stackrel{c}{\approx}$ denotes computational indistinguishability.
\end{lemma}

\begin{proof}
    Fix an arbitrary PPT environment $\mathcal{E}$. The simulator logic is identical in both executions; the only difference is the ideal functionality through which requests are forwarded to the simulator. We analyze the three observable components.

    \medskip
     In $\mathcal{F}^{\text{Liquid}}_{\text{dummy}}$, every request from $\mathcal{E}$ is forwarded directly to $\mathcal{S}_{\text{Liquid}}$. In $\mathcal{F}^{\text{Liquid}}_{\text{layer2-submit}}$, the subroutine $\mathcal{F}^{\text{Liquid}}_{\text{submit}}$ intercepts each request and checks semantic validity before forwarding. Requests that fail these checks are rejected and never reach the simulator.

    \medskip
    \noindent\textbf{(1) I/O outputs.} In the real protocol $\mathcal{P}^{\text{Liquid}}$ (and hence in $\mathcal{P'}^{\text{Liquid}}$), the client machine already enforces the same validity checks for all request types (Join, Submit, Settlement): only predefined request types are processed, and malformed or out-of-phase requests produce no output. Therefore, any request rejected by $\mathcal{F}^{\text{Liquid}}_{\text{submit}}$ would also produce no output inside the simulation. For accepted requests, both simulators execute the same client logic and produce the same I/O outputs. Hence, the I/O outputs are identical.

    \medskip
    \noindent\textbf{(2) On-chain transactions.} Since only accepted requests trigger protocol simulation of $\mathcal{P'}^{\text{Liquid}}$ inside the simulator, and the set of accepted requests is the same in both executions, the transactions published to $\mathcal{F}_{\text{ledger}}$ are identical except for signature values. By EUF-CMA security, the on-chain transactions are computationally indistinguishable.

    \medskip
    \noindent\textbf{(3) Adversarial leakage.} Since invalid requests are rejected by the ideal functionality and ignored by honest parties during simulation, no leakage is generated toward the adversary in either world. In that case, the only leakage is generated from the simulated protocol execution inside the simulator that takes the same input requests. The message during execution could include different signatures, which, according to EUF-CMA security, are computationally indistinguishable. Moreover, the corruption status remains synchronized across both worlds. Consequently, the leakage delivered to $\mathcal{A}$ is computationally indistinguishable in both executions.

    \medskip
    Conclusively,
    $\mathsf{EXEC}^{\mathcal{F}^{\text{Liquid}}_{\text{dummy}}}_{\mathcal{S}_{\text{Liquid}}, \mathcal{E}}(k) \stackrel{c}{\approx} \mathsf{EXEC}^{\mathcal{F}^{\text{Liquid}}_{\text{layer2-submit}}}_{\mathcal{S}_{\text{Liquid-submit}}, \mathcal{E}}(k)$.
\end{proof}

Next, we extend $\mathcal{F}^{\text{Liquid}}_{\text{layer2-submit}}$ with the join subroutine, yielding $\mathcal{F}^{\text{Liquid}}_{\text{layer2-join}} = (\mathcal{F}_{\text{client-join}}, \mathcal{F}_{\text{ledger}} \mid \mathcal{F}^{\text{Liquid}}_{\text{submit}}, \mathcal{F}^{\text{Liquid}}_{\text{join}})$. The intermediate client functionality additionally routes the Join request (from NET) through $\mathcal{F}^{\text{Liquid}}_{\text{join}}$.

\vspace{1em}\begin{functionality}{Description of $\mathcal{M}_{\text{client-join}}$ of $\mathcal{F}^{\text{Liquid}}_{\text{layer2-join}}$}{

\textbf{Implemented role(s):} \{client-join\}

\noindent\textbf{Subroutines:}
$\mathcal{F}^{\text{Liquid}}_{\text{submit}}$: submit,
$\mathcal{F}^{\text{Liquid}}_{\text{join}}$: join,
$\mathcal{F}_{\text{ledger}}$: client\textsubscript{L1}

\noindent \textbf{Internal state:} (same as $\mathcal{F}_{\text{client}}$)

\vspace{0.5em}
\noindent \textbf{Main:}

\vspace{0.5em}

\textbf{recv} \{Submit, $\mathit{request}$\} \textbf{from} I/O:
\begin{enumerate}[itemsep=0.5em]
\item \textbf{send} \{Submit, $\mathit{request}$, $\mathsf{internalState}$\} \textbf{to} $(\pcur, \scur, \mathcal{F}^{\text{Liquid}}_{\text{submit}}:\text{submit})$,
\textbf{wait for} \{Submit, $\mathit{response}$\} s.t. $\mathit{response} \in \{\text{true}, \text{false}\}$;
\item \textbf{if} $\mathit{response} =$ true: $\mathsf{requestQueue}.\text{add}(\mathit{request})$;
\textbf{send} $\mathit{request}$ \textbf{to} $\mathcal{S}$ via NET;
\end{enumerate}

\hrule\vspace{0.5em}

\textbf{recv} \{Join, $\mathit{Attachment}$\} \textbf{from} NET:
\begin{enumerate}[itemsep=0.5em]
\item \textbf{send} \{Join, $\mathit{Attachment}$, $\mathsf{internalState}$\} \textbf{to} $(\pcur, \scur, \mathcal{F}^{\text{Liquid}}_{\text{join}}:\text{join})$,
\textbf{wait for} \{Join, $\mathit{response}$\} s.t. $\mathit{response} \in \{\text{true}, \text{false}\}$;
\item \textbf{if} $\mathit{response} =$ true: update $\mathsf{internalState}$ according to $\mathit{Attachment}$;
\textbf{reply} \{Join, $s_{\mathit{init}}$\} via I/O;
\end{enumerate}

\hrule\vspace{0.5em}

\textbf{recv} \{ReadL1\} \textbf{from} I/O or NET:
\begin{enumerate}[itemsep=0.5em]
    \item \textbf{send} \{Read\} \textbf{to} $(\pcur, \scur, \mathcal{F}_{\text{ledger}}: \text{client}_{\text{L1}})$;
    \textbf{wait for} $\mathit{L1ReadResult}$;
    \item \textbf{reply} \{ReadL1, $\mathit{L1ReadResult}$\} via I/O;
\end{enumerate}

\hrule\vspace{0.5em}

\textbf{recv} any other message \textbf{from} NET:
\begin{enumerate}
    \item Output the message to $\mathcal{E}$ through I/O;
\end{enumerate}

}\end{functionality}\vspace{.5em}

\begin{functionality}{Description of simulator $\mathcal{S}_{\text{Liquid-join}}$}{

The simulator $\mathcal{S}_{\text{Liquid-join}}$ behaves identically to $\mathcal{S}_{\text{Liquid-submit}}$, except for the following additional behavior upon detecting a completed sidechain joining:

\vspace{0.5em}
\textbf{Join interaction with $\mathcal{F}^{\text{Liquid}}_{\text{layer2-join}}$:}

\begin{enumerate}[itemsep=0.5em]

\item $\mathcal{S}_{\text{Liquid-join}}$ monitors the simulated protocol $\mathcal{P'}^{\text{Liquid}}$. When it detects that a client entity is about to produce the I/O output $\{\text{Join}, s_{\mathit{init}}\}$ (the peg-in transaction is committed on L1 and included in the sidechain), $\mathcal{S}_{\text{Liquid-join}}$ intercepts this output.

\item $\mathcal{S}_{\text{Liquid-join}}$ prepares $\mathit{Attachment}$ by extracting from the simulation state:
\begin{itemize}
    \item $s_{\mathit{init}}$: the initial state included in the simulated client's Join output;
    \item $\mathit{TX}_{\mathit{peg\text{-}in}}$: the peg-in transaction committed on $\mathcal{F}_{\text{ledger}}$ in $\mathcal{P'}^{\text{Liquid}}$;
\end{itemize}

\item $\mathcal{S}_{\text{Liquid-join}}$ sends $\{\text{Join}, \mathit{Attachment}\}$ to $\mathcal{F}^{\text{Liquid}}_{\text{layer2-join}}$ via NET.

\end{enumerate}

}\end{functionality}\vspace{1em}

\begin{lemma}
\label{lem:Liquid3}
    For all PPT adversaries $\mathcal{A}$, there exists a PPT simulator $\mathcal{S}_{\text{Liquid-join}}$ such that for all PPT environments $\mathcal{E}$ and all security parameters $k\in\mathbb{N}$,
    \[
    \mathsf{EXEC}^{\mathcal{F}^{\text{Liquid}}_{\text{layer2-submit}}}_{\mathcal{S}_{\text{Liquid-submit}},\mathcal{E}}(k)
    \ \stackrel{c}{\approx}\
    \mathsf{EXEC}^{\mathcal{F}^{\text{Liquid}}_{\text{layer2-join}}}_{\mathcal{S}_{\text{Liquid-join}},\mathcal{E}}(k),
    \]
    where $\stackrel{c}{\approx}$ denotes computational indistinguishability.
\end{lemma}

\begin{proof}
    Fix an arbitrary PPT environment $\mathcal{E}$. The only difference between the two executions is that $\mathcal{F}^{\text{Liquid}}_{\text{layer2-join}}$ routes the Join request (from NET) through $\mathcal{F}^{\text{Liquid}}_{\text{join}}$, whereas in $\mathcal{F}^{\text{Liquid}}_{\text{layer2-submit}}$ the join output is produced entirely by the simulator. We analyze the three observable components.

    \medskip
    \noindent\textbf{(1) I/O outputs.} The only I/O output affected is $\{\text{Join}, s_{\mathit{init}}\}$. We consider two cases.

    \emph{Case~1: Successful join.} In $\mathsf{EXEC}^{\mathcal{F}^{\text{Liquid}}_{\text{layer2-submit}}}_{\mathcal{S}_{\text{Liquid-submit}},\mathcal{E}}(k)$, the simulator produces a join output when the simulated $\mathcal{P'}^{\text{Liquid}}$ completes the joining procedure: the deposit transaction is committed on L1 and the BFT-finalized sidechain state, jointly maintained by the operator committee, reflects the peg-in. In $\mathsf{EXEC}^{\mathcal{F}^{\text{Liquid}}_{\text{layer2-join}}}_{\mathcal{S}_{\text{Liquid-join}},\mathcal{E}}(k)$, the simulator instead extracts $\mathit{Attachment} = \{s_{\mathit{init}}, \mathit{TX}_{\mathit{peg\text{-}in}}\}$ from the simulation state and sends it to $\mathcal{F}^{\text{Liquid}}_{\text{join}}$, which outputs to $\mathcal{E}$ only if all checks pass. A successful join in $\mathcal{P'}^{\text{Liquid}}$ implies that $\mathit{TX}_{\mathit{peg\text{-}in}}$ is consistent with $s_{\mathit{init}}$, the deposit is committed on L1, and $s_{\mathit{init}}$ is recorded in the BFT-finalized sidechain state. These are exactly the checks in $\mathcal{F}^{\text{Liquid}}_{\text{join}}$ (Steps~1--4). Under the BFT honest-majority assumption (at most $f_{L_2} = \tfrac{1}{3} n_{\mathit{OP}}$ corrupted operators) and the EUF-CMA security of $\mathcal{F}_{\text{sig}}$ (which prevents the adversary from forging operator signatures on a malformed peg-in), $\mathcal{F}^{\text{Liquid}}_{\text{join}}$ accepts whenever $\mathcal{P'}^{\text{Liquid}}$ completes.

    \emph{Case~2: Failed join.} If the adversary tries to make the joining procedure fail by delaying the communication network or creating incorrect peg-in transactions into the sidechain block, $\mathcal{P'}^{\text{Liquid}}$ does not complete the join due to the honest participants' checking with $\mathcal{F}_{\text{sig}}$. Correspondingly, $\mathcal{S}_{\text{Liquid-join}}$ does not send the Join request to $\mathcal{F}^{\text{Liquid}}_{\text{join}}$, so no output is produced.

    In all cases, the I/O outputs are identical.

    \medskip
    \noindent\textbf{(2) On-chain transactions.} The set of transactions published to $\mathcal{F}_{\text{ledger}}$ is fully determined by $\mathcal{P'}^{\text{Liquid}}$, which runs the real-world protocol on the input requests forwarded by the same $\mathcal{F}^{\text{Liquid}}_{\text{submit}}$ in both executions. Transactions published by honest participants, including client deposit transactions and operator-signed peg-out settlement transactions, therefore differ across the two worlds only in their signature values, which are computationally indistinguishable by the EUF-CMA security of $\mathcal{F}_{\text{sig}}$. When participants are corrupted, the transactions they publish on L1 are likewise computationally indistinguishable, since the simulator forwards all corrupted-party messages to $\mathcal{P'}^{\text{Liquid}}$ and signature randomness is the only source of variation.

    \medskip
    \noindent\textbf{(3) Adversarial leakage.} In both executions, the simulator extracts the plaintext leakage (under our plaintext-leakage assumption) from the ideal functionality and uses it to drive $\mathcal{P'}^{\text{Liquid}}$, generating leakage to the adversary according to the corruption status. Since the two executions feed $\mathcal{P'}^{\text{Liquid}}$ the same inputs (those accepted by $\mathcal{F}^{\text{Liquid}}_{\text{submit}}$) and run the same real-world protocol, $\mathsf{internalState}$ at every honest participant, including client and operator roles, evolves identically across the two executions. The interposition of $\mathcal{F}^{\text{Liquid}}_{\text{join}}$ affects only how the join I/O output is produced (via the ideal functionality's checks rather than directly by the simulator); it does not alter any network-side message exchange or any honest party's internal computation, so the leakage delivered to corrupted parties is unaffected. Consequently, the leaked $\mathsf{internalState}$ and received messages delivered to corrupted parties are computationally indistinguishable across the two executions.

    \medskip
    Conclusively,
    $\mathsf{EXEC}^{\mathcal{F}^{\text{Liquid}}_{\text{layer2-submit}}}_{\mathcal{S}_{\text{Liquid-submit}}, \mathcal{E}}(k)\stackrel{c}{\approx}\mathsf{EXEC}^{\mathcal{F}^{\text{Liquid}}_{\text{layer2-join}}}_{\mathcal{S}_{\text{Liquid-join}}, \mathcal{E}}(k)$.
\end{proof}

Next, we extend $\mathcal{F}^{\text{Liquid}}_{\text{layer2-join}}$ with the update subroutine, yielding $\mathcal{F}^{\text{Liquid}}_{\text{layer2-update}} = (\mathcal{F}_{\text{client-update}}, \mathcal{F}_{\text{ledger}} \mid \mathcal{F}^{\text{Liquid}}_{\text{submit}}, \mathcal{F}^{\text{Liquid}}_{\text{join}}, \mathcal{F}^{\text{Liquid}}_{\text{update}})$.

\vspace{1em}\begin{functionality}{Description of $\mathcal{M}_{\text{client-update}}$ of $\mathcal{F}^{\text{Liquid}}_{\text{layer2-update}}$}{

\textbf{Implemented role(s):} \{client-update\}

\noindent\textbf{Subroutines:}
$\mathcal{F}^{\text{Liquid}}_{\text{submit}}$: submit,
$\mathcal{F}^{\text{Liquid}}_{\text{join}}$: join,
$\mathcal{F}^{\text{Liquid}}_{\text{update}}$: update,
$\mathcal{F}_{\text{ledger}}$: client\textsubscript{L1}

\noindent \textbf{Internal state:} (same as $\mathcal{F}_{\text{client}}$)

\vspace{0.5em}
\noindent \textbf{Main:}

\vspace{0.5em}

\textbf{recv} \{Submit, $\mathit{request}$\} \textbf{from} I/O:
\begin{enumerate}[itemsep=0.5em]
\item \textbf{send} \{Submit, $\mathit{request}$, $\mathsf{internalState}$\} \textbf{to} $(\pcur, \scur, \mathcal{F}^{\text{Liquid}}_{\text{submit}}:\text{submit})$,
\textbf{wait for} \{Submit, $\mathit{response}$\} s.t. $\mathit{response} \in \{\text{true}, \text{false}\}$;
\item \textbf{if} $\mathit{response} =$ true: $\mathsf{requestQueue}.\text{add}(\mathit{request})$;
\textbf{send} $\mathit{request}$ \textbf{to} $\mathcal{S}$ via NET;
\end{enumerate}

\hrule\vspace{0.5em}

\textbf{recv} \{Join, $\mathit{Attachment}$\} \textbf{from} NET:
\begin{enumerate}[itemsep=0.5em]
\item \textbf{send} \{Join, $\mathit{Attachment}$, $\mathsf{internalState}$\} \textbf{to} $(\pcur, \scur, \mathcal{F}^{\text{Liquid}}_{\text{join}}:\text{join})$,
\textbf{wait for} \{Join, $\mathit{response}$\} s.t. $\mathit{response} \in \{\text{true}, \text{false}\}$;
\item \textbf{if} $\mathit{response} =$ true: update $\mathsf{internalState}$ according to $\mathit{Attachment}$;
\textbf{reply} \{Join, $s_{\mathit{init}}$\} via I/O;
\end{enumerate}

\hrule\vspace{0.5em}

\textbf{recv} \{Update, $\mathit{Attachment}$\} \textbf{from} NET:
\begin{enumerate}[itemsep=0.5em]
\item \textbf{send} \{Update, $\mathit{Attachment}$, $\mathsf{internalState}$\} \textbf{to} $(\pcur, \scur, \mathcal{F}^{\text{Liquid}}_{\text{update}}:\text{update})$,
\textbf{wait for} \{Update, $\mathit{response}$, $\mathit{newState}$, $\mathit{executedReq}$\};
\item \textbf{if} $\mathit{response} =$ true: update $\mathsf{internalState}$ with $\mathit{newState}$ and $\mathit{executedReq}$;
\end{enumerate}

\hrule\vspace{0.5em}

\textbf{recv} \{ReadL1\} \textbf{from} I/O or NET:
\begin{enumerate}[itemsep=0.5em]
    \item \textbf{send} \{Read\} \textbf{to} $(\pcur, \scur, \mathcal{F}_{\text{ledger}}: \text{client}_{\text{L1}})$;
    \textbf{wait for} $\mathit{L1ReadResult}$;
    \item \textbf{reply} \{ReadL1, $\mathit{L1ReadResult}$\} via I/O;
\end{enumerate}

\hrule\vspace{0.5em}

\textbf{recv} any other message \textbf{from} NET:
\begin{enumerate}
    \item Output the message to $\mathcal{E}$ through I/O;
\end{enumerate}

}\end{functionality}\vspace{.5em}

\begin{functionality}{Description of simulator $\mathcal{S}_{\text{Liquid-update}}$}{

The simulator $\mathcal{S}_{\text{Liquid-update}}$ behaves identically to $\mathcal{S}_{\text{Liquid-join}}$, except for the following additional behavior upon detecting a completed state update (BFT block finalization):

\vspace{0.5em}
\textbf{State update interaction with $\mathcal{F}^{\text{Liquid}}_{\text{layer2-update}}$:}

\begin{enumerate}[itemsep=0.5em]

\item $\mathcal{S}_{\text{Liquid-update}}$ monitors $\mathcal{P'}^{\text{Liquid}}$ for the completion of BFT consensus on a new block. Specifically, it detects when $2f{+}1$ valid operator signatures have been collected and the block satisfies all local validity checks.

\item Upon detecting a completed block, $\mathcal{S}_{\text{Liquid-update}}$ prepares $\mathit{Attachment}$ by extracting:
\begin{itemize}
    \item $\mathit{Block} = \{\mathit{prevBlock}, \mathit{newStateList}, \mathit{executedReq}\}$: the newly committed block;
    \item $\{\mathit{Agreement}\}$: the set of $\geq 2f{+}1$ operator approvals (abstract agreements rather than cryptographic signatures);
\end{itemize}

\item $\mathcal{S}_{\text{Liquid-update}}$ sends $\{\text{Update}, \mathit{Attachment}\}$ to $\mathcal{F}^{\text{Liquid}}_{\text{layer2-update}}$ via NET.

\end{enumerate}

}\end{functionality}\vspace{1em}

\begin{lemma}
\label{lem:Liquid4}
    For all PPT adversaries $\mathcal{A}$, there exists a PPT simulator $\mathcal{S}_{\text{Liquid-update}}$ such that for all PPT environments $\mathcal{E}$ and all security parameters $k\in\mathbb{N}$,
    \[
    \mathsf{EXEC}^{\mathcal{F}^{\text{Liquid}}_{\text{layer2-join}}}_{\mathcal{S}_{\text{Liquid-join}},\mathcal{E}}(k)
    \ \stackrel{c}{\approx}\
    \mathsf{EXEC}^{\mathcal{F}^{\text{Liquid}}_{\text{layer2-update}}}_{\mathcal{S}_{\text{Liquid-update}},\mathcal{E}}(k),
    \]
    where $\stackrel{c}{\approx}$ denotes computational indistinguishability.
\end{lemma}

\begin{proof}
    Fix an arbitrary PPT environment $\mathcal{E}$. The only difference between the two games is that $\mathcal{F}^{\text{Liquid}}_{\text{layer2-update}}$ routes the Update request (from NET) through the subroutine $\mathcal{F}^{\text{Liquid}}_{\text{update}}$. We analyze the three observable components.

    \medskip
    \noindent As defined in both ideal functionalities, the Update request does not directly produce I/O outputs to $\mathcal{E}$; it only modifies $\mathsf{internalState}$. However, since reads and settlements derive from $\mathsf{internalState}$, a divergence between the two executions would propagate to observable Read or Settlement outputs. We must therefore argue that $\mathsf{internalState}$ at every honest participant evolves identically across the two executions.

    \medskip
    \noindent\textbf{(1) I/O outputs.} Although the Update request itself produces no I/O output, a divergence in $\mathsf{internalState}$ would cause observable differences in subsequent Read or Settlement outputs. In $\mathcal{P'}^{\text{Liquid}}$, a block is accepted as a valid state update only after the three-round BFT consensus among the operator committee completes, yielding (i)~at least $2f{+}1$ operator signatures on the block, (ii)~a valid predecessor reference into the BFT-finalized chain, (iii)~no conflicts with previously executed requests, and (iv)~a correctly computed post-state. These are precisely the checks enforced by $\mathcal{F}^{\text{Liquid}}_{\text{update}}$ (Steps~1--4). Since $\mathcal{S}_{\text{Liquid-update}}$ sends the Update message precisely when a block reaches BFT finality in $\mathcal{P'}^{\text{Liquid}}$, $\mathcal{F}^{\text{Liquid}}_{\text{update}}$ accepts if and only if the corresponding block was committed in $\mathcal{P'}^{\text{Liquid}}$. The BFT honest-majority assumption (at most $f_{L_2} = \tfrac{1}{3} n_{\mathit{OP}}$ corrupted operators) prevents the corrupted operator coalition from finalizing an inconsistent block, and the EUF-CMA security of $\mathcal{F}_{\text{sig}}$ prevents forgery of the $2f{+}1$ operator signatures required for finalization. And also prevent forgery of the client's transactions, which passes the check with $\mathsf{requestQueue}$ in the ideal functioanlity. Therefore $\mathsf{internalState}$ changes identically across the two executions.

    \medskip
    \noindent\textbf{(2) On-chain transactions.} The BFT consensus operates off-chain among the operator committee, and the Update request itself publishes no transactions to $\mathcal{F}_{\text{ledger}}$. Hence on-chain transactions are identical in both games.

    \medskip
    \noindent\textbf{(3) Adversarial leakage.} In both executions, the simulator extracts the plaintext leakage (under our plaintext-leakage assumption) from the ideal functionality and uses it to drive $\mathcal{P'}^{\text{Liquid}}$, generating leakage to the adversary according to the corruption status. Since the two executions feed $\mathcal{P'}^{\text{Liquid}}$ the same accepted inputs and run the same real-world protocol, $\mathsf{internalState}$ at every honest participant, including client and operator roles, evolves identically across the two executions. The interposition of $\mathcal{F}^{\text{Liquid}}_{\text{update}}$ affects only how block acceptance is gated within the ideal functionality; it does not alter any network-side message exchange, any honest party's internal computation, or the BFT-protocol message flow among operators. Consequently, the leaked $\mathsf{internalState}$ and received messages delivered to corrupted parties are computationally indistinguishable across the two executions.

    \medskip
    Conclusively,
    $\mathsf{EXEC}^{\mathcal{F}^{\text{Liquid}}_{\text{layer2-join}}}_{\mathcal{S}_{\text{Liquid-join}}, \mathcal{E}}(k)
    \stackrel{c}{\approx}
    \mathsf{EXEC}^{\mathcal{F}^{\text{Liquid}}_{\text{layer2-update}}}_{\mathcal{S}_{\text{Liquid-update}}, \mathcal{E}}(k)$.
\end{proof}

Next, we extend $\mathcal{F}^{\text{Liquid}}_{\text{layer2-update}}$ with the read subroutine, yielding $\mathcal{F}^{\text{Liquid}}_{\text{layer2-read}} = (\mathcal{F}_{\text{client-read}}, \mathcal{F}_{\text{ledger}} \mid \mathcal{F}^{\text{Liquid}}_{\text{submit}}, \mathcal{F}^{\text{Liquid}}_{\text{join}}, \mathcal{F}^{\text{Liquid}}_{\text{update}}, \mathcal{F}^{\text{Liquid}}_{\text{read}})$.

\vspace{1em}\begin{functionality}{Description of $\mathcal{M}_{\text{client-read}}$ of $\mathcal{F}^{\text{Liquid}}_{\text{layer2-read}}$}{

\textbf{Implemented role(s):} \{client-read\}

\noindent\textbf{Subroutines:}
$\mathcal{F}^{\text{Liquid}}_{\text{submit}}$: submit,
$\mathcal{F}^{\text{Liquid}}_{\text{join}}$: join,
$\mathcal{F}^{\text{Liquid}}_{\text{update}}$: update,
$\mathcal{F}^{\text{Liquid}}_{\text{read}}$: read,
$\mathcal{F}_{\text{ledger}}$: client\textsubscript{L1}

\noindent \textbf{Internal state:} (same as $\mathcal{F}_{\text{client}}$)

\vspace{0.5em}
\noindent \textbf{Main:}

\vspace{0.5em}

\textbf{recv} \{Submit, $\mathit{request}$\} \textbf{from} I/O:
\begin{enumerate}[itemsep=0.5em]
\item \textbf{send} \{Submit, $\mathit{request}$, $\mathsf{internalState}$\} \textbf{to} $(\pcur, \scur, \mathcal{F}^{\text{Liquid}}_{\text{submit}}:\text{submit})$,
\textbf{wait for} \{Submit, $\mathit{response}$\} s.t. $\mathit{response} \in \{\text{true}, \text{false}\}$;
\item \textbf{if} $\mathit{response} =$ true: $\mathsf{requestQueue}.\text{add}(\mathit{request})$;
\textbf{send} $\mathit{request}$ \textbf{to} $\mathcal{S}$ via NET;
\end{enumerate}

\hrule\vspace{0.5em}

\textbf{recv} \{Join, $\mathit{Attachment}$\} \textbf{from} NET:
\begin{enumerate}[itemsep=0.5em]
\item \textbf{send} \{Join, $\mathit{Attachment}$, $\mathsf{internalState}$\} \textbf{to} $(\pcur, \scur, \mathcal{F}^{\text{Liquid}}_{\text{join}}:\text{join})$,
\textbf{wait for} \{Join, $\mathit{response}$\} s.t. $\mathit{response} \in \{\text{true}, \text{false}\}$;
\item \textbf{if} $\mathit{response} =$ true: update $\mathsf{internalState}$ according to $\mathit{Attachment}$;
\textbf{reply} \{Join, $s_{\mathit{init}}$\} via I/O;
\end{enumerate}

\hrule\vspace{0.5em}

\textbf{recv} \{Update, $\mathit{Attachment}$\} \textbf{from} NET:
\begin{enumerate}[itemsep=0.5em]
\item \textbf{send} \{Update, $\mathit{Attachment}$, $\mathsf{internalState}$\} \textbf{to} $(\pcur, \scur, \mathcal{F}^{\text{Liquid}}_{\text{update}}:\text{update})$,
\textbf{wait for} \{Update, $\mathit{response}$, $\mathit{newState}$, $\mathit{executedReq}$\};
\item \textbf{if} $\mathit{response} =$ true: update $\mathsf{internalState}$ with $\mathit{newState}$ and $\mathit{executedReq}$;
\end{enumerate}

\hrule\vspace{0.5em}

\textbf{recv} \{Read\} \textbf{from} I/O:
\begin{enumerate}[itemsep=0.5em]
\item \textbf{send} \{Read, $\mathsf{internalState}$\} \textbf{to} $(\pcur, \scur, \mathcal{F}^{\text{Liquid}}_{\text{read}}:\text{read})$,
\textbf{wait for} $\mathit{ReadResult}$;
\item \textbf{if} $\mathit{ReadResult} \neq \bot$: \textbf{reply} \{Read, $\mathit{ReadResult}$\} via I/O;
\end{enumerate}

\hrule\vspace{0.5em}

\textbf{recv} \{ReadL1\} \textbf{from} I/O or NET:
\begin{enumerate}[itemsep=0.5em]
    \item \textbf{send} \{Read\} \textbf{to} $(\pcur, \scur, \mathcal{F}_{\text{ledger}}: \text{client}_{\text{L1}})$;
    \textbf{wait for} $\mathit{L1ReadResult}$;
    \item \textbf{reply} \{ReadL1, $\mathit{L1ReadResult}$\} via I/O;
\end{enumerate}

\hrule\vspace{0.5em}

\textbf{recv} any other message \textbf{from} NET:
\begin{enumerate}
    \item Output the message to $\mathcal{E}$ through I/O;
\end{enumerate}

}\end{functionality}\vspace{.5em}

\begin{functionality}{Description of simulator $\mathcal{S}_{\text{Liquid-read}}$}{

The simulator $\mathcal{S}_{\text{Liquid-read}}$ behaves identically to $\mathcal{S}_{\text{Liquid-update}}$, except for the following additional behavior when handling read-delivery queries from the ideal functionality:

\vspace{0.5em}
\textbf{Read delivery interaction with $\mathcal{F}^{\text{Liquid}}_{\text{layer2-read}}$:}

\begin{enumerate}[itemsep=0.5em]

\item When $\mathcal{F}^{\text{Liquid}}_{\text{read}}$ sends a responsive query to $\mathcal{S}_{\text{Liquid-read}}$ via NET to determine message delivery, $\mathcal{S}_{\text{Liquid-read}}$ replies based on $\mathcal{A}$'s scheduling decisions in $\mathcal{P'}^{\text{Liquid}}$.

\item $\mathcal{F}^{\text{Liquid}}_{\text{read}}$ then produces the I/O output to $\mathcal{E}$ accordingly.

\end{enumerate}

}\end{functionality}\vspace{1em}

\begin{lemma}
\label{lem:Liquid5}
    For all PPT adversaries $\mathcal{A}$, there exists a PPT simulator $\mathcal{S}_{\text{Liquid-read}}$ such that for all PPT environments $\mathcal{E}$ and all security parameters $k\in\mathbb{N}$,
    \[
    \mathsf{EXEC}^{\mathcal{F}^{\text{Liquid}}_{\text{layer2-update}}}_{\mathcal{S}_{\text{Liquid-update}},\mathcal{E}}(k)
    \ \stackrel{c}{\approx}\
    \mathsf{EXEC}^{\mathcal{F}^{\text{Liquid}}_{\text{layer2-read}}}_{\mathcal{S}_{\text{Liquid-read}},\mathcal{E}}(k),
    \]
    where $\stackrel{c}{\approx}$ denotes computational indistinguishability.
\end{lemma}

\begin{proof}
    Fix an arbitrary PPT environment $\mathcal{E}$. The only difference is that in $\mathsf{EXEC}^{\mathcal{F}^{\text{Liquid}}_{\text{layer2-read}}}_{\mathcal{S}_{\text{Liquid-read}},\mathcal{E}}(k)$ the read result is decided by $\mathcal{F}^{\text{Liquid}}_{\text{read}}$ based on $\mathsf{internalState}$, whereas in $\mathsf{EXEC}^{\mathcal{F}^{\text{Liquid}}_{\text{layer2-update}}}_{\mathcal{S}_{\text{Liquid-update}},\mathcal{E}}(k)$ read results are produced entirely by the simulator. We analyze the three observable components.

    \medskip
    \noindent\textbf{(1) I/O outputs.} By Lemma~\ref{lem:Liquid4}, $\mathsf{internalState}$ at every honest participant is identical in both executions. In both executions, read results are reconstructed from the same source: the BFT-finalized sidechain state recorded in $\mathsf{internalState}$, intersected with the L1-confirmed deposits and peg-out transactions read from $\mathcal{F}_{\text{ledger}}$. Two situations could in principle cause a distinguishable difference: \emph{(i)}~inconsistency between the simulated state and $\mathsf{internalState}$, or \emph{(ii)}~adversarial network control over the delivery schedule producing divergent results.

    For situation~(i), Lemma~\ref{lem:Liquid4} establishes that $\mathsf{internalState}$ evolves identically in both executions, since $\mathcal{F}^{\text{Liquid}}_{\text{update}}$ accepts a block if and only if BFT consensus finalizes it in $\mathcal{P'}^{\text{Liquid}}$, subject only to the unforgeability of $\mathcal{F}_{\text{sig}}$. For situation~(ii), $\mathcal{F}^{\text{Liquid}}_{\text{read}}$ uses the same delivery schedule supplied by $\mathcal{S}_{\text{Liquid-read}}$, mirroring the schedule that $\mathcal{S}_{\text{Liquid-update}}$ uses to determine the simulated client's read output in the previous game. Under the synchronous communication assumption among operators and the BFT honest-majority assumption (at most $f_{L_2} = \tfrac{1}{3} n_{\mathit{OP}}$ corrupted operators), at least one honest operator responds with the latest BFT-finalized state under any adversarial schedule. Hence the read outputs coincide.

    \medskip
    \noindent\textbf{(2) On-chain transactions.} Read requests do not publish any transactions to $\mathcal{F}_{\text{ledger}}$. Hence on-chain transactions are identical in both games.

    \medskip
    \noindent\textbf{(3) Adversarial leakage.} The game hop only interposes $\mathcal{F}^{\text{Liquid}}_{\text{read}}$ on the Read request, which is an I/O-facing operation that produces no network messages to corrupted parties. Both simulators maintain synchronized corruption status across all role types (client and operator) and run the same protocol logic inside $\mathcal{P'}^{\text{Liquid}}$. The leakage is computationally indistinguishable.

    \medskip
    Conclusively,
    $\mathsf{EXEC}^{\mathcal{F}^{\text{Liquid}}_{\text{layer2-update}}}_{\mathcal{S}_{\text{Liquid-update}}, \mathcal{E}}(k)
    \stackrel{c}{\approx}
    \mathsf{EXEC}^{\mathcal{F}^{\text{Liquid}}_{\text{layer2-read}}}_{\mathcal{S}_{\text{Liquid-read}}, \mathcal{E}}(k)$.
\end{proof}

Next, we extend $\mathcal{F}^{\text{Liquid}}_{\text{layer2-read}}$ with the settlement subroutine, yielding $\mathcal{F}^{\text{Liquid}}_{\text{layer2-settlement}} = (\mathcal{F}_{\text{client-settlement}}, \mathcal{F}_{\text{ledger}} \mid \mathcal{F}^{\text{Liquid}}_{\text{submit}}, \mathcal{F}^{\text{Liquid}}_{\text{join}},\\ \mathcal{F}^{\text{Liquid}}_{\text{update}}, \mathcal{F}^{\text{Liquid}}_{\text{read}}, \mathcal{F}^{\text{Liquid}}_{\text{settlement}})$.

\vspace{1em}\begin{functionality}{Description of $\mathcal{M}_{\text{client-settlement}}$ of $\mathcal{F}^{\text{Liquid}}_{\text{layer2-settlement}}$}{

\textbf{Implemented role(s):} \{client-settlement\}

\noindent\textbf{Subroutines:}
$\mathcal{F}^{\text{Liquid}}_{\text{submit}}$: submit,
$\mathcal{F}^{\text{Liquid}}_{\text{join}}$: join,
$\mathcal{F}^{\text{Liquid}}_{\text{update}}$: update,
$\mathcal{F}^{\text{Liquid}}_{\text{read}}$: read,
$\mathcal{F}^{\text{Liquid}}_{\text{settlement}}$: settlement,
$\mathcal{F}_{\text{ledger}}$: client\textsubscript{L1}

\noindent \textbf{Internal state:} (same as $\mathcal{F}_{\text{client}}$)

\vspace{0.5em}
\noindent \textbf{Main:}

\vspace{0.5em}

\textbf{recv} \{Submit, $\mathit{request}$\} \textbf{from} I/O:
\begin{enumerate}[itemsep=0.5em]
\item \textbf{send} \{Submit, $\mathit{request}$, $\mathsf{internalState}$\} \textbf{to} $(\pcur, \scur, \mathcal{F}^{\text{Liquid}}_{\text{submit}}:\text{submit})$,
\textbf{wait for} \{Submit, $\mathit{response}$\} s.t. $\mathit{response} \in \{\text{true}, \text{false}\}$;
\item \textbf{if} $\mathit{response} =$ true: $\mathsf{requestQueue}.\text{add}(\mathit{request})$;
\textbf{send} $\mathit{request}$ \textbf{to} $\mathcal{S}$ via NET;
\end{enumerate}

\hrule\vspace{0.5em}

\textbf{recv} \{Join, $\mathit{Attachment}$\} \textbf{from} NET:
\begin{enumerate}[itemsep=0.5em]
\item \textbf{send} \{Join, $\mathit{Attachment}$, $\mathsf{internalState}$\} \textbf{to} $(\pcur, \scur, \mathcal{F}^{\text{Liquid}}_{\text{join}}:\text{join})$,
\textbf{wait for} \{Join, $\mathit{response}$\} s.t. $\mathit{response} \in \{\text{true}, \text{false}\}$;
\item \textbf{if} $\mathit{response} =$ true: update $\mathsf{internalState}$ according to $\mathit{Attachment}$;
\textbf{reply} \{Join, $s_{\mathit{init}}$\} via I/O;
\end{enumerate}

\hrule\vspace{0.5em}

\textbf{recv} \{Update, $\mathit{Attachment}$\} \textbf{from} NET:
\begin{enumerate}[itemsep=0.5em]
\item \textbf{send} \{Update, $\mathit{Attachment}$, $\mathsf{internalState}$\} \textbf{to} $(\pcur, \scur, \mathcal{F}^{\text{Liquid}}_{\text{update}}:\text{update})$,
\textbf{wait for} \{Update, $\mathit{response}$, $\mathit{newState}$, $\mathit{executedReq}$\};
\item \textbf{if} $\mathit{response} =$ true: update $\mathsf{internalState}$ with $\mathit{newState}$ and $\mathit{executedReq}$;
\end{enumerate}

\hrule\vspace{0.5em}

\textbf{recv} \{Read\} \textbf{from} I/O:
\begin{enumerate}[itemsep=0.5em]
\item \textbf{send} \{Read, $\mathsf{internalState}$\} \textbf{to} $(\pcur, \scur, \mathcal{F}^{\text{Liquid}}_{\text{read}}:\text{read})$,
\textbf{wait for} $\mathit{ReadResult}$;
\item \textbf{if} $\mathit{ReadResult} \neq \bot$: \textbf{reply} \{Read, $\mathit{ReadResult}$\} via I/O;
\end{enumerate}

\hrule\vspace{0.5em}

\textbf{recv} \{Settlement, $\mathit{Attachment}$\} \textbf{from} NET:
\begin{enumerate}[itemsep=0.5em]
\item \textbf{send} \{Settlement, $\mathit{Attachment}$, $\mathsf{internalState}$\} \textbf{to} $(\pcur, \scur, \mathcal{F}^{\text{Liquid}}_{\text{settlement}}:\\\text{settlement})$,
\textbf{wait for} \{Settlement, $\mathit{response}$, $s_{\mathit{settle}}$\} s.t. $\mathit{response} \in \{\text{true}, \text{false}\}$;
\item \textbf{if} $\mathit{response} =$ true: update $\mathsf{internalState}$;
\textbf{reply} \{Settlement, $s_{\mathit{settle}}$\} via I/O;
\end{enumerate}

\hrule\vspace{0.5em}

\textbf{recv} \{ReadL1\} \textbf{from} I/O or NET:
\begin{enumerate}[itemsep=0.5em]
    \item \textbf{send} \{Read\} \textbf{to} $(\pcur, \scur, \mathcal{F}_{\text{ledger}}: \text{client}_{\text{L1}})$;
    \textbf{wait for} $\mathit{L1ReadResult}$;
    \item \textbf{reply} \{ReadL1, $\mathit{L1ReadResult}$\} via I/O;
\end{enumerate}

\hrule\vspace{0.5em}

\textbf{recv} any other message \textbf{from} NET:
\begin{enumerate}
    \item Output the message to $\mathcal{E}$ through I/O;
\end{enumerate}

}\end{functionality}\vspace{.5em}

\begin{functionality}{Description of simulator $\mathcal{S}_{\text{Liquid-settlement}}$}{

The simulator $\mathcal{S}_{\text{Liquid-settlement}}$ behaves identically to $\mathcal{S}_{\text{Liquid-read}}$, except for the following additional behavior upon detecting a completed settlement:

\vspace{0.5em}
\textbf{Settlement interaction with $\mathcal{F}^{\text{Liquid}}_{\text{layer2-settlement}}$:}

\begin{enumerate}[itemsep=0.5em]

\item $\mathcal{S}_{\text{Liquid-settlement}}$ monitors $\mathcal{P'}^{\text{Liquid}}$. When it detects that a client entity is about to produce a successful settlement output $\{\text{Settlement}, s_{\mathit{settle}}\}$ (the peg-out has been executed by operators and committed on L1), it intercepts this output.

\item $\mathcal{S}_{\text{Liquid-settlement}}$ prepares $\mathit{Attachment}$ by extracting:
\begin{itemize}
    \item $\mathit{TX}_{\mathit{peg\text{-}out}}$: the peg-out transaction committed on $\mathcal{F}_{\text{ledger}}$ in $\mathcal{P'}^{\text{Liquid}}$;
\end{itemize}

\item $\mathcal{S}_{\text{Liquid-settlement}}$ sends $\{\text{Settlement}, \mathit{Attachment}\}$ to $\mathcal{F}^{\text{Liquid}}_{\text{layer2-settlement}}$ via NET.

\end{enumerate}

}\end{functionality}\vspace{1em}

\begin{lemma}
\label{lem:Liquid6}
    For all PPT adversaries $\mathcal{A}$, there exists a PPT simulator $\mathcal{S}_{\text{Liquid-settlement}}$ such that for all PPT environments $\mathcal{E}$ and all security parameters $k\in\mathbb{N}$,
    \[
    \mathsf{EXEC}^{\mathcal{F}^{\text{Liquid}}_{\text{layer2-read}}}_{\mathcal{S}_{\text{Liquid-read}},\mathcal{E}}(k)
    \ \stackrel{c}{\approx}\
    \mathsf{EXEC}^{\mathcal{F}^{\text{Liquid}}_{\text{layer2-settlement}}}_{\mathcal{S}_{\text{Liquid-settlement}},\mathcal{E}}(k),
    \]
    where $\stackrel{c}{\approx}$ denotes computational indistinguishability.
\end{lemma}

\begin{proof}
    Fix an arbitrary PPT environment $\mathcal{E}$. The only difference between the two executions is that $\mathcal{F}^{\text{Liquid}}_{\text{layer2-settlement}}$ routes the Settlement request (from NET) through $\mathcal{F}^{\text{Liquid}}_{\text{settlement}}$. We analyze the three observable components.

    \medskip
    \noindent\textbf{(1) I/O outputs.} The I/O output affected is $\{\text{Settlement}, s_{\mathit{settle}}\}$. We consider two cases.

    \emph{Case~1: Successful settlement.} In $\mathcal{P'}^{\text{Liquid}}$, settlement succeeds when the peg-out transaction has been included in a BFT-finalized sidechain block and the corresponding $\mathit{TX}_{\mathit{peg\text{-}settlement}}$ is committed on L1. The simulator $\mathcal{S}_{\text{Liquid-settlement}}$ triggers $\mathcal{F}^{\text{Liquid}}_{\text{settlement}}$ precisely when this occurs. The checks in $\mathcal{F}^{\text{Liquid}}_{\text{settlement}}$: (1)~the peg-out is recorded in $\mathsf{executedRequest}$, (2)~$s_{\mathit{settle}}$ is consistent with the latest BFT-finalized sidechain state, and (3)~$\mathit{TX}_{\mathit{peg\text{-}settlement}}$ is committed on L1, are exactly the conditions that hold when the simulated settlement succeeds. Under the BFT honest-majority assumption (at most $f_{L_2} = \tfrac{1}{3} n_{\mathit{OP}}$ corrupted operators), L1 liveness and safety, and the EUF-CMA security of $\mathcal{F}_{\text{sig}}$ (which prevents the adversary from forging the operator signatures required to finalize an inconsistent peg-out), $\mathcal{F}^{\text{Liquid}}_{\text{settlement}}$ accepts whenever $\mathcal{P'}^{\text{Liquid}}$ completes. Hence the output is identical.

    \emph{Case~2: Failed settlement.} If the adversary tries to make the settlement procedure fail by delaying the communication network or creating incorrect peg-in transactions into the sidechain block, $\mathcal{P'}^{\text{Liquid}}$ does not complete the settlement due to the honest participants' checking with $\mathcal{F}_{\text{sig}}$. Correspondingly, $\mathcal{S}_{\text{Liquid-settlement}}$ does not trigger $\mathcal{F}^{\text{Liquid}}_{\text{settlement}}$, so no output is produced in either execution.

    \medskip
    \noindent\textbf{(2) On-chain transactions.} The peg-out transaction $\mathit{TX}_{\mathit{peg\text{-}settlement}}$ published to $\mathcal{F}_{\text{ledger}}$ is generated by $\mathcal{P'}^{\text{Liquid}}$, which runs identically in both simulators. The game hop only affects when the ideal functionality generates the I/O output, not which transactions are published. By the EUF-CMA security of $\mathcal{F}_{\text{sig}}$, the distributions of on-chain transactions are computationally indistinguishable.

    \medskip
    \noindent\textbf{(3) Adversarial leakage.} Both simulators maintain synchronized corruption status across all role types (client and operator) and execute the same protocol logic inside $\mathcal{P'}^{\text{Liquid}}$. The game hop interposes $\mathcal{F}^{\text{Liquid}}_{\text{settlement}}$ between the simulator and the I/O output but does not alter the simulator's internal execution or its interaction with corrupted parties. The leakage is identical.

    \medskip
    Conclusively,
    $\mathsf{EXEC}^{\mathcal{F}^{\text{Liquid}}_{\text{layer2-read}}}_{\mathcal{S}_{\text{Liquid-read}}, \mathcal{E}}(k)\stackrel{c}{\approx}\mathsf{EXEC}^{\mathcal{F}^{\text{Liquid}}_{\text{layer2-settlement}}}_{\mathcal{S}_{\text{Liquid-settlement}}, \mathcal{E}}(k)$.
\end{proof}

As a final step, we extend $\mathcal{F}^{\text{Liquid}}_{\text{layer2-settlement}}$ with the subroutine $\mathcal{F}^{\text{Liquid}}_{\text{updRnd}}$ to reach $\mathcal{F}^{\text{Liquid}}_{\text{layer2}} = (\mathcal{F}_{\text{client}}, \mathcal{F}_{\text{ledger}} \mid \mathcal{F}^{\text{Liquid}}_{\text{submit}}, \mathcal{F}^{\text{Liquid}}_{\text{join}}, \mathcal{F}^{\text{Liquid}}_{\text{update}},\\ \mathcal{F}^{\text{Liquid}}_{\text{read}}, \mathcal{F}^{\text{Liquid}}_{\text{settlement}}, \mathcal{F}^{\text{Liquid}}_{\text{updRnd}})$.

\vspace{1em}\begin{functionality}{Description of $\mathcal{M}_{\text{client}}$ of $\mathcal{F}^{\text{Liquid}}_{\text{layer2}}$}{

\textbf{Implemented role(s):} \{client\}

\noindent \textbf{Main:}

\vspace{0.5em}

\textbf{recv} \{Submit, $\mathit{request}$\} \textbf{from} I/O:
\begin{enumerate}[itemsep=0.5em]
\item \textbf{send} \{Submit, $\mathit{request}$, $\mathsf{internalState}$\} \textbf{to} $(\pcur, \scur, \mathcal{F}^{\text{Liquid}}_{\text{submit}}:\text{submit})$,
\textbf{wait for} \{Submit, $\mathit{response}$\} s.t. $\mathit{response} \in \{\text{true}, \text{false}\}$;
\item \textbf{if} $\mathit{response} =$ true: $\mathsf{requestQueue}.\text{add}(\mathit{request})$;
\textbf{send} $\mathit{request}$ \textbf{to} $\mathcal{S}$ via NET;
\end{enumerate}

\hrule\vspace{0.5em}

\textbf{recv} \{Join, $\mathit{Attachment}$\} \textbf{from} NET:
\begin{enumerate}[itemsep=0.5em]
\item \textbf{send} \{Join, $\mathit{Attachment}$, $\mathsf{internalState}$\} \textbf{to} $(\pcur, \scur, \mathcal{F}^{\text{Liquid}}_{\text{join}}:\text{join})$,
\textbf{wait for} \{Join, $\mathit{response}$\} s.t. $\mathit{response} \in \{\text{true}, \text{false}\}$;
\item \textbf{if} $\mathit{response} =$ true: update $\mathsf{internalState}$ according to $\mathit{Attachment}$;
\textbf{reply} \{Join, $s_{\mathit{init}}$\} via I/O;
\end{enumerate}

\hrule\vspace{0.5em}

\textbf{recv} \{Update, $\mathit{Attachment}$\} \textbf{from} NET:
\begin{enumerate}[itemsep=0.5em]
\item \textbf{send} \{Update, $\mathit{Attachment}$, $\mathsf{internalState}$\} \textbf{to} $(\pcur, \scur, \mathcal{F}^{\text{Liquid}}_{\text{update}}:\text{update})$,
\textbf{wait for} \{Update, $\mathit{response}$, $\mathit{newState}$, $\mathit{executedReq}$\};
\item \textbf{if} $\mathit{response} =$ true: update $\mathsf{internalState}$ with $\mathit{newState}$ and $\mathit{executedReq}$;
\end{enumerate}

\hrule\vspace{0.5em}

\textbf{recv} \{Read\} \textbf{from} I/O:
\begin{enumerate}[itemsep=0.5em]
\item \textbf{send} \{Read, $\mathsf{internalState}$\} \textbf{to} $(\pcur, \scur, \mathcal{F}^{\text{Liquid}}_{\text{read}}:\text{read})$,
\textbf{wait for} $\mathit{ReadResult}$;
\item \textbf{if} $\mathit{ReadResult} \neq \bot$: \textbf{reply} \{Read, $\mathit{ReadResult}$\} via I/O;
\end{enumerate}

\hrule\vspace{0.5em}

\textbf{recv} \{Settlement, $\mathit{Attachment}$\} \textbf{from} NET:
\begin{enumerate}[itemsep=0.5em]
\item \textbf{send} \{Settlement, $\mathit{Attachment}$, $\mathsf{internalState}$\} \textbf{to} $(\pcur, \scur, \mathcal{F}^{\text{Liquid}}_{\text{settlement}}:\\\text{settlement})$,
\textbf{wait for} \{Settlement, $\mathit{response}$, $s_{\mathit{settle}}$\} s.t. $\mathit{response} \in \{\text{true}, \text{false}\}$;
\item \textbf{if} $\mathit{response} =$ true: update $\mathsf{internalState}$;
\textbf{reply} \{Settlement, $s_{\mathit{settle}}$\} via I/O;
\end{enumerate}

\hrule\vspace{0.5em}

\textbf{recv} \{UpdateRound\} \textbf{from} NET:
\begin{enumerate}[itemsep=0.5em]
\item \textbf{send} \{UpdateRound, $\mathsf{internalState}$\} \textbf{to} $(\pcur, \scur, \mathcal{F}^{\text{Liquid}}_{\text{updRnd}}:\text{updRnd})$,
\textbf{wait for} \{UpdateRound, $\mathit{response}$\} s.t. $\mathit{response} \in \{\text{true}, \text{false}\}$;
\item \textbf{if} $\mathit{response} =$ true: $\mathsf{round} \leftarrow \mathsf{round} + 1$;
\textbf{reply} \{UpdateRound, $\mathit{response}$\} via NET;
\end{enumerate}

\hrule\vspace{0.5em}

\textbf{recv} \{GetCurRound\} \textbf{from} I/O or NET:
\begin{enumerate}[itemsep=0.5em]
    \item \textbf{reply} \{GetCurRound, $\mathsf{round}$\};
\end{enumerate}

\hrule\vspace{0.5em}

\textbf{recv} \{ReadL1\} \textbf{from} I/O or NET:
\begin{enumerate}[itemsep=0.5em]
    \item \textbf{send} \{Read\} \textbf{to} $(\pcur, \scur, \mathcal{F}_{\text{ledger}}: \text{client}_{\text{L1}})$;
    \textbf{wait for} $\mathit{L1ReadResult}$;
    \item \textbf{reply} \{ReadL1, $\mathit{L1ReadResult}$\} via I/O;
\end{enumerate}

}\end{functionality}\vspace{.5em}

\begin{functionality}{Description of simulator $\mathcal{S}_{\text{Liquid-updRnd}}$}{

The simulator $\mathcal{S}_{\text{Liquid-updRnd}}$ behaves identically to $\mathcal{S}_{\text{Liquid-settlement}}$, except for the following additional behavior when handling round-update requests:

\vspace{0.5em}
\textbf{Round update interaction with $\mathcal{F}^{\text{Liquid}}_{\text{layer2}}$:}

\begin{enumerate}[itemsep=0.5em]

\item Whenever $\mathcal{S}_{\text{Liquid-updRnd}}$ receives a round-update instruction from $\mathcal{A}$ (i.e., $\mathcal{A}$ advances the clock in the simulated $\mathcal{P'}^{\text{Liquid}}$), $\mathcal{S}_{\text{Liquid-updRnd}}$ simultaneously sends $\{\text{UpdateRound}\}$ to $\mathcal{F}^{\text{Liquid}}_{\text{layer2}}$ via NET.

\item $\mathcal{F}^{\text{Liquid}}_{\text{updRnd}}$ checks that no honest client's request has been pending for more than $T_{L_2}$ rounds. If the check passes, $\mathsf{round}$ in $\mathsf{internalState}$ is incremented.

\end{enumerate}

}\end{functionality}\vspace{1em}

\begin{lemma}
\label{lem:Liquid7}
    For all PPT adversaries $\mathcal{A}$, there exists a PPT simulator $\mathcal{S}_{\text{Liquid-updRnd}}$ such that for all PPT environments $\mathcal{E}$ and all security parameters $k\in\mathbb{N}$,
    \[
    \mathsf{EXEC}^{\mathcal{F}^{\text{Liquid}}_{\text{layer2-settlement}}}_{\mathcal{S}_{\text{Liquid-settlement}},\mathcal{E}}(k)
    \ \stackrel{c}{\approx}\
    \mathsf{EXEC}^{\mathcal{F}^{\text{Liquid}}_{\text{layer2}}}_{\mathcal{S}_{\text{Liquid-updRnd}},\mathcal{E}}(k),
    \]
    where $\stackrel{c}{\approx}$ denotes computational indistinguishability.
\end{lemma}

\begin{proof}
    Fix an arbitrary PPT environment $\mathcal{E}$. The only difference between the two games is that $\mathcal{F}^{\text{Liquid}}_{\text{layer2}}$ routes the UpdateRound request (from NET) through $\mathcal{F}^{\text{Liquid}}_{\text{updRnd}}$. We analyze the three observable components.

    \medskip
    \noindent\noindent\textbf{(1) I/O outputs.} The only I/O output affected is $\{\text{GetCurRound}, \mathsf{round}\}$. In $\mathsf{EXEC}^{\mathcal{F}^{\text{Liquid}}_{\text{layer2-settlement}}}_{\mathcal{S}_{\text{Liquid-settlement}},\mathcal{E}}(k)$, $\mathcal{S}_{\text{Liquid-settlement}}$ maintains the round counter inside the simulated $\mathcal{P'}^{\text{Liquid}}$, whose clock is advanced by $\mathcal{A}$. In $\mathsf{EXEC}^{\mathcal{F}^{\text{Liquid}}_{\text{layer2}}}_{\mathcal{S}_{\text{Liquid-updRnd}},\mathcal{E}}(k)$, $\mathcal{S}_{\text{Liquid-updRnd}}$ additionally forwards each round-update request from $\mathcal{A}$ to $\mathcal{F}^{\text{Liquid}}_{\text{updRnd}}$, which enforces a liveness predicate: no honest client's request may have been pending in the simulated protocol for more than $T_{L_2}$ rounds before the clock is allowed to advance.

In $\mathcal{P'}^{\text{Liquid}}$, $\mathcal{F}^{\text{Liquid}}_{\text{com}}$ guarantees synchronous message delivery among the operator committee within $\delta$ rounds per message, and the three-phase BFT consensus among operators (with at most $f_{L_2} = \tfrac{1}{3} n_{\mathit{OP}}$ corrupted) finalizes every honest client's request within $T_{L_2} = (3f{+}1)\delta$ rounds: each of the three BFT phases requires one round of all-to-all message exchange among the operator committee, bounded by $\delta$ per round, and the protocol's leader-rotation worst case absorbs at most $f$ failed leaders before reaching a finalizing honest operator proposer. Once finalization occurs, the corresponding state transition enters $\mathsf{internalState}$ via $\mathcal{F}^{\text{Liquid}}_{\text{update}}$, which by Lemma~\ref{lem:Liquid4} accepts the same set of blocks in both executions.

Consequently, whenever $\mathcal{A}$ attempts to advance the clock past $T_{L_2}$ rounds since some honest client's request was issued, $\mathcal{P'}^{\text{Liquid}}$ has already finalized that request through BFT, $\mathcal{F}^{\text{Liquid}}_{\text{update}}$ has already accepted it into $\mathsf{internalState}$, and the $T_{L_2}$-check in $\mathcal{F}^{\text{Liquid}}_{\text{updRnd}}$ therefore passes. The check passes at exactly the same rounds in which $\mathcal{P'}^{\text{Liquid}}$ would otherwise allow its clock to advance, the round counters evolve identically across the two executions, and $\mathsf{internalState}$ at every honest participant remains synchronised.

    \medskip
    \noindent\textbf{(2) On-chain transactions.} The UpdateRound request itself does not publish any transactions to $\mathcal{F}_{\text{ledger}}$; it only checks the synchronous-delivery predicate against the simulated round counter. The peg-in and peg-out transactions are generated by the Join and Settlement requests respectively in $\mathcal{P'}^{\text{Liquid}}$, both of which run identically in both simulators. Hence on-chain transactions are identical in both games.

    \medskip
    \noindent\textbf{(3) Adversarial leakage.} The game hop only interposes $\mathcal{F}^{\text{Liquid}}_{\text{updRnd}}$ on the UpdateRound request. Since $\mathcal{F}^{\text{Liquid}}_{\text{updRnd}}$ generates no network messages to corrupted parties and the $T_{L_2}$-check passes at the same times in both executions, the simulator's internal execution of $\mathcal{P'}^{\text{Liquid}}$ and its interaction with corrupted parties (across both role types: client and operator) are unaffected. The leakage delivered to $\mathcal{A}$ is identical.

    \medskip
    Conclusively,
    $\mathsf{EXEC}^{\mathcal{F}^{\text{Liquid}}_{\text{layer2-settlement}}}_{\mathcal{S}_{\text{Liquid-settlement}}, \mathcal{E}}(k)\stackrel{c}{\approx}\mathsf{EXEC}^{\mathcal{F}^{\text{Liquid}}_{\text{layer2}}}_{\mathcal{S}_{\text{Liquid-updRnd}}, \mathcal{E}}(k)$.
\end{proof}

\ThmrealizeLiquid*
\begin{proof}
    Let $\mathcal{A}$ be any PPT adversary and let $\mathcal{E}$ be any PPT environment. By Lemmas~\ref{lem:Liquid1}--\ref{lem:Liquid7}, the sequence of game hops yields a chain of computationally indistinguishable execution ensembles, each adjacent pair differing only in how one subroutine is implemented:
    \[
    \mathsf{EXEC}^{\mathcal{P}^{\text{Liquid}}}_{\mathcal{A},\mathcal{E}}
    \ \stackrel{c}{\approx}\ 
    \mathsf{EXEC}^{\mathcal{F}^{\text{Liquid}}_{\text{dummy}}}_{\mathcal{S}_{\text{Liquid}},\mathcal{E}}
    \ \stackrel{c}{\approx}\ 
    \cdots
    \ \stackrel{c}{\approx}\ 
    \mathsf{EXEC}^{\mathcal{F}^{\text{Liquid}}_{\text{layer2}}}_{\mathcal{S}_{\text{Liquid-updRnd}},\mathcal{E}}.
    \]
    By transitivity of computational indistinguishability, we obtain
    $
    \mathsf{EXEC}^{\mathcal{P}^{\text{Liquid}}}_{\mathcal{A},\mathcal{E}}
    \stackrel{c}{\approx}
    \mathsf{EXEC}^{\mathcal{F}^{\text{Liquid}}_{\text{layer2}}}_{\mathcal{S}_{\text{Liquid-updRnd}},\mathcal{E}}$, which proves that $\mathcal{P}^{\text{Liquid}}$ iUC-realizes $\mathcal{F}^{\text{Liquid}}_{\text{layer2}}$.
\end{proof}
\section{Case Study: The Arbitrum Nitro Rollup}
\label{apdx:arbitrum}

\subsection{$\mathcal{F}_{\text{ledger}}$ instantiation for Arbitrum Nitro}
\label{apd:ArbitrumLedger}

\subsubsection*{Submission functionality $\mathcal{F}^{\text{Arbitrum}}_{\text{submit}_{\text{L1}}}$}

The submission subroutine accepts a request to submit a transaction to L1 if and only if the transaction conforms to one of the Arbitrum transaction types. For Arbitrum Nitro, the following transactions are accepted by $\mathcal{F}^{\text{Arbitrum}}_{\text{submit}_{\text{L1}}}$: client deposit transactions ($\mathit{TX}_{\mathit{deposit}}$); client-signed self-submitted transactions ($\mathit{TX}_{\mathit{self}}$), which bypass the operator and go directly to L1; operator-published transaction batches with claimed post-states ($\mathit{batch}, \mathit{resultState}$); peg-out transactions ($\mathit{TX}_{\mathit{peg\text{-}out}}$) for both collaborative and escape-hatch settlement; and fraud-proof transactions ($\mathit{TX}_{\mathit{fraud}}$) published by clients (verifiers).

\subsubsection*{Update functionality $\mathcal{F}^{\text{Arbitrum}}_{\text{update}_{\text{L1}}}$}

The update subroutine refines the base $\mathcal{F}_{\text{update}_{\text{L1}}}$ by enforcing Arbitrum-specific commitment semantics, centred around the optimistic challenge period:

\begin{itemize}
    \item \textbf{Operator-published state transitions.} When an operator publishes the batch transaction $\mathit{TX}_{\mathit{batch}} = (\mathit{batch}, \mathit{resultState})$ to L1, $\mathit{TX}_{\mathit{batch}}$ carrying the operator's claimed L2 execution state $\mathit{resultState}$ is committed in $\{\mathit{TX}\}_{\text{L1}}$ ( representing stored in a blockchain BLOB), only when an L1-verifiable predicate over $\{\mathit{TX}\}_{\text{L1}}$ holds: $\mathit{TX}_{\mathit{batch}}$ has survived the challenge period $T_{\text{challenge}}$ without a valid fraud-proof transaction $\mathit{TX}_{\mathit{fraud}}$ being committed against it. L1 maintainers do not execute the L2 transition; the L1-finalized state $\mathit{State}_{\text{L1}}$ is advanced to $\mathit{resultState}$ 

    \item \textbf{Settlement.} A peg-out transaction $\mathit{TX}_{\mathit{peg\text{-}out}}$ (from either the collaborative or escape-hatch path) enters $\{\mathit{TX}\}_{\text{L1}}$ at publication time. $\mathit{State}_{\text{L1}}$ is advanced to reflect the post-settlement state carried in $\mathit{TX}_{\mathit{peg\text{-}out}}$ only after the same L1-verifiable predicate holds: $\mathit{TX}_{\mathit{peg\text{-}out}}$ has survived $T_{\text{challenge}}$ in $\{\mathit{TX}\}_{\text{L1}}$ without a valid fraud-proof transaction being committed against it.
\end{itemize}

\subsubsection*{Preservation of L1 safety and liveness.}

The instantiation preserves the original security guarantees of $\mathcal{F}_{\text{ledger}}$. Arbitrum-specific logic is purely additive: the new transaction types are accepted as valid L1 payloads, and the challenge-period and fraud-proof rules are deterministic predicates over transactions already committed on L1. And under the assumption of at least one honest client (verifier), incorrect transaction batches will not be committed on-chain. No existing L1 transaction is rejected or reordered by the instantiation, so the underlying ledger's security is inherited without change. 


\subsection{Arbitrum Nitro Real Protocol}
\label{apd:ArbitrumReal}

We formally define the real-world protocol implementation $\mathcal{P}^{\text{Arbitrum}}$ as
$\mathcal{P}^{\text{Arbitrum}} := (\mathcal{P}^{\text{Arbitrum}}_{\text{client}}: \text{client}, \mathcal{F}_{\text{ledger}} \mid \mathcal{P}^{\text{Arbitrum}}_{\text{operator}}, \mathcal{P}^{\text{Arbitrum}}_{\text{client}}: \text{verifier}, \mathcal{F}_{\text{sig}}, \mathcal{F}^{\text{Arbitrum}}_{\text{com}} )$.
The protocol consists of two participating parties: client and operator. Each client machine implements two roles, $\{\text{client}, \text{verifier}\}$: the client role is the public interface used by the environment for joining, submitting, settling, and reading; the verifier role is a private, internal role that monitors L1 publications from operators and posts fraud proofs when invalid state transitions are detected. The verifier role is not addressable from the environment and produces no I/O outputs, only its L1 publications (fraud-proof transactions) are observable, on the same footing as any other on-chain transaction. Operators execute received transaction requests and publish results to L1 periodically. We assume at least one honest operator and at least one honest client (so that at least one honest verifier exists). The structure is shown in Figure~\ref{fig:arbitrum}.
 
\subsubsection{Client Protocol}
 
The client machine $\mathcal{M}^{\text{Arbitrum}}_{\text{client}}$ defines the code run by users of the rollup protocol, implementing both the public client role and the private verifier role. As a client, it submits transactions collaboratively through the operator or directly to L1 via the self-submit path, and performs settlement via the collaborative or escape-hatch mechanism. As a verifier, it monitors operator batch publications on L1 and posts fraud proofs when an invalid state transition is detected; the verifier role is internal therefore it has no I/O interface to the environment.
 
\vspace{1em}\begin{functionality}{Description of protocol $\mathcal{P}^{\text{Arbitrum}}_{\text{client}} = (\text{client} \mid \text{verifier})$}{
 
\textbf{Participating roles:} \{client, verifier\}
 
\noindent \textbf{Corruption model:} dynamic corruption of the client role, at least one honest; the verifier role is private and follows the corruption status of the client machine that hosts it.
 
\noindent \textbf{Protocol parameters:}
\begin{itemize}
    \item $T_{\text{challenge}}$: challenge period duration
    \item $\mathsf{Val}(\mathit{TX})$: transaction validity checking algorithm. Given $\mathit{TX} = (\mathit{sender}, \mathit{receiver},\\ \mathit{value}, \mathit{data})$, $\mathsf{Val}(\mathit{TX})$ returns True if and only if:
    \begin{itemize}
        \item $\mathit{TX}$ is well-formed: $\mathit{sender}, \mathit{receiver} \in \{0,1\}^{*}$, $\mathit{value} \in \mathbb{N}_{\geq 0}$; \cmt{Correct transaction format}
        \item $\mathit{sender} \in \mathsf{identities}$; \cmt{Sender is a registered participant}
        \item Let $\mathit{bal}(\mathit{sender})$ denote the balance of $\mathit{sender}$ derived from $\mathsf{stateList}$. Then $\mathit{value} \leq \mathit{bal}(\mathit{sender})$; \cmt{Sender has sufficient balance}
        \item $\mathit{TX}$ does not conflict with any $\mathit{TX}' \in \mathsf{executedRequest}$; \cmt{No double-spending}
    \end{itemize}
    \item $\mathsf{Check}(\mathit{batch}, \mathit{resultState}, \mathit{prevState})$: batch validity checking algorithm used by the verifier role. Returns true if and only if:
    \begin{itemize}
        \item $\forall\, \mathit{TX} \in \mathit{batch}$: $\mathsf{Val}(\mathit{TX}) = \text{true}$; \cmt{Each transaction is well-formed}
        \item $\mathit{resultState}$ is the correct output of sequentially executing all $\mathit{TX} \in \mathit{batch}$ starting from $\mathit{prevState}$; \cmt{State transition is correct}
        \end{itemize}
        \item $\mathsf{GenF}(\mathit{batch}, \mathit{resultState}, \mathit{prevState})$: fraud-proof generation algorithm. Returns a fraud proof transaction $TX_{\textit{fraud}}$ if $\mathsf{Check}$ fails, otherwise $\bot$.
    
\end{itemize}
 
}\end{functionality}\vspace{.5em}
 
\begin{functionality}{Description of $\mathcal{M}^{\text{Arbitrum}}_{\text{client}}$}{

\textbf{Implemented role(s):} \{client, verifier\} \cmt{verifier is a private internal role}

\noindent \textbf{Subroutines:} $\mathcal{F}_{\text{sig}}$: \{signer, verifier\}, $\mathcal{F}^{\text{Arbitrum}}_{\text{com}}$: com, $\mathcal{F}_{\text{ledger}}$: client\textsubscript{L1}, $\mathcal{P}^{\text{Arbitrum}}_{\text{operator}}$: operator

\noindent \textbf{Internal state:}
\begin{itemize}
    \item $\mathsf{round} \in \mathbb{N}_{\geq 0}$, $\mathsf{round} = 0$ \hfill\cmt{Current round}
    \item $\mathsf{requestQueue} \subset \{0,1\}^{*}$, $\mathsf{requestQueue} = \emptyset$ \hfill\cmt{Queued unexecuted requests}
    \item $\mathsf{executedRequest} \subset \{0,1\}^{*}$, $\mathsf{executedRequest} = \emptyset$ \hfill\cmt{Executed requests}
    \item $\mathsf{stateList} \subset \{0,1\}^{*}$, $\mathsf{stateList} = \emptyset$ \hfill\cmt{L2 state list}
    \item $\mathsf{onchainState} \subset \{0,1\}^{*}$, $\mathsf{onchainState} = \emptyset$ \hfill\cmt{L1 committed state}
    \item $\mathsf{identities} \subset \{0,1\}^{*}$, $\mathsf{identities} = \{\mathsf{pid}_{op}, ContractAddr\}$ \hfill\cmt{Operator identity and contract address}
    \item $\mathsf{lastConfirmedState} \in \{0,1\}^{*}$, $\mathsf{lastConfirmedState} = \emptyset$ \hfill\cmt{Verifier's last confirmed L2 state}
\end{itemize}

\noindent \textbf{CheckID}(\emph{pid}, \emph{sid}, \emph{role}): Accept all messages with the same \emph{sid}. The \emph{verifier} role is private: it only accepts messages from $\mathcal{F}^{\text{Arbitrum}}_{\text{com}}$ (\{UpdateCheck\}) and from internal subroutines; it does not accept I/O requests from the environment.

\noindent \textbf{Corruption behavior:}
\begin{itemize}
    \item \textbf{DetermineCorrStatus}(\emph{pid}, \emph{sid}, \emph{role}): Return $\mathsf{corr}$ for the \emph{client} role; the \emph{verifier} role inherits the corruption status of the hosting machine.
    \item \textbf{LeakedData}(\emph{pid}, \emph{sid}, \emph{role}): Return \texttt{Internal State}.
\end{itemize}

\vspace{0.5em}
\noindent \textbf{Main (client role):}

\vspace{0.5em}

\textbf{recv} \{Join, $\mathit{ContractAddr}$, $s_{\mathit{init}}$\} \textbf{from} I/O:

\begin{enumerate}[itemsep=0.5em]
    \item Create $\mathit{TX}_{\mathit{deposit}} = \{\pcur, \mathit{ContractAddr}, s_{\mathit{init}}, \epsilon\}$;

    \item \textbf{send} \{SubmitL1, $\mathit{TX}_{\mathit{deposit}}$\} \textbf{to} $(\pcur, \scur, \mathcal{F}_{\text{ledger}}: \text{client}_{\text{L1}})$;

    \item \textbf{send} \{ReadL1\} \textbf{to} $(\pcur, \scur, \mathcal{F}_{\text{ledger}}:\text{client}_{\text{L1}})$;
    \textbf{wait for} \{ReadL1, $\mathit{L1ReadResult} = \{\{\mathit{TX}\}_{\text{L1}}, \mathit{State}_{\text{L1}}, \{\mathit{pid}\}\}$\};

    \item \textbf{if} $\mathit{TX}_{\mathit{deposit}} \in \{\mathit{TX}\}_{\text{L1}}$: create $\mathit{TX}_{\mathit{peg\text{-}in}} = \{\pcur, \epsilon, s_{\mathit{init}}, \epsilon\}$;

    \item \textbf{send} \{Sign, $\mathit{TX}_{\mathit{peg\text{-}in}}$\} \textbf{to} $(\pcur, \scur, \mathcal{F}_{\text{sig}}: \text{signer})$;
    \textbf{wait for} \{Signature, $\sigma$\};

    \item \textbf{send} \{Message, \{Join, $\mathit{TX}_{\mathit{peg\text{-}in}}$, $\sigma$, \emph{receiver}\}\} \textbf{to} $(\pcur, \scur, \mathcal{F}^{\text{Arbitrum}}_{\text{com}}:\text{com})$, where $\mathit{receiver} = (\mathsf{pid}_{op}, \scur, \mathcal{P}^{\text{Arbitrum}}_{\text{operator}}:\text{operator})$;

    \item $\mathsf{requestQueue}.\text{add}(\mathit{TX}_{\mathit{peg\text{-}in}})$; \cmt{Pending until L1 commitment + challenge window survival}

    \item \textbf{send} \{ReadL1\} \textbf{to} $(\pcur, \scur, \mathcal{F}_{\text{ledger}}:\text{client}_{\text{L1}})$;
    \textbf{wait for} \{ReadL1, $\mathit{L1ReadResult}$\}; $\mathsf{identities}.\text{add}(\mathit{ContractAddr})$;

    \item \textbf{if} $\mathit{TX}_{\mathit{peg\text{-}in}} \in \mathit{L1ReadResult}.\{\mathit{TX}\}_{\text{L1}}$ and $\nexists\, \mathit{TX}_{\mathit{fraud}}$ within $T_{\text{challenge}}$:
    \begin{itemize}
        \item $\mathsf{requestQueue}.\text{remove}(\mathit{TX}_{\mathit{peg\text{-}in}})$;
        $\mathsf{executedRequest}.\text{add}(\mathit{TX}_{\mathit{peg\text{-}in}})$;
        \item $\mathsf{stateList} \leftarrow s_{\mathit{init}}$; $\mathsf{onchainState} \leftarrow s_{\mathit{init}}$;
        \item $\forall\, \mathit{ID} \in \mathit{L1ReadResult}.\{\mathit{pid}\}$: $\mathsf{identities}.\text{add}(\mathit{ID})$;
        \item \textbf{reply} \{Join, $s_{\mathit{init}}$\} via I/O;
    \end{itemize}
\end{enumerate}

\hrule
\vspace{0.5em}

\textbf{recv} \{Submit, \texttt{collaborate}, $\mathit{TX}$\} \textbf{from} I/O, \textbf{s.t.} $\mathsf{Val}(\mathit{TX})$ = True:

\begin{enumerate}[itemsep=0.5em]
\item \textbf{send} \{Sign, $\mathit{TX}$\} \textbf{to} $(\pcur, \scur, \mathcal{F}_{\text{sig}}: \text{signer})$;
\textbf{wait for} \{Signature, $\sigma$\};

\item \textbf{send} \{Message, \{Submit, $\mathit{TX}$, $\sigma$, \emph{receiver}\}\} \textbf{to} $(\pcur, \scur, \mathcal{F}^{\text{Arbitrum}}_{\text{com}}:\text{com})$, where $\mathit{receiver} = (\mathsf{pid}_{op}, \scur, \mathcal{P}^{\text{Arbitrum}}_{\text{operator}}:\text{operator})$;

\item $\mathsf{requestQueue}.\text{add}(\mathit{TX})$; \cmt{Pending; moved to $\mathsf{executedRequest}$ on next Read after L1 batch survives $T_{\text{challenge}}$}
\end{enumerate}

\hrule
\vspace{0.5em}

\textbf{recv} \{Submit, \texttt{selfsubmit}, $\mathit{TX}_{\textit{self}}$\} \textbf{from} I/O, \textbf{s.t.} $\mathsf{Val}(\mathit{TX}_{\textit{self}})$ = True:\cmt{transaction including contract address $ContractAddr$}

\begin{enumerate}[itemsep=0.5em]
\item \textbf{send} \{Sign, $\mathit{TX}_{\textit{self}}$\} \textbf{to} $(\pcur, \scur, \mathcal{F}_{\text{sig}}: \text{signer})$;
\textbf{wait for} \{Signature, $\sigma$\};

\item \textbf{send} \{SubmitL1, \{$\mathit{TX}_{\textit{self}}$, $\sigma$\}\} \textbf{to} $(\pcur, \scur, \mathcal{F}_{\text{ledger}}:\text{client}_{\text{L1}})$;

\item \textbf{send} \{ReadL1\} \textbf{to} $(\pcur, \scur, \mathcal{F}_{\text{ledger}}:\text{client}_{\text{L1}})$;
\textbf{wait for} $\mathit{L1ReadResult}$;

\item \textbf{if} $\mathit{TX}_{\textit{self}} \in \mathit{L1ReadResult}.\{\mathit{TX}\}_{\text{L1}}$: $\mathsf{requestQueue}.\text{add}(\mathit{TX}_{\textit{self}})$; \cmt{Pending; cleared on next Read after $T_{\text{challenge}}$}
\end{enumerate}

\hrule
\vspace{0.5em}

\textbf{recv} \{Settlement, \texttt{collaborate}\} \textbf{from} I/O:

\begin{enumerate}[itemsep=0.5em]
\item Create $\mathit{TX}_{\mathit{peg\text{-}out}} = \{\mathit{ContractAddr}, \pcur, \mathsf{stateList}, \epsilon\}$;

\item \textbf{send} \{Sign, $\mathit{TX}_{\mathit{peg\text{-}out}}$\} \textbf{to} $(\pcur, \scur, \mathcal{F}_{\text{sig}}: \text{signer})$;
\textbf{wait for} \{Signature, $\sigma$\};

\item \textbf{send} \{Message, \{Settlement, $\mathit{TX}_{\mathit{peg\text{-}out}}$, $\sigma$, \emph{receiver}\}\} \textbf{to} $(\pcur, \scur, \mathcal{F}^{\text{Arbitrum}}_{\text{com}}:\text{com})$, where $\mathit{receiver} = (\mathsf{pid}_{op}, \scur, \mathcal{P}^{\text{Arbitrum}}_{\text{operator}}:\text{operator})$;

\item $\mathsf{requestQueue}.\text{add}(\mathit{TX}_{\mathit{peg\text{-}out}})$;

\item \textbf{send} \{ReadL1\} \textbf{to} $(\pcur, \scur, \mathcal{F}_{\text{ledger}}:\text{client}_{\text{L1}})$;
\textbf{wait for} $\mathit{L1ReadResult}$;

\item $\forall\, \mathit{TX}' \in \mathit{L1ReadResult}.\{\mathit{TX}\}_{\text{L1}}$ \textbf{s.t.} $\mathit{TX}' \in \mathsf{requestQueue}$ and $\nexists\, \mathit{TX}_{\mathit{fraud}}$ for $\mathit{TX}'$ within $T_{\text{challenge}}$: \cmt{Sweep cleared peg-out and any other now-final pending requests}
\begin{itemize}
    \item $\mathsf{requestQueue}.\text{remove}(\mathit{TX}')$;
    $\mathsf{executedRequest}.\text{add}(\mathit{TX}')$;
\end{itemize}

\item \textbf{if} $\mathit{TX}_{\mathit{peg\text{-}out}} \in \mathsf{executedRequest}$:
$\mathsf{stateList} \leftarrow \mathit{L1ReadResult}.\mathit{State}_{\text{L1}}$;\\
$\mathsf{onchainState} \leftarrow \mathit{L1ReadResult}.\mathit{State}_{\text{L1}}$;
\textbf{reply} \{Settlement, $\mathsf{onchainState}$\} via I/O;
\end{enumerate}

\hrule
\vspace{0.5em}

\textbf{recv} \{Settlement, \texttt{escape-hatch}\} \textbf{from} I/O:

\begin{enumerate}[itemsep=0.5em]
\item Create $\mathit{TX}_{\mathit{peg\text{-}out}} = \{\mathit{ContractAddr}, \pcur, \mathsf{stateList}, \texttt{self}\}$;

\item \textbf{send} \{SubmitL1, $\mathit{TX}_{\mathit{peg\text{-}out}}$\} \textbf{to} $(\pcur, \scur, \mathcal{F}_{\text{ledger}}:\text{client}_{\text{L1}})$;

\item $\mathsf{requestQueue}.\text{add}(\mathit{TX}_{\mathit{peg\text{-}out}})$;

\item \textbf{send} \{ReadL1\} \textbf{to} $(\pcur, \scur, \mathcal{F}_{\text{ledger}}:\text{client}_{\text{L1}})$;
\textbf{wait for} $\mathit{L1ReadResult}$;

\item $\forall\, \mathit{TX}' \in \mathit{L1ReadResult}.\{\mathit{TX}\}_{\text{L1}}$ \textbf{s.t.} $\mathit{TX}' \in \mathsf{requestQueue}$ and $\nexists\, \mathit{TX}_{\mathit{fraud}}$ for $\mathit{TX}'$ within $T_{\text{challenge}}$:
\begin{itemize}
    \item $\mathsf{requestQueue}.\text{remove}(\mathit{TX}')$;
    $\mathsf{executedRequest}.\text{add}(\mathit{TX}')$;
\end{itemize}

\item \textbf{if} $\mathit{TX}_{\mathit{peg\text{-}out}} \in \mathsf{executedRequest}$:
$\mathsf{stateList} \leftarrow \mathit{L1ReadResult}.\mathit{State}_{\text{L1}}$;\\
$\mathsf{onchainState} \leftarrow \mathit{L1ReadResult}.\mathit{State}_{\text{L1}}$;
\textbf{reply} \{Settlement, $\mathsf{onchainState}$\} via I/O;
\end{enumerate}

\hrule
\vspace{0.5em}

\textbf{recv} \{Read\} \textbf{from} I/O:
\begin{enumerate}[itemsep=0.5em]
\item \textbf{send} \{ReadL1\} \textbf{to} $(\pcur, \scur, \mathcal{F}_{\text{ledger}}:\text{client}_{\text{L1}})$;
\textbf{wait for} \{ReadL1, $\mathit{L1ReadResult} = \{\{\mathit{TX}\}_{\text{L1}}, \mathit{State}_{\text{L1}}, \{\mathit{pid}\}\}$\};

\item Let $\mathit{TX}_{\mathit{batch}}^{*} \in \{\mathit{TX}\}_{\text{L1}}$ be the most recent checkpoint transaction that has survived $T_{\text{challenge}}$ without a valid fraud proof. Parse $\{\mathit{batch}, \mathit{resultState}\} \leftarrow \mathit{TX}_{\mathit{batch}}^{*}.\mathit{data}$. 

\item $\forall\, \mathit{TX} \in \mathit{batch}$ \textbf{s.t.} $\mathit{TX} \in \mathsf{requestQueue}$:
\begin{itemize}
    \item $\mathsf{requestQueue}.\text{remove}(\mathit{TX})$;
    $\mathsf{executedRequest}.\text{add}(\mathit{TX})$; 
\end{itemize}

\item $\mathsf{stateList} \leftarrow \mathit{resultState}$, 
$\mathsf{onchainState} \leftarrow \mathit{State}_{\text{L1}}$; 

\item $\forall\, \mathit{pid}' \in \{\mathit{pid}\}$: $\mathsf{identities}.\text{add}(\mathit{pid}')$;

\item \textbf{reply} \{Read, $\mathit{ReadResult} = \{\mathsf{executedRequest}, \mathsf{stateList}, \mathsf{onchainState}\}$\} via I/O;
\end{enumerate}

\hrule
\vspace{0.5em}

\textbf{recv} \{GetCurRound\} \textbf{from} I/O:
\begin{enumerate}[itemsep=0.5em]
\item \textbf{send} \{GetCurRound\} \textbf{to} $(\pcur, \scur, \mathcal{F}^{\text{Arbitrum}}_{\text{com}}: \text{clock})$;
\textbf{wait for} \{GetCurRound, $\mathit{round}$\};
\item \textbf{reply} \{GetCurRound, $\mathit{round}$\} via I/O;
\end{enumerate}

\hrule
\vspace{1em}
\noindent \textbf{Main (verifier role, private, no I/O interface):}

\vspace{0.5em}

\textbf{recv} \{UpdateCheck\} \textbf{from} $(\pcur, \scur, \mathcal{F}^{\text{Arbitrum}}_{\text{com}}: \text{com})$:
\begin{enumerate}[itemsep=0.5em]
    \item \textbf{send} \{ReadL1\} \textbf{to} $(\pcur, \scur, \mathcal{F}_{\text{ledger}}:\text{client}_{\text{L1}})$;
    \textbf{wait for} \{ReadL1, $\mathit{L1ReadResult} = \{\{\mathit{TX}\}_{\text{L1}}, \mathit{State}_{\text{L1}}, \{\mathit{pid}\}\}$\};

    \item Let $\mathit{TX}_{\mathit{batch}} \in \{\mathit{TX}\}_{\text{L1}}$ be the latest unconfirmed checkpoint transaction within $T_{\text{challenge}}$ from the operator addressed to $\mathit{ContractAddr}$, and parse $\{\mathit{batch},\\ \mathit{resultState}\} \leftarrow \mathit{TX}_{\mathit{batch}}.\mathit{data}$; \cmt{Both batch and claimed post-state read from checkpoint data field}

    \item Let $b \leftarrow \mathsf{Check}(\mathit{batch}, \mathit{resultState}, \mathsf{lastConfirmedState})$; \cmt{Verifier-role check on published batch}

    \item \textbf{if} $b = \text{false}$:
    \begin{itemize}
        \item Let $\mathit{TX}_{\mathit{fraud}} \leftarrow \mathsf{GenF}(\mathit{batch}, \mathit{resultState}, \mathsf{lastConfirmedState})$; \cmt{Generate fraud-proof transaction}
        \item \textbf{if} $\mathit{TX}_{\mathit{fraud}} \neq \bot$: \textbf{send} \{SubmitL1, $\mathit{TX}_{\mathit{fraud}}$\} \textbf{to} $(\pcur, \scur, \mathcal{F}_{\text{ledger}}:\text{client}_{\text{L1}})$; \cmt{Fraud proof published on L1}
    \end{itemize}

    \item \textbf{if} $b = \text{true}$: $\mathsf{lastConfirmedState} \leftarrow \mathit{resultState}$;
\end{enumerate}

\noindent\emph{Verifier-role note.} The verifier role does not produce any I/O output to $\mathcal{E}$. Its only externally observable action is the publication of $\mathit{TX}_{\mathit{fraud}}$ on $\mathcal{F}_{\text{ledger}}$, which is an L1 transaction observable on the same footing as any other on-chain transaction.

}\end{functionality}\vspace{1em}
\begin{figure}
    \centering
    \includegraphics[width=\linewidth]{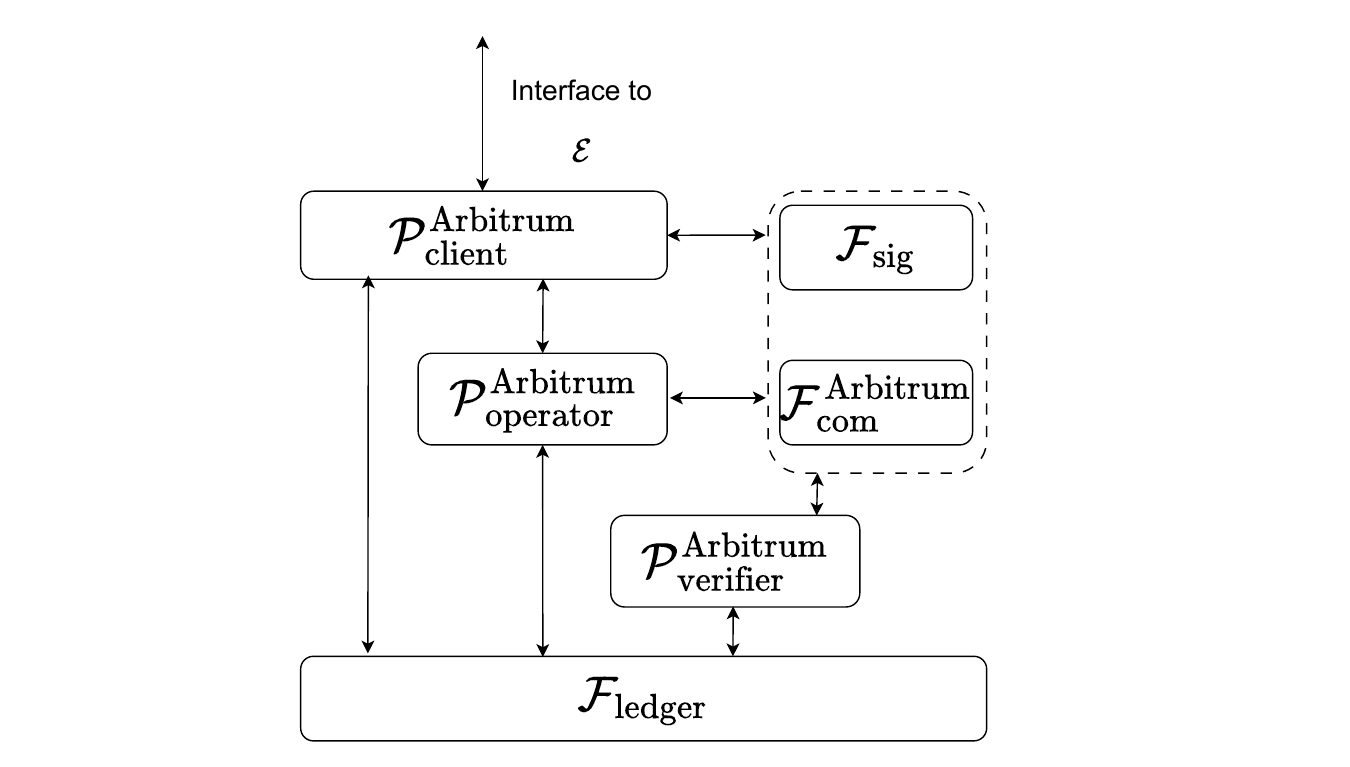}
    \caption{The Arbitrum Nitro protocol}
    \label{fig:arbitrum}
\end{figure}

\subsubsection{Operator Protocol}

The operator machine $\mathcal{M}^{\text{Arbitrum}}_{\text{operator}}$ executes received transaction requests and publishes results to L1 periodically (triggered by $\mathcal{F}^{\text{Arbitrum}}_{\text{com}}$ via \{UpdateRequest\}). We assume at least one honest operator.

\vspace{1em}\begin{functionality}{Description of protocol $\mathcal{P}^{\text{Arbitrum}}_{\text{operator}} = (\text{operator})$}{

\textbf{Participating roles:} \{operator\}

\noindent \textbf{Corruption model:} dynamic corruption

\noindent \textbf{Protocol parameters:}
\begin{itemize}
    \item $\mathsf{Val}(\mathit{TX}, \mathit{batch})$: transaction validity checking algorithm. Given $\mathit{TX} = (\mathit{sender},\\ \mathit{receiver}, \mathit{value}, \mathit{data})$ and the enclosing $\mathit{batch}$ of $(\mathit{TX}', \sigma')$ pairs, $\mathsf{Val}(\mathit{TX}, \mathit{batch})$ returns True if and only if:
    \begin{itemize}
        \item $\mathit{TX}$ is well-formed: $\mathit{sender}, \mathit{receiver} \in \{0,1\}^{*}$, $\mathit{value} \in \mathbb{N}_{\geq 0}$; \cmt{Correct transaction format}
        \item $\mathit{sender} \in \mathsf{identities}$; \cmt{Sender is a registered participant}
        \item Let $\mathit{bal}(\mathit{sender})$ denote the balance of $\mathit{sender}$ derived from $\mathsf{stateList}$. Then $\mathit{value} \leq \mathit{bal}(\mathit{sender})$; \cmt{Sender has sufficient balance}
        \item $\mathit{TX}$ does not conflict with any $\mathit{TX}' \in \mathsf{executedRequest}$; \cmt{No double-spending}
        \item $\exists\, (\mathit{TX}, \sigma) \in \mathit{batch}$ \textbf{s.t.} $\mathcal{F}_{\text{sig}}.\text{Verify}(\mathit{TX}, \sigma, \mathit{sender}) = \text{true}$; \cmt{$\mathit{batch}$ contains a valid signature of $\mathit{TX}$ from the sender}
    \end{itemize}
\end{itemize}

}\end{functionality}\vspace{.5em}

\begin{functionality}{Description of $\mathcal{M}^{\text{Arbitrum}}_{\text{operator}}$}{

\textbf{Implemented role(s):} \{operator\}

\noindent \textbf{Subroutines:} $\mathcal{F}_{\text{sig}}$: \{signer, verifier\}, $\mathcal{F}^{\text{Arbitrum}}_{\text{com}}$: com, $\mathcal{F}_{\text{ledger}}$: client\textsubscript{L1}

\noindent \textbf{Internal state:}
\begin{itemize}
    \item $\mathsf{requestQueue} \subset \{0,1\}^{*}$, $\mathsf{requestQueue} = \emptyset$ \hfill\cmt{Queued unexecuted requests}
    \item $\mathsf{stateList} \subset \{0,1\}^{*} \times \mathbb{N}_{\geq 0}$, $\mathsf{stateList} = \emptyset$ \hfill\cmt{Account states for all known clients, of form $(\mathit{pid}, \mathit{bal})$}
    \item $\mathsf{identities} \subset \{0,1\}^{*}$, $\mathsf{identities} = \emptyset$ \hfill\cmt{Registered identities}
\end{itemize}

\noindent \textbf{CheckID}(\emph{pid}, \emph{sid}, \emph{role}): Accept all messages with the same \emph{sid}. Only accept \emph{role} = operator.

\vspace{0.5em}
\noindent \textbf{Main:}

\vspace{0.5em}

\textbf{recv} \{Join, $\mathit{TX}_{\mathit{peg\text{-}in}}$, $\sigma$, $\mathit{pid}_{\mathit{call}}$\} \textbf{from} $(\pcur, \scur, \mathcal{F}^{\text{Arbitrum}}_{\text{com}}: \text{com})$:
\begin{enumerate}[itemsep=0.5em]
    \item Check that $\mathsf{Val}(\mathit{TX}_{\mathit{peg\text{-}in}}) = \text{true}$; \cmt{Peg-in transaction is well-formed}
    \item \textbf{send} \{Verify, $\mathit{TX}_{\mathit{peg\text{-}in}}$, $\sigma$\} \textbf{to} $(\pcur, \scur, \mathcal{F}_{\text{sig}}: \text{verifier})$;
    \textbf{wait for} \{VerResult, $b$\};
    \item \textbf{send} \{ReadL1\} \textbf{to} $(\pcur, \scur, \mathcal{F}_{\text{ledger}}:\text{client}_{\text{L1}})$;
    \textbf{wait for} \{ReadL1, $\mathit{L1ReadResult} = \{\{\mathit{TX}\}_{\text{L1}}, \mathit{State}_{\text{L1}}, \{\mathit{pid}\}\}$\};
    \item Check that $\mathit{TX}_{\mathit{deposit}} \in \{\mathit{TX}\}_{\text{L1}}$; \cmt{Deposit transaction is committed on L1}
    \item \textbf{if} $b = \text{true}$ and all checks pass:
    \begin{itemize}
        \item $\mathsf{identities}.\text{add}(\mathsf{pid}_{\mathit{sender}})$; \cmt{Register joining client}
        \item Parse $s_{\mathit{init}}$ from $\mathit{TX}_{\mathit{peg\text{-}in}}$; $\mathsf{stateList}.\text{add}(\mathsf{pid}_{\mathit{sender}}, s_{\mathit{init}})$; \cmt{Initialize client's account state}
        \item $\mathsf{requestQueue}.\text{add}(\mathit{TX}_{\mathit{peg\text{-}in}}, \sigma)$; \cmt{Enqueue peg-in with client signature}
    \end{itemize}
\end{enumerate}

\hrule
\vspace{0.5em}

\textbf{recv} \{Submit, $\mathit{TX} = (\mathit{sender}, \mathit{receiver}, \mathit{value}, \mathit{data})$, $\sigma$, $\mathit{pid}_{\mathit{call}}$\} \textbf{from} $(\pcur, \scur, \mathcal{F}^{\text{Arbitrum}}_{\text{com}}: \text{com})$:
\begin{enumerate}[itemsep=0.5em]
\item Check that $\mathsf{Val}(\mathit{TX}) = \text{true}$; \cmt{Transaction is well-formed and satisfies validity predicate}
\item \textbf{send} \{Verify, $\mathit{TX}$, $\sigma$\} \textbf{to} $(\pcur, \scur, \mathcal{F}_{\text{sig}}: \text{verifier})$,
\textbf{wait for} \{VerResult, $b$\};
\item \textbf{if} $b = \text{true}$:
\begin{itemize}
    \item $\mathsf{requestQueue}.\text{add}(\mathit{TX}, \sigma)$; \cmt{Signature valid; enqueue request with sender signature}
    \item Let $\mathit{bal}(\mathit{sender})$ be the balance of $\mathit{sender}$ in $\mathsf{stateList}$;
    \item $\mathsf{stateList}[\mathit{sender}].\mathit{bal} \leftarrow \mathit{bal}(\mathit{sender}) - \mathit{value}$; \cmt{Deduct from sender}
    \item $\mathsf{stateList}[\mathit{receiver}].\mathit{bal} \leftarrow \mathit{bal}(\mathit{receiver}) + \mathit{value}$; \cmt{Credit to receiver}
\end{itemize}
\end{enumerate}

\hrule
\vspace{0.5em}

\textbf{recv} \{Settlement, \texttt{collaborate}, $\mathit{TX}_{\mathit{peg\text{-}out}}$, $\sigma$, $\mathit{pid}_{\mathit{call}}$\} \textbf{from} $(\pcur, \scur, \mathcal{F}^{\text{Arbitrum}}_{\text{com}}: \text{com})$:
\begin{enumerate}[itemsep=0.5em]
\item Check that $\mathsf{Val}(\mathit{TX}_{\mathit{peg\text{-}out}}) = \text{true}$; \cmt{Peg-out transaction is well-formed}
\item \textbf{send} \{Verify, $\mathit{TX}_{\mathit{peg\text{-}out}}$, $\sigma$\} \textbf{to} $(\pcur, \scur, \mathcal{F}_{\text{sig}}: \text{verifier})$,
\textbf{wait for} \{VerResult, $b$\};
\item \textbf{if} $b = \text{true}$:
$\mathsf{requestQueue}.\text{add}(\mathit{TX}_{\mathit{peg\text{-}out}}, \sigma)$; \cmt{Signature valid; enqueue settlement request with client signature}
\end{enumerate}

\hrule
\vspace{0.5em}

\textbf{recv} \{UpdateRequest\} \textbf{from} $(\pcur, \scur, \mathcal{F}^{\text{Arbitrum}}_{\text{com}}: \text{com})$:
\begin{enumerate}[itemsep=0.5em]
\item Let $\mathit{batch} \leftarrow \mathsf{requestQueue}$; 

\item Let $\mathit{resultState} \leftarrow \mathsf{stateList}$; \cmt{Current state reflects all executed requests}
\item Construct the checkpoint transaction $\mathit{TX}_{\mathit{batch}} =\\ (\pcur,\, \mathit{ContractAddr},\, \epsilon,\, \{\mathit{batch}, \mathit{resultState}\})$; 

\item \textbf{send} \{SubmitL1, $\mathit{TX}_{\mathit{batch}}$\} \textbf{to} $(\pcur, \scur, \mathcal{F}_{\text{ledger}}:\text{client}_{\text{L1}})$; \cmt{Publish checkpoint transaction to L1}

\item $\mathsf{requestQueue} \leftarrow \emptyset$; 
\end{enumerate}

}\end{functionality}\vspace{1em}

\subsection{Arbitrum Nitro Ideal Functionality}
\label{apd:ArbitrumIdeal}

We define the ideal functionality for Arbitrum Nitro as
$\mathcal{F}^{\text{Arbitrum}}_{\text{layer2}}=(\mathcal{F}_{\text{client}}, \mathcal{F}_{\text{ledger}} \mid \mathcal{F}^{\text{Arbitrum}}_{\text{join}}, \mathcal{F}^{\text{Arbitrum}}_{\text{submit}}, \mathcal{F}^{\text{Arbitrum}}_{\text{update}}, \mathcal{F}^{\text{Arbitrum}}_{\text{read}}, \mathcal{F}^{\text{Arbitrum}}_{\text{settlement}},\\ \mathcal{F}^{\text{Arbitrum}}_{\text{updRnd}})$. The subroutines' ideal functionalities are defined as follows.

\subsubsection{Submit Functionality}
\label{apd:Arbitrumsubmit}

The submit subroutine $\mathcal{F}^{\text{Arbitrum}}_{\text{submit}}$ verifies that incoming requests are one of three valid types: \textit{(i)}~join requests, \textit{(ii)}~transaction submission requests (via either \texttt{collaborate} or \texttt{selfsubmit}), or \textit{(iii)}~settlement requests (via either \texttt{collaborate} or \texttt{escape-hatch}). The subroutine ensures semantic well-formedness.

\vspace{1em}\begin{functionality}{Description of subroutine $\mathcal{F}^{\text{Arbitrum}}_{\text{submit}} = (\text{submit})$}{

\textbf{Participating roles:} \{submit\}

\noindent \textbf{Corruption model:} incorruptible

}\end{functionality}\vspace{.5em}
\begin{functionality}{Description of $\mathcal{M}^{\text{Arbitrum}}_{\text{submit}}$}{

\textbf{Implemented role(s):} \{submit\}

\noindent \textbf{CheckID}(\emph{pid}, \emph{sid}, \emph{role}): Accept all messages with the same \emph{sid}.

\vspace{0.5em}
\noindent \textbf{Main:}

\vspace{0.5em}

\textbf{recv} \{Submit, $\mathit{request}$, $\mathsf{internalState}$\} \textbf{from} I/O:

\begin{enumerate}[itemsep=0.5em]

\item Check that $\mathit{request}$ belongs to one of the following valid types:
\begin{itemize}
    \item Join request: $\mathit{request} = (\text{Join}, s_{\mathit{init}}, \mathit{ContractAddr})$, \textbf{s.t.}:
    \begin{itemize}
        \item $s_{\mathit{init}} \in \mathbb{N}_{\geq 0}$; \cmt{Well-formed initial balance}
        \item $\mathit{ContractAddr} \in \{0,1\}^{*}$; \cmt{Valid contract address}
    \end{itemize}

    \item State update request: $\mathit{request} = (\text{Submit}, \mathit{mode}, \mathit{TX})$ with $\mathit{mode} \in \{\texttt{collaborate},\\ \texttt{selfsubmit}\}$ and $\mathit{TX} = (\mathit{sender}, \mathit{receiver}, \mathit{value}, \mathit{data})$, \textbf{s.t.}:
    \begin{itemize}
        \item $\mathit{sender}, \mathit{receiver} \in \{0,1\}^{*}$ and $\mathit{value} \in \mathbb{N}_{\geq 0}$; \cmt{Well-formed transaction}
        \item $\mathit{sender} \in \mathsf{identities}$; \cmt{Sender is a registered participant}
        \item Let $\mathit{bal}(\mathit{sender})$ denote the balance of $\mathit{sender}$ derived from $\mathsf{stateList}$. Then $\mathit{value} \leq \mathit{bal}(\mathit{sender})$; \cmt{Sender has sufficient balance}
        \item $\nexists\, \mathit{TX}' \in \mathsf{executedRequest} \cup \mathsf{requestQueue}$ that conflicts with $\mathit{TX}$; \cmt{No double-spending}
    \end{itemize}

    \item Settlement request: $\mathit{request} \in \{(\text{Settlement}, \texttt{collaborate}),\\ (\text{Settlement}, \texttt{escape-hatch})\}$;
\end{itemize}

\item Check that the join request has been executed (i.e., $\mathsf{stateList} \neq \emptyset$) or $\mathit{request}$ is a join request; \cmt{No action before joining}

\item Check that $\nexists$ a settlement request from the caller already pending in $\mathsf{requestQueue}$; \cmt{At most one pending settlement per client}

\item \textbf{if} all checks pass: \textbf{reply} \{Submit, true\};
\item \textbf{else}: \textbf{reply} \{Submit, false\};
\end{enumerate}

}\end{functionality}\vspace{1em}

\subsubsection{Join Functionality}
\label{apd:Arbitrumjoin}

The join subroutine $\mathcal{F}^{\text{Arbitrum}}_{\text{join}}$ validates the rollup joining procedure. It checks that \textit{(i)}~the peg-in transaction is consistent with the initial state, and \textit{(ii)}~both the deposit and peg-in transactions are committed on L1 without fraud proof during $T_{\text{challenge}}$.

\vspace{1em}
\begin{functionality}{Description of subroutine $\mathcal{F}^{\text{Arbitrum}}_{\text{join}} = (\text{join})$}{

\textbf{Participating roles:} \{join\}

\noindent \textbf{Corruption model:} incorruptible

\noindent \textbf{Protocol parameters:}
\begin{itemize}
    \item $T_{\text{challenge}}$: challenge period duration
\end{itemize}

}\end{functionality}\vspace{.5em}
\begin{functionality}{Description of $\mathcal{M}^{\text{Arbitrum}}_{\text{join}}$}{

\textbf{Implemented role(s):} \{join\}

\noindent \textbf{CheckID}(\emph{pid}, \emph{sid}, \emph{role}): Accept all messages with the same \emph{sid}.

\vspace{0.5em}
\noindent \textbf{Main:}

\vspace{0.5em}

\textbf{recv} \{Join, $\mathit{Attachment}=\{s_{\mathit{init}}, \mathit{TX}_{\mathit{peg\text{-}in}}\}$, $\mathsf{internalState}$\} \textbf{from} I/O:

\begin{enumerate}[itemsep=0.5em]

\item Parse $\mathit{TX}_{\mathit{peg\text{-}in}} = (\mathit{pid}_{\mathit{init}}, \epsilon, s_{\mathit{init}}', \epsilon)$. Check:
\begin{itemize}
    \item $s_{\mathit{init}}' = s_{\mathit{init}}$; \cmt{Peg-in carries the agreed initial state}
    \item $\mathit{pid}_{\mathit{init}} \in \mathsf{identities}$; \cmt{Initiator is a registered participant}
\end{itemize}

\item \textbf{send} \{ReadL1\} \textbf{to} $(\pcur, \scur, \mathcal{F}_{\text{ledger}}: \text{client}_{\text{L1}})$,
\textbf{wait for} \{ReadL1, $\mathit{L1ReadResult} = \{\{\mathit{TX}\}_{\text{L1}}, \mathit{State}_{\text{L1}}, \{\mathit{pid}\}\}$\};

\item Check the following conditions on $\mathit{output}$:
\begin{itemize}
    \item $\exists\, \mathit{TX}_{\mathit{deposit}} \in \{\mathit{TX}\}_{\text{L1}}$ \textbf{s.t.} $\mathit{TX}_{\mathit{deposit}}$ is a deposit from $\mathit{pid}_{\mathit{init}}$ consistent with $s_{\mathit{init}}$; \cmt{Deposit transaction is committed on L1}
    \item $\mathit{TX}_{\mathit{peg\text{-}in}} \in \{\mathit{TX}\}_{\text{L1}}$; \cmt{Peg-in transaction is committed on L1}
    \item $s_{\mathit{init}} \in \mathit{State}_{\text{L1}}$; \cmt{Initial state is committed on L1}
\end{itemize}

\item \textbf{if} all checks pass: \textbf{reply} \{Join, true, $s_{\mathit{init}}$\};

\end{enumerate}

}\end{functionality}\vspace{1em}

\begin{lemma}
\label{lem:OpenA}
    The subroutine $\mathcal{F}^{\text{Arbitrum}}_{\text{join}}$ guarantees \emph{correct L2 initialization}.
\end{lemma}

\begin{proof}
    We prove by contradiction. Suppose correct L2 initialization is violated, then according to the definition in Section~\ref{sec:def}: either (i)~some honest participant receives a successful initialization output whose committed initial state differs from the proposed state, or (ii)~the proposed state is not eventually committed on L1. However, $\mathcal{F}^{\text{Arbitrum}}_{\text{join}}$ outputs success only if (i)~the peg-in transaction is consistent with $s_{\mathit{init}}$ (Step~1)), and (ii)~$\mathit{TX}_{\mathit{peg\text{-}in}}$ and $s_{\mathit{init}}$ is committed in Steps~2). As long as $\mathcal{F}_{\text{sig}}$ prevents signature forgery and at least one honest verifier publishes fraud proofs for any incorrect publication, no adversary can cause $\mathcal{F}^{\text{Arbitrum}}_{\text{join}}$ to accept a mismatched state. This contradicts the assumption.
\end{proof}

\subsubsection{Update Functionality}

The update subroutine $\mathcal{F}^{\text{Arbitrum}}_{\text{update}}$ verifies state update request from the simulator. It checks that \textit{(i)}~the new state is correctly computed from the request batch, \textit{(ii)}~no conflict with previously executed requests, \textit{(iii)}  and \textit{(iv)}~the new state survives $T_{\text{challenge}}$ without fraud proof. If incorrect execution occurs and no fraud proof is published, the functionality halts.

\vspace{1em}\begin{functionality}{Description of subroutine $\mathcal{F}^{\text{Arbitrum}}_{\text{update}} = (\text{update})$}{

\textbf{Participating roles:} \{update\}

\noindent\textbf{Corruption model:} incorruptible

}\end{functionality}\vspace{.5em}

\begin{functionality}{Description of $\mathcal{M}^{\text{Arbitrum}}_{\text{update}}$}{

\textbf{Implemented role(s):} \{update\}

\noindent \textbf{CheckID}(\emph{pid}, \emph{sid}, \emph{role}): Accept all messages with the same \emph{sid}.

\vspace{0.5em}
\noindent \textbf{Main:}

\vspace{0.5em}

\textbf{recv} \{Update, $\mathit{Attachment}=\{\mathit{newState}, \mathit{requestBatch}\}\}$, $\mathsf{internalState}$\} \textbf{from} I/O:

\begin{enumerate}[itemsep=0.5em]

\item Parse $\mathit{requestBatch} = \{\mathit{TX}_1, \ldots, \mathit{TX}_m\}$. Check that $\forall\, j \in \{1, \ldots, m\}$:
\begin{itemize}
    \item $\nexists\, \mathit{TX}' \in \mathsf{executedRequest}$ that conflicts with $\mathit{TX}_j$; \cmt{No conflict with already-executed requests}
    \item $\mathit{TX}_j$ does not conflict with $\mathsf{stateList}$; \cmt{No conflict with current state}
\end{itemize}

\item Check that $\forall\, \mathit{TX}_j = (\mathit{sender}_j,\\ \mathit{receiver}_j, \mathit{value}_j, \mathit{text}_j) \in \mathit{requestBatch}$ with $(\mathit{sender}_j, \scur, \text{client}) \notin \\\mathsf{CorruptionSet}$:
\begin{itemize}
    \item $\exists\, \mathit{TX}' \in \mathsf{requestQueue}$ \textbf{s.t.} $\mathit{TX}' = \mathit{TX}_j$; \cmt{Each honest sender's transaction is recorded in the request queue}
\end{itemize}

\item Check that $\mathit{newState}$ is the correct output of sequentially executing $\mathit{TX}_1, \ldots, \mathit{TX}_m$ starting from $\mathsf{stateList}$; \cmt{State transition is correctly computed}

\item \textbf{send} \{ReadL1\} \textbf{to} $(\pcur, \scur, \mathcal{F}_{\text{ledger}}: \text{client}_{\text{L1}})$,
\textbf{wait for} \{ReadL1, $\mathit{L1ReadResult} = \{\{\mathit{TX}\}_{\text{L1}}, \mathit{State}_{\text{L1}}, \{\mathit{pid}\}\}$\};

\item Check in $\mathit{L1ReadResult}$, that $\mathit{requestBatch}, \mathit{newState} \subseteq \{\mathit{TX}\}_{\text{L1}}$ and they are the latest commitment; \cmt{Batch is published on L1}

\item \textbf{if} all checks pass: \textbf{reply} \{Update, true, $\mathit{newState}$, $\mathit{requestBatch}$\};

\end{enumerate}

}\end{functionality}\vspace{1em}

\subsubsection{Read Functionality}

The read subroutine $\mathcal{F}^{\text{Arbitrum}}_{\text{read}}$ queries the L1 blockchain and reconciles the result with $\mathsf{internalState}$ to produce the latest state and executed transactions to the clients according to L1 status.

\vspace{1em}\begin{functionality}{Description of subroutine $\mathcal{F}^{\text{Arbitrum}}_{\text{read}} = (\text{read})$}{

\textbf{Participating roles:} \{read\} \\ \textbf{Corruption model:} incorruptible

}\end{functionality}\vspace{.5em}

\begin{functionality}{Description of $\mathcal{M}^{\text{Arbitrum}}_{\text{read}}$}{

\textbf{Implemented role(s):} \{read\}

\noindent \textbf{CheckID}(\emph{pid}, \emph{sid}, \emph{role}): Accept all messages with the same \emph{sid}.

\vspace{0.5em}
\noindent \textbf{Main:}

\vspace{0.5em}

\textbf{recv} \{Read, $\mathsf{internalState}$\} \textbf{from} I/O:
\begin{enumerate}[itemsep=0.5em]

\item \textbf{send} \{ReadL1\} \textbf{to} $(\pcur, \scur, \mathcal{F}_{\text{ledger}}:\text{client}_{\text{L1}})$,
\textbf{wait for} \{ReadL1, $\mathit{L1ReadResult} = \{\{\mathit{TX}\}_{\text{L1}}, \mathit{State}_{\text{L1}}, \{\mathit{pid}\}\}$\};

\item Let $\mathsf{TX}^{\mathit{final}}_{\text{L1}} \leftarrow \{e \in \{\mathit{TX}\}_{\text{L1}} \mid e \text{ is L1-committed L2 published transaction and result}\}$ and let $\mathit{State}^{\mathit{final}}_{\text{L1}}$ be the latest L1-final L2 state derived from $\mathsf{TX}^{\mathit{final}}_{\text{L1}}$;

\item Let $\mathit{ReadResult}.\mathsf{executedRequest} \leftarrow \mathsf{executedRequest} \cap \mathsf{TX}^{\mathit{final}}_{\text{L1}}$;

\item Let $\mathit{ReadResult}.\mathsf{stateList} \leftarrow \mathsf{stateList} \cap \mathit{State}^{\mathit{final}}_{\text{L1}}$;

\item Let $\mathit{ReadResult}.\mathsf{onchainState} \leftarrow \mathsf{onchainState} \cap \mathit{State}^{\mathit{final}}_{\text{L1}}$;

\item Let $\mathit{ReadResult}.\mathsf{identities} \leftarrow \mathsf{identities} \cap \{\mathit{pid}\}$;

\item \textbf{reply} \{Read, $\mathit{ReadResult}$\};\cmt{$\mathsf{internalState}$ read result consistenct with L1 status}
\end{enumerate}

}\end{functionality}\vspace{1em}

\begin{lemma}
    The subroutines $\mathcal{F}^{\text{Arbitrum}}_{\text{update}}$ and $\mathcal{F}^{\text{Arbitrum}}_{\text{read}}$ jointly guarantee $(f_{L_2}+f_{L_1})$-safety.
\end{lemma}

\begin{proof}
    Suppose $\{f_{L_1}, f_{L_2}\}$-\emph{safety} is violated. By Definition~\ref{def:safety}, this means either self-consistency or view-consistency is violated for some honest client. The local execution-correctness check in $\mathcal{F}^{\text{Arbitrum}}_{\text{update}}$, gurantees by $f_{L_2}$ corruption, rules out incorrect state transitions entering $\mathsf{internalState}$, since updates are accepted only when committed on L1 either after surviving the challenge period $T_{\text{challenge}}$ or with $2f{+}1$ valid watchtower attestations. Additionally, $\mathcal{F}^{\text{Arbitrum}}_{\text{read}}$ derives every read result from $\mathsf{internalState}$ consistent with the underlying L1 transcript via $\mathcal{F}_{\text{ledger}}$. Under the assumption that $\mathcal{F}_{\text{ledger}}$ is safe under at most $f_{L_1}$ corrupted L1 participants, the L1 transcript is itself self-consistent and view-consistent, contradicting the assumption that safety was violated.
\end{proof}

\subsubsection{Settlement Functionality}

The settlement subroutine $\mathcal{F}^{\text{Arbitrum}}_{\text{settlement}}$ verifies that the peg-out transaction is consistent with the latest state and committed on L1 without fraud proof during $T_{\text{challenge}}$.

\vspace{1em}\begin{functionality}{Description of subroutine $\mathcal{F}^{\text{Arbitrum}}_{\text{settlement}} = (\text{settlement})$}{

\textbf{Participating roles:} \{settlement\}

\noindent\textbf{Corruption model:} incorruptible

}\end{functionality}\vspace{.5em}

\begin{functionality}{Description of $\mathcal{M}^{\text{Arbitrum}}_{\text{settlement}}$}{

\textbf{Implemented role(s):} \{settlement\}

\noindent \textbf{CheckID}(\emph{pid}, \emph{sid}, \emph{role}): Accept all messages with the same \emph{sid}.

\vspace{0.5em}
\noindent \textbf{Main:}

\vspace{0.5em}

\textbf{recv} \{Settlement, $\mathit{Attachment}=\{\mathit{TX}_{\mathit{peg\text{-}out}}\}$, $\mathsf{internalState}$\} \textbf{from} I/O:

\begin{enumerate}[itemsep=0.5em]

\item Parse $\mathit{TX}_{\mathit{peg\text{-}out}} = (\mathit{ContractAddr}, \mathit{pid}_{\mathit{client}}, s_{\mathit{settle}}, \epsilon)$. Let $s_{\mathit{settle}}$ be the latest state from $\mathsf{stateList}$. Check:
\begin{itemize}
    \item $s_{\mathit{settle}} = s_{\mathit{settle}}$; \cmt{Settlement carries the latest valid state}
    \item $\mathit{pid}_{\mathit{client}} \in \mathsf{identities}$; \cmt{Settling client is a registered participant}
\end{itemize}

\item \textbf{send} \{ReadL1\} \textbf{to} $(\pcur, \scur, \mathcal{F}_{\text{ledger}}: \text{client}_{\text{L1}})$,
\textbf{wait for} \{ReadL1, $\mathit{L1ReadResult} = \{\{\mathit{TX}\}_{\text{L1}}, \mathit{State}_{\text{L1}}, \{\mathit{pid}\}\}$\};

\item Check in $\mathit{L1ReadResult}$:
\begin{itemize}
    \item $\mathit{TX}_{\mathit{peg\text{-}out}} \in \{\mathit{TX}\}_{\text{L1}}$; \cmt{Peg-out transaction is committed on L1}
    \item $s_{\mathit{settle}} \in \mathit{State}_{\text{L1}}$; \cmt{Latest L2 state is recorded on L1}
\end{itemize}

\item \textbf{if} all checks pass: \textbf{reply} \{Settlement, true, $s_{\mathit{settle}}$\};

\end{enumerate}

}\end{functionality}\vspace{1em}

\begin{lemma}
\label{lem:SettleA}
    The subroutine $\mathcal{F}^{\text{Arbitrum}}_{\text{settlement}}$ guarantees \emph{correct L2 settlement}.
\end{lemma}

\begin{proof}
    According to the definition, \emph{correct L2 settlement} is violated if either (i)~an honest client receives a successful settlement output even though the output state is neither the latest state nor a state agreed upon by honest clients, or (ii)~the output state is not committed on the L1 blockchain. However, $\mathcal{F}^{\text{Arbitrum}}_{\text{settlement}}$ outputs success only after verifying that the committed state matches $s_{\mathit{settle}}$ (Step~1) and that the peg-out transaction is committed on L1 without fraud proof (Step~2). As long as $\mathcal{F}_{\text{sig}}$ prevents signature forgery, $\mathcal{F}_{\text{ledger}}$ provides a secure L1 ledger, and at least one honest verifier publishes fraud proofs for incorrect publications, no adversary can cause $\mathcal{F}^{\text{Arbitrum}}_{\text{settlement}}$ to accept a mismatched state. This contradicts the assumption.
\end{proof}

\subsubsection{Update Round Functionality}

The clock subroutine $\mathcal{F}^{\text{Arbitrum}}_{\text{updRnd}}$ 
enforces three liveness conditions. First, it ensures that 
self-submitted transactions from honest clients, which bypass the 
operator and are posted directly to L1, are reflected in the global 
state within $T_{L_1}$ rounds. Second, it ensures that escape-hatch 
settlement requests from honest clients are committed within 
$T_{L_1} + T_{\text{challenge}}$ rounds, accounting for both L1 
inclusion and the fraud-proof challenge period. Third, it ensures 
that at least one honest operator publishes a transaction batch to 
L1 within every $T_{\text{period}}$-round window, which guarantees 
that collaboratively submitted transactions are eventually included. 
Collaborative submit and collaborative settlement requests, which 
depend on operator cooperation, receive liveness guarantees only 
indirectly through this periodic publication requirement.

\vspace{1em}\begin{functionality}{Description of subroutine 
  $\mathcal{F}^{\text{Arbitrum}}_{\text{updRnd}} = 
  (\text{updRnd})$}{

\textbf{Participating roles:} \{updRnd\}

\noindent\textbf{Corruption model:} incorruptible

\noindent \textbf{Protocol parameters:}
\begin{itemize}
    \item $T_{L_1}$: L1 inclusion delay bound for regular transaction;

    \item $T_{\text{challnge}}$: the challenge period;

\end{itemize}

}\end{functionality}\vspace{.5em}

\begin{functionality}{Description of 
  $\mathcal{M}^{\text{Arbitrum}}_{\text{updRnd}}$}{

\textbf{Implemented role(s):} \{updRnd\}

\noindent \textbf{CheckID}(\emph{pid}, \emph{sid}, 
\emph{role}): Accept all messages with the same \emph{sid}.

\vspace{0.5em}
\noindent \textbf{Main:}

\vspace{0.5em}

\textbf{recv} \{UpdateRound, $\mathsf{internalState}$\} 
\textbf{from} I/O:

\begin{enumerate}[itemsep=0.5em]

\item Parse $\mathsf{round}$, $\mathsf{requestQueue}$, 
and $\mathsf{CorruptionSet}$ from $\mathsf{internalState}$.

\item \textbf{send} \{ReadL1\} \textbf{to} 
$(\pcur, \scur, \mathcal{F}_{\text{ledger}}: 
\text{client}_{\text{L1}})$,
\textbf{wait for} \{ReadL1, $\mathit{L1ReadResult} = 
\{\{\mathit{TX}\}_{\text{L1}}, \mathit{State}_{\text{L1}}, 
\{\mathit{pid}\}\}$\};

\item \textbf{// Check 1: Self-submitted transaction 
liveness}

For each $(\_,\, t_{\mathrm{sub}},\, \mathit{pid},\, q) 
\in \mathsf{requestQueue}$ such that 
$\mathit{pid} \notin \mathsf{CorruptionSet}$ and 
$\texttt{getType}(q) = \texttt{selfsubmit}$:
\begin{itemize}
    \item \textbf{if} $\mathsf{round} + 1 > 
    t_{\mathrm{sub}} + T_{L_1} + T_{\text{challenge}}$:
    \textbf{reply} \{UpdateRound, false\};
    \hfill\cmt{Self-submitted tx liveness violated}
\end{itemize}

\item \textbf{// Check 2: Escape-hatch settlement 
liveness}

For each $(\_,\, t_{\mathrm{sub}},\, \mathit{pid},\, q) 
\in \mathsf{requestQueue}$ such that 
$\mathit{pid} \notin \mathsf{CorruptionSet}$ and 
$\texttt{getType}(q) = \texttt{escape-hatch}$:
\begin{itemize}
    \item \textbf{if} $\mathsf{round} + 1 > 
    t_{\mathrm{sub}} + T_{L_1} + T_{\text{challenge}}$:
    \textbf{reply} \{UpdateRound, false\};
    \hfill\cmt{Escape-hatch settlement liveness violated}
\end{itemize}




\item \textbf{reply} \{UpdateRound, true\};
\hfill\cmt{All liveness checks passed}

\end{enumerate}

}\end{functionality}\vspace{1em}

\begin{lemma}
\label{lem:LiveA}
The ideal functionality 
$\mathcal{F}^{\text{Arbitrum}}_{\text{layer2}}$ guarantees 
the following liveness properties:
\begin{enumerate}
    \item \textbf{(Protocol joining.)}
    $(f_{L_2} + f_{L_1},\, T_{L_2} + T_{L_1} + T_{\text{challenge}})$-liveness: under $f_{L_2}$ and $f_{L_1}$ corruption, an accepted Join request results in output $\{\text{Join}, s_{\mathit{init}}\}$ within $T_{L_2} + T_{L_1} + T_{\text{challenge}}$. The off-chain latency $T_{L_2}$ is not enforced by $\mathcal{F}^{\text{Arbitrum}}_{\text{updRnd}}$ due to asynchronous communication; the L1 latency $T_{L_1}$ and $T_{\text{challenge}}$ is inherited from $\mathcal{F}_{\text{ledger}}$.

    \item \textbf{(State update.)} The functionality offers two state update liveness.
    \begin{itemize}
        \item \emph{Through operator.} $(f_{L_2} + f_{L_1},\, T_{L_2} + T_{L_1} + T_{\text{challenge}})$-liveness: under $f_{L_2}$ and $f_{L_1}$ corruption, an accepted update transaction request results in being added to\\ $\mathsf{executedRequest}$ and changes in $\mathsf{stateList}$ within $T_{L_2} + T_{L_1} + T_{\text{challenge}}$.

        \item \emph{Self-submit.} $(f_{L_1},\, T_{L_1} + T_{\text{challenge}})$-liveness: under $f_{L_1}$ corruption, an accepted self-submit update transaction request results in being added to $\mathsf{executedRequest}$ and changes in $\mathsf{stateList}$ within $T_{L_1} + T_{\text{challenge}}$.
    \end{itemize}
    The off-chain latency $T_{L_2}$ is not enforced by $\mathcal{F}^{\text{Arbitrum}}_{\text{updRnd}}$ due to asynchronous communication; the L1 latency $T_{L_1}$ and $T_{\text{challenge}}$ is inherited from $\mathcal{F}_{\text{ledger}}$.

    \item \textbf{(Settlement.)} The functionality offers two settlement liveness.
    \begin{itemize}
        \item \emph{Collaborative.} $(f_{L_2} + f_{L_1},\, T_{L_2} + T_{L_1} + T_{\text{challenge}})$-liveness: under $f_{L_2}$ and $f_{L_1}$ corruption, an accepted settlement request results in $\mathsf{onchainState}$ reflecting $s_{\mathit{settle}}$ and\\ $\{\text{Settlement}, s_{\mathit{settle}}\}$ output within $T_{L_2} + T_{L_1} + T_{\text{challenge}}$.
        
    \end{itemize}
\end{enumerate}
\end{lemma}

\begin{proof}
We argue each property separately.

\medskip
\noindent\textbf{(1) Protocol joining.}
Suppose liveness for join requests is violated. $\mathcal{F}^{\text{Arbitrum}}_{\text{join}}$ outputs success only if the peg-in transaction is consistent with $s_{\mathit{init}}$ and both the deposit and peg-in are committed on L1 without fraud proof during $T_{\text{challenge}}$. The join procedure is fully captured by $\mathcal{F}^{\text{Arbitrum}}_{\text{join}}$ waiting for the simulator's trigger, since $\mathcal{F}^{\text{Arbitrum}}_{\text{updRnd}}$ imposes no constraint on the off-chain latency $T_{L_2}$ controlled by the adversary. The L1-commitment check resolves within latency $T_{L_1}$ plus the challenge window $T_{\text{challenge}}$. The join is therefore accessible via a read request within $T_{L_2} + T_{L_1} + T_{\text{challenge}}$, yielding $(f_{L_2} + f_{L_1},\, T_{L_2} + T_{L_1} + T_{\text{challenge}})$-liveness and contradicting the assumption.

\medskip
\noindent\textbf{(2) State update.}
Suppose liveness for state update requests is violated. $\mathcal{F}^{\text{Arbitrum}}_{\text{update}}$ outputs success only if the new state is correctly computed from the request batch, no conflict exists with previously executed requests, each honest sender's transaction appears in $\mathsf{requestQueue}$, and the batch is committed on L1 without fraud proof. We distinguish two cases based on the path taken by the honest client submitting the request at time~$t$.

\emph{Collaborative path.} The collaborative path is fully captured by $\mathcal{F}^{\text{Arbitrum}}_{\text{update}}$ waiting for the simulator's trigger, since $\mathcal{F}^{\text{Arbitrum}}_{\text{updRnd}}$ imposes no constraint on the off-chain latency $T_{L_2}$ controlled by the adversary. The L1-commitment check resolves within latency $T_{L_1}$ plus the challenge window $T_{\text{challenge}}$. The request is therefore accessible via a read request within $T_{L_2} + T_{L_1} + T_{\text{challenge}}$, yielding $(f_{L_2} + f_{L_1},\, T_{L_2} + T_{L_1} + T_{\text{challenge}})$-liveness.

\emph{Self-submit path.} $\mathcal{F}^{\text{Arbitrum}}_{\text{updRnd}}$ captures the requirement for round update when the client bypasses the operator and submits the transaction directly to L1 via the self-submit mechanism. By L1 liveness, the transaction is committed on L1 within $T_{L_1}$. The honest client acting as verifier ensures only correct state transitions survive $T_{\text{challenge}}$. The request is therefore accessible via a read request within $T_{L_1} + T_{\text{challenge}}$, and the adversary cannot influence the bound through delaying off-chain communication. This yields $(f_{L_1},\, T_{L_1} + T_{\text{challenge}})$-liveness, depending only on L1 liveness and the challenge period.

In both cases, the corresponding time bound is achieved, contradicting the assumption.

\medskip
\noindent\textbf{(3) Settlement.}
Suppose liveness for settlement requests is violated. $\mathcal{F}^{\text{Arbitrum}}_{\text{settlement}}$ outputs success only if the peg-out transaction is consistent with the latest state and is committed on L1 without a fraud proof during $T_{\text{challenge}}$.

\emph{Collaborative path.} The collaborative path is fully captured by $\mathcal{F}^{\text{Arbitrum}}_{\text{settlement}}$ waiting for the simulator's trigger, since $\mathcal{F}^{\text{Arbitrum}}_{\text{updRnd}}$ imposes no constraint on the off-chain latency $T_{L_2}$ controlled by the adversary. The L1-commitment check resolves within latency $T_{L_1}$ plus the challenge window $T_{\text{challenge}}$. The settlement is therefore accessible via a read request within $T_{L_2} + T_{L_1} + T_{\text{challenge}}$, yielding $(f_{L_2} + f_{L_1},\, T_{L_2} + T_{L_1} + T_{\text{challenge}})$-liveness.

\emph{Escape-hatch path.} $\mathcal{F}^{\text{Arbitrum}}_{\text{updRnd}}$ captures the requirement for round update when the client bypasses the operator and publishes the peg-out directly to L1, so settlement liveness holds even if every operator is corrupted. By L1 liveness, the peg-out is committed on L1 within $T_{L_1}$ plus $T_{\text{challenge}}$. The settlement is therefore accessible via a read request within $T_{L_1} + T_{\text{challenge}}$, and the adversary cannot influence the bound through delaying off-chain communication. This yields $(f_{L_1},\, T_{L_1} + T_{\text{challenge}})$-liveness, depending only on L1 liveness and the challenge period.

In both cases, the corresponding time bound is achieved, contradicting the assumption.

\medskip
\noindent In summary, Arbitrum's liveness guarantees derive from three mechanisms rather than from $\mathcal{F}^{\text{Arbitrum}}_{\text{updRnd}}$: (i)~L1 liveness ensures on-chain inclusion within $T_{L_1}$; (ii)~the existence of at least one honest operator enables the collaborative path with bound $T_{L_2} + T_{L_1} + T_{\text{challenge}}$, while the self-submit and escape-hatch mechanisms provide a fallback that depends only on L1 liveness, with bound $T_{L_1} + T_{\text{challenge}}$; (iii)~the honest client acting as verifier ensures that only correctly computed state transitions survive the challenge period $T_{\text{challenge}}$. Settlement enjoys the strongest corruption tolerance since the escape-hatch mechanism depends only on L1 liveness.
\end{proof}


\subsection{Security Proof}
\label{apd:ArbitrumProof}

After proposing the ideal functionality and real-world implementation, we now show the security of the Arbitrum Nitro rollup protocol. To start with we first show the ideal functionality captures all the security properties:

\ThmidealArbitrum*

\begin{proof}
    According to Lemma~\ref{lem:OpenA}--\ref{lem:LiveA}, the ideal functionality\\ $\mathcal{F}^{\text{Arbitrum}}_{\text{layer2}}$ guarantees all security properties.
\end{proof}

After defining the ideal functionality $\mathcal{F}^{\text{Arbitrum}}_{\text{layer2}}$, we prove that the real Arbitrum protocol iUC-realizes it. The proof is done in 7 steps of successive game replacement. We first define a simulator $\mathcal{S}_{\text{Arbitrum}}$ that internally simulates a full run of $\mathcal{P}^{\text{Arbitrum}}$, and a dummy functionality $\mathcal{F}^{\text{Arbitrum}}_{\text{dummy}}$ that relays messages between $\mathcal{E}$ and $\mathcal{S}_{\text{Arbitrum}}$. This base ideal execution yields the same distribution of messages to $\mathcal{E}$ as the real execution. We use the execution ensemble $\mathsf{EXEC}$ to denote the messages observed by $\mathcal{E}$, including the output for the input request and the leakage to adversary, when interacting with adversary $\mathcal{A}$, real protocol $\mathcal{P}$, ideal functionality $\mathcal{F}$ and simulator $\mathcal{S}$ in the proofs that follow.

In each subsequent step, we incrementally add interaction between the simulator and the ideal functionality and extend the functionality, thereby forming the corresponding subroutines, while keeping the changes transparent to both $\mathcal{E}$ and $\mathcal{A}$. We continue until we obtain the target functionality $\mathcal{F}^{\text{Arbitrum}}_{\text{layer2}}$ defined by our framework. At every step, the simulator is adjusted so that the new ideal execution is indistinguishable from the previous one. For each transition, we discuss the differences relative to the prior step and prove that, given the same inputs from $\mathcal{E}$ and $\mathcal{A}$, the resulting outputs remain the same up to computational indistinguishability under any adversarial influence strategy.

We begin by defining the dummy ideal functionality
$\mathcal{F}^{\text{Arbitrum}}_{\text{dummy}}=(\mathcal{F}_{\text{client-dummy}}, \mathcal{F}_{\text{ledger}} \mid \perp)$
and the simulator $\mathcal{S}_{\text{Arbitrum}}$ as follows. The dummy functionality forwards every request from $\mathcal{E}$ to the simulator and returns the simulator's response unchanged. The simulator $\mathcal{S}_{\text{Arbitrum}}$ runs $\mathcal{P'}^{\text{Arbitrum}}$ internally and produces identical outputs.

\vspace{1em}\begin{functionality}{Description of $\mathcal{M}_{\text{client-dummy}}$ of $\mathcal{F}^{\text{Arbitrum}}_{\text{dummy}}$}{

\textbf{Implemented role(s):} \{client-dummy\}

\noindent \textbf{Main:}

\textbf{recv} any request \textbf{from} I/O:
\begin{enumerate}[itemsep=0.5em]
    \item Forward request to $\mathcal{S}$ through NET;
\end{enumerate}

\hrule
\vspace{0.5em}

\textbf{recv} any message \textbf{from} NET:

\begin{enumerate}
    \item Output the message to $\mathcal{E}$ through I/O;
\end{enumerate}

}\end{functionality}\vspace{.5em}

\begin{functionality}{Description of simulator $\mathcal{S}_{\text{Arbitrum}}$}{

$\mathcal{S}_{\text{Arbitrum}}$ internally simulates $\mathcal{P'}^{\text{Arbitrum}}$, a copy of the real protocol $\mathcal{P}^{\text{Arbitrum}}$ as defined in Section~\ref{apd:ArbitrumReal}.

\vspace{0.5em}
\textbf{Real protocol simulation:}
\begin{itemize}[itemsep=0.3em]
    \item $\mathcal{S}_{\text{Arbitrum}}$ simulates honest clients', operators', and verifier's actions inside $\mathcal{P'}^{\text{Arbitrum}}$ according to the real protocol.
    \item If participants are corrupted, $\mathcal{S}_{\text{Arbitrum}}$ leaks the corresponding messages sent to corrupted entities to the adversary $\mathcal{A}$ and continues simulating honest parties based on $\mathcal{A}$'s instructions.
\end{itemize}

\textbf{Network communication from/to the environment:}
\begin{itemize}[itemsep=0.3em]
    \item Messages that $\mathcal{S}_{\text{Arbitrum}}$ receives on the network interface (from $\mathcal{E}$/$\mathcal{A}$) are forwarded to $\mathcal{P'}^{\text{Arbitrum}}$.
    \item Messages sent by $\mathcal{P'}^{\text{Arbitrum}}$ on its network interface (to $\mathcal{E}$/$\mathcal{A}$) are forwarded to the environment.
\end{itemize}

\textbf{Input requests and outputs:}
\begin{itemize}[itemsep=0.3em]
    \item Unlike $\mathcal{P}^{\text{Arbitrum}}$, which receives inputs directly from $\mathcal{E}$, the simulation $\mathcal{P'}^{\text{Arbitrum}}$ receives requests forwarded from $\mathcal{F}^{\text{Arbitrum}}_{\text{layer2}}$. Instead of sending outputs directly to $\mathcal{E}$, $\mathcal{S}_{\text{Arbitrum}}$ sends them to $\mathcal{F}^{\text{Arbitrum}}_{\text{layer2}}$.
\end{itemize}

\textbf{Message delivery:}
\begin{itemize}[itemsep=0.3em]
    \item The Arbitrum protocol assumes asynchronous communication via $\mathcal{F}^{\text{Arbitrum}}_{\text{com}}$. The simulator bookkeeps all messages in $\mathcal{P'}^{\text{Arbitrum}}$ and triggers delivery according to the adversary's scheduling decisions, mirroring the real-world protocol.
\end{itemize}

\textbf{Corruption handling:}
\begin{itemize}[itemsep=0.3em]
    \item $\mathcal{S}_{\text{Arbitrum}}$ keeps the corruption status of entities in $\mathcal{P}^{\text{Arbitrum}}$, $\mathcal{P'}^{\text{Arbitrum}}$ and $\mathcal{F}^{\text{Arbitrum}}_{\text{layer2}}$ synchronized. When an entity in $\mathcal{P'}^{\text{Arbitrum}}$ becomes corrupted, $\mathcal{S}_{\text{Arbitrum}}$ corrupts the corresponding entity in $\mathcal{F}^{\text{Arbitrum}}_{\text{layer2}}$ before continuing.
    \item Adversarial commands for corrupted participants (e.g., publishing on $\mathcal{F}_{\text{ledger}}$, sending via $\mathcal{F}^{\text{Arbitrum}}_{\text{com}}$) are forwarded to $\mathcal{P'}^{\text{Arbitrum}}$.
    \item When a corrupted participant in $\mathcal{P'}^{\text{Arbitrum}}$ wants to output to $\mathcal{E}$, $\mathcal{S}_{\text{Arbitrum}}$ instructs the corresponding entity in $\mathcal{F}^{\text{Arbitrum}}_{\text{layer2}}$ to output.
\end{itemize}
}\end{functionality}\vspace{1em}

\begin{lemma}
\label{lem:Arb1}
    For all PPT adversaries $\mathcal{A}$, there exists a PPT simulator $\mathcal{S}_{\text{Arbitrum}}$ such that for all PPT environments $\mathcal{E}$ and all security parameters $k\in\mathbb{N}$,
    $\mathsf{EXEC}^{\mathcal{P}^{\text{Arbitrum}}}_{\mathcal{A},\mathcal{E}}(k)\ \stackrel{c}{\approx}\
    \mathsf{EXEC}^{\mathcal{F}^{\text{Arbitrum}}_{\text{dummy}}}_{\mathcal{S}_{\text{Arbitrum}},\mathcal{E}}(k)$,
    where $\stackrel{c}{\approx}$ denotes computational indistinguishability.
\end{lemma}

\begin{proof}
    Fix an arbitrary PPT environment $\mathcal{E}$ and adversary $\mathcal{A}$. We argue that the execution ensembles in the real and ideal worlds are computationally indistinguishable by analyzing the three components observable by $\mathcal{E}$.

    \medskip
    \noindent\textbf{Observable components.} In the real world, the execution ensemble $\mathsf{EXEC}^{\mathcal{P}^{\text{Arbitrum}}}_{\mathcal{A},\mathcal{E}}(k)$ consists of:
    \begin{enumerate}
        \item \emph{I/O outputs} delivered to $\mathcal{E}$ by $\mathcal{P}^{\text{Arbitrum}}_{\text{client}}$:
        $\{\text{Join}, s_{\mathit{init}}\}$,
        $\{\text{Settlement},\\ \mathsf{onchainState}\}$,
        $\{\text{Read}, \mathit{ReadResult}\}$,
        $\{\text{GetCurRound}, \mathit{round}\}$.
        \item \emph{On-chain transactions} committed on $\mathcal{F}_{\text{ledger}}$ during execution:
        $\mathit{TX}_{\mathit{deposit}}$, $\mathit{TX}_{\mathit{peg\text{-}in}}$, $\mathit{TX}_{\mathit{peg\text{-}out}}$, operator batch publications, $\mathit{TX}_{fraud}$.
        \item \emph{Adversarial leakage}, comprising messages received by corrupted parties during the protocol execution, and the corrupted parties' internal state.
    \end{enumerate}

    In the ideal world, the simulator $\mathcal{S}_{\text{Arbitrum}}$ internally runs $\mathcal{P'}^{\text{Arbitrum}}$ and interacts with the dummy functionality $\mathcal{F}^{\text{Arbitrum}}_{\text{dummy}}$, which by definition forwards every request from $\mathcal{E}$ to $\mathcal{S}_{\text{Arbitrum}}$ unchanged and relays the simulator's responses back to $\mathcal{E}$. We show that each component is computationally indistinguishable across the two worlds.

    \medskip
    \noindent\textbf{(1) I/O outputs.} Since $\mathcal{F}^{\text{Arbitrum}}_{\text{dummy}}$ acts as a transparent relay, $\mathcal{S}_{\text{Arbitrum}}$ receives exactly the same sequence of requests as $\mathcal{P}^{\text{Arbitrum}}_{\text{client}}$ would in the real world. By construction, $\mathcal{S}_{\text{Arbitrum}}$ executes the same client, operator, and verifier logic inside $\mathcal{P'}^{\text{Arbitrum}}$ under the same adversarial scheduling, producing the same I/O outputs. The only potential difference arises from the randomness of $\mathcal{F}_{\text{sig}}$: signature strings may differ between the two worlds because fresh randomness is sampled independently. However, since $\mathcal{F}_{\text{sig}}$ realizes EUF-CMA security, signatures generated on the same messages are computationally indistinguishable. Hence the I/O outputs are computationally indistinguishable.

    \medskip
    \noindent\textbf{(2) On-chain transactions.} Since $\mathcal{F}^{\text{Arbitrum}}_{\text{dummy}}$ forwards all requests to $\mathcal{S}_{\text{Arbitrum}}$, the simulated protocol $\mathcal{P'}^{\text{Arbitrum}}$ generates and publishes the same set of transactions to $\mathcal{F}_{\text{ledger}}$ as $\mathcal{P}^{\text{Arbitrum}}$ would in the real world. Transactions may contain different signature values due to independent randomness in $\mathcal{F}_{\text{sig}}$, but by the EUF-CMA security of the signature scheme, the transaction distributions are computationally indistinguishable.

    \medskip
    \noindent\textbf{(3) Adversarial leakage.} By the definition of $\mathcal{S}_{\text{Arbitrum}}$, the corruption status of all entities is kept synchronized between $\mathcal{P'}^{\text{Arbitrum}}$ and $\mathcal{F}^{\text{Arbitrum}}_{\text{dummy}}$. Since $\mathcal{F}^{\text{Arbitrum}}_{\text{dummy}}$ forwards all $\mathcal{E}$-requests to the simulator, $\mathcal{S}_{\text{Arbitrum}}$ can reconstruct the same internal state and message history for corrupted parties as in the real execution. Consequently, the leakage delivered to $\mathcal{A}$ is identical up to signature randomness, which is again computationally indistinguishable by EUF-CMA security.

    \medskip
    Conclusively, we have:
    $\mathsf{EXEC}^{\mathcal{P}^{\text{Arbitrum}}}_{\mathcal{A}, \mathcal{E}}(k)\stackrel{c}{\approx}\mathsf{EXEC}^{\mathcal{F}^{\text{Arbitrum}}_{\text{dummy}}}_{\mathcal{S}_{\text{Arbitrum}}, \mathcal{E}}(k)$.
\end{proof}

Next, we extend the dummy functionality $\mathcal{F}^{\text{Arbitrum}}_{\text{dummy}}$ with the submission subroutine, yielding
$\mathcal{F}^{\text{Arbitrum}}_{\text{layer2-submit}} = (\mathcal{F}_{\text{client-submit}}, \mathcal{F}_{\text{ledger}} \mid \mathcal{F}^{\text{Arbitrum}}_{\text{submit}})$.
The subroutine $\mathcal{F}^{\text{Arbitrum}}_{\text{submit}}$ is defined in Appendix~\ref{apd:Arbitrumsubmit}. The intermediate client functionality adds the Submit request that routes through $\mathcal{F}^{\text{Arbitrum}}_{\text{submit}}$, all other requests are forwarded to $\mathcal{S}$ unchanged.

\vspace{1em}\begin{functionality}{Description of $\mathcal{M}_{\text{client-submit}}$ of $\mathcal{F}^{\text{Arbitrum}}_{\text{layer2-submit}}$}{

\textbf{Implemented role(s):} \{client-submit\}

\noindent \textbf{Main:}

\textbf{recv} \{Submit, $\mathit{request}$\} \textbf{from} I/O:

\begin{enumerate}[itemsep=0.5em]
\item \textbf{send} \{Submit, $\mathit{request}$, $\mathsf{internalState}$\} \textbf{to} $(\pcur, \scur, \mathcal{F}^{\text{Arbitrum}}_{\text{submit}}:\text{submit})$,
\textbf{wait for} \{Submit, $\mathit{response}$\} s.t. $\mathit{response} \in \{\text{true}, \text{false}\}$;
\item \textbf{if} $\mathit{response} =$ true: $\mathsf{requestQueue}.\text{add}(\mathit{request})$;
\textbf{send} $\mathit{request}$ \textbf{to} $\mathcal{S}$ via NET;
\end{enumerate}

\hrule
\vspace{0.5em}

\textbf{recv} any request \textbf{from} I/O:
\begin{enumerate}[itemsep=0.5em]
    \item Forward request to $\mathcal{S}$ through NET;
\end{enumerate}

\hrule
\vspace{0.5em}

\textbf{recv} any message \textbf{from} NET:

\begin{enumerate}
    \item Output the message to $\mathcal{E}$ through I/O;
\end{enumerate}

}\end{functionality}\vspace{.5em}

\begin{functionality}{Description of simulator $\mathcal{S}_{\text{Arbitrum-submit}}$}{

The simulator $\mathcal{S}_{\text{Arbitrum-submit}}$ behaves the same as $\mathcal{S}_{\text{Arbitrum}}$.

}\end{functionality}\vspace{1em}

\begin{lemma}
\label{lem:Arb2}
    For all PPT adversaries $\mathcal{A}$, there exists a PPT simulator $\mathcal{S}_{\text{Arbitrum-submit}}$ such that for all PPT environments $\mathcal{E}$ and all security parameters $k\in\mathbb{N}$,
    \[
    \mathsf{EXEC}^{\mathcal{F}^{\text{Arbitrum}}_{\text{dummy}}}_{\mathcal{S}_{\text{Arbitrum}},\mathcal{E}}(k)
    \ \stackrel{c}{\approx}\
    \mathsf{EXEC}^{\mathcal{F}^{\text{Arbitrum}}_{\text{layer2-submit}}}_{\mathcal{S}_{\text{Arbitrum-submit}},\mathcal{E}}(k),
    \]
    where $\stackrel{c}{\approx}$ denotes computational indistinguishability.
\end{lemma}

\begin{proof}
    Fix an arbitrary PPT environment $\mathcal{E}$. The simulator logic is identical in both executions; the only difference is the ideal functionality through which requests are forwarded to the simulator. We analyze the three observable components.

    \medskip
    \noindent\textbf{Difference between the two executions.} In $\mathcal{F}^{\text{Arbitrum}}_{\text{dummy}}$, every request from $\mathcal{E}$ is forwarded directly to $\mathcal{S}_{\text{Arbitrum}}$. In $\mathcal{F}^{\text{Arbitrum}}_{\text{layer2-submit}}$, the subroutine $\mathcal{F}^{\text{Arbitrum}}_{\text{submit}}$ intercepts each request and checks semantic validity before forwarding. Requests that fail these checks are rejected and never reach the simulator.

    \medskip
    \noindent\textbf{(1) I/O outputs.} In the real protocol $\mathcal{P}^{\text{Arbitrum}}$ (and hence in $\mathcal{P'}^{\text{Arbitrum}}$), the client machine already enforces the same validity checks for all four request types (Join, Submit, SubmitFast, Settlement): only predefined request types are processed, and malformed or out-of-phase requests produce no output. Therefore, any request rejected by $\mathcal{F}^{\text{Arbitrum}}_{\text{submit}}$ would also produce no output inside the simulation. For accepted requests, both simulators execute the same client logic and produce the same I/O outputs. Hence, the I/O outputs are identical.

    \medskip
    \noindent\textbf{(2) On-chain transactions.} Since only accepted requests trigger protocol actions in $\mathcal{P'}^{\text{Arbitrum}}$, and the set of accepted requests is the same in both executions, the transactions published to $\mathcal{F}_{\text{ledger}}$ are identical except for signature values. By EUF-CMA security, the on-chain transactions are computationally indistinguishable.

    \medskip
    \noindent\textbf{(3) Adversarial leakage.} Since invalid requests are rejected by the ideal functionality and ignored by honest parties during simulation, no leakage is generated toward the adversary in either world. Moreover, the corruption status remains synchronized across both worlds. Consequently, the leakage delivered to $\mathcal{A}$ is computationally indistinguishable in both executions.

    \medskip
    Conclusively,
    $\mathsf{EXEC}^{\mathcal{F}^{\text{Arbitrum}}_{\text{dummy}}}_{\mathcal{S}_{\text{Arbitrum}}, \mathcal{E}}(k) \stackrel{c}{\approx} \mathsf{EXEC}^{\mathcal{F}^{\text{Arbitrum}}_{\text{layer2-submit}}}_{\mathcal{S}_{\text{Arbitrum-submit}}, \mathcal{E}}(k)$.
\end{proof}

Next, we extend $\mathcal{F}^{\text{Arbitrum}}_{\text{layer2-submit}}$ with the join subroutine, yielding $\mathcal{F}^{\text{Arbitrum}}_{\text{layer2-join}} = (\mathcal{F}_{\text{client-join}}, \mathcal{F}_{\text{ledger}} \mid \mathcal{F}^{\text{Arbitrum}}_{\text{submit}}, \mathcal{F}^{\text{Arbitrum}}_{\text{join}})$. The intermediate client functionality additionally routes the Join request (from NET) through $\mathcal{F}^{\text{Arbitrum}}_{\text{join}}$.

\vspace{1em}\begin{functionality}{Description of $\mathcal{M}_{\text{client-join}}$ of $\mathcal{F}^{\text{Arbitrum}}_{\text{layer2-join}}$}{

\textbf{Implemented role(s):} \{client-join\}

\noindent \textbf{Main:}

\textbf{recv} \{Submit, $\mathit{request}$\} \textbf{from} I/O:

\begin{enumerate}[itemsep=0.5em]
\item \textbf{send} \{Submit, $\mathit{request}$, $\mathsf{internalState}$\} \textbf{to} $(\pcur, \scur, \mathcal{F}^{\text{Arbitrum}}_{\text{submit}}:\text{submit})$,
\textbf{wait for} \{Submit, $\mathit{response}$\} s.t. $\mathit{response} \in \{\text{true}, \text{false}\}$;
\item \textbf{if} $\mathit{response} =$ true: $\mathsf{requestQueue}.\text{add}(\mathit{request})$;
\textbf{send} $\mathit{request}$ \textbf{to} $\mathcal{S}$ via NET;
\end{enumerate}

\hrule
\vspace{0.5em}

\textbf{recv} \{Join, $\mathit{Attachment}$\} \textbf{from} NET:

\begin{enumerate}[itemsep=0.5em]
\item \textbf{send} \{Join, $\mathit{Attachment}$, $\mathsf{internalState}$\} \textbf{to} $(\pcur, \scur, \mathcal{F}^{\text{Arbitrum}}_{\text{join}}:\text{join})$,
\textbf{wait for} \{Join, $\mathit{response}$\} s.t. $\mathit{response} \in \{\text{true}, \text{false}\}$;
\item \textbf{if} $\mathit{response} =$ true: update $\mathsf{internalState}$ according to $\mathit{Attachment}$;
\textbf{reply} \{Join, $s_{\mathit{init}}$\} via I/O;
\end{enumerate}

\hrule
\vspace{0.5em}

\textbf{recv} any request \textbf{from} I/O:
\begin{enumerate}[itemsep=0.5em]
    \item Forward request to $\mathcal{S}$ through NET;
\end{enumerate}

\hrule
\vspace{0.5em}

\textbf{recv} any message \textbf{from} NET:

\begin{enumerate}
    \item Output the message to $\mathcal{E}$ through I/O;
\end{enumerate}

}\end{functionality}\vspace{.5em}

\begin{functionality}{Description of simulator $\mathcal{S}_{\text{Arbitrum-join}}$}{

The simulator $\mathcal{S}_{\text{Arbitrum-join}}$ behaves identically to $\mathcal{S}_{\text{Arbitrum-submit}}$, except for the following additional behavior upon detecting a completed join:

\vspace{0.5em}
\textbf{Join interaction with $\mathcal{F}^{\text{Arbitrum}}_{\text{layer2-join}}$:}

\begin{enumerate}[itemsep=0.5em]

\item $\mathcal{S}_{\text{Arbitrum-join}}$ monitors the simulated protocol $\mathcal{P'}^{\text{Arbitrum}}$. When it detects that a client entity is about to produce the I/O output $\{\text{Join}, s_{\mathit{init}}\}$, $\mathcal{S}_{\text{Arbitrum-join}}$ intercepts this output and proceeds as follows.

\item $\mathcal{S}_{\text{Arbitrum-join}}$ prepares $\mathit{Attachment}$ by extracting from the simulation state:
\begin{itemize}
    \item $s_{\mathit{init}}$: the initial state included in the simulated client's Join output;
    \item $\mathit{TX}_{\mathit{peg\text{-}in}}$: the peg-in transaction committed on $\mathcal{F}_{\text{ledger}}$ in $\mathcal{P'}^{\text{Arbitrum}}$;
\end{itemize}

\item $\mathcal{S}_{\text{Arbitrum-join}}$ sends $\{\text{Join}, \mathit{Attachment}\}$ to $\mathcal{F}^{\text{Arbitrum}}_{\text{layer2-join}}$ via NET.

\end{enumerate}

}\end{functionality}\vspace{1em}

\begin{lemma}
\label{lem:Arb3}
    For all PPT adversaries $\mathcal{A}$, there exists a PPT simulator $\mathcal{S}_{\text{Arbitrum-join}}$ such that for all PPT environments $\mathcal{E}$ and all security parameters $k\in\mathbb{N}$,
    \[
    \mathsf{EXEC}^{\mathcal{F}^{\text{Arbitrum}}_{\text{layer2-submit}}}_{\mathcal{S}_{\text{Arbitrum-submit}},\mathcal{E}}(k)
    \ \stackrel{c}{\approx}\
    \mathsf{EXEC}^{\mathcal{F}^{\text{Arbitrum}}_{\text{layer2-join}}}_{\mathcal{S}_{\text{Arbitrum-join}},\mathcal{E}}(k),
    \]
    where $\stackrel{c}{\approx}$ denotes computational indistinguishability.
\end{lemma}

\begin{proof}
    Fix an arbitrary PPT environment $\mathcal{E}$. The only difference between the two executions is that $\mathcal{F}^{\text{Arbitrum}}_{\text{layer2-join}}$ routes the Join request (from NET) through $\mathcal{F}^{\text{Arbitrum}}_{\text{join}}$, whereas in $\mathcal{F}^{\text{Arbitrum}}_{\text{layer2-submit}}$ the join output is produced entirely by the simulator. We analyze the three observable components.

    \medskip
    \noindent\textbf{(1) I/O outputs.} The only I/O output affected is $\{\text{Join}, s_{\mathit{init}}\}$. We consider two cases.

    \emph{Case~1: Successful join.} In $\mathsf{EXEC}^{\mathcal{F}^{\text{Arbitrum}}_{\text{layer2-submit}}}_{\mathcal{S}_{\text{Arbitrum-submit}},\mathcal{E}}(k)$, the simulator produces a join output when the simulated $\mathcal{P'}^{\text{Arbitrum}}$ completes the joining procedure (deposit committed, peg-in published by operator and survived $T_{\text{challenge}}$). In $\mathsf{EXEC}^{\mathcal{F}^{\text{Arbitrum}}_{\text{layer2-join}}}_{\mathcal{S}_{\text{Arbitrum-join}},\mathcal{E}}(k)$, the simulator instead sends $\mathit{Attachment}$ to $\mathcal{F}^{\text{Arbitrum}}_{\text{join}}$, which outputs to $\mathcal{E}$ only if all checks pass. A successful join in $\mathcal{P'}^{\text{Arbitrum}}$ implies: $\mathit{TX}_{\mathit{peg\text{-}in}}$ is consistent with $s_{\mathit{init}}$, and $\mathit{TX}_{\mathit{peg\text{-}in}}$ is committed on L1 without fraud proof during $T_{\text{challenge}}$. These are exactly the checks in $\mathcal{F}^{\text{Arbitrum}}_{\text{join}}$ (Steps~1)--3)). Under the assumption of at least one honest verifier (who publishes fraud proofs for invalid publications) and EUF-CMA security, $\mathcal{F}^{\text{Arbitrum}}_{\text{join}}$ accepts whenever $\mathcal{P'}^{\text{Arbitrum}}$ completes.

    \emph{Case~2: Failed join.} If the operator does not publish the peg-in or a fraud proof invalidates it, $\mathcal{P'}^{\text{Arbitrum}}$ does not complete the join. Correspondingly, $\mathcal{S}_{\text{Arbitrum-join}}$ does not send the Join request, so no output is produced.

    In all cases, the I/O outputs are identical.

    \medskip
    \noindent\textbf{(2) On-chain transactions.} The set of transactions published to $\mathcal{F}_{\text{ledger}}$ is fully determined by $\mathcal{P'}^{\text{Arbitrum}}$, which runs the real-world protocol on the input requests forwarded by the same $\mathcal{F}^{\text{Arbitrum}}_{\text{submit}}$ in both executions. Transactions published by honest participants therefore differ across the two worlds only in the signature values, which are computationally indistinguishable by the EUF-CMA security of $\mathcal{F}_{\text{sig}}$. When participants are corrupted, the transactions they publish on L1 are likewise computationally indistinguishable, since the simulator forwards all corrupted-party messages to $\mathcal{P'}^{\text{Arbitrum}}$ and signature randomness is the only source of variation.

    \medskip
    \noindent\textbf{(3) Adversarial leakage.} In both executions, the simulator extracts the plaintext leakage (under our plaintext-leakage assumption) from the ideal functionality and uses it to drive $\mathcal{P'}^{\text{Arbitrum}}$, generating leakage to the adversary according to the corruption status. Since the two executions feed $\mathcal{P'}^{\text{Arbitrum}}$ the same inputs (those accepted by $\mathcal{F}^{\text{Arbitrum}}_{\text{submit}}$) and run the same real-world protocol, $\mathsf{internalState}$ at every honest participant evolves identically across the two executions. Consequently, the leaked $\mathsf{internalState}$ and received messages delivered to corrupted parties are computationally indistinguishable across the two executions.

    \medskip
    Conclusively,
    $\mathsf{EXEC}^{\mathcal{F}^{\text{Arbitrum}}_{\text{layer2-submit}}}_{\mathcal{S}_{\text{Arbitrum-submit}}, \mathcal{E}}(k)\stackrel{c}{\approx}\mathsf{EXEC}^{\mathcal{F}^{\text{Arbitrum}}_{\text{layer2-join}}}_{\mathcal{S}_{\text{Arbitrum-join}}, \mathcal{E}}(k)$.
\end{proof}

Next, we extend $\mathcal{F}^{\text{Arbitrum}}_{\text{layer2-join}}$ with the update subroutine, yielding $\mathcal{F}^{\text{Arbitrum}}_{\text{layer2-update}} = (\mathcal{F}_{\text{client-update}}, \mathcal{F}_{\text{ledger}} \mid \mathcal{F}^{\text{Arbitrum}}_{\text{submit}}, \mathcal{F}^{\text{Arbitrum}}_{\text{join}},\\ \mathcal{F}^{\text{Arbitrum}}_{\text{update}})$.

\vspace{1em}\begin{functionality}{Description of $\mathcal{M}_{\text{client-update}}$ of $\mathcal{F}^{\text{Arbitrum}}_{\text{layer2-update}}$}{

\textbf{Implemented role(s):} \{client-update\}

\noindent \textbf{Main:}

\textbf{recv} \{Submit, $\mathit{request}$\} \textbf{from} I/O:

\begin{enumerate}[itemsep=0.5em]
\item \textbf{send} \{Submit, $\mathit{request}$, $\mathsf{internalState}$\} \textbf{to} $(\pcur, \scur, \mathcal{F}^{\text{Arbitrum}}_{\text{submit}}:\text{submit})$,
\textbf{wait for} \{Submit, $\mathit{response}$\} s.t. $\mathit{response} \in \{\text{true}, \text{false}\}$;
\item \textbf{if} $\mathit{response} =$ true: $\mathsf{requestQueue}.\text{add}(\mathit{request})$;
\textbf{send} $\mathit{request}$ \textbf{to} $\mathcal{S}$ via NET;
\end{enumerate}

\hrule
\vspace{0.5em}

\textbf{recv} \{Join, $\mathit{Attachment}$\} \textbf{from} NET:

\begin{enumerate}[itemsep=0.5em]
\item \textbf{send} \{Join, $\mathit{Attachment}$, $\mathsf{internalState}$\} \textbf{to} $(\pcur, \scur, \mathcal{F}^{\text{Arbitrum}}_{\text{join}}:\text{join})$,
\textbf{wait for} \{Join, $\mathit{response}$\} s.t. $\mathit{response} \in \{\text{true}, \text{false}\}$;
\item \textbf{if} $\mathit{response} =$ true: update $\mathsf{internalState}$ according to $\mathit{Attachment}$;
\textbf{reply} \{Join, $s_{\mathit{init}}$\} via I/O;
\end{enumerate}

\hrule
\vspace{0.5em}

\textbf{recv} \{Update, $\mathit{Attachment}$\} \textbf{from} NET:

\begin{enumerate}[itemsep=0.5em]
\item \textbf{send} \{Update, $\mathit{Attachment}$, $\mathsf{internalState}$\} \textbf{to} $(\pcur, \scur, \mathcal{F}^{\text{Arbitrum}}_{\text{update}}:\text{update})$,
\textbf{wait for} \{Update, $\mathit{response}$, $\mathit{newState}$, $\mathit{executedReq}$\};
\item \textbf{if} $\mathit{response} =$ true: update $\mathsf{internalState}$ with $\mathit{newState}$ and $\mathit{executedReq}$;
\end{enumerate}

\hrule
\vspace{0.5em}

\textbf{recv} any request \textbf{from} I/O:
\begin{enumerate}[itemsep=0.5em]
    \item Forward request to $\mathcal{S}$ through NET;
\end{enumerate}

\hrule
\vspace{0.5em}

\textbf{recv} any message \textbf{from} NET:

\begin{enumerate}
    \item Output the message to $\mathcal{E}$ through I/O;
\end{enumerate}

}\end{functionality}\vspace{.5em}

\begin{functionality}{Description of simulator $\mathcal{S}_{\text{Arbitrum-update}}$}{

The simulator $\mathcal{S}_{\text{Arbitrum-update}}$ behaves identically to $\mathcal{S}_{\text{Arbitrum-join}}$, except for the following additional behavior upon detecting a confirmed state update:

\vspace{0.5em}
\textbf{State update interaction with $\mathcal{F}^{\text{Arbitrum}}_{\text{layer2-update}}$:}

\begin{enumerate}[itemsep=0.5em]

\item $\mathcal{S}_{\text{Arbitrum-update}}$ monitors $\mathcal{P'}^{\text{Arbitrum}}$ for operator batch publications that survive $T_{\text{challenge}}$ without fraud proof. Specifically, it detects when a new batch of transactions and resulting state are confirmed on L1.

\item Upon detecting a confirmed batch, $\mathcal{S}_{\text{Arbitrum-update}}$ prepares $\mathit{Attachment}$ by extracting:
\begin{itemize}
    \item $\mathit{requestBatch}$: the published request batch;
    \item $\mathit{newState}$: the published new state;
    \item $\{\mathit{Agreement}\}$: the set of initiator approvals (abstract agreements rather than cryptographic signatures);
\end{itemize}

\item $\mathcal{S}_{\text{Arbitrum-update}}$ sends $\{\text{Update}, \mathit{Attachment}\}$ to $\mathcal{F}^{\text{Arbitrum}}_{\text{layer2-update}}$ via NET.

\end{enumerate}

}\end{functionality}\vspace{1em}

\begin{lemma}
\label{lem:Arb4}
    For all PPT adversaries $\mathcal{A}$, there exists a PPT simulator $\mathcal{S}_{\text{Arbitrum-update}}$ such that for all PPT environments $\mathcal{E}$ and all security parameters $k\in\mathbb{N}$,
    \[
    \mathsf{EXEC}^{\mathcal{F}^{\text{Arbitrum}}_{\text{layer2-join}}}_{\mathcal{S}_{\text{Arbitrum-join}},\mathcal{E}}(k)
    \ \stackrel{c}{\approx}\
    \mathsf{EXEC}^{\mathcal{F}^{\text{Arbitrum}}_{\text{layer2-update}}}_{\mathcal{S}_{\text{Arbitrum-update}},\mathcal{E}}(k),
    \]
    where $\stackrel{c}{\approx}$ denotes computational indistinguishability.
\end{lemma}

\begin{proof}
    Fix an arbitrary PPT environment $\mathcal{E}$. The only difference between the two games is that $\mathcal{F}^{\text{Arbitrum}}_{\text{layer2-update}}$ routes the Update request (from NET) through the subroutine $\mathcal{F}^{\text{Arbitrum}}_{\text{update}}$. We analyze the three observable components.

    \medskip
    \noindent As defined in both ideal functionalities, the Update request does not directly produce I/O outputs to $\mathcal{E}$, it only modifies $\mathsf{internalState}$. However, unlike Arbitrum where updates are purely off-chain, Arbitrum requires posting transaction batches and fraud proofs to L1, which are observable by $\mathcal{E}$. We must argue that the same on-chain view arises in both executions.

    \medskip
    \noindent\textbf{(1) I/O outputs.} Although the Update request itself produces no I/O output, a divergence in $\mathsf{internalState}$ would cause observable differences in subsequent Read or Settlement outputs. In $\mathcal{P'}^{\text{Arbitrum}}$, a batch is accepted as a valid state update only after: (i)~the operator publishes the batch and resulting state on L1, (ii)~the new state is correctly computed from the request batch, (iii)~no fraud proof is published within $T_{\text{challenge}}$, and (iv)~each request has initiator agreement. Conversely, if the batch contains an incorrect transition, the honest client who acts as verifier publishes a fraud proof, and the batch is invalidated. These are precisely the checks enforced by $\mathcal{F}^{\text{Arbitrum}}_{\text{update}}$ (Steps~1--7). Since $\mathcal{S}_{\text{Arbitrum-update}}$ sends the Update message precisely when a batch survives $T_{\text{challenge}}$ in $\mathcal{P'}^{\text{Arbitrum}}$, $\mathcal{F}^{\text{Arbitrum}}_{\text{update}}$ accepts if and only if the corresponding update was accepted in $\mathcal{P'}^{\text{Arbitrum}}$. The existence of a challenge period and the unforgeability of the EUF-CMA secure signature schemes guarantees that $\mathcal{P'}^{\text{Arbitrum}}$ will not change its own client's $\mathsf{internalState}$ based on incorrect transaction for forging transaction, therefore $\mathsf{internalState}$ changes identically.

    \medskip
    \noindent\textbf{(2) On-chain transactions.} The operator batch publications and verifier fraud proofs are generated by $\mathcal{P'}^{\text{Arbitrum}}$, which runs identically in both simulators. The game hop only interposes $\mathcal{F}^{\text{Arbitrum}}_{\text{update}}$ as a filter on which updates are reflected in $\mathsf{internalState}$. Since the filter accepts exactly those updates that $\mathcal{P'}^{\text{Arbitrum}}$ would accept, and the ideal signature functionality guarantees signatures are indistinguishable, the on-chain transactions are computationally indistinguishable.

    \medskip
    \noindent\textbf{(3) Adversarial leakage.} Both simulators receive the same set of accepted requests, run the same protocol logic, and maintain synchronized corruption status. The leakage is computationally indistinguishable.

    \medskip
    Conclusively,
    $\mathsf{EXEC}^{\mathcal{F}^{\text{Arbitrum}}_{\text{layer2-join}}}_{\mathcal{S}_{\text{Arbitrum-join}}, \mathcal{E}}(k)
    \stackrel{c}{\approx}
    \mathsf{EXEC}^{\mathcal{F}^{\text{Arbitrum}}_{\text{layer2-update}}}_{\mathcal{S}_{\text{Arbitrum-update}}, \mathcal{E}}(k)$.
\end{proof}

Next, we extend $\mathcal{F}^{\text{Arbitrum}}_{\text{layer2-update}}$ with the read subroutine, yielding $\mathcal{F}^{\text{Arbitrum}}_{\text{layer2-read}} = (\mathcal{F}_{\text{client-read}}, \mathcal{F}_{\text{ledger}} \mid \mathcal{F}^{\text{Arbitrum}}_{\text{submit}}, \mathcal{F}^{\text{Arbitrum}}_{\text{join}}, \mathcal{F}^{\text{Arbitrum}}_{\text{update}},\\ \mathcal{F}^{\text{Arbitrum}}_{\text{read}})$.

\vspace{1em}\begin{functionality}{Description of $\mathcal{M}_{\text{client-read}}$ of $\mathcal{F}^{\text{Arbitrum}}_{\text{layer2-read}}$}{

\textbf{Implemented role(s):} \{client-read\}

\noindent \textbf{Main:}

\textbf{recv} \{Submit, $\mathit{request}$\} \textbf{from} I/O:

\begin{enumerate}[itemsep=0.5em]
\item \textbf{send} \{Submit, $\mathit{request}$, $\mathsf{internalState}$\} \textbf{to} $(\pcur, \scur, \mathcal{F}^{\text{Arbitrum}}_{\text{submit}}:\text{submit})$,
\textbf{wait for} \{Submit, $\mathit{response}$\} s.t. $\mathit{response} \in \{\text{true}, \text{false}\}$;
\item \textbf{if} $\mathit{response} =$ true: $\mathsf{requestQueue}.\text{add}(\mathit{request})$;
\textbf{send} $\mathit{request}$ \textbf{to} $\mathcal{S}$ via NET;
\end{enumerate}

\hrule
\vspace{0.5em}

\textbf{recv} \{Join, $\mathit{Attachment}$\} \textbf{from} NET:

\begin{enumerate}[itemsep=0.5em]
\item \textbf{send} \{Join, $\mathit{Attachment}$, $\mathsf{internalState}$\} \textbf{to} $(\pcur, \scur, \mathcal{F}^{\text{Arbitrum}}_{\text{join}}:\text{join})$,
\textbf{wait for} \{Join, $\mathit{response}$\} s.t. $\mathit{response} \in \{\text{true}, \text{false}\}$;
\item \textbf{if} $\mathit{response} =$ true: update $\mathsf{internalState}$ according to $\mathit{Attachment}$;
\textbf{reply} \{Join, $s_{\mathit{init}}$\} via I/O;
\end{enumerate}

\hrule
\vspace{0.5em}

\textbf{recv} \{Update, $\mathit{Attachment}$\} \textbf{from} NET:

\begin{enumerate}[itemsep=0.5em]
\item \textbf{send} \{Update, $\mathit{Attachment}$, $\mathsf{internalState}$\} \textbf{to} $(\pcur, \scur, \mathcal{F}^{\text{Arbitrum}}_{\text{update}}:\text{update})$,
\textbf{wait for} \{Update, $\mathit{response}$, $\mathit{newState}$, $\mathit{executedReq}$\};
\item \textbf{if} $\mathit{response} =$ true: update $\mathsf{internalState}$ with $\mathit{newState}$ and $\mathit{executedReq}$;
\end{enumerate}

\hrule
\vspace{0.5em}

\textbf{recv} \{Read\} \textbf{from} I/O:

\begin{enumerate}[itemsep=0.5em]
\item \textbf{send} \{Read, $\mathsf{internalState}$\} \textbf{to} $(\pcur, \scur, \mathcal{F}^{\text{Arbitrum}}_{\text{read}}:\text{read})$,
\textbf{wait for} $\mathit{ReadResult}$;
\item \textbf{if} $\mathit{ReadResult} \neq \bot$: \textbf{reply} \{Read, $\mathit{ReadResult}$\} via I/O;
\end{enumerate}

\hrule
\vspace{0.5em}

\textbf{recv} any request \textbf{from} I/O:
\begin{enumerate}[itemsep=0.5em]
    \item Forward request to $\mathcal{S}$ through NET;
\end{enumerate}

\hrule
\vspace{0.5em}

\textbf{recv} any message \textbf{from} NET:

\begin{enumerate}
    \item Output the message to $\mathcal{E}$ through I/O;
\end{enumerate}

}\end{functionality}\vspace{.5em}

\begin{functionality}{Description of simulator $\mathcal{S}_{\text{Arbitrum-read}}$}{

The simulator $\mathcal{S}_{\text{Arbitrum-read}}$ behaves the same as $\mathcal{S}_{\text{Arbitrum-update}}$.

}\end{functionality}\vspace{1em}

\begin{lemma}
\label{lem:Arb5}
    For all PPT adversaries $\mathcal{A}$, there exists a PPT simulator $\mathcal{S}_{\text{Arbitrum-read}}$ such that for all PPT environments $\mathcal{E}$ and all security parameters $k\in\mathbb{N}$,
    \[
    \mathsf{EXEC}^{\mathcal{F}^{\text{Arbitrum}}_{\text{layer2-update}}}_{\mathcal{S}_{\text{Arbitrum-update}},\mathcal{E}}(k)
    \ \stackrel{c}{\approx}\
    \mathsf{EXEC}^{\mathcal{F}^{\text{Arbitrum}}_{\text{layer2-read}}}_{\mathcal{S}_{\text{Arbitrum-read}},\mathcal{E}}(k),
    \]
    where $\stackrel{c}{\approx}$ denotes computational indistinguishability.
\end{lemma}

\begin{proof}
    Fix an arbitrary PPT environment $\mathcal{E}$. The only difference is that in $\mathsf{EXEC}^{\mathcal{F}^{\text{Arbitrum}}_{\text{layer2-read}}}_{\mathcal{S}_{\text{Arbitrum-read}},\mathcal{E}}(k)$ the read result is decided by $\mathcal{F}^{\text{Arbitrum}}_{\text{read}}$ based on $\mathsf{internalState}$, whereas in $\mathsf{EXEC}^{\mathcal{F}^{\text{Arbitrum}}_{\text{layer2-update}}}_{\mathcal{S}_{\text{Arbitrum-update}},\mathcal{E}}(k)$ read results are produced entirely by the simulator. We analyze the three observable components.

    \medskip
    \noindent\textbf{(1) I/O outputs.} By Lemma~\ref{lem:Arb4}, the L1 ledger state and $\mathsf{internalState}$ are identical in both executions. In both executions, read results are reconstructed from $\mathcal{F}_{\text{ledger}}$ using the same reconstruction rule: the intersection of $\mathit{L1ReadResult}$ and $\mathsf{internalState}$. Since both rely on the same ledger and the same $\mathsf{internalState}$, the read outputs to $\mathcal{E}$ are distributionally identical.

    \medskip
    \noindent\textbf{(2) On-chain transactions.} Read requests do not publish any transactions to $\mathcal{F}_{\text{ledger}}$. Hence on-chain transactions are identical in both games.

    \medskip
    \noindent\textbf{(3) Adversarial leakage.} The game hop only interposes $\mathcal{F}^{\text{Arbitrum}}_{\text{read}}$ on the Read request, which is an I/O-facing operation. Both simulators maintain synchronized corruption status. The leakage is computationally indistinguishable.

    \medskip
    Conclusively,
    $\mathsf{EXEC}^{\mathcal{F}^{\text{Arbitrum}}_{\text{layer2-update}}}_{\mathcal{S}_{\text{Arbitrum-update}}, \mathcal{E}}(k)
    \stackrel{c}{\approx}
    \mathsf{EXEC}^{\mathcal{F}^{\text{Arbitrum}}_{\text{layer2-read}}}_{\mathcal{S}_{\text{Arbitrum-read}}, \mathcal{E}}(k)$.
\end{proof}

Next, we extend $\mathcal{F}^{\text{Arbitrum}}_{\text{layer2-read}}$ with the settlement subroutine, yielding $\mathcal{F}^{\text{Arbitrum}}_{\text{layer2-settlement}} = (\mathcal{F}_{\text{client-settlement}}, \mathcal{F}_{\text{ledger}} \mid \mathcal{F}^{\text{Arbitrum}}_{\text{submit}}, \mathcal{F}^{\text{Arbitrum}}_{\text{join}},\\ \mathcal{F}^{\text{Arbitrum}}_{\text{update}}, \mathcal{F}^{\text{Arbitrum}}_{\text{read}}, \mathcal{F}^{\text{Arbitrum}}_{\text{settlement}})$.

\vspace{1em}\begin{functionality}{Description of $\mathcal{M}_{\text{client-settlement}}$ of $\mathcal{F}^{\text{Arbitrum}}_{\text{layer2-settlement}}$}{

\textbf{Implemented role(s):} \{client-settlement\}

\noindent \textbf{Main:}

\textbf{recv} \{Submit, $\mathit{request}$\} \textbf{from} I/O:

\begin{enumerate}[itemsep=0.5em]
\item \textbf{send} \{Submit, $\mathit{request}$, $\mathsf{internalState}$\} \textbf{to} $(\pcur, \scur, \mathcal{F}^{\text{Arbitrum}}_{\text{submit}}:\text{submit})$,
\textbf{wait for} \{Submit, $\mathit{response}$\} s.t. $\mathit{response} \in \{\text{true}, \text{false}\}$;
\item \textbf{if} $\mathit{response} =$ true: $\mathsf{requestQueue}.\text{add}(\mathit{request})$;
\textbf{send} $\mathit{request}$ \textbf{to} $\mathcal{S}$ via NET;
\end{enumerate}

\hrule
\vspace{0.5em}

\textbf{recv} \{Join, $\mathit{Attachment}$\} \textbf{from} NET:

\begin{enumerate}[itemsep=0.5em]
\item \textbf{send} \{Join, $\mathit{Attachment}$, $\mathsf{internalState}$\} \textbf{to} $(\pcur, \scur, \mathcal{F}^{\text{Arbitrum}}_{\text{join}}:\text{join})$,
\textbf{wait for} \{Join, $\mathit{response}$\} s.t. $\mathit{response} \in \{\text{true}, \text{false}\}$;
\item \textbf{if} $\mathit{response} =$ true: update $\mathsf{internalState}$ according to $\mathit{Attachment}$;
\textbf{reply} \{Join, $s_{\mathit{init}}$\} via I/O;
\end{enumerate}

\hrule
\vspace{0.5em}

\textbf{recv} \{Update, $\mathit{Attachment}$\} \textbf{from} NET:

\begin{enumerate}[itemsep=0.5em]
\item \textbf{send} \{Update, $\mathit{Attachment}$, $\mathsf{internalState}$\} \textbf{to} $(\pcur, \scur, \mathcal{F}^{\text{Arbitrum}}_{\text{update}}:\text{update})$,
\textbf{wait for} \{Update, $\mathit{response}$, $\mathit{newState}$, $\mathit{executedReq}$\};
\item \textbf{if} $\mathit{response} =$ true: update $\mathsf{internalState}$ with $\mathit{newState}$ and $\mathit{executedReq}$;
\end{enumerate}

\hrule
\vspace{0.5em}

\textbf{recv} \{Read\} \textbf{from} I/O:

\begin{enumerate}[itemsep=0.5em]
\item \textbf{send} \{Read, $\mathsf{internalState}$\} \textbf{to} $(\pcur, \scur, \mathcal{F}^{\text{Arbitrum}}_{\text{read}}:\text{read})$,
\textbf{wait for} $\mathit{ReadResult}$;
\item \textbf{if} $\mathit{ReadResult} \neq \bot$: \textbf{reply} \{Read, $\mathit{ReadResult}$\} via I/O;
\end{enumerate}

\hrule
\vspace{0.5em}

\textbf{recv} \{Settlement, $\mathit{Attachment}$\} \textbf{from} NET:

\begin{enumerate}[itemsep=0.5em]
\item \textbf{send} \{Settlement, $\mathit{Attachment}$, $\mathsf{internalState}$\} \textbf{to} $(\pcur, \scur, \mathcal{F}^{\text{Arbitrum}}_{\text{settlement}}:\\\text{settlement})$,
\textbf{wait for} \{Settlement, $\mathit{response}$, $s_{\mathit{settle}}$\} s.t. $\mathit{response} \in \{\text{true}, \text{false}\}$;
\item \textbf{if} $\mathit{response} =$ true: update $\mathsf{internalState}$;
\textbf{reply} \{Settlement, $s_{\mathit{settle}}$\} via I/O;
\end{enumerate}

\hrule
\vspace{0.5em}

\textbf{recv} any request \textbf{from} I/O:
\begin{enumerate}[itemsep=0.5em]
    \item Forward request to $\mathcal{S}$ through NET;
\end{enumerate}

\hrule
\vspace{0.5em}

\textbf{recv} any message \textbf{from} NET:

\begin{enumerate}
    \item Output the message to $\mathcal{E}$ through I/O;
\end{enumerate}

}\end{functionality}\vspace{.5em}

\begin{functionality}{Description of simulator $\mathcal{S}_{\text{Arbitrum-settlement}}$}{

The simulator $\mathcal{S}_{\text{Arbitrum-settlement}}$ behaves identically to $\mathcal{S}_{\text{Arbitrum-read}}$, except for the following additional behavior upon detecting a completed settlement:

\vspace{0.5em}
\textbf{Settlement interaction with $\mathcal{F}^{\text{Arbitrum}}_{\text{layer2-settlement}}$:}

\begin{enumerate}[itemsep=0.5em]

\item $\mathcal{S}_{\text{Arbitrum-settlement}}$ monitors $\mathcal{P'}^{\text{Arbitrum}}$. When it detects that a client entity is about to produce a successful settlement output inside the simulation (either via \texttt{collaborate} or \texttt{escape-hatch}), it intercepts this output.

\item $\mathcal{S}_{\text{Arbitrum-settlement}}$ prepares $\mathit{Attachment}$ by extracting:
\begin{itemize}
    \item $\mathit{TX}_{\mathit{peg\text{-}out}}$: the peg-out transaction committed on $\mathcal{F}_{\text{ledger}}$ in $\mathcal{P'}^{\text{Arbitrum}}$;
\end{itemize}

\item $\mathcal{S}_{\text{Arbitrum-settlement}}$ sends $\{\text{Settlement}, \mathit{Attachment}\}$ to $\mathcal{F}^{\text{Arbitrum}}_{\text{layer2-settlement}}$ via NET.

\end{enumerate}

}\end{functionality}\vspace{1em}

\begin{lemma}
\label{lem:Arb6}
    For all PPT adversaries $\mathcal{A}$, there exists a PPT simulator $\mathcal{S}_{\text{Arbitrum-settlement}}$ such that for all PPT environments $\mathcal{E}$ and all security parameters $k\in\mathbb{N}$,
    \[
    \mathsf{EXEC}^{\mathcal{F}^{\text{Arbitrum}}_{\text{layer2-read}}}_{\mathcal{S}_{\text{Arbitrum-read}},\mathcal{E}}(k)
    \ \stackrel{c}{\approx}\
    \mathsf{EXEC}^{\mathcal{F}^{\text{Arbitrum}}_{\text{layer2-settlement}}}_{\mathcal{S}_{\text{Arbitrum-settlement}},\mathcal{E}}(k),
    \]
    where $\stackrel{c}{\approx}$ denotes computational indistinguishability.
\end{lemma}

\begin{proof}
    Fix an arbitrary PPT environment $\mathcal{E}$. The only difference between the two executions is that $\mathcal{F}^{\text{Arbitrum}}_{\text{layer2-settlement}}$ routes the Settlement request (from NET) through $\mathcal{F}^{\text{Arbitrum}}_{\text{settlement}}$. We analyze the three observable components, noting that settlement in Arbitrum can proceed via two paths: \texttt{collaborate} (through the operator) or \texttt{escape-hatch} (directly to L1).

    \medskip
    \noindent\textbf{(1) I/O outputs.} The I/O output affected is $\{\text{Settlement}, \mathsf{onchainState}\}$.

    \emph{Case~1: Successful settlement (either path).} In $\mathcal{P'}^{\text{Arbitrum}}$, a settlement succeeds when $\mathit{TX}_{\mathit{peg\text{-}out}}$ is committed on L1 without fraud proof during $T_{\text{challenge}}$. The simulator $\mathcal{S}_{\text{Arbitrum-settlement}}$ triggers $\mathcal{F}^{\text{Arbitrum}}_{\text{settlement}}$ precisely when this occurs. The checks in $\mathcal{F}^{\text{Arbitrum}}_{\text{settlement}}$: (1)~$s_{\mathit{settle}}$ is consistent with $\mathit{TX}_{\mathit{peg\text{-}out}}$, and (2)~$\mathit{TX}_{\mathit{peg\text{-}out}}$ is committed on L1 without fraud proof, are exactly the conditions that hold when the simulated settlement succeeds. Under the assumption of at least one honest verifier, L1 liveness and safety, and EUF-CMA signatures, $\mathcal{F}^{\text{Arbitrum}}_{\text{settlement}}$ accepts whenever $\mathcal{P'}^{\text{Arbitrum}}$ completes. Hence the output is identical.

    \emph{Case~2: Failed settlement.} If a fraud proof invalidates the peg-out or L1 conditions are not met, no output is produced in either execution.

    \medskip
    \noindent\textbf{(2) On-chain transactions.} The peg-out transaction published to $\mathcal{F}_{\text{ledger}}$ (whether via operator or escape-hatch) is generated by $\mathcal{P'}^{\text{Arbitrum}}$, which runs identically in both simulators. The game hop only affects when the ideal functionality generates I/O output. By EUF-CMA security, the distributions are computationally indistinguishable.

    \medskip
    \noindent\textbf{(3) Adversarial leakage.} Both simulators maintain synchronized corruption status and execute the same protocol logic. The game hop interposes $\mathcal{F}^{\text{Arbitrum}}_{\text{settlement}}$ between the simulator and the I/O output but does not alter the simulator's internal execution. The leakage is identical.

    \medskip
    Conclusively,
    $\mathsf{EXEC}^{\mathcal{F}^{\text{Arbitrum}}_{\text{layer2-read}}}_{\mathcal{S}_{\text{Arbitrum-read}}, \mathcal{E}}(k)\stackrel{c}{\approx}\mathsf{EXEC}^{\mathcal{F}^{\text{Arbitrum}}_{\text{layer2-settlement}}}_{\mathcal{S}_{\text{Arbitrum-settlement}}, \mathcal{E}}(k)$.
\end{proof}

As a final step, we extend $\mathcal{F}^{\text{Arbitrum}}_{\text{layer2-settlement}}$ with the subroutine $\mathcal{F}^{\text{Arbitrum}}_{\text{updRnd}}$ to reach $\mathcal{F}^{\text{Arbitrum}}_{\text{layer2}} = (\mathcal{F}_{\text{client}}, \mathcal{F}_{\text{ledger}} \mid \mathcal{F}^{\text{Arbitrum}}_{\text{submit}}, \mathcal{F}^{\text{Arbitrum}}_{\text{join}},\\ \mathcal{F}^{\text{Arbitrum}}_{\text{update}}, \mathcal{F}^{\text{Arbitrum}}_{\text{read}}, \mathcal{F}^{\text{Arbitrum}}_{\text{settlement}}, \mathcal{F}^{\text{Arbitrum}}_{\text{updRnd}})$.

\vspace{1em}\begin{functionality}{Description of $\mathcal{M}_{\text{client}}$ of $\mathcal{F}^{\text{Arbitrum}}_{\text{layer2}}$}{

\textbf{Implemented role(s):} \{client\}

\noindent \textbf{Main:}

\vspace{0.5em}

\textbf{recv} \{Submit, $\mathit{request}$\} \textbf{from} I/O:

\begin{enumerate}[itemsep=0.5em]
\item \textbf{send} \{Submit, $\mathit{request}$, $\mathsf{internalState}$\} \textbf{to} $(\pcur, \scur, \mathcal{F}^{\text{Arbitrum}}_{\text{submit}}:\text{submit})$,
\textbf{wait for} \{Submit, $\mathit{response}$\} s.t. $\mathit{response} \in \{\text{true}, \text{false}\}$;
\item \textbf{if} $\mathit{response} =$ true: $\mathsf{requestQueue}.\text{add}(\mathit{request})$;
\textbf{send} $\mathit{request}$ \textbf{to} $\mathcal{S}$ via NET;
\end{enumerate}

\hrule
\vspace{0.5em}

\textbf{recv} \{Join, $\mathit{Attachment}$\} \textbf{from} NET:

\begin{enumerate}[itemsep=0.5em]
\item \textbf{send} \{Join, $\mathit{Attachment}$, $\mathsf{internalState}$\} \textbf{to} $(\pcur, \scur, \mathcal{F}^{\text{Arbitrum}}_{\text{join}}:\text{join})$,
\textbf{wait for} \{Join, $\mathit{response}$\} s.t. $\mathit{response} \in \{\text{true}, \text{false}\}$;
\item \textbf{if} $\mathit{response} =$ true: update $\mathsf{internalState}$ according to $\mathit{Attachment}$;
\textbf{reply} \{Join, $s_{\mathit{init}}$\} via I/O;
\end{enumerate}

\hrule
\vspace{0.5em}

\textbf{recv} \{Update, $\mathit{Attachment}$\} \textbf{from} NET:

\begin{enumerate}[itemsep=0.5em]
\item \textbf{send} \{Update, $\mathit{Attachment}$, $\mathsf{internalState}$\} \textbf{to} $(\pcur, \scur, \mathcal{F}^{\text{Arbitrum}}_{\text{update}}:\text{update})$,
\textbf{wait for} \{Update, $\mathit{response}$, $\mathit{newState}$, $\mathit{executedReq}$\};
\item \textbf{if} $\mathit{response} =$ true: update $\mathsf{internalState}$ with $\mathit{newState}$ and $\mathit{executedReq}$;
\end{enumerate}

\hrule
\vspace{0.5em}

\textbf{recv} \{Read\} \textbf{from} I/O:

\begin{enumerate}[itemsep=0.5em]
\item \textbf{send} \{Read, $\mathsf{internalState}$\} \textbf{to} $(\pcur, \scur, \mathcal{F}^{\text{Arbitrum}}_{\text{read}}:\text{read})$,
\textbf{wait for} $\mathit{ReadResult}$;
\item \textbf{if} $\mathit{ReadResult} \neq \bot$: \textbf{reply} \{Read, $\mathit{ReadResult}$\} via I/O;
\end{enumerate}

\hrule
\vspace{0.5em}

\textbf{recv} \{Settlement, $\mathit{Attachment}$\} \textbf{from} NET:

\begin{enumerate}[itemsep=0.5em]
\item \textbf{send} \{Settlement, $\mathit{Attachment}$, $\mathsf{internalState}$\} \textbf{to} $(\pcur, \scur, \mathcal{F}^{\text{Arbitrum}}_{\text{settlement}}:\\\text{settlement})$,
\textbf{wait for} \{Settlement, $\mathit{response}$, $s_{\mathit{settle}}$\} s.t. $\mathit{response} \in \{\text{true}, \text{false}\}$;
\item \textbf{if} $\mathit{response} =$ true: update $\mathsf{internalState}$;
\textbf{reply} \{Settlement, $s_{\mathit{settle}}$\} via I/O;
\end{enumerate}

\hrule
\vspace{0.5em}

\textbf{recv} \{UpdateRound\} \textbf{from} NET:

\begin{enumerate}[itemsep=0.5em]
\item \textbf{send} \{UpdateRound, $\mathsf{internalState}$\} \textbf{to} $(\pcur, \scur, \mathcal{F}^{\text{Arbitrum}}_{\text{updRnd}}:\text{updRnd})$,
\textbf{wait for} \{UpdateRound, $\mathit{response}$\} s.t. $\mathit{response} \in \{\text{true}, \text{false}\}$;
\item \textbf{if} $\mathit{response} =$ true: $\mathsf{round} \leftarrow \mathsf{round} + 1$;
\textbf{reply} \{UpdateRound, $\mathit{response}$\} via NET;
\end{enumerate}

\hrule
\vspace{0.5em}

\textbf{recv} \{GetCurRound\} \textbf{from} I/O or NET:
\begin{enumerate}[itemsep=0.5em]
    \item \textbf{reply} \{GetCurRound, $\mathsf{round}$\};
\end{enumerate}

\hrule
\vspace{0.5em}

\textbf{recv} \{ReadL1\} \textbf{from} I/O or NET:
\begin{enumerate}[itemsep=0.5em]
    \item \textbf{send} \{Read\} \textbf{to} $(\pcur, \scur, \mathcal{F}_{\text{ledger}}: \text{client}_{\text{L1}})$;
    \textbf{wait for} $\mathit{L1ReadResult}$;
    \item \textbf{reply} \{ReadL1, $\mathit{L1ReadResult}$\} via I/O;
\end{enumerate}

}\end{functionality}\vspace{.5em}

\begin{functionality}{Description of simulator $\mathcal{S}_{\text{Arbitrum-updRnd}}$}{

The simulator $\mathcal{S}_{\text{Arbitrum-updRnd}}$ behaves identically to $\mathcal{S}_{\text{Arbitrum-settlement}}$, except for the following additional behavior when handling round-update requests:

\vspace{0.5em}
\textbf{Round update interaction with $\mathcal{F}^{\text{Arbitrum}}_{\text{layer2}}$:}

\begin{enumerate}[itemsep=0.5em]

\item Whenever $\mathcal{S}_{\text{Arbitrum-updRnd}}$ receives a round-update instruction from $\mathcal{A}$ (i.e., $\mathcal{A}$ advances the clock in the simulated $\mathcal{P'}^{\text{Arbitrum}}$), $\mathcal{S}_{\text{Arbitrum-updRnd}}$ simultaneously sends $\{\text{UpdateRound}\}$ to $\mathcal{F}^{\text{Arbitrum}}_{\text{layer2}}$ via NET.

\end{enumerate}

}\end{functionality}\vspace{1em}

\begin{lemma}
\label{lem:Arb7}
    For all PPT adversaries $\mathcal{A}$, there exists a PPT simulator $\mathcal{S}_{\text{Arbitrum-updRnd}}$ such that for all PPT environments $\mathcal{E}$ and all security parameters $k\in\mathbb{N}$,
    \[
    \mathsf{EXEC}^{\mathcal{F}^{\text{Arbitrum}}_{\text{layer2-settlement}}}_{\mathcal{S}_{\text{Arbitrum-settlement}},\mathcal{E}}(k)
    \ \stackrel{c}{\approx}\
    \mathsf{EXEC}^{\mathcal{F}^{\text{Arbitrum}}_{\text{layer2}}}_{\mathcal{S}_{\text{Arbitrum-updRnd}},\mathcal{E}}(k),
    \]
    where $\stackrel{c}{\approx}$ denotes computational indistinguishability.
\end{lemma}

\begin{proof}
    Fix an arbitrary PPT environment $\mathcal{E}$. The only difference between the two games is that $\mathcal{F}^{\text{Arbitrum}}_{\text{layer2}}$ routes the UpdateRound request (from NET) through $\mathcal{F}^{\text{Arbitrum}}_{\text{updRnd}}$. We analyze the three observable components.

    \medskip
    \noindent\textbf{(1) I/O outputs.} The only I/O output affected is $\{\text{GetCurRound}, \mathsf{round}\}$. In $\mathsf{EXEC}^{\mathcal{F}^{\text{Arbitrum}}_{\text{layer2-settlement}}}_{\mathcal{S}_{\text{Arbitrum-settlement}},\mathcal{E}}(k)$, $\mathcal{S}_{\text{Arbitrum-settlement}}$ maintains the round counter inside the simulated $\mathcal{P'}^{\text{Arbitrum}}$, whose clock is advanced by $\mathcal{A}$. In $\mathsf{EXEC}^{\mathcal{F}^{\text{Arbitrum}}_{\text{layer2}}}_{\mathcal{S}_{\text{Arbitrum-updRnd}},\mathcal{E}}(k)$, $\mathcal{S}_{\text{Arbitrum-updRnd}}$ additionally forwards each round-update request from $\mathcal{A}$ to $\mathcal{F}^{\text{Arbitrum}}_{\text{updRnd}}$, which checks that a new transaction batch has been published on L1 within $T_{\text{period}}$ rounds before accepting. In $\mathcal{P'}^{\text{Arbitrum}}$, the operator is triggered by $\mathcal{F}^{\text{Arbitrum}}_{\text{com}}$ to publish a batch every $T_{\text{period}}$ rounds via the \{UpdateRequest\} request. As long as at least one honest operator publishes periodically and $\mathcal{F}_{\text{ledger}}$ guarantees L1 liveness, the L1 ledger reflects a new batch within every $T_{\text{period}}$-round window. Therefore, the check in $\mathcal{F}^{\text{Arbitrum}}_{\text{updRnd}}$ passes at exactly the same times as the round advances in $\mathcal{P'}^{\text{Arbitrum}}$, and the round counters evolve identically. All other I/O outputs (Join, Read, Settlement) are unaffected by this game hop.

    \medskip
    \noindent\textbf{(2) On-chain transactions.} The UpdateRound request itself does not publish any new transactions to $\mathcal{F}_{\text{ledger}}$; it only reads L1 to verify that a recent batch exists. The batch publications are generated by the operator's \{UpdateRequest\} request in $\mathcal{P'}^{\text{Arbitrum}}$, which runs identically in both simulators. Hence on-chain transactions are identical in both games.

    \medskip
    \noindent\textbf{(3) Adversarial leakage.} The game hop only interposes $\mathcal{F}^{\text{Arbitrum}}_{\text{updRnd}}$ on the UpdateRound request. Since $\mathcal{F}^{\text{Arbitrum}}_{\text{updRnd}}$ only reads L1 and does not generate any network messages to corrupted parties, the simulator's internal execution of $\mathcal{P'}^{\text{Arbitrum}}$ and its interaction with corrupted parties are unaffected. The leakage delivered to $\mathcal{A}$ is identical.

    \medskip
    Conclusively,
    $\mathsf{EXEC}^{\mathcal{F}^{\text{Arbitrum}}_{\text{layer2-settlement}}}_{\mathcal{S}_{\text{Arbitrum-settlement}}, \mathcal{E}}(k)\stackrel{c}{\approx}\mathsf{EXEC}^{\mathcal{F}^{\text{Arbitrum}}_{\text{layer2}}}_{\mathcal{S}_{\text{Arbitrum-updRnd}}, \mathcal{E}}(k)$.
\end{proof}

\ThmrealizeArbitrum*
\begin{proof}
    Let $\mathcal{A}$ be any PPT adversary and let $\mathcal{E}$ be any PPT environment. By Lemmas~\ref{lem:Arb1}--\ref{lem:Arb7}, the sequence of game hops yields a chain of computationally indistinguishable execution ensembles, each adjacent pair differing only in how one subroutine is implemented:
    \[
    \mathsf{EXEC}^{\mathcal{P}^{\text{Arbitrum}}}_{\mathcal{A},\mathcal{E}}
    \ \stackrel{c}{\approx}\ 
    \mathsf{EXEC}^{\mathcal{F}^{\text{Arbitrum}}_{\text{dummy}}}_{\mathcal{S}_{\text{Arbitrum}},\mathcal{E}}
    \ \stackrel{c}{\approx}\ 
    \cdots
    \ \stackrel{c}{\approx}\ 
    \mathsf{EXEC}^{\mathcal{F}^{\text{Arbitrum}}_{\text{layer2}}}_{\mathcal{S}_{\text{Arbitrum-updRnd}},\mathcal{E}}.
    \]
    By transitivity of computational indistinguishability, we obtain
    $
    \mathsf{EXEC}^{\mathcal{P}^{\text{Arbitrum}}}_{\mathcal{A},\mathcal{E}}
    \stackrel{c}{\approx}
    \mathsf{EXEC}^{\mathcal{F}^{\text{Arbitrum}}_{\text{layer2}}}_{\mathcal{S}_{\text{Arbitrum-updRnd}},\mathcal{E}}$, which proves that $\mathcal{P}^{\text{Arbitrum}}$ iUC-realizes $\mathcal{F}^{\text{Arbitrum}}_{\text{layer2}}$.
\end{proof}

\section{Missing Proofs of Comparative Analysis}
\label{apdx:proof}

\ThmlivenessPS*

\begin{proof}
    To begin, observe from the ideal functionality subroutine $\mathcal{F}_{\text{update}}$ that in PCF and sidechain-based protocols, all execution occurs off-chain without the L1 interaction requirement. Consequently, the liveness latency is determined solely by the off-chain environment and is always \( T_{L_2} \).

Suppose the protocol simultaneously realizes \( f_{\text{OP}} \)-\emph{safety} and 
\[
\left\{ \left\lfloor \frac{n_{\text{OP}} - (f_{\text{OP}} + 1)}{2} \right\rfloor + 1, \; T_{L_2} \right\}\text{-liveness}.
\]
This implies that to execute a request, it suffices to obtain agreement from only
\[
n_{\text{OP}} - \left\lfloor \frac{n_{\text{OP}} - (f_{\text{OP}} + 1)}{2} \right\rfloor - 1
\]
operators.

Assume the adversary corrupts \( f_{\text{OP}} \) operators. Then, by coordinating with up to half of the remaining honest operators, the adversary can potentially form a coalition of size
\[
f_{\text{OP}} + \left\lfloor \frac{n_{\text{OP}} - f_{\text{OP}}}{2} \right\rfloor.
\]

We now show that:
\[
f_{\text{OP}} + \left\lfloor \frac{n_{\text{OP}} - f_{\text{OP}}}{2} \right\rfloor \geq n_{\text{OP}} - \left\lfloor \frac{n_{\text{OP}} - (f_{\text{OP}} + 1)}{2} \right\rfloor - 1.
\]

Let \( x := n_{\text{OP}} - f_{\text{OP}} > 0 \). Then the inequality becomes:
\[
\left\lfloor \frac{x}{2} \right\rfloor + \left\lfloor \frac{x - 1}{2} \right\rfloor \geq x - 1.
\]

This identity holds for all integers \( x > 0 \), with equality:
\[
\left\lfloor \frac{x}{2} \right\rfloor + \left\lfloor \frac{x - 1}{2} \right\rfloor = x - 1.
\]

Therefore, the adversary is capable of gathering sufficient votes to satisfy the liveness condition. This implies that the adversary could cause inconsistent state transitions, contradicting the assumption that the protocol realizes \( f_{\text{OP}} \)-safety. 

Thus, there exists a fundamental trade-off between liveness and safety under adversarial corruption thresholds.
\end{proof}

\ThmlivenessA*
\begin{proof}
We prove this by contradiction. Assume that a secure rollup protocol realizes \emph{$(f_{L_2}+f_{L_1})$-safety} while also achieving a stronger liveness tolerance than stated that $\{n_{OP} + f_{L_1},\, T_{L_1}\}$-liveness even \emph{without} the self-submit path. This would imply that even if all operators are corrupted, the protocol still guarantees that a request from an honest client is eventually executed within $T_{L_1}$ via the collaborative path alone.

However, the collaborative path requires at least one operator to include the request in a batch and publish it on the L1 blockchain. If all operators are corrupted, they can simply ignore the honest client's request indefinitely, preventing it from being published or executed. This violates liveness on the collaborative path and contradicts the assumption. Hence, $\{n_{OP} + f_{L_1},\, T_{L_1}\}$-liveness is unachievable when liveness depends solely on operator cooperation.

As for the \emph{self-submit} situation: when an honest client detects censorship by the operators, the client can bypass them and submit the transaction directly to L1 through $\mathcal{F}_{\text{ledger}}$. By L1 liveness, the transaction is committed on L1 within $T_{L_1}$, and the honest verifier ensures only correct state transitions survive the dispute window $T_{\text{challenge}}$. Consequently, the self-submit path yields $\{f_{L_1},\, T_{L_1}\}$-liveness and depends only on L1 liveness and the honest verifier, tolerating full operator corruption.

In summary, the best achievable liveness for state-update requests is $\{(n_{OP}-1) + f_{L_1},\, T_{L_1} + T_{L_2}\}$ on the collaborative path (requiring at least one honest operator) and $\{f_{L_1},\, T_{L_1}\}$ on the self-submit path requiring only L1 liveness. 

\end{proof}

\ThmdataPS*
\begin{proof}
To start with, since all the $n$ participants are required to publish their joining transactions on the L1 for the protocol to begin, the L1 storage efficiency must be at least $\Omega(n_p)$.

For the L2 storage efficiency, we prove it by contradiction. Assume that \emph{liveness} still holds even if fewer than $m$ executed requests are stored off-chain. In that case, there must exist at least one request that cannot be retrieved through the read interface at a certain round based solely on the L2 storage. Since PCF and sidechain protocols do not retrieve execution data from the L1, the missing request would violate the definition of \emph{liveness}. Thus, the assumption leads to a contradiction, and storing fewer than $m$ requests off-chain breaks \emph{liveness}.
\end{proof}

\ThmdataA*
\begin{proof}
    The L1 storage efficiency consists of two parts. The first part is the $\Omega(n_p)$ storage requirement for the participants’ joining requests, all of which are published on the L1.
    
    For the second part, assume that \emph{liveness} still holds for state update requests in a rollup protocol even if fewer than $m$ requests are stored on the L1. Since the rollup protocol’s read result is derived directly from the L1, this would imply that at least one request cannot be retrieved by a read query at a certain round. This contradicts the definition of \emph{liveness}, and therefore the assumption must be false. Thus, storing fewer than $m$ requests on the L1 would break \emph{liveness}, and the L1 storage efficiency must also be at least $\Omega(m) + \Omega(n_p)$ in total.
\end{proof}

\section{\xr Protocol Design and Security Analysis}
\label{apdx:cross}

\subsection{$\mathcal{F}_{\text{ledger}}$ instantiation for \xr}
\label{apd:FRollLedger}

\subsubsection*{Submission functionality $\mathcal{F}_{\text{submit}_{\text{L1}}}$}

The submission subroutine accepts a request to submit a transaction to L1 if and only if the transaction conforms to one of the FRoll transaction types. For \xr, the following transactions are accepted by $\mathcal{F}^{\text{FRoll}}_{\text{submit}_{\text{L1}}}$: client deposit transactions ($\mathit{TX}_{\mathit{deposit}}$); client-signed self-submitted transactions ($\mathit{TX}_{\mathit{self}}$), covering both regular and fast-finality variants; operator-published transaction batches with claimed post-states ($\mathit{batch}, \mathit{resultState}$); peg-out transactions ($\mathit{TX}_{\mathit{peg\text{-}out}}$) for  escape-hatch settlement; watchtower certificate transactions ($\mathit{TX}_{\mathit{WT}} = \{\mathit{TX}_{\mathit{fast}}, \sigma, \sigma_W\}$) carrying a watchtower signature on a fast-finality transaction; and fraud-proof transactions $\mathit{TX}_{\mathit{fraud}}$ published by clients (verifiers).

\subsubsection*{Update functionality $\mathcal{F}_{\text{update}_{\text{L1}}}$}
The update subroutine extends the base $\mathcal{F}_{\text{update}_{\text{L1}}}$ by enforcing FRoll-specific commitment rule across two finality paths: optimistic (challenge-period-based) and fast (watchtower-quorum-based). The two paths share the same separation between L1 inclusion and L1 finalization: a transaction's appearance in $\{\mathit{TX}\}_{\text{L1}}$ records that L1 has accepted the transaction in BLOB without execution, thus does not by itself change$\mathit{State}_{\text{L1}}$

\begin{itemize}
    \item \textbf{Operator-published state transitions.} When an operator publishes the batch transaction $\mathit{TX}_{\mathit{batch}} = (\mathit{batch}, \mathit{resultState})$ to L1, $\mathit{TX}_{\mathit{batch}}$ carrying the operator's claimed L2 execution state $\mathit{resultState}$ is committed in $\{\mathit{TX}\}_{\text{L1}}$ ( representing stored in a blockchain BLOB), only when an L1-verifiable predicate over $\{\mathit{TX}\}_{\text{L1}}$ holds: $\mathit{TX}_{\mathit{batch}}$ has survived the challenge period $T_{\text{challenge}}$ without a valid fraud-proof transaction $\mathit{TX}_{\mathit{fraud}}$ being committed against it. L1 maintainers do not execute the L2 transition; the L1-finalized state $\mathit{State}_{\text{L1}}$ is advanced to $\mathit{resultState}$ 

    \item \textbf{Fast-finality state transitions.} A fast-finality transaction $\mathit{TX}_{\mathit{fast}}$ along with its execution result  is committed in $\{\mathit{TX}\}_{\text{L1}}$ without waiting for the full challenge period, once at least $f{+}1$ distinct watchtower certificate transactions $\mathit{TX}_{\mathit{WT}}^{(j)} = (\mathit{TX}_{\mathit{fast}}, \sigma, \sigma_{W_j})$ for the same $\mathit{TX}_{\mathit{fast}}$ are committed on L1. The $\mathit{State}_{\text{L1}}$ will not change since it requires no L1 execution.

    \item \textbf{Settlement.} A peg-out transaction $\mathit{TX}_{\mathit{peg\text{-}out}}$ (collaborative or escape-hatch), once submitted, enters $\{\mathit{TX}\}_{\text{L1}}$ as a record without immediately advancing $\mathit{State}_{\text{L1}}$. After $\mathit{TX}_{\mathit{peg\text{-}out}}$ has survived $T_{\text{challenge}}$ without a fraud proof, $\mathit{State}_{\text{L1}}$ is advanced to reflect the post-settlement state carried in $\mathit{TX}_{\mathit{peg\text{-}out}}$.
\end{itemize}

\subsubsection*{Preservation of L1 safety and liveness.}

The instantiation preserves the original security guarantees of $\mathcal{F}_{\text{ledger}}$. FRoll-specific logic is purely additive: the new transaction types are accepted as valid L1 payloads, and the challenge-period, watchtower-quorum, and fraud-proof rules are deterministic predicates over transactions already committed on L1. No existing L1 transaction is rejected or reordered by the instantiation, so the underlying ledger's self-consistency and view-consistency (L1 safety) are inherited without change. Each FRoll transaction is submitted through the unchanged $\text{SubmitL1}$ interface and included in $\{\mathit{TX}\}_{\text{L1}}$ under the same conditions as any other L1 transaction, so the liveness bound carries over: any honest submission is reflected on L1 within $T_{L_1}$. The optimistic challenge-period rule and the $f{+}1$-watchtower quorum rule are both enforced on already-final L1 data, so the resulting state commitments cannot diverge across honest views so long as L1 itself does not. Consequently, $\mathcal{F}^{\text{FRoll}}_{\text{ledger}}$ still realizes $(f_{L_1})$-safety and liveness based on the specification above.

\subsection{\xr Real Protocol}
\label{apd:XRollReal}
 
We formally define the real-world protocol as
$\mathcal{P}^{\text{FRoll}} := (\mathcal{P}^{\text{FRoll}}_{\text{client}}: \text{client}, \mathcal{F}_{\text{ledger}} \mid \mathcal{P}^{\text{FRoll}}_{\text{operator}}, \mathcal{P}^{\text{FRoll}}_{\text{client}}: \text{verifier},\mathcal{P}^{\text{FRoll}}_{\text{watchtower}}, \mathcal{F}_{\text{sig}}, \mathcal{F}^{\text{FRoll}}_{\text{com}})$.
The protocol consists of three machine types: the client machine, the operator machine, and the watchtower machine. Each client machine implements two roles, $\{\text{client}, \text{verifier}\}$: the client role is the public interface used by the environment for joining, submitting (regular and fast-finality), settling, and reading; the verifier role is a private, internal role that monitors L1 publications from operators and posts fraud proofs when invalid state transitions are detected. The verifier role is not addressable from the environment and produces no I/O outputs, only its L1 publications (fraud-proof transactions) are observable, on the same footing as any other on-chain transaction. We assume at least one honest operator, at least one honest client (so that at least one honest verifier exists), and $n = 2f{+}1$ watchtowers of which at most~$f$ may be corrupted. The protocol structure is shown in Figure~\ref{fig:XRoll}.
 
\begin{figure}
    \centering
    \includegraphics[width=\linewidth]{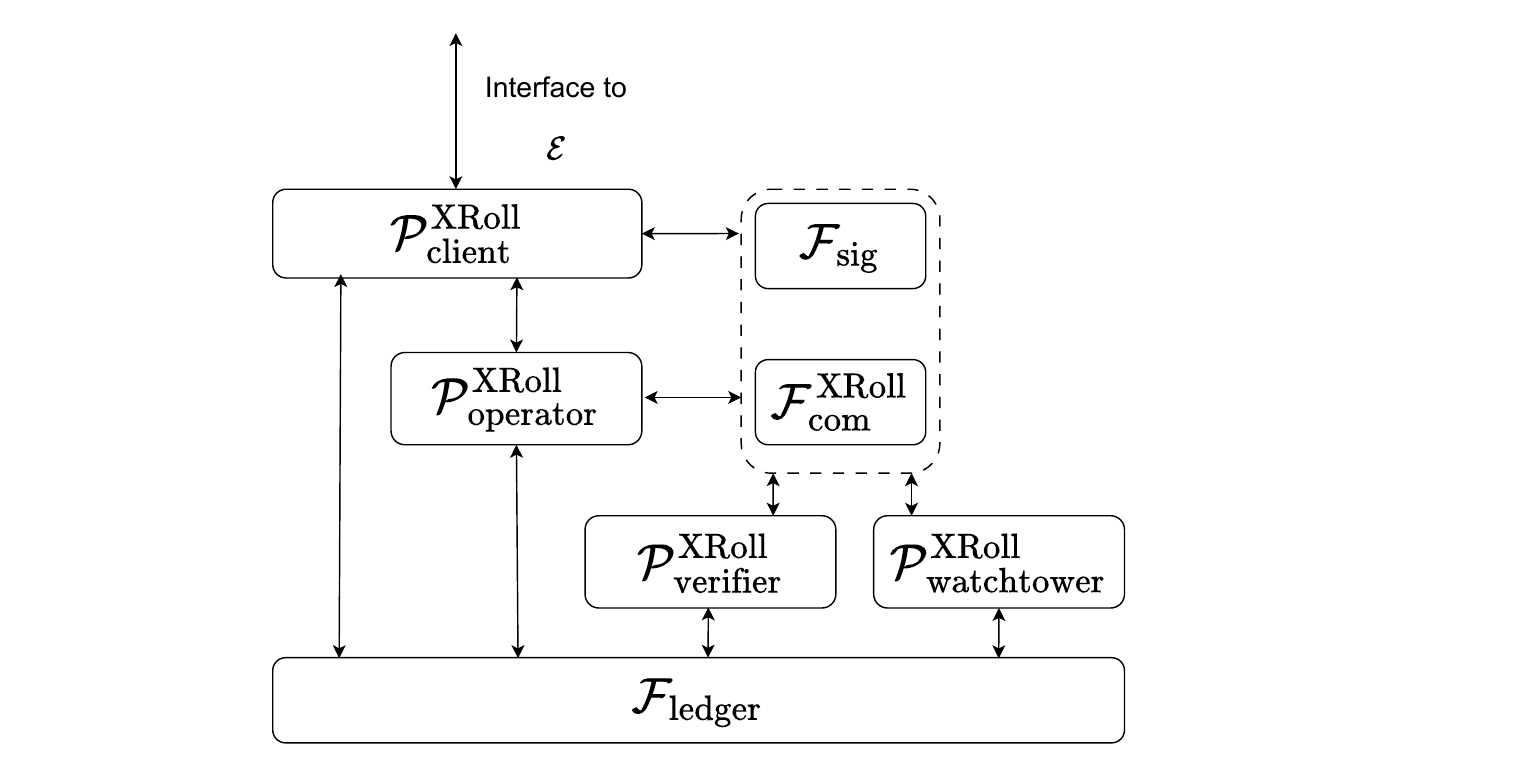}
    \caption{The \xr protocol}
    \label{fig:XRoll}
\end{figure}

\subsubsection{Client Protocol}
 
The client machine $\mathcal{M}^{\text{FRoll}}_{\text{client}}$ implements both the public client role and the private verifier role. As a client, it submits regular and fast-finality transactions, joins, settles, and reads executed requests and states. In addition to submitting regular transactions to the operator, a client may submit selected transactions for fast finality via \texttt{SubmitFast}; such transactions are deemed valid upon receiving on-chain a quorum certificate with $f{+}1$ watchtower signatures. As a verifier, it monitors operator batch publications on L1 and posts fraud proofs when an invalid state transition is detected; the verifier role is internal, therefore it has no I/O interface to the environment.
 
\vspace{1em}\begin{functionality}{Description of protocol $\mathcal{P}^{\text{FRoll}}_{\text{client}} = (\text{client}, \text{verifier})$}{

\textbf{Participating roles:} \{client, verifier\}

\noindent \textbf{Corruption model:} dynamic corruption of the client role, at least one honest; the verifier role is private and follows the corruption status of the client machine that hosts it.

\noindent \textbf{Protocol parameters:}
\begin{itemize}
    \item $T_{\text{challenge}}$: challenge period duration
    \item $\mathsf{Val}(\mathit{TX})$: transaction validity checking algorithm. Given $\mathit{TX} = (\mathit{sender}, \mathit{receiver},\\ \mathit{value}, \mathit{data})$, $\mathsf{Val}(\mathit{TX})$ returns True if and only if:
    \begin{itemize}
        \item $\mathit{TX}$ is well-formed: $\mathit{sender}, \mathit{receiver} \in \{0,1\}^{*}$, $\mathit{value} \in \mathbb{N}_{\geq 0}$; \cmt{Correct transaction format}
        \item $\mathit{sender} \in \mathsf{identities}$; \cmt{Sender is a registered participant}
        \item Let $\mathit{bal}(\mathit{sender})$ denote the balance of $\mathit{sender}$ derived from $\mathsf{stateList}$. Then $\mathit{value} \leq \mathit{bal}(\mathit{sender})$; \cmt{Sender has sufficient balance}
        \item $\mathit{TX}$ does not conflict with any $\mathit{TX}' \in \mathsf{executedRequest}$; \cmt{No double-spending}
    \end{itemize}
    \item $\mathsf{Check}(\mathit{batch}, \mathit{resultState}, \mathit{prevState})$: batch validity checking algorithm used by the verifier role. Returns true if and only if:
    \begin{itemize}
        \item $\forall\, \mathit{TX} \in \mathit{batch}$: $\mathsf{Val}(\mathit{TX}) = \text{true}$; \cmt{Each transaction is well-formed}
        \item $\mathit{resultState}$ is the correct output of sequentially executing all $\mathit{TX} \in \mathit{batch}$ starting from $\mathit{prevState}$; \cmt{State transition is correct}
        \end{itemize}
        \item $\mathsf{GenF}(\mathit{batch}, \mathit{resultState}, \mathit{prevState})$: fraud-proof generation algorithm. Returns a fraud proof transaction $TX_{\textit{fraud}}$ if $\mathsf{Check}$ fails, otherwise $\bot$.
    
\end{itemize}

}\end{functionality}\vspace{.5em}

\begin{functionality}{Description of $\mathcal{M}^{\text{FRoll}}_{\text{client}}$}{

\textbf{Implemented role(s):} \{client, verifier\} \cmt{verifier is a private internal role}

\noindent \textbf{Subroutines:} $\mathcal{F}_{\text{sig}}$: \{signer, verifier\}, $\mathcal{F}^{\text{FRoll}}_{\text{com}}$: com, $\mathcal{F}_{\text{ledger}}$: client\textsubscript{L1}, $\mathcal{P}^{\text{FRoll}}_{\text{operator}}$: operator,\\ $\mathcal{P}^{\text{FRoll}}_{\text{watchtower}}$: watchtower

\noindent \textbf{Internal state:} 
\begin{itemize}
    \item $\mathsf{round} \in \mathbb{N}_{\geq 0}$, $\mathsf{round} = 0$ \hfill\cmt{Current round}
    \item $\mathsf{requestQueue} \subset \{0,1\}^{*}$, $\mathsf{requestQueue} = \emptyset$ \hfill\cmt{Queued unexecuted requests}
    \item $\mathsf{executedRequest} \subset \{0,1\}^{*}$, $\mathsf{executedRequest} = \emptyset$ \hfill\cmt{Executed requests}
    \item $\mathsf{stateList} \subset \{0,1\}^{*}$, $\mathsf{stateList} = \emptyset$ \hfill\cmt{L2 state list}
    \item $\mathsf{onchainState} \subset \{0,1\}^{*}$, $\mathsf{onchainState} = \emptyset$ \hfill\cmt{L1 committed state}
    \item $\mathsf{identities} \subset \{0,1\}^{*}$, $\mathsf{identities} = \{\mathsf{pid}_{\textit{op}}, ContractAddr, \{\mathsf{pid}\}_{\textit{watchtower}}\}$ \hfill\cmt{Operator, watchtower identities and contract address}
    \item $\mathsf{lastConfirmedState} \in \{0,1\}^{*}$, $\mathsf{lastConfirmedState} = \emptyset$ \hfill\cmt{Verifier's last confirmed L2 state}
\end{itemize}

\noindent \textbf{CheckID}(\emph{pid}, \emph{sid}, \emph{role}): Accept all messages with the same \emph{sid}. The \emph{verifier} role is private: it only accepts messages from $\mathcal{F}^{\text{FRoll}}_{\text{com}}$ (\{UpdateCheck\}); it does not accept I/O requests from the environment.

\noindent \textbf{Corruption behavior:}
\begin{itemize}
    \item \textbf{DetermineCorrStatus}(\emph{pid}, \emph{sid}, \emph{role}): Return $\mathsf{corr}$ for the \emph{client} role; the \emph{verifier} role inherits the corruption status of the hosting machine.
    \item \textbf{LeakedData}(\emph{pid}, \emph{sid}, \emph{role}): Return $\mathsf{internalState}$. \cmt{Plaintext assumption; covers both roles' state}
\end{itemize}

\vspace{0.5em}
\noindent \textbf{Main (client role):}

\vspace{0.5em}

\textbf{recv} \{Join, $\mathit{ContractAddr}$, $s_{\mathit{init}}$\} \textbf{from} I/O:

\begin{enumerate}[itemsep=0.5em]
    \item Create $\mathit{TX}_{\mathit{deposit}} = (\pcur, \mathit{ContractAddr}, s_{\mathit{init}}, \epsilon)$;

    \item \textbf{send} \{SubmitL1, $\mathit{TX}_{\mathit{deposit}}$\} \textbf{to} $(\pcur, \scur, \mathcal{F}_{\text{ledger}}: \text{client}_{\text{L1}})$;

    \item \textbf{send} \{ReadL1\} \textbf{to} $(\pcur, \scur, \mathcal{F}_{\text{ledger}}:\text{client}_{\text{L1}})$;
    \textbf{wait for} \{ReadL1, $\mathit{L1ReadResult} = \{\{\mathit{TX}\}_{\text{L1}}, \mathit{State}_{\text{L1}}, \{\mathit{pid}\}\}$\};

    \item \textbf{if} $\mathit{TX}_{\mathit{deposit}} \in \{\mathit{TX}\}_{\text{L1}}$: create $\mathit{TX}_{\mathit{peg\text{-}in}} = (\pcur, \epsilon, s_{\mathit{init}}, \epsilon)$;

    \item \textbf{send} \{Sign, $\mathit{TX}_{\mathit{peg\text{-}in}}$\} \textbf{to} $(\pcur, \scur, \mathcal{F}_{\text{sig}}: \text{signer})$;
    \textbf{wait for} \{Signature, $\sigma$\};

    \item \textbf{send} \{Message, \{Join, $\mathit{TX}_{\mathit{peg\text{-}in}}$, $\sigma$, \emph{receiver}\}\} \textbf{to} $(\pcur, \scur, \mathcal{F}^{\text{FRoll}}_{\text{com}}:\text{com})$, where $\mathit{receiver} = (\pcur, \scur, \mathcal{P}^{\text{FRoll}}_{\text{operator}}:\text{operator})$;

    \item $\mathsf{requestQueue}.\text{add}(\mathit{TX}_{\mathit{peg\text{-}in}})$; \cmt{Pending until L1 commitment + challenge survival}

    \item \textbf{send} \{ReadL1\} \textbf{to} $(\pcur, \scur, \mathcal{F}_{\text{ledger}}:\text{client}_{\text{L1}})$;
    \textbf{wait for} \{ReadL1, $\mathit{L1ReadResult}$\}; $\mathsf{identities}.\text{add}(\mathit{ContractAddr})$;

    \item \textbf{if} $\mathit{TX}_{\mathit{peg\text{-}in}} \in \mathit{L1ReadResult}.\{\mathit{TX}\}_{\text{L1}}$ and ($T_{\text{challenge}}$ survived without fraud proof, or $\geq f{+}1$ watchtower signatures on L1):
    \begin{itemize}
        \item $\mathsf{requestQueue}.\text{remove}(\mathit{TX}_{\mathit{peg\text{-}in}})$;
        $\mathsf{executedRequest}.\text{add}(\mathit{TX}_{\mathit{peg\text{-}in}})$;
        \item $\mathsf{stateList} \leftarrow s_{\mathit{init}}$; $\mathsf{onchainState} \leftarrow s_{\mathit{init}}$;
        \item $\forall\, \mathit{ID} \in \mathit{L1ReadResult}.\{\mathit{pid}\}$: $\mathsf{identities}.\text{add}(\mathit{ID})$;
        \item \textbf{reply} \{Join, $s_{\mathit{init}}$\} via I/O;
    \end{itemize}
\end{enumerate}

\hrule\vspace{0.5em}

\textbf{recv} \{Submit, \texttt{collaborate}, $\mathit{TX}$\} \textbf{from} I/O, \textbf{s.t.} $\mathsf{Val}(\mathit{TX})$ passes:
\begin{enumerate}[itemsep=0.5em]
\item \textbf{send} \{Sign, $\mathit{TX}$\} \textbf{to} $(\pcur, \scur, \mathcal{F}_{\text{sig}}: \text{signer})$;
\textbf{wait for} \{Signature, $\sigma$\};
\item \textbf{send} \{Message, \{Submit, $\mathit{TX}$, $\sigma$, \emph{receiver}\}\} \textbf{to} $(\pcur, \scur, \mathcal{F}^{\text{FRoll}}_{\text{com}}:\text{com})$, where $\mathit{receiver} = (\pcur, \scur, \mathcal{P}^{\text{FRoll}}_{\text{operator}}:\text{operator})$;
\item $\mathsf{requestQueue}.\text{add}(\mathit{TX})$; \cmt{Pending; cleared on next Read after $T_{\text{challenge}}$ or $f{+}1$ watchtower attestations}
\end{enumerate}

\hrule\vspace{0.5em}

\textbf{recv} \{Submit, \texttt{selfsubmit}, $\mathit{TX}$\} \textbf{from} I/O, \textbf{s.t.} $\mathsf{Val}(\mathit{TX})$ passes:
\begin{enumerate}[itemsep=0.5em]
\item \textbf{send} \{Sign, $\mathit{TX}$\} \textbf{to} $(\pcur, \scur, \mathcal{F}_{\text{sig}}: \text{signer})$;
\textbf{wait for} \{Signature, $\sigma$\};
\item \textbf{send} \{SubmitL1, $\mathit{TX}$, $\sigma$\} \textbf{to} $(\pcur, \scur, \mathcal{F}_{\text{ledger}}:\text{client}_{\text{L1}})$;
\item \textbf{send} \{ReadL1\} \textbf{to} $(\pcur, \scur, \mathcal{F}_{\text{ledger}}:\text{client}_{\text{L1}})$; \textbf{wait for} $\mathit{L1ReadResult}$;
\item \textbf{if} $\mathit{TX} \in \mathit{L1ReadResult}.\{\mathit{TX}\}_{\text{L1}}$: $\mathsf{requestQueue}.\text{add}(\mathit{TX})$;
\end{enumerate}

\hrule\vspace{0.5em}

\textbf{recv} \{SubmitFast, \texttt{collaborate}, $\mathit{TX}_{\mathit{fast}}$\} \textbf{from} I/O, \textbf{s.t.} $\mathsf{Val}(\mathit{TX}_{\mathit{fast}})$ passes:
\begin{enumerate}[itemsep=0.5em]
\item \textbf{send} \{Sign, $\mathit{TX}_{\mathit{fast}}$\} \textbf{to} $(\pcur, \scur, \mathcal{F}_{\text{sig}}: \text{signer})$;
\textbf{wait for} \{Signature, $\sigma$\};
\item \textbf{send} \{Message, \{SubmitFast, $\mathit{TX}_{\mathit{fast}}$, $\sigma$, \emph{receiver}\}\} \textbf{to} $(\pcur, \scur, \mathcal{F}^{\text{FRoll}}_{\text{com}}:\text{com})$, where $\mathit{receiver} = (\pcur, \scur, \mathcal{P}^{\text{FRoll}}_{\text{operator}}:\text{operator})$;
\item $\mathsf{requestQueue}.\text{add}(\mathit{TX}_{\mathit{fast}})$; \cmt{Pending; cleared as soon as $\geq f{+}1$ watchtower signatures land on L1}
\end{enumerate}

\hrule\vspace{0.5em}

\textbf{recv} \{SubmitFast, \texttt{selfsubmit}, $\mathit{TX}_{\mathit{fast}}$\} \textbf{from} I/O, \textbf{s.t.} $\mathsf{Val}(\mathit{TX}_{\mathit{fast}})$ passes:
\begin{enumerate}[itemsep=0.5em]
\item \textbf{send} \{Sign, $\mathit{TX}_{\mathit{fast}}$\} \textbf{to} $(\pcur, \scur, \mathcal{F}_{\text{sig}}: \text{signer})$;
\textbf{wait for} \{Signature, $\sigma$\};
\item \textbf{send} \{SubmitL1, $\mathit{TX}_{\mathit{fast}}$, $\sigma$\} \textbf{to} $(\pcur, \scur, \mathcal{F}_{\text{ledger}}:\text{client}_{\text{L1}})$;
\item \textbf{send} \{ReadL1\} \textbf{to} $(\pcur, \scur, \mathcal{F}_{\text{ledger}}:\text{client}_{\text{L1}})$; \textbf{wait for} $\mathit{L1ReadResult}$;
\item \textbf{if} $\mathit{TX}_{\mathit{fast}} \in \mathit{L1ReadResult}.\{\mathit{TX}\}_{\text{L1}}$: $\mathsf{requestQueue}.\text{add}(\mathit{TX}_{\mathit{fast}})$;
\end{enumerate}

\hrule\vspace{0.5em}

\textbf{recv} \{Settlement, \texttt{collaborate}\} \textbf{from} I/O:
\begin{enumerate}[itemsep=0.5em]
\item Create $\mathit{TX}_{\mathit{peg\text{-}out}} = (\mathit{ContractAddr}, \pcur, \mathsf{stateList}, \epsilon)$;
\item \textbf{send} \{Sign, $\mathit{TX}_{\mathit{peg\text{-}out}}$\} \textbf{to} $(\pcur, \scur, \mathcal{F}_{\text{sig}}: \text{signer})$;
\textbf{wait for} \{Signature, $\sigma$\};
\item \textbf{send} \{Message, \{Settlement, $\mathit{TX}_{\mathit{peg\text{-}out}}$, $\sigma$, \emph{receiver}\}\} \textbf{to} $(\pcur, \scur, \mathcal{F}^{\text{FRoll}}_{\text{com}}:\text{com})$, where $\mathit{receiver} = (\pcur, \scur, \mathcal{P}^{\text{FRoll}}_{\text{operator}}:\text{operator})$;
\item $\mathsf{requestQueue}.\text{add}(\mathit{TX}_{\mathit{peg\text{-}out}})$;
\item \textbf{send} \{ReadL1\} \textbf{to} $(\pcur, \scur, \mathcal{F}_{\text{ledger}}:\text{client}_{\text{L1}})$; \textbf{wait for} $\mathit{L1ReadResult}$;
\item $\forall\, \mathit{TX}' \in \mathit{L1ReadResult}.\{\mathit{TX}\}_{\text{L1}}$ \textbf{s.t.} $\mathit{TX}' \in \mathsf{requestQueue}$ and ($T_{\text{challenge}}$ survived without fraud proof, or $\geq f{+}1$ watchtower signatures on L1 for $\mathit{TX}'$): \cmt{Sweep cleared peg-out and any earlier now-final pending requests}
\begin{itemize}
    \item $\mathsf{requestQueue}.\text{remove}(\mathit{TX}')$;
    $\mathsf{executedRequest}.\text{add}(\mathit{TX}')$;
\end{itemize}
\item \textbf{if} $\mathit{TX}_{\mathit{peg\text{-}out}} \in \mathsf{executedRequest}$:
$\mathsf{stateList} \leftarrow \mathit{L1ReadResult}.\mathit{State}_{\text{L1}}$;\\
$\mathsf{onchainState} \leftarrow \mathit{L1ReadResult}.\mathit{State}_{\text{L1}}$;
\textbf{reply} \{Settlement, $\mathsf{onchainState}$\} via I/O;
\end{enumerate}

\hrule\vspace{0.5em}

\textbf{recv} \{Settlement, \texttt{escape-hatch}\} \textbf{from} I/O:
\begin{enumerate}[itemsep=0.5em]
\item Create $\mathit{TX}_{\mathit{peg\text{-}out}} = (\mathit{ContractAddr}, \pcur, \mathsf{stateList}, \epsilon)$;
\item \textbf{send} \{SubmitL1, $\mathit{TX}_{\mathit{peg\text{-}out}}$\} \textbf{to} $(\pcur, \scur, \mathcal{F}_{\text{ledger}}:\text{client}_{\text{L1}})$;
\item $\mathsf{requestQueue}.\text{add}(\mathit{TX}_{\mathit{peg\text{-}out}})$;
\item \textbf{send} \{ReadL1\} \textbf{to} $(\pcur, \scur, \mathcal{F}_{\text{ledger}}:\text{client}_{\text{L1}})$; \textbf{wait for} $\mathit{L1ReadResult}$;
\item $\forall\, \mathit{TX}' \in \mathit{L1ReadResult}.\{\mathit{TX}\}_{\text{L1}}$ \textbf{s.t.} $\mathit{TX}' \in \mathsf{requestQueue}$ and ($T_{\text{challenge}}$ survived without fraud proof, or $\geq f{+}1$ watchtower signatures on L1 for $\mathit{TX}'$):
\begin{itemize}
    \item $\mathsf{requestQueue}.\text{remove}(\mathit{TX}')$;
    $\mathsf{executedRequest}.\text{add}(\mathit{TX}')$;
\end{itemize}
\item \textbf{if} $\mathit{TX}_{\mathit{peg\text{-}out}} \in \mathsf{executedRequest}$:
$\mathsf{stateList} \leftarrow \mathit{L1ReadResult}.\mathit{State}_{\text{L1}}$;\\
$\mathsf{onchainState} \leftarrow \mathit{L1ReadResult}.\mathit{State}_{\text{L1}}$;
\textbf{reply} \{Settlement, $\mathsf{onchainState}$\} via I/O;
\end{enumerate}

\hrule\vspace{0.5em}

\textbf{recv} \{Read\} \textbf{from} I/O:
\begin{enumerate}[itemsep=0.5em]
\item \textbf{send} \{ReadL1\} \textbf{to} $(\pcur, \scur, \mathcal{F}_{\text{ledger}}:\text{client}_{\text{L1}})$;
\textbf{wait for} \{ReadL1, $\mathit{L1ReadResult} = \{\{\mathit{TX}\}_{\text{L1}}, \mathit{State}_{\text{L1}}, \{\mathit{pid}\}\}$\};

\item Let $\mathit{TX}_{\mathit{batch}}^{*} \in \{\mathit{TX}\}_{\text{L1}}$ be the most recent checkpoint transaction that has survived $T_{\text{challenge}}$ without a valid fraud proof. Parse $\{\mathit{batch}, \mathit{resultState}\} \leftarrow \mathit{TX}_{\mathit{batch}}^{*}.\mathit{data}$. \cmt{Latest L1-final checkpoint defines current finalized L2 view}

\item $\forall\, \mathit{TX} \in \mathit{batch}$ \textbf{s.t.} $\mathit{TX} \in \mathsf{requestQueue}$:
\begin{itemize}
    \item $\mathsf{requestQueue}.\text{remove}(\mathit{TX})$;
    $\mathsf{executedRequest}.\text{add}(\mathit{TX})$; \cmt{Cleared once enclosing checkpoint is L1-final}
\end{itemize}

\item $\forall\, \mathit{TX}_{\mathit{fast}} \in \{\mathit{TX}\}_{\text{L1}}$ \textbf{s.t.} $\mathit{TX}_{\mathit{fast}} \in \mathsf{requestQueue}$ and $\geq f{+}1$ watchtower certificate transactions $\mathit{TX}_{\mathit{WT}}^{(j)} = (\mathit{TX}_{\mathit{fast}}, \sigma, \sigma_{W_j})$ are committed on L1 for $\mathit{TX}_{\mathit{fast}}$:
\begin{itemize}
    \item $\mathsf{requestQueue}.\text{remove}(\mathit{TX}_{\mathit{fast}})$;
    $\mathsf{executedRequest}.\text{add}(\mathit{TX}_{\mathit{fast}})$; \cmt{Fast-finality path: cleared once watchtower quorum lands on L1}
\end{itemize}

\item $\mathsf{stateList} \leftarrow \mathit{resultState}$; \cmt{L2 view derived from claimed post-state in latest checkpoint data}\\
$\mathsf{onchainState} \leftarrow \mathit{State}_{\text{L1}}$; \cmt{L1-finalized state}

\item $\forall\, \mathit{pid}' \in \{\mathit{pid}\}$: $\mathsf{identities}.\text{add}(\mathit{pid}')$;

\item \textbf{reply} \{Read, $\mathit{ReadResult} = \{\mathsf{executedRequest}, \mathsf{stateList}, \mathsf{onchainState}\}$\} via I/O;
\end{enumerate}

\hrule\vspace{0.5em}

\textbf{recv} \{GetCurRound\} \textbf{from} I/O:
\begin{enumerate}[itemsep=0.5em]
\item \textbf{send} \{GetCurRound\} \textbf{to} $(\pcur, \scur, \mathcal{F}^{\text{FRoll}}_{\text{com}}: \text{com})$;
\textbf{wait for} \{GetCurRound, $\mathit{round}$\};
\item \textbf{reply} \{GetCurRound, $\mathit{round}$\} via I/O;
\end{enumerate}

\hrule
\vspace{1em}
\noindent \textbf{Main (verifier role, private, no I/O interface):}

\vspace{0.5em}

\textbf{recv} \{UpdateCheck\} \textbf{from} $(\pcur, \scur, \mathcal{F}^{\text{FRoll}}_{\text{com}}: \text{com})$:
\begin{enumerate}[itemsep=0.5em]
    \item \textbf{send} \{ReadL1\} \textbf{to} $(\pcur, \scur, \mathcal{F}_{\text{ledger}}:\text{client}_{\text{L1}})$;
    \textbf{wait for} \{ReadL1, $\mathit{L1ReadResult} = \{\{\mathit{TX}\}_{\text{L1}}, \mathit{State}_{\text{L1}}, \{\mathit{pid}\}\}$\};

    \item Let $\mathit{TX}_{\mathit{batch}} \in \{\mathit{TX}\}_{\text{L1}}$ be the latest unconfirmed checkpoint transaction within $T_{\text{challenge}}$ from the operator addressed to $\mathit{ContractAddr}$, and parse $\{\mathit{batch},\\ \mathit{resultState}\} \leftarrow \mathit{TX}_{\mathit{batch}}.\mathit{data}$; \cmt{Both batch and claimed post-state read from checkpoint data field}

    \item Let $b \leftarrow \mathsf{Check}(\mathit{batch}, \mathit{resultState}, \mathsf{lastConfirmedState})$; \cmt{Verifier-role check on published batch}

    \item \textbf{if} $b = \text{false}$:
    \begin{itemize}
        \item Let $\mathit{TX}_{\mathit{fraud}} \leftarrow \mathsf{GenF}(\mathit{batch}, \mathit{resultState}, \mathsf{lastConfirmedState})$; \cmt{Generate fraud-proof transaction}
        \item \textbf{if} $\mathit{TX}_{\mathit{fraud}} \neq \bot$: \textbf{send} \{SubmitL1, $\mathit{TX}_{\mathit{fraud}}$\} \textbf{to} $(\pcur, \scur, \mathcal{F}_{\text{ledger}}:\text{client}_{\text{L1}})$; \cmt{Fraud proof published on L1}
    \end{itemize}

    \item \textbf{if} $b = \text{true}$: $\mathsf{lastConfirmedState} \leftarrow \mathit{resultState}$;
\end{enumerate}

\noindent\emph{Verifier-role note.} The verifier role does not produce any I/O output to $\mathcal{E}$. Its only externally observable action is the publication of $\mathit{TX}_{\mathit{fraud}}$ on $\mathcal{F}_{\text{ledger}}$, which is an L1 transaction observable on the same footing as any other on-chain transaction.

}\end{functionality}\vspace{1em}

\subsubsection{Operator Protocol}

The operator machine $\mathcal{M}^{\text{FRoll}}_{\text{operator}}$ executes received regular and fast-finality transactions and publishes results to L1 periodically (triggered by $\mathcal{F}^{\text{FRoll}}_{\text{com}}$ via \{UpdateRequest\}). We assume at least one honest operator.

\vspace{1em}\begin{functionality}{Description of protocol $\mathcal{P}^{\text{FRoll}}_{\text{operator}} = (\text{operator})$}{
\textbf{Participating roles:} \{operator\} \\ \textbf{Corruption model:} dynamic corruption

\noindent \textbf{Protocol parameters:} $\mathsf{Val}()$: transaction validity checking algorithm
}\end{functionality}\vspace{.5em}

\begin{functionality}{Description of $\mathcal{M}^{\text{FRoll}}_{\text{operator}}$}{

\textbf{Implemented role(s):} \{operator\}

\noindent \textbf{Subroutines:} $\mathcal{F}_{\text{sig}}$: \{signer, verifier\}, $\mathcal{F}^{\text{FRoll}}_{\text{com}}$: com, $\mathcal{F}_{\text{ledger}}$: client\textsubscript{L1}

\noindent \textbf{Internal state:} 
\begin{itemize}
    \item $\mathsf{requestQueue} \subset \{0,1\}^{*}$, $\mathsf{requestQueue} = \emptyset$ \hfill\cmt{Queued unexecuted requests}
    \item $\mathsf{stateList} \subset \{0,1\}^{*} \times \mathbb{N}_{\geq 0}$, $\mathsf{stateList} = \emptyset$ \hfill\cmt{Account states for all known clients, of form $(\mathit{pid}, \mathit{bal})$}
    \item $\mathsf{identities} \subset \{0,1\}^{*}$, $\mathsf{identities} = \emptyset$ \hfill\cmt{Registered identities}
\end{itemize}

\noindent \textbf{CheckID}(\emph{pid}, \emph{sid}, \emph{role}): Accept all messages with the same \emph{sid}. Only accept \emph{role} = operator.

\vspace{0.5em}
\noindent \textbf{Main:}

\vspace{0.5em}

\textbf{recv} \{Join, $\mathit{TX}_{\mathit{peg\text{-}in}}$, $\sigma$, $\mathit{pid}_{\mathit{call}}$\} \textbf{from} $(\pcur, \scur, \mathcal{F}^{\text{FRoll}}_{\text{com}}: \text{com})$:
\begin{enumerate}[itemsep=0.5em]
    \item Check that $\mathsf{Val}(\mathit{TX}_{\mathit{peg\text{-}in}}) = \text{true}$; \cmt{Peg-in transaction is well-formed}
    \item \textbf{send} \{Verify, $\mathit{TX}_{\mathit{peg\text{-}in}}$, $\sigma$\} \textbf{to} $(\pcur, \scur, \mathcal{F}_{\text{sig}}: \text{verifier})$;
    \textbf{wait for} \{VerResult, $b$\};
    \item \textbf{send} \{ReadL1\} \textbf{to} $(\pcur, \scur, \mathcal{F}_{\text{ledger}}:\text{client}_{\text{L1}})$;
    \textbf{wait for} \{ReadL1, $\mathit{L1ReadResult} = \{\{\mathit{TX}\}_{\text{L1}}, \mathit{State}_{\text{L1}}, \{\mathit{pid}\}\}$\};
    \item Check that $\mathit{TX}_{\mathit{deposit}} \in \{\mathit{TX}\}_{\text{L1}}$; \cmt{Deposit committed on L1}
    \item \textbf{if} $b = \text{true}$ and all checks pass:
    \begin{itemize}
        \item $\mathsf{identities}.\text{add}(\mathsf{pid}_{\mathit{sender}})$; \cmt{Register joining client}
        \item Parse $s_{\mathit{init}}$ from $\mathit{TX}_{\mathit{peg\text{-}in}}$; $\mathsf{stateList}.\text{add}(\mathsf{pid}_{\mathit{sender}}, s_{\mathit{init}})$; \cmt{Initialize client's account state}
        \item $\mathsf{requestQueue}.\text{add}(\mathit{TX}_{\mathit{peg\text{-}in}}, \sigma)$; \cmt{Enqueue peg-in with client signature}
    \end{itemize}
\end{enumerate}

\hrule\vspace{0.5em}

\textbf{recv} \{Submit, $\mathit{TX} = (\mathit{sender}, \mathit{receiver}, \mathit{value}, \mathit{data})$, $\sigma$, $\mathit{pid}_{\mathit{call}}$\} or \{SubmitFast, $\mathit{TX}$, $\sigma$, $\mathit{pid}_{\mathit{call}}$\} \textbf{from} $(\pcur, \scur, \mathcal{F}^{\text{FRoll}}_{\text{com}}: \text{com})$:
\begin{enumerate}[itemsep=0.5em]
\item Check that $\mathsf{Val}(\mathit{TX}) = \text{true}$; \cmt{Transaction is well-formed and satisfies validity predicate}
\item \textbf{send} \{Verify, $\mathit{TX}$, $\sigma$\} \textbf{to} $(\pcur, \scur, \mathcal{F}_{\text{sig}}: \text{verifier})$,
\textbf{wait for} \{VerResult, $b$\};
\item \textbf{if} $b = \text{true}$:
\begin{itemize}
    \item $\mathsf{requestQueue}.\text{add}(\mathit{TX}, \sigma)$; \cmt{Signature valid; enqueue request with sender signature}
    \item $\mathsf{stateList}[\mathit{sender}].\mathit{bal} \leftarrow \mathit{bal}(\mathit{sender}) - \mathit{value}$; \cmt{Deduct from sender}
    \item $\mathsf{stateList}[\mathit{receiver}].\mathit{bal} \leftarrow \mathit{bal}(\mathit{receiver}) + \mathit{value}$; \cmt{Credit to receiver}
\end{itemize}
\end{enumerate}

\hrule\vspace{0.5em}

\textbf{recv} \{Settlement, $\mathit{TX}_{\mathit{peg\text{-}out}}$, $\sigma$, $\mathit{pid}_{\mathit{call}}$\} \textbf{from} $(\pcur, \scur, \mathcal{F}^{\text{FRoll}}_{\text{com}}: \text{com})$:
\begin{enumerate}[itemsep=0.5em]
\item Check that $\mathsf{Val}(\mathit{TX}_{\mathit{peg\text{-}out}}) = \text{true}$; \cmt{Peg-out transaction is well-formed}
\item \textbf{send} \{Verify, $\mathit{TX}_{\mathit{peg\text{-}out}}$, $\sigma$\} \textbf{to} $(\pcur, \scur, \mathcal{F}_{\text{sig}}: \text{verifier})$,
\textbf{wait for} \{VerResult, $b$\};
\item \textbf{if} $b = \text{true}$:
$\mathsf{requestQueue}.\text{add}(\mathit{TX}_{\mathit{peg\text{-}out}}, \sigma)$; \cmt{Signature valid; enqueue settlement request with client signature}
\end{enumerate}

\hrule\vspace{0.5em}

\textbf{recv} \{UpdateRequest\} \textbf{from} $(\pcur, \scur, \mathcal{F}^{\text{FRoll}}_{\text{com}}: \text{com})$:
\begin{enumerate}[itemsep=0.5em]
\item Let $\mathit{batch} \leftarrow \mathsf{requestQueue}$; 

\item Let $\mathit{resultState} \leftarrow \mathsf{stateList}$; \cmt{Current state reflects all executed requests}
\item Construct the checkpoint transaction $\mathit{TX}_{\mathit{batch}} = (\pcur,\, \mathit{ContractAddr},\, \epsilon,\\ \{\mathit{batch}, \mathit{resultState}\})$; 

\item \textbf{send} \{SubmitL1, $\mathit{TX}_{\mathit{batch}}$\} \textbf{to} $(\pcur, \scur, \mathcal{F}_{\text{ledger}}:\text{client}_{\text{L1}})$; \cmt{Publish checkpoint transaction to L1}

\item $\mathsf{requestQueue} \leftarrow \emptyset$; 
\end{enumerate}

}\end{functionality}\vspace{1em}

\subsubsection{Watchtower Protocol}

The watchtower machine $\mathcal{M}^{\text{FRoll}}_{\text{watchtower}}$ monitors L1 for fast-finality transactions and contributes to quorum certificates by issuing signatures.

\vspace{1em}\begin{functionality}{Description of protocol $\mathcal{P}^{\text{FRoll}}_{\text{watchtower}} = (\text{watchtower})$}{
\textbf{Participating roles:} \{watchtower\} \\ \textbf{Corruption model:} dynamic corruption

\noindent \textbf{Protocol parameters:} $n = 2f+1$; $f$: corruption threshold; $\mathsf{Val}()$
}\end{functionality}\vspace{.5em}

\begin{functionality}{Description of $\mathcal{M}^{\text{FRoll}}_{\text{watchtower}}$}{

\textbf{Implemented role(s):} \{watchtower\}

\noindent \textbf{Subroutines:} $\mathcal{F}_{\text{sig}}$: \{signer, verifier\}, $\mathcal{F}_{\text{ledger}}$: client\textsubscript{L1}, $\mathcal{F}^{\text{FRoll}}_{\text{com}}$: com

\noindent \textbf{CheckID}(\emph{pid}, \emph{sid}, \emph{role}): Accept all messages with the same \emph{sid}. Only accept \emph{role} = watchtower.

\vspace{0.5em}
\noindent \textbf{Main:}

\vspace{0.5em}

\textbf{recv} \{CheckFastL1\} \textbf{from} $(\pcur, \scur, \mathcal{F}^{\text{FRoll}}_{\text{com}}: \text{com})$:
\begin{enumerate}[itemsep=0.5em]
  \item \textbf{send} \{ReadL1\} \textbf{to} $(\pcur, \scur, \mathcal{F}_{\text{ledger}}: \text{client}_{\text{L1}})$,
  \textbf{wait for} \{ReadL1, $\mathit{L1ReadResult} = \{\{\mathit{TX}\}_{\text{L1}}, \mathit{State}_{\text{L1}}, \{\mathit{pid}\}\}$\};

  \item $\forall\, \{\mathit{TX}_{\mathit{fast}}, \sigma\} \in \{\mathit{TX}\}_{\text{L1}}$ marked as fast-finality:
  \begin{itemize}
      \item Check that $\mathsf{Val}(\mathit{TX}_{\mathit{fast}}) = \text{true}$;
      \item \textbf{send} \{Verify, $\mathit{TX}_{\mathit{fast}}$, $\sigma$\} \textbf{to} $(\pcur, \scur, \mathcal{F}_{\text{sig}}: \text{verifier})$;
      \textbf{wait for} \{VerResult, $b$\};
  \end{itemize}

  \item \textbf{if} $b = \text{true}$: \textbf{send} \{Sign, $\{\mathit{TX}_{\mathit{fast}}, \sigma\}$\} \textbf{to} $(\pcur, \scur, \mathcal{F}_{\text{sig}}: \text{signer})$,
  \textbf{wait for} \{Signature, $\sigma_W$\};

  \item \textbf{send} \{SubmitL1, $\{\mathit{TX}_{\mathit{fast}}, \sigma, \sigma_W\}$\} \textbf{to} $(\pcur, \scur, \mathcal{F}_{\text{ledger}}: \text{client}_{\text{L1}})$; \cmt{Publish watchtower attestation to L1}
\end{enumerate}

}\end{functionality}\vspace{1em}
\subsection{\xr Ideal Functionality}
\label{apd:XRollideal}

We define the ideal functionality for \xr as
$\mathcal{F}^{\text{FRoll}}_{\text{layer2}}=(\mathcal{F}_{\text{client}}, \mathcal{F}_{\text{ledger}} \mid \mathcal{F}^{\text{FRoll}}_{\text{join}}, \mathcal{F}^{\text{FRoll}}_{\text{submit}}, \mathcal{F}^{\text{FRoll}}_{\text{update}}, \mathcal{F}^{\text{FRoll}}_{\text{read}},
\mathcal{F}^{\text{FRoll}}_{\text{settlement}}, \mathcal{F}^{\text{FRoll}}_{\text{updRnd}})$. The subroutines' ideal functionalities are defined as follows.

\subsubsection{Submit Functionality}
\label{apd:XRollsubmit}

The submit subroutine $\mathcal{F}^{\text{FRoll}}_{\text{submit}}$ handles four request types: \textit{(i)}~join requests, \textit{(ii)}~regular transaction requests, \textit{(iii)}~fast-finality transaction requests, and \textit{(iv)}~settlement requests.

\vspace{1em}\begin{functionality}{Description of subroutine $\mathcal{F}^{\text{FRoll}}_{\text{submit}} = (\text{submit})$}{
\textbf{Participating roles:} \{submit\} \\ \textbf{Corruption model:} incorruptible
}\end{functionality}\vspace{.5em}

\begin{functionality}{Description of $\mathcal{M}^{\text{FRoll}}_{\text{submit}}$}{
\textbf{Implemented role(s):} \{submit\}

\noindent \textbf{CheckID}(\emph{pid}, \emph{sid}, \emph{role}): Accept all messages with the same \emph{sid}.

\vspace{0.5em}\noindent \textbf{Main:}\vspace{0.5em}

\textbf{recv} \{Submit, $\mathit{request}$, $\mathsf{internalState}$\} \textbf{from} I/O:

\begin{enumerate}[itemsep=0.5em]
\item Check that $\mathit{request}$ belongs to one of the following valid types:
\begin{itemize}
    \item Join request: $\mathit{request} = (\text{Join}, s_{\mathit{init}},\\ \mathit{ContractAddr})$, \textbf{s.t.} $s_{\mathit{init}} \in \mathbb{N}_{\geq 0}$ and $\mathit{ContractAddr} \in \{0,1\}^{*}$;

    \item Regular state update: $\mathit{request} = (\text{Submit}, \mathit{mode}, \mathit{TX})$ with $\mathit{mode} \in\\ \{\texttt{collaborate}, \texttt{selfsubmit}\}$ and $\mathit{TX} = (\mathit{sender}, \mathit{receiver}, \mathit{value}, \mathit{data})$, \textbf{s.t.}:
    \begin{itemize}
        \item $\mathit{sender} \in \mathsf{identities}$; $\mathit{value} \leq \mathit{bal}(\mathit{sender})$ from $\mathsf{stateList}$; \cmt{Sufficient balance}
        \item $\nexists\, \mathit{TX}' \in \mathsf{executedRequest} \cup \mathsf{requestQueue}$ conflicting with $\mathit{TX}$; \cmt{No double-spending}
    \end{itemize}

    \item Fast-finality state update: $\mathit{request} = (\text{SubmitFast}, \mathit{mode}, \mathit{TX}_{\mathit{fast}})$ with $\mathit{mode} \in \{\texttt{collaborate}, \texttt{selfsubmit}\}$, subject to the same validity checks as regular updates;

    \item Settlement request: $\mathit{request} \in \{(\text{Settlement}, \texttt{collaborate}),\\ (\text{Settlement}, \texttt{escape-hatch})\}$;
\end{itemize}

\item Check that the join request has been executed (i.e., $\mathsf{stateList} \neq \emptyset$) or $\mathit{request}$ is a join request;

\item Check that $\nexists$ a settlement request from the caller already pending in $\mathsf{requestQueue}$;

\item \textbf{if} all checks pass: \textbf{reply} \{Submit, true\};
\item \textbf{else}: \textbf{reply} \{Submit, false\};
\end{enumerate}
}\end{functionality}\vspace{1em}

\subsubsection{Join Functionality}
\label{apd:XRolljoin}

The join subroutine $\mathcal{F}^{\text{FRoll}}_{\text{join}}$ validates the rollup joining procedure. It checks peg-in consistency, L1 commitment of both deposit transaction and peg-in transaction, and state recording on L1.

\vspace{1em}\begin{functionality}{Description of subroutine $\mathcal{F}^{\text{FRoll}}_{\text{join}} = (\text{join})$}{
\textbf{Participating roles:} \{join\} \\ \textbf{Corruption model:} incorruptible
}\end{functionality}\vspace{.5em}

\begin{functionality}{Description of $\mathcal{M}^{\text{FRoll}}_{\text{join}}$}{
\textbf{Implemented role(s):} \{join\}

\noindent \textbf{CheckID}(\emph{pid}, \emph{sid}, \emph{role}): Accept all messages with the same \emph{sid}.

\vspace{0.5em}\noindent \textbf{Main:}\vspace{0.5em}

\textbf{recv} \{Join, $\mathit{Attachment}=\{s_{\mathit{init}}, \mathit{TX}_{\mathit{peg\text{-}in}}\}$, $\mathsf{internalState}$\} \textbf{from} I/O:

\begin{enumerate}[itemsep=0.5em]

\item Parse $\mathit{TX}_{\mathit{peg\text{-}in}} = (\mathit{pid}_{\mathit{init}}, \epsilon, s_{\mathit{init}}', \epsilon)$. Check:
\begin{itemize}
    \item $s_{\mathit{init}}' = s_{\mathit{init}}$; \cmt{Peg-in carries the agreed initial state}
    \item $\mathit{pid}_{\mathit{init}} \in \mathsf{identities}$; \cmt{Initiator is a registered participant}
\end{itemize}

\item \textbf{send} \{ReadL1\} \textbf{to} $(\pcur, \scur, \mathcal{F}_{\text{ledger}}: \text{client}_{\text{L1}})$,
\textbf{wait for} \{ReadL1, $\mathit{output} = \{\{\mathit{TX}\}_{\text{L1}}, \mathit{State}_{\text{L1}}, \{\mathit{pid}\}\}$\};

\item Check the following conditions on $\mathit{output}$:
\begin{itemize}
    \item $\exists\, \mathit{TX}_{\mathit{deposit}} \in \{\mathit{TX}\}_{\text{L1}}$ \textbf{s.t.} $\mathit{TX}_{\mathit{deposit}}$ is a deposit from $\mathit{pid}_{\mathit{init}}$ consistent with $s_{\mathit{init}}$; \cmt{Deposit committed on L1}
    \item $\mathit{TX}_{\mathit{peg\text{-}in}} \in \{\mathit{TX}\}_{\text{L1}}$; \cmt{Peg-in committed on L1}
    \item $s_{\mathit{init}} \in \mathit{State}_{\text{L1}}$; \cmt{Initial state recorded on L1}
\end{itemize}

\item \textbf{if} all checks pass: \textbf{reply} \{Join, true, $s_{\mathit{init}}$\};

\end{enumerate}
}\end{functionality}\vspace{1em}

\begin{lemma}\label{lem:OpenC}
    The subroutine $\mathcal{F}^{\text{FRoll}}_{\text{join}}$ guarantees \emph{correct L2 initialization}.
\end{lemma}

\begin{proof}
    We prove by contradiction. Suppose correct L2 initialization is violated. However, $\mathcal{F}^{\text{FRoll}}_{\text{join}}$ outputs success only if the peg-in is consistent with $s_{\mathit{init}}$ (Step~1), both deposit and peg-in are committed on L1 (Step~3), and the initial state is recorded on L1 (Step~3). Since $\mathcal{F}_{\text{sig}}$ prevents forgery and at least one honest verifier publishes fraud proofs for incorrect operator publications, no adversary can cause acceptance of an invalid state. This contradicts the assumption.
\end{proof}

\subsubsection{Update Functionality}

The update subroutine $\mathcal{F}^{\text{FRoll}}_{\text{update}}$ handles both regular and fast-finality requests. It checks correct execution, no conflicts, honest-sender request existence in the queue, and L1 commitment. For regular requests, batch and state must be recorded on L1. For fast-finality requests, $f{+}1$ watchtower signatures must be committed on L1.

\vspace{1em}\begin{functionality}{Description of subroutine $\mathcal{F}^{\text{FRoll}}_{\text{update}} = (\text{update})$}{
\textbf{Participating roles:} \{update\} \\ \textbf{Corruption model:} incorruptible

\noindent \textbf{Protocol parameters:} $n = 2f+1$; $f$: corruption threshold
}\end{functionality}\vspace{.5em}

\begin{functionality}{Description of $\mathcal{M}^{\text{FRoll}}_{\text{update}}$}{
\textbf{Implemented role(s):} \{update\}

\noindent \textbf{CheckID}(\emph{pid}, \emph{sid}, \emph{role}): Accept all messages with the same \emph{sid}.

\vspace{0.5em}\noindent \textbf{Main:}\vspace{0.5em}

\textbf{recv} \{Update, $\mathit{Attachment}=\{\mathit{newState}, \mathit{requestBatch}\}$, $\mathsf{internalState}$\} \textbf{from} I/O:

\begin{enumerate}[itemsep=0.5em]

\item Parse $\mathit{requestBatch} = \{\mathit{TX}_1, \ldots, \mathit{TX}_m\}$. Check that $\forall\, j \in \{1, \ldots, m\}$:
\begin{itemize}
    \item $\nexists\, \mathit{TX}' \in \mathsf{executedRequest}$ that conflicts with $\mathit{TX}_j$; \cmt{No conflict with executed requests}
    \item Let $\mathit{bal}(\mathit{sender}_j)$ denote the balance of $\mathit{sender}_j$ from $\mathsf{stateList}$. Then $\mathit{value}_j \leq \mathit{bal}(\mathit{sender}_j)$; \cmt{Sufficient balance}
\end{itemize}

\item Check that $\forall\, \mathit{TX}_j \in \mathit{requestBatch}$ with $(\mathit{sender}_j, \scur, \text{client}) \notin \mathsf{CorruptionSet}$:
\begin{itemize}
    \item $\exists\, \mathit{TX}' \in \mathsf{requestQueue}$ \textbf{s.t.} $\mathit{TX}' = \mathit{TX}_j$; \cmt{Each honest sender's transaction is in the queue}
\end{itemize}

\item Check that $\mathit{newState}$ is the correct output of sequentially executing $\mathit{TX}_1, \ldots, \mathit{TX}_m$ starting from $\mathsf{stateList}$; \cmt{Correct state transition}

\item \textbf{send} \{ReadL1\} \textbf{to} $(\pcur, \scur, \mathcal{F}_{\text{ledger}}: \text{client}_{\text{L1}})$,
\textbf{wait for} \{ReadL1, $\mathit{output} = \{\{\mathit{TX}\}_{\text{L1}}, \mathit{State}_{\text{L1}}, \{\mathit{pid}\}\}$\};

\item Check in $\mathit{L1ReadResult}$, that $\mathit{requestBatch}, \mathit{newState} \subseteq \{\mathit{TX}\}_{\text{L1}}$ and they are the latest commitment; \cmt{Batch is published on L1}

\item \textbf{if} all checks pass: \textbf{reply} \{Update, true, $\mathit{newState}$, $\mathit{requestBatch}$\};

\end{enumerate}
}\end{functionality}\vspace{1em}

\begin{lemma}
    The subroutines $\mathcal{F}^{\text{FRoll}}_{\text{update}}$ and $\mathcal{F}^{\text{FRoll}}_{\text{read}}$ jointly guarantee $(f_{L_2}+f_{L_1})$-safety.
\end{lemma}

\begin{proof}
    Suppose $\{f_{L_2}+ f_{L_1}\}$-\emph{safety} is violated. By Definition~\ref{def:safety}, this means either self-consistency or view-consistency is violated for some honest client. The execution-correctness check in $\mathcal{F}^{\text{FRoll}}_{\text{update}}$, guaranteed by $f_{L_2}$ corruption, rules out incorrect state transitions entering $\mathsf{internalState}$, since updates are accepted only when committed on L1 blockchain. Additionally, $\mathcal{F}^{\text{FRoll}}_{\text{read}}$ derives every read result from $\mathsf{internalState}$ consistent with the underlying L1 transcript via $\mathcal{F}_{\text{ledger}}$. As long as the $\mathcal{F}_{\text{ledger}}$ is safe under at most $f_{L_1}$ corrupted L1 participants, the L1 transcript is itself self-consistent and view-consistent, contradicting the assumption that safety was violated.
\end{proof}

\subsubsection{Read Functionality}

The read subroutine $\mathcal{F}^{\text{FRoll}}_{\text{read}}$ queries L1 and generate consistent read result from $\mathsf{internalState}$ .

\vspace{1em}\begin{functionality}{Description of subroutine $\mathcal{F}^{\text{FRoll}}_{\text{read}} = (\text{read})$}{
\textbf{Participating roles:} \{read\} \\ \textbf{Corruption model:} incorruptible
}\end{functionality}\vspace{.5em}

\begin{functionality}{Description of $\mathcal{M}^{\text{FRoll}}_{\text{read}}$}{
\textbf{Implemented role(s):} \{read\}

\noindent \textbf{CheckID}(\emph{pid}, \emph{sid}, \emph{role}): Accept all messages with the same \emph{sid}.

\vspace{0.5em}\noindent \textbf{Main:}\vspace{0.5em}

\textbf{recv} \{Read, $\mathsf{internalState}$\} \textbf{from} I/O:
\begin{enumerate}[itemsep=0.5em]

\item \textbf{send} \{ReadL1\} \textbf{to} $(\pcur, \scur, \mathcal{F}_{\text{ledger}}:\text{client}_{\text{L1}})$,
\textbf{wait for} \{ReadL1, $\mathit{L1ReadResult} = \{\{\mathit{TX}\}_{\text{L1}}, \mathit{State}_{\text{L1}}, \{\mathit{pid}\}\}$\};

\item Let $\mathsf{TX}^{\mathit{final}}_{\text{L1}} \leftarrow \{e \in \{\mathit{TX}\}_{\text{L1}} \mid e \text{ is L1-committed L2 published transaction and result}\}$ and let $\mathit{State}^{\mathit{final}}_{\text{L1}}$ be the latest L1-final L2 state derived from $\mathsf{TX}^{\mathit{final}}_{\text{L1}}$;

\item Let $\mathit{ReadResult}.\mathsf{executedRequest} \leftarrow \mathsf{executedRequest} \cap \mathsf{TX}^{\mathit{final}}_{\text{L1}}$;

\item Let $\mathit{ReadResult}.\mathsf{stateList} \leftarrow \mathsf{stateList} \cap \mathit{State}^{\mathit{final}}_{\text{L1}}$;

\item Let $\mathit{ReadResult}.\mathsf{onchainState} \leftarrow \mathsf{onchainState} \cap \mathit{State}^{\mathit{final}}_{\text{L1}}$;

\item Let $\mathit{ReadResult}.\mathsf{identities} \leftarrow \mathsf{identities} \cap \{\mathit{pid}\}$;

\item \textbf{reply} \{Read, $\mathit{ReadResult}$\};\cmt{$\mathsf{internalState}$ read result consistenct with L1 status}
\end{enumerate}
}\end{functionality}\vspace{1em}

\subsubsection{Settlement Functionality}

The settlement subroutine $\mathcal{F}^{\text{FRoll}}_{\text{settlement}}$ verifies peg-out consistency with the latest state and L1 commitment.

\vspace{1em}\begin{functionality}{Description of subroutine $\mathcal{F}^{\text{FRoll}}_{\text{settlement}} = (\text{settlement})$}{
\textbf{Participating roles:} \{settlement\} \\ \textbf{Corruption model:} incorruptible
}\end{functionality}\vspace{.5em}

\begin{functionality}{Description of $\mathcal{M}^{\text{FRoll}}_{\text{settlement}}$}{
\textbf{Implemented role(s):} \{settlement\}

\noindent \textbf{CheckID}(\emph{pid}, \emph{sid}, \emph{role}): Accept all messages with the same \emph{sid}.

\vspace{0.5em}\noindent \textbf{Main:}\vspace{0.5em}

\textbf{recv} \{Settlement, $\mathit{Attachment}=\{\mathit{TX}_{\mathit{peg\text{-}out}}\}$, $\mathsf{internalState}$\} \textbf{from} I/O:

\begin{enumerate}[itemsep=0.5em]

\item Parse $\mathit{TX}_{\mathit{peg\text{-}out}} = (\mathit{ContractAddr}, \mathit{pid}_{\mathit{client}}, s_{\mathit{settle}}, \epsilon)$. Let $s_{\mathit{settle}}$ be the latest state from $\mathsf{stateList}$. Check:
\begin{itemize}
    \item $s_{\mathit{settle}} = s_{\mathit{settle}}$; \cmt{Settlement carries the latest valid state}
    \item $\mathit{pid}_{\mathit{client}} \in \mathsf{identities}$; \cmt{Settling client is registered}
\end{itemize}

\item \textbf{send} \{ReadL1\} \textbf{to} $(\pcur, \scur, \mathcal{F}_{\text{ledger}}: \text{client}_{\text{L1}})$,
\textbf{wait for} \{ReadL1, $\mathit{output} = \{\{\mathit{TX}\}_{\text{L1}}, \mathit{State}_{\text{L1}}, \{\mathit{pid}\}\}$\};

\item Check the following conditions on $\mathit{output}$:
\begin{itemize}
    \item $\mathit{TX}_{\mathit{peg\text{-}out}} \in \{\mathit{TX}\}_{\text{L1}}$; \cmt{Peg-out committed on L1}
    \item $s_{\mathit{settle}} \in \mathit{State}_{\text{L1}}$; \cmt{Latest L2 state recorded on L1}
\end{itemize}

\item \textbf{if} all checks pass: \textbf{reply} \{Settlement, true, $s_{\mathit{settle}}$\};

\end{enumerate}
}\end{functionality}\vspace{1em}

\begin{lemma}\label{lem:SettleC}
    The subroutine $\mathcal{F}^{\text{FRoll}}_{\text{settlement}}$ guarantees \emph{correct L2 settlement}.
\end{lemma}

\begin{proof}
    $\mathcal{F}^{\text{FRoll}}_{\text{settlement}}$ outputs success only after verifying that the peg-out carries the latest state (Step~1), the settling client is registered (Step~1), the peg-out is committed on L1 (Step~3), and the latest state is recorded on L1 (Step~3). Under honest verifier and L1 security assumptions, no mismatch can occur.
\end{proof}

\subsubsection{Update Round Functionality}

The update round subroutine checks that each pending self-submitted regular transaction, self-submitted fast-finality transaction, and escape-hatch settlement request from an honest client is processed within its respective L1-confirmation bound. For self-submitted regular transactions, the bound is $T_{L_1} + T_{\text{challenge}}$, accounting for both L1 confirmation and the subsequent fraud-proof challenge period; for self-submitted fast-finality transactions, the bound is simply $T_{L_1}$, since fast finality bypasses the challenge period and relies instead on the watchtower quorum certificate posted on L1; and for escape-hatch settlements, the bound is $T_{L_1} + T_{\text{challenge}}$, reflecting the additional challenge period before the peg-out is committed.

\vspace{1em}\begin{functionality}{Description of subroutine 
  $\mathcal{F}^{\text{FRoll}}_{\text{updRnd}} = 
  (\text{updRnd})$}{

\textbf{Participating roles:} \{updRnd\}

\noindent\textbf{Corruption model:} incorruptible

\noindent \textbf{Protocol parameters:}
\begin{itemize}
    \item $T_{L_1}$: L1 inclusion delay bound for regular transaction
    \item $T_{\text{challenge}}$: fraud-proof challenge period
\end{itemize}

}\end{functionality}\vspace{.5em}

\begin{functionality}{Description of 
  $\mathcal{M}^{\text{XRoll}}_{\text{updRnd}}$}{

\textbf{Implemented role(s):} \{updRnd\}

\noindent \textbf{CheckID}(\emph{pid}, \emph{sid}, 
\emph{role}): Accept all messages with the same \emph{sid}.

\vspace{0.5em}
\noindent \textbf{Main:}

\vspace{0.5em}

\textbf{recv} \{UpdateRound, $\mathsf{internalState}$\}
\textbf{from} I/O:
\begin{enumerate}[itemsep=0.5em]
\item Parse $\mathsf{round}$, $\mathsf{requestQueue}$,
and $\mathsf{CorruptionSet}$ from $\mathsf{internalState}$.

\item \textbf{send} \{ReadL1\} \textbf{to}
$(\pcur, \scur, \mathcal{F}_{\text{ledger}}:
\text{client}_{\text{L1}})$,
\textbf{wait for} \{ReadL1, $\mathit{L1ReadResult} =
\{\{\mathit{TX}\}_{\text{L1}}, \mathit{State}_{\text{L1}},
\{\mathit{pid}\}\}$\};

\item \textbf{// Check 1: Self-submitted regular transaction liveness}\\
For each $q \in \mathsf{requestQueue}$ proposed by an honest party $\mathit{pid} \notin \mathsf{CorruptionSet}$ at time $t_{\mathrm{sub}}$, with $\texttt{getType}(q) = \texttt{selfsubmit}$:
\begin{itemize}
    \item \textbf{if} $\mathsf{round} + 1 >
    t_{\mathrm{sub}} + T_{L_1} + T_{\text{challenge}}$:
    \textbf{reply} \{UpdateRound, false\};
    \hfill\cmt{Self-submitted tx liveness violated}
\end{itemize}

\item \textbf{// Check 2: Fast-finality self-submitted transaction liveness}\\
For each $q \in \mathsf{requestQueue}$ proposed by an honest party $\mathit{pid} \notin \mathsf{CorruptionSet}$ at time $t_{\mathrm{sub}}$, with $\texttt{getType}(q) = \texttt{selfsubmit\text{-}fast}$:
\begin{itemize}
    \item \textbf{if} $\mathsf{round} + 1 >
    t_{\mathrm{sub}} + T_{L_1}$:
    \textbf{reply} \{UpdateRound, false\};
    \hfill\cmt{Fast-finality tx liveness violated; no challenge period required}
\end{itemize}

\item \textbf{// Check 3: Escape-hatch settlement liveness}\\
For each $q \in \mathsf{requestQueue}$ proposed by an honest party $\mathit{pid} \notin \mathsf{CorruptionSet}$ at time $t_{\mathrm{sub}}$, with $\texttt{getType}(q) = \texttt{escape\text{-}hatch}$:
\begin{itemize}
    \item \textbf{if} $\mathsf{round} + 1 >
    t_{\mathrm{sub}} + T_{L_1} + T_{\text{challenge}}$:
    \textbf{reply} \{UpdateRound, false\};
    \hfill\cmt{Escape-hatch settlement liveness violated}
\end{itemize}

\item \textbf{reply} \{UpdateRound, true\};
\hfill
\end{enumerate}

}\end{functionality}\vspace{1em}

\begin{lemma}
\label{lem:LiveX}
The ideal functionality 
$\mathcal{F}^{\text{FRoll}}_{\text{layer2}}$ guarantees 
the following liveness properties:
\begin{enumerate}
    \item \textbf{(Protocol joining.)}
    $(f_{L_2} + f_{L_1},\, T_{L_2} + T_{L_1} + T_{\text{challenge}})$-liveness: under $f_{L_2}$ and $f_{L_1}$ corruption, an accepted Join request results in output $\{\text{Join}, s_{\mathit{init}}\}$ within $T_{L_2} + T_{L_1} + T_{\text{challenge}}$. The off-chain latency $T_{L_2}$ is not enforced by $\mathcal{F}^{\text{FRoll}}_{\text{updRnd}}$ due to asynchronous communication; the L1 latency $T_{L_1}$ and $T_{\text{challenge}}$ is inherited from $\mathcal{F}_{\text{ledger}}$.

    \item \textbf{(Regular state update.)} The functionality offers two state update liveness.
    \begin{itemize}
        \item \emph{Through operator.} $(f_{L_2} + f_{L_1},\, T_{L_2} + T_{L_1} + T_{\text{challenge}})$-liveness: under $f_{L_2}$ and $f_{L_1}$ corruption, an accepted regular update transaction request results in being added to $\mathsf{executedRequest}$ and changes in $\mathsf{stateList}$ within $T_{L_2} + T_{L_1} + T_{\text{challenge}}$.

        \item \emph{Self-submit.} $(f_{L_1},\, T_{L_1} + T_{\text{challenge}})$-liveness: under $f_{L_1}$ corruption, an accepted self-submit regular update transaction request results in being added to $\mathsf{executedRequest}$ and changes in $\mathsf{stateList}$ within $T_{L_1} + T_{\text{challenge}}$.
    \end{itemize}
    The off-chain latency $T_{L_2}$ is not enforced by $\mathcal{F}^{\text{FRoll}}_{\text{updRnd}}$ due to asynchronous communication; the L1 latency $T_{L_1}$ and $T_{\text{challenge}}$ is inherited from $\mathcal{F}_{\text{ledger}}$.

    \item \textbf{(Fast-finality state update.)} The functionality offers two fast-finality state update liveness.
    \begin{itemize}
        \item \emph{Through operator.} $(f_{L_2} + f_{L_1},\, T_{L_2} + T_{L_1})$-liveness: under $f_{L_2}$ and $f_{L_1}$ corruption, an accepted fast-finality update transaction request results in being added to $\mathsf{executedRequest}$ and changes in $\mathsf{stateList}$ within $T_{L_2} + T_{L_1}$.

        \item \emph{Self-submit.} $(f_{L_2} + f_{L_1},\, T_{L_1})$-liveness: under $f_{L_1}$, an accepted self-submit fast-finality update transaction request results in being added to $\mathsf{executedRequest}$ and changes in $\mathsf{stateList}$ within $T_{L_1}$.
    \end{itemize}

    \item \textbf{(Settlement.)} The functionality offers two settlement liveness.
    \begin{itemize}
        \item \emph{Collaborative.} $(f_{L_2} + f_{L_1},\, T_{L_2} + T_{L_1} + T_{\text{challenge}})$-liveness: under $f_{L_2}$ and $f_{L_1}$ corruption, an accepted settlement request results in $\mathsf{onchainState}$ reflecting $s_{\mathit{settle}}$ and\\ $\{\text{Settlement}, s_{\mathit{settle}}\}$ output within $T_{L_2} + T_{L_1} + T_{\text{challenge}}$.

        \item \emph{Escape-hatch.} $(f_{L_1},\, T_{L_1} + T_{\text{challenge}})$-liveness: under $f_{L_1}$ corruption, an accepted escape-hatch settlement request results in $\mathsf{onchainState}$ reflecting $s_{\mathit{settle}}$ and\\ $\{\text{Settlement}, s_{\mathit{settle}}\}$ output within $T_{L_1} + T_{\text{challenge}}$.
    \end{itemize}
\end{enumerate}

\end{lemma}

\begin{proof}
We argue each property separately.

\medskip
\noindent\textbf{(1) Protocol joining.}
Suppose liveness for join requests is violated. A join 
request involves three phases: (i)~the client deposits 
funds on L1, (ii)~the operator publishes the 
corresponding peg-in transaction on L1, and (iii)~the 
peg-in survives the challenge period 
$T_{\text{challenge}}$ without fraud proof. 
$\mathcal{F}^{\text{FRoll}}_{\text{join}}$ outputs 
success only if the peg-in is consistent with 
$s_{\mathit{init}}$ and both the deposit and peg-in are 
committed on L1 without fraud proof during 
$T_{\text{challenge}}$.

As long as the L1 blockchain guarantees liveness, the 
deposit is included within $T_{L_1}$. If at least one 
honest operator exists, the operator publishes the 
peg-in within $T_{L_2}$, which is then included on L1 
within another $T_{L_1}$. The honest client who works as verifier ensures 
that any incorrect publication is invalidated by a 
fraud proof during $T_{\text{challenge}}$, and a 
correct publication survives. The total bound is 
$T_{L_2} + T_{L_1} + T_{\text{challenge}}$, yielding 
$(f_{L_2} + f_{L_1},\, 
T_{L_2} + T_{L_1} + T_{\text{challenge}})$-liveness and 
contradicting the assumption.

\medskip
\noindent\textbf{(2) Regular state update.}
Suppose liveness for regular state update requests is 
violated. A regular update involves: (i)~the client 
submits a transaction to the operator or directly to L1 
via self-submit, (ii)~the transaction is included in a 
batch and published to L1 (if routed via the operator) 
or included directly on L1 (if self-submitted), and 
(iii)~the batch or self-submitted transaction survives 
$T_{\text{challenge}}$ without fraud proof. 
$\mathcal{F}^{\text{FRoll}}_{\text{update}}$ outputs 
success only if the new state is correctly computed, no 
conflict exists with previously executed requests, each 
honest sender's transaction appears in 
$\mathsf{requestQueue}$, and the transaction is 
committed on L1 without fraud proof during 
$T_{\text{challenge}}$.

We distinguish two cases based on the path taken by the 
honest client submitting the request at time~$t$.

\emph{Collaborative path.} The collaborative path is reflected fully by $\mathcal{F}^{\text{FRoll}}_{\text{update}}$ that waits for the triggering from the simulator since there is no restriction in $\mathcal{F}^{\text{FRoll}}_{\text{updRnd}}$ to simulate the off-chain communication latency, represented with $T_{L_1}$. Although the $\mathcal{F}^{\text{FRoll}}_{\text{update}}$ checks the committed transaction on L1 blockchain with regular commitment latency $T_{L_1}$. However, the commitment rule of collaborative path further relies on the fraud-proof scheme with a challenge window $T_{\text{challenge}}$, 
yielding $(f_{L_2} + f_{L_1},\, 
T_{L_2} + T_{L_1} + T_{\text{challenge}})$-liveness.

\emph{Self-submit path.} $\mathcal{F}^{\text{FRoll}}_{\text{updRnd}}$ captures the requirement for round update when the client bypasses the operator and 
submits the transaction directly to L1 via the 
self-submit mechanism. By L1 liveness, the transaction 
is committed on L1 within $T_{L_1}$. The honest 
verifier ensures only correct state transitions survive 
$T_{\text{challenge}}$. The request is therefore 
accessible via a read request within 
$T_{L_1} + T_{\text{challenge}}$, the adversary can not influence through delaying off-chain communication. Eventually, it yields 
$(f_{L_1},\, 
T_{L_1} + T_{\text{challenge}})$-liveness and 
depending only on L1 liveness and the challenge period.

In both cases, the corresponding time bound is 
achieved, contradicting the assumption.

\medskip
\noindent\textbf{(3) Fast-finality state update.} Suppose liveness for fast-finality update requests is violated. Unlike regular updates, fast-finality requests bypass the challenge period: $\mathcal{F}^{\text{FRoll}}_{\text{update}}$ outputs success as soon as at least $f{+}1$ out of $2f{+}1$ watchtower attestations on the fast-finality transaction are committed on L1, without requiring $T_{\text{challenge}}$ to elapse. We distinguish two cases based on the path taken by the honest client submitting the request at time~$t$.

\emph{Collaborative path.} The collaborative path is reflected fully by $\mathcal{F}^{\text{FRoll}}_{\text{update}}$ waiting for the simulator's trigger, since $\mathcal{F}^{\text{FRoll}}_{\text{updRnd}}$ imposes no restriction on the off-chain communication latency $T_{L_2}$. The check in $\mathcal{F}^{\text{FRoll}}_{\text{update}}$ requires the fast-finality transaction to be committed on L1 within the regular L1 latency $T_{L_1}$, alongside the watchtower-quorum attestations. Under the honest-majority watchtower committee assumption (at most $f$ out of $2f{+}1$ corrupted), at least $f{+}1$ honest watchtowers observe the fast-finality transaction on L1 and publish their attestations within the same $T_{L_1}$ bound. The challenge window is bypassed once the watchtower quorum lands, yielding $(f_{L_2} + f_{L_1},\, T_{L_2} + T_{L_1})$-liveness.

\emph{Self-submit path.} $\mathcal{F}^{\text{FRoll}}_{\text{updRnd}}$ captures the requirement for round update when the client bypasses the operator and submits the fast-finality transaction directly to L1. By L1 liveness, the transaction is committed on L1 within $T_{L_1}$. The honest-majority watchtower committee then accumulates at least $f{+}1$ attestations on L1 within the same $T_{L_1}$ bound, again bypassing the challenge window. The request is therefore accessible via a read request within $T_{L_1}$, and the adversary cannot influence the bound through delaying off-chain communication. This yields $(f_{L_1},\, T_{L_1})$-liveness, depending only on L1 liveness and the watchtower quorum.

In both cases, the corresponding time bound is achieved, contradicting the assumption. The difference from the regular-update bound is that the L1 commitment latency no longer compounds with the challenge window: fast-finality replaces $T_{L_1} + T_{\text{challenge}}$ with $T_{L_1}$ alone, since the watchtower quorum delivers finality within a single L1 inclusion delay.

\medskip
\noindent\textbf{(4) Settlement.}
Suppose liveness for settlement requests is violated. 
Settlement can proceed via two paths: collaborative 
(through the operator) or escape-hatch (directly to 
L1). $\mathcal{F}^{\text{FRoll}}_{\text{settlement}}$ 
outputs success only if the peg-out transaction is 
consistent with the latest state and is committed on 
L1 without a fraud proof during $T_{\text{challenge}}$.

\emph{Collaborative path.} The collaborative path is fully captured by $\mathcal{F}^{\text{FRoll}}_{\text{update}}$, which waits for the simulator's trigger, since $\mathcal{F}^{\text{FRoll}}_{\text{updRnd}}$ imposes no constraint on the off-chain latency $T_{L_2}$ controlled by the adversary. The L1-commitment check resolves within latency $T_{L_1}$ plus the challenge window $T_{\text{challenge}}$. The settlement is therefore accessible via a read request within $T_{L_2} + T_{L_1} + T_{\text{challenge}}$, yielding $(f_{L_2} + f_{L_1},\, T_{L_2} + T_{L_1} + T_{\text{challenge}})$-liveness.

\emph{Escape-hatch path.} $\mathcal{F}^{\text{FRoll}}_{\text{updRnd}}$ captures the requirement for round update when the client bypasses the operator and publishes the peg-out directly to L1, so settlement liveness holds even if every operator is corrupted. By L1 liveness, the peg-out is committed on L1 within $T_{L_1}$ plus $T_{\text{challenge}}$. The settlement is therefore accessible via a read request within $T_{L_1} + T_{\text{challenge}}$, and the adversary cannot influence the bound through delaying off-chain communication. This yields $(f_{L_1},\, T_{L_1} + T_{\text{challenge}})$-liveness, depending only on L1 liveness and the challenge period.

In both cases, the corresponding time bound is 
achieved, contradicting the assumption.

\end{proof}


\subsection{Security Proof}
\label{apd:XRollProof}

After proposing the ideal functionality and real-world implementation, we now show the security of the \xr protocol. To start with we first show the ideal functionality captures all the security properties:

\ThmidealCross*

\begin{proof}
    According to Lemma~\ref{lem:OpenC}--\ref{lem:LiveX}, the ideal functionality $\mathcal{F}^{\text{FRoll}}_{\text{layer2}}$ guarantees all security properties.
\end{proof}

After defining the ideal functionality $\mathcal{F}^{\text{FRoll}}_{\text{layer2}}$, we prove that the real \xr protocol iUC-realizes it. The proof is done in 7 steps of successive game replacement. We first define a simulator $\mathcal{S}_{\text{FRoll}}$ that internally simulates a full run of $\mathcal{P}^{\text{FRoll}}$, and a dummy functionality $\mathcal{F}^{\text{FRoll}}_{\text{dummy}}$ that relays messages between $\mathcal{E}$ and $\mathcal{S}_{\text{FRoll}}$. This base ideal execution yields the same distribution of messages to $\mathcal{E}$ as the real execution. We use the execution ensemble $\mathsf{EXEC}$ to denote the messages observed by $\mathcal{E}$, including the output for the input request and the leakage to adversary, when interacting with adversary $\mathcal{A}$, real protocol $\mathcal{P}$, ideal functionality $\mathcal{F}$ and simulator $\mathcal{S}$ in the proofs that follow.

In each subsequent step, we incrementally add interaction between the simulator and the ideal functionality and extend the functionality, thereby forming the corresponding subroutines, while keeping the changes transparent to both $\mathcal{E}$ and $\mathcal{A}$. We continue until we obtain the target functionality $\mathcal{F}^{\text{FRoll}}_{\text{layer2}}$ defined by our framework. At every step, the simulator is adjusted so that the new ideal execution is indistinguishable from the previous one. For each transition, we discuss the differences relative to the prior step and prove that, given the same inputs from $\mathcal{E}$ and $\mathcal{A}$, the resulting outputs remain the same up to computational indistinguishability under any adversarial influence strategy.

We begin by defining the dummy ideal functionality
$\mathcal{F}^{\text{FRoll}}_{\text{dummy}}=(\mathcal{F}_{\text{client-dummy}}, \mathcal{F}_{\text{ledger}} \mid \perp)$
and the simulator $\mathcal{S}_{\text{FRoll}}$ as follows. The dummy functionality forwards every request from $\mathcal{E}$ to the simulator and returns the simulator's response unchanged. The simulator $\mathcal{S}_{\text{FRoll}}$ runs $\mathcal{P'}^{\text{FRoll}}$ internally and produces identical outputs.

\vspace{1em}\begin{functionality}{Description of $\mathcal{M}_{\text{client-dummy}}$ of $\mathcal{F}^{\text{FRoll}}_{\text{dummy}}$}{

\textbf{Implemented role(s):} \{client-dummy\}

\noindent \textbf{Main:}

\textbf{recv} any request \textbf{from} I/O:
\begin{enumerate}[itemsep=0.5em]
    \item Forward request to $\mathcal{S}$ through NET;
\end{enumerate}

\hrule\vspace{0.5em}

\textbf{recv} any message \textbf{from} NET:
\begin{enumerate}
    \item Output the message to $\mathcal{E}$ through I/O;
\end{enumerate}

}\end{functionality}\vspace{.5em}

\begin{functionality}{Description of simulator $\mathcal{S}_{\text{FRoll}}$}{

$\mathcal{S}_{\text{FRoll}}$ internally simulates $\mathcal{P'}^{\text{FRoll}}$, a copy of the real protocol $\mathcal{P}^{\text{FRoll}}$ as defined in Section~\ref{apd:XRollReal}, including both roles of the client machine (client and verifier), the operator, and the watchtower committee.

\vspace{0.5em}
\textbf{Real protocol simulation:}
\begin{itemize}[itemsep=0.3em]
    \item $\mathcal{S}_{\text{FRoll}}$ simulates honest entities for all roles: client, verifier, operator, and watchtower, according to the real protocol. The verifier role monitors L1 and publishes fraud proofs.
    \item If participants are corrupted, $\mathcal{S}_{\text{FRoll}}$ leaks the corresponding messages sent to corrupted entities to the adversary $\mathcal{A}$ and continues simulating honest parties based on $\mathcal{A}$'s instructions.
\end{itemize}

\textbf{Network communication from/to the environment:}
\begin{itemize}[itemsep=0.3em]
    \item Messages that $\mathcal{S}_{\text{FRoll}}$ receives on the network interface (from $\mathcal{E}$/$\mathcal{A}$) are forwarded to $\mathcal{P'}^{\text{FRoll}}$.
    \item Messages sent by $\mathcal{P'}^{\text{FRoll}}$ on its network interface (to $\mathcal{E}$/$\mathcal{A}$) are forwarded to the environment.
\end{itemize}

\textbf{Input requests and outputs:}
\begin{itemize}[itemsep=0.3em]
    \item Unlike $\mathcal{P}^{\text{FRoll}}$, which receives inputs directly from $\mathcal{E}$, the simulation $\mathcal{P'}^{\text{FRoll}}$ receives requests forwarded from $\mathcal{F}^{\text{FRoll}}_{\text{layer2}}$. Instead of sending outputs directly to $\mathcal{E}$, $\mathcal{S}_{\text{FRoll}}$ sends them to $\mathcal{F}^{\text{FRoll}}_{\text{layer2}}$.
\end{itemize}

\textbf{Message delivery:}
\begin{itemize}[itemsep=0.3em]
    \item The \xr protocol assumes asynchronous communication via $\mathcal{F}^{\text{FRoll}}_{\text{com}}$. The simulator bookkeeps all messages in $\mathcal{P'}^{\text{FRoll}}$ and triggers delivery according to the adversary's scheduling decisions, mirroring the real-world protocol.
\end{itemize}

\textbf{Corruption handling:}
\begin{itemize}[itemsep=0.3em]
    \item $\mathcal{S}_{\text{FRoll}}$ keeps the corruption status of entities in $\mathcal{P}^{\text{FRoll}}$, $\mathcal{P'}^{\text{FRoll}}$ and $\mathcal{F}^{\text{FRoll}}_{\text{layer2}}$ synchronized. When an entity in $\mathcal{P'}^{\text{FRoll}}$ becomes corrupted, $\mathcal{S}_{\text{FRoll}}$ corrupts the corresponding entity in $\mathcal{F}^{\text{FRoll}}_{\text{layer2}}$ before continuing.
    \item Adversarial commands for corrupted participants (e.g., publishing on $\mathcal{F}_{\text{ledger}}$, sending via $\mathcal{F}^{\text{FRoll}}_{\text{com}}$) are forwarded to $\mathcal{P'}^{\text{FRoll}}$.
    \item When a corrupted participant in $\mathcal{P'}^{\text{FRoll}}$ wants to output to $\mathcal{E}$, $\mathcal{S}_{\text{FRoll}}$ instructs the corresponding entity in $\mathcal{F}^{\text{FRoll}}_{\text{layer2}}$ to output.
\end{itemize}
}\end{functionality}\vspace{1em}

\begin{lemma}
\label{lem:XRoll1}
    For all PPT adversaries $\mathcal{A}$, there exists a PPT simulator $\mathcal{S}_{\text{FRoll}}$ such that for all PPT environments $\mathcal{E}$ and all security parameters $k\in\mathbb{N}$,
    $\mathsf{EXEC}^{\mathcal{P}^{\text{FRoll}}}_{\mathcal{A},\mathcal{E}}(k)\ \stackrel{c}{\approx}\
    \mathsf{EXEC}^{\mathcal{F}^{\text{FRoll}}_{\text{dummy}}}_{\mathcal{S}_{\text{FRoll}},\mathcal{E}}(k)$,
    where $\stackrel{c}{\approx}$ denotes computational indistinguishability.
\end{lemma}

\begin{proof}
    Fix an arbitrary PPT environment $\mathcal{E}$ and adversary $\mathcal{A}$. We argue that the execution ensembles in the real and ideal worlds are computationally indistinguishable by analyzing the three components observable by $\mathcal{E}$.

    \medskip
    \noindent\textbf{Observable components.} In the real world, the execution ensemble $\mathsf{EXEC}^{\mathcal{P}^{\text{FRoll}}}_{\mathcal{A},\mathcal{E}}(k)$ consists of:
    \begin{enumerate}
        \item \emph{I/O outputs} delivered to $\mathcal{E}$ by $\mathcal{P}^{\text{FRoll}}_{\text{client}}$:
        $\{\text{Join}, s_{\mathit{init}}\}$,
        $\{\text{Settlement},\\ \mathsf{onchainState}\}$,
        $\{\text{Read}, \mathit{ReadResult}\}$,
        $\{\text{GetCurRound}, \mathit{round}\}$.
        \item \emph{On-chain transactions} committed on $\mathcal{F}_{\text{ledger}}$ during execution:
        $\mathit{TX}_{\mathit{deposit}}$, $\mathit{TX}_{\mathit{peg\text{-}in}}$, $\mathit{TX}_{\mathit{peg\text{-}out}}$, operator batch publications, $\mathit{TX}_{\mathit{fraud}}$, and watchtower certificate transactions $\mathit{TX}_{\mathit{WT}}$.
        \item \emph{Adversarial leakage}, comprising messages received by corrupted parties during the protocol execution, and the corrupted parties' internal state.
    \end{enumerate}

    In the ideal world, the simulator $\mathcal{S}_{\text{FRoll}}$ internally runs $\mathcal{P'}^{\text{FRoll}}$ and interacts with the dummy functionality $\mathcal{F}^{\text{FRoll}}_{\text{dummy}}$, which by definition forwards every request from $\mathcal{E}$ to $\mathcal{S}_{\text{FRoll}}$ unchanged and relays the simulator's responses back to $\mathcal{E}$. We show that each component is computationally indistinguishable across the two worlds.

    \medskip
    \noindent\textbf{(1) I/O outputs.} Since $\mathcal{F}^{\text{FRoll}}_{\text{dummy}}$ acts as a transparent relay, $\mathcal{S}_{\text{FRoll}}$ receives exactly the same sequence of requests as $\mathcal{P}^{\text{FRoll}}_{\text{client}}$ would in the real world. By construction, $\mathcal{S}_{\text{FRoll}}$ executes the same client, verifier, operator, and watchtower logic inside $\mathcal{P'}^{\text{FRoll}}$ under the same adversarial scheduling, producing the same I/O outputs. The only potential difference arises from the randomness of $\mathcal{F}_{\text{sig}}$: signature strings carried by client transactions, fraud proofs, and watchtower attestations may differ between the two worlds because fresh randomness is sampled independently. However, since $\mathcal{F}_{\text{sig}}$ realizes EUF-CMA security, signatures generated on the same messages are computationally indistinguishable. Hence the I/O outputs are computationally indistinguishable.

    \medskip
    \noindent\textbf{(2) On-chain transactions.} Since $\mathcal{F}^{\text{FRoll}}_{\text{dummy}}$ forwards all requests to $\mathcal{S}_{\text{FRoll}}$, the simulated protocol $\mathcal{P'}^{\text{FRoll}}$ generates and publishes the same set of transactions to $\mathcal{F}_{\text{ledger}}$ as $\mathcal{P}^{\text{FRoll}}$ would in the real world, including operator checkpoint transactions, verifier-published fraud proofs, and watchtower certificate transactions. Transactions may contain different signature values due to independent randomness in $\mathcal{F}_{\text{sig}}$, but by the EUF-CMA security of the signature scheme, the transaction distributions are computationally indistinguishable.

    \medskip
    \noindent\textbf{(3) Adversarial leakage.} By the definition of $\mathcal{S}_{\text{FRoll}}$, the corruption status of all entities (client, verifier, operator, watchtower) is kept synchronized between $\mathcal{P'}^{\text{FRoll}}$ and $\mathcal{F}^{\text{FRoll}}_{\text{dummy}}$. Since $\mathcal{F}^{\text{FRoll}}_{\text{dummy}}$ forwards all $\mathcal{E}$-requests to the simulator, $\mathcal{S}_{\text{FRoll}}$ can reconstruct the same internal state and message history for corrupted parties as in the real execution. Consequently, The leakage delivered to $\mathcal{A}$ is computationally indistinguishable up to signature randomness, which is again computationally indistinguishable by EUF-CMA security.

    \medskip
    Conclusively, we have:
    $\mathsf{EXEC}^{\mathcal{P}^{\text{FRoll}}}_{\mathcal{A}, \mathcal{E}}(k)\stackrel{c}{\approx}\mathsf{EXEC}^{\mathcal{F}^{\text{FRoll}}_{\text{dummy}}}_{\mathcal{S}_{\text{FRoll}}, \mathcal{E}}(k)$.
\end{proof}

Next, we extend the dummy functionality with the submission subroutine, yielding
$\mathcal{F}^{\text{FRoll}}_{\text{layer2-submit}} = (\mathcal{F}_{\text{client-submit}}, \mathcal{F}_{\text{ledger}} \mid \mathcal{F}^{\text{FRoll}}_{\text{submit}})$.
The subroutine $\mathcal{F}^{\text{FRoll}}_{\text{submit}}$ is defined in Appendix~\ref{apd:XRollsubmit}. The intermediate client functionality adds the Submit request routing through $\mathcal{F}^{\text{FRoll}}_{\text{submit}}$; all other requests are forwarded to $\mathcal{S}$ unchanged.

\vspace{1em}\begin{functionality}{Description of $\mathcal{M}_{\text{client-submit}}$ of $\mathcal{F}^{\text{FRoll}}_{\text{layer2-submit}}$}{

\textbf{Implemented role(s):} \{client-submit\}

\noindent\textbf{Main:}\vspace{0.5em}

\textbf{recv} \{Submit, $\mathit{request}$\} \textbf{from} I/O:
\begin{enumerate}[itemsep=0.5em]
\item \textbf{send} \{Submit, $\mathit{request}$, $\mathsf{internalState}$\} \textbf{to} $(\pcur, \scur, \mathcal{F}^{\text{FRoll}}_{\text{submit}}:\text{submit})$,
\textbf{wait for} \{Submit, $\mathit{response}$\} s.t. $\mathit{response} \in \{\text{true}, \text{false}\}$;
\item \textbf{if} $\mathit{response} =$ true: $\mathsf{requestQueue}.\text{add}(\mathit{request})$;
\textbf{send} $\mathit{request}$ \textbf{to} $\mathcal{S}$ via NET;
\end{enumerate}

\hrule\vspace{0.5em}

\textbf{recv} \{ReadL1\} \textbf{from} I/O or NET:
\begin{enumerate}[itemsep=0.5em]
    \item \textbf{send} \{Read\} \textbf{to} $(\pcur, \scur, \mathcal{F}_{\text{ledger}}: \text{client}_{\text{L1}})$; \textbf{wait for} $\mathit{L1ReadResult}$;
    \item \textbf{reply} \{ReadL1, $\mathit{L1ReadResult}$\} via I/O;
\end{enumerate}

\hrule\vspace{0.5em}

\textbf{recv} any other message \textbf{from} I/O:
\begin{enumerate}
    \item Forward to $\mathcal{S}$ through NET;
\end{enumerate}

\hrule\vspace{0.5em}

\textbf{recv} any message \textbf{from} NET:
\begin{enumerate}
    \item Output the message to $\mathcal{E}$ through I/O;
\end{enumerate}

}\end{functionality}\vspace{.5em}

\begin{functionality}{Description of simulator $\mathcal{S}_{\text{XRoll-submit}}$}{
The simulator $\mathcal{S}_{\text{XRoll-submit}}$ behaves the same as $\mathcal{S}_{\text{FRoll}}$.
}\end{functionality}\vspace{1em}

\begin{lemma}
\label{lem:XRoll2}
    For all PPT adversaries $\mathcal{A}$, there exists a PPT simulator $\mathcal{S}_{\text{XRoll-submit}}$ such that for all PPT environments $\mathcal{E}$ and all security parameters $k\in\mathbb{N}$,
    \[
    \mathsf{EXEC}^{\mathcal{F}^{\text{FRoll}}_{\text{dummy}}}_{\mathcal{S}_{\text{FRoll}},\mathcal{E}}(k)
    \ \stackrel{c}{\approx}\
    \mathsf{EXEC}^{\mathcal{F}^{\text{FRoll}}_{\text{layer2-submit}}}_{\mathcal{S}_{\text{XRoll-submit}},\mathcal{E}}(k).
    \]
\end{lemma}

\begin{proof}
    Fix an arbitrary PPT environment $\mathcal{E}$. The simulator logic is identical in both executions; the only difference is the ideal functionality through which requests are forwarded to the simulator. We analyze the three observable components.

    \medskip
    \noindent\textbf{Difference between the two executions.} In $\mathcal{F}^{\text{FRoll}}_{\text{dummy}}$, every request from $\mathcal{E}$ is forwarded directly to $\mathcal{S}_{\text{FRoll}}$. In $\mathcal{F}^{\text{FRoll}}_{\text{layer2-submit}}$, the subroutine $\mathcal{F}^{\text{FRoll}}_{\text{submit}}$ intercepts each request and checks semantic validity before forwarding. Requests that fail these checks are rejected and never reach the simulator.

    \medskip
    \noindent\textbf{(1) I/O outputs.} In the real protocol $\mathcal{P}^{\text{FRoll}}$ (and hence in $\mathcal{P'}^{\text{FRoll}}$), the client machine already enforces the same validity checks for all four request types (Join, Submit, SubmitFast, Settlement): only predefined request types are processed, and malformed or out-of-phase requests produce no output. Therefore, any request rejected by $\mathcal{F}^{\text{FRoll}}_{\text{submit}}$ would also produce no output inside the simulation. For accepted requests, both simulators execute the same client logic and produce the same I/O outputs. Hence, the I/O outputs are identical.

    \medskip
    \noindent\textbf{(2) On-chain transactions.} Since only accepted requests trigger protocol actions in $\mathcal{P'}^{\text{FRoll}}$, and the set of accepted requests is the same in both executions, the transactions published to $\mathcal{F}_{\text{ledger}}$ are identical except for signature values. By EUF-CMA security, the on-chain transactions are computationally indistinguishable.

    \medskip
    \noindent\textbf{(3) Adversarial leakage.} Since invalid requests are rejected by the ideal functionality and ignored by honest parties during simulation, no leakage is generated toward the adversary in either world. Moreover, the corruption status remains synchronized across both worlds. Consequently, the leakage delivered to $\mathcal{A}$ is computationally indistinguishable in both executions.

    \medskip
    Conclusively,
    $\mathsf{EXEC}^{\mathcal{F}^{\text{FRoll}}_{\text{dummy}}}_{\mathcal{S}_{\text{FRoll}}, \mathcal{E}}(k) \stackrel{c}{\approx} \mathsf{EXEC}^{\mathcal{F}^{\text{FRoll}}_{\text{layer2-submit}}}_{\mathcal{S}_{\text{FRoll-submit}}, \mathcal{E}}(k)$.
\end{proof}

Next, we extend $\mathcal{F}^{\text{FRoll}}_{\text{layer2-submit}}$ with the join subroutine, yielding $\mathcal{F}^{\text{FRoll}}_{\text{layer2-join}} = (\mathcal{F}_{\text{client-join}}, \mathcal{F}_{\text{ledger}} \mid \mathcal{F}^{\text{FRoll}}_{\text{submit}}, \mathcal{F}^{\text{FRoll}}_{\text{join}})$. The intermediate client functionality additionally routes the Join request (from NET) through $\mathcal{F}^{\text{FRoll}}_{\text{join}}$.

\vspace{1em}\begin{functionality}{Description of $\mathcal{M}_{\text{client-join}}$ of $\mathcal{F}^{\text{FRoll}}_{\text{layer2-join}}$}{

\textbf{Implemented role(s):} \{client-join\}

\vspace{0.5em}\noindent \noindent \textbf{Main:}

\textbf{recv} \{Submit, $\mathit{request}$\} \textbf{from} I/O:

\begin{enumerate}[itemsep=0.5em]
\item \textbf{send} \{Submit, $\mathit{request}$, $\mathsf{internalState}$\} \textbf{to} $(\pcur, \scur, \mathcal{F}^{\text{FRoll}}_{\text{submit}}:\text{submit})$,
\textbf{wait for} \{Submit, $\mathit{response}$\} s.t. $\mathit{response} \in \{\text{true}, \text{false}\}$;
\item \textbf{if} $\mathit{response} =$ true: $\mathsf{requestQueue}.\text{add}(\mathit{request})$;
\textbf{send} $\mathit{request}$ \textbf{to} $\mathcal{S}$ via NET;
\end{enumerate}

\hrule
\vspace{0.5em}

\textbf{recv} \{Join, $\mathit{Attachment}$\} \textbf{from} NET:

\begin{enumerate}[itemsep=0.5em]
\item \textbf{send} \{Join, $\mathit{Attachment}$, $\mathsf{internalState}$\} \textbf{to} $(\pcur, \scur, \mathcal{F}^{\text{FRoll}}_{\text{join}}:\text{join})$,
\textbf{wait for} \{Join, $\mathit{response}$\} s.t. $\mathit{response} \in \{\text{true}, \text{false}\}$;
\item \textbf{if} $\mathit{response} =$ true: update $\mathsf{internalState}$ according to $\mathit{Attachment}$;
\textbf{reply} \{Join, $s_{\mathit{init}}$\} via I/O;
\end{enumerate}

\hrule
\vspace{0.5em}

\textbf{recv} any request \textbf{from} I/O:
\begin{enumerate}[itemsep=0.5em]
    \item Forward request to $\mathcal{S}$ through NET;
\end{enumerate}

\hrule
\vspace{0.5em}

\textbf{recv} any message \textbf{from} NET:

\begin{enumerate}
    \item Output the message to $\mathcal{E}$ through I/O;
\end{enumerate}

}\end{functionality}\vspace{.5em}

\begin{functionality}{Description of simulator $\mathcal{S}_{\text{XRoll-join}}$}{

The simulator $\mathcal{S}_{\text{XRoll-join}}$ behaves identically to $\mathcal{S}_{\text{XRoll-submit}}$, except for the following additional behavior upon detecting a completed join:

\vspace{0.5em}
\textbf{Join interaction with $\mathcal{F}^{\text{FRoll}}_{\text{layer2-join}}$:}

\begin{enumerate}[itemsep=0.5em]
\item $\mathcal{S}_{\text{XRoll-join}}$ monitors $\mathcal{P'}^{\text{FRoll}}$. When it detects that a client entity is about to produce $\{\text{Join}, s_{\mathit{init}}\}$, it intercepts and prepares $\mathit{Attachment}$ by extracting $s_{\mathit{init}}$ and $\mathit{TX}_{\mathit{peg\text{-}in}}$ from the simulation state.
\item $\mathcal{S}_{\text{XRoll-join}}$ sends $\{\text{Join}, \mathit{Attachment}\}$ to $\mathcal{F}^{\text{FRoll}}_{\text{layer2-join}}$ via NET.
\end{enumerate}

}\end{functionality}\vspace{1em}

\begin{lemma}
\label{lem:XRoll3}
    For all PPT adversaries $\mathcal{A}$, there exists a PPT simulator $\mathcal{S}_{\text{XRoll-join}}$ such that for all PPT environments $\mathcal{E}$ and all security parameters $k\in\mathbb{N}$,
    \[
    \mathsf{EXEC}^{\mathcal{F}^{\text{FRoll}}_{\text{layer2-submit}}}_{\mathcal{S}_{\text{XRoll-submit}},\mathcal{E}}(k)
    \ \stackrel{c}{\approx}\
    \mathsf{EXEC}^{\mathcal{F}^{\text{FRoll}}_{\text{layer2-join}}}_{\mathcal{S}_{\text{XRoll-join}},\mathcal{E}}(k).
    \]
\end{lemma}

\begin{proof}
    Fix an arbitrary PPT environment $\mathcal{E}$. The only difference between the two executions is that $\mathcal{F}^{\text{FRoll}}_{\text{layer2-join}}$ routes the Join request (from NET) through $\mathcal{F}^{\text{FRoll}}_{\text{join}}$, whereas in $\mathcal{F}^{\text{FRoll}}_{\text{layer2-submit}}$ the join output is produced entirely by the simulator. We analyze the three observable components.

    \medskip
    \noindent\textbf{(1) I/O outputs.} The only I/O output affected is $\{\text{Join}, s_{\mathit{init}}\}$. We consider two cases.

    \emph{Case~1: Successful join.} In $\mathsf{EXEC}^{\mathcal{F}^{\text{FRoll}}_{\text{layer2-submit}}}_{\mathcal{S}_{\text{FRoll-submit}},\mathcal{E}}(k)$, the simulator produces a join output when the simulated $\mathcal{P'}^{\text{FRoll}}$ completes the joining procedure: the deposit transaction is committed on L1, the operator publishes the peg-in inside a checkpoint transaction, and the checkpoint either survives $T_{\text{challenge}}$ without a valid fraud proof or accumulates $\geq f{+}1$ watchtower attestations on L1. In $\mathsf{EXEC}^{\mathcal{F}^{\text{FRoll}}_{\text{layer2-join}}}_{\mathcal{S}_{\text{FRoll-join}},\mathcal{E}}(k)$, the simulator instead extracts $\mathit{Attachment} = \{s_{\mathit{init}}, \mathit{TX}_{\mathit{peg\text{-}in}}\}$ from the simulation state and sends it to $\mathcal{F}^{\text{FRoll}}_{\text{join}}$, which outputs to $\mathcal{E}$ only if all checks pass. A successful join in $\mathcal{P'}^{\text{FRoll}}$ implies that $\mathit{TX}_{\mathit{peg\text{-}in}}$ is consistent with $s_{\mathit{init}}$, both the deposit and peg-in transactions are committed on L1, and $s_{\mathit{init}}$ is recorded on L1. These are exactly the checks in $\mathcal{F}^{\text{FRoll}}_{\text{join}}$ (Steps~1--3). Under the assumption of at least one honest verifier (who publishes a fraud proof against any invalid checkpoint within $T_{\text{challenge}}$), an honest-majority watchtower committee (at most $f$ out of $2f{+}1$ corrupted), and the EUF-CMA security of $\mathcal{F}_{\text{sig}}$ (which prevents the adversary from forging signatures on a malformed peg-in), $\mathcal{F}^{\text{FRoll}}_{\text{join}}$ accepts whenever $\mathcal{P'}^{\text{FRoll}}$ completes.

    \emph{Case~2: Failed join.} If the operator does not publish the peg-in checkpoint, or a fraud proof invalidates the checkpoint within $T_{\text{challenge}}$, or fewer than $f{+}1$ watchtower attestations are committed on L1, $\mathcal{P'}^{\text{FRoll}}$ does not complete the join. Correspondingly, $\mathcal{S}_{\text{FRoll-join}}$ does not send the Join request to $\mathcal{F}^{\text{FRoll}}_{\text{join}}$, so no output is produced.

    In all cases, the I/O outputs are identical.

    \medskip
    \noindent\textbf{(2) On-chain transactions.} The set of transactions published to $\mathcal{F}_{\text{ledger}}$ is fully determined by $\mathcal{P'}^{\text{FRoll}}$, which runs the real-world protocol on the input requests forwarded by the same $\mathcal{F}^{\text{FRoll}}_{\text{submit}}$ in both executions. Transactions published by honest participants: including the client deposit, operator checkpoint, watchtower attestations, and any verifier-published fraud proofs, therefore differ across the two worlds only in their signature values, which are computationally indistinguishable by the EUF-CMA security of $\mathcal{F}_{\text{sig}}$. When participants are corrupted, the transactions they publish on L1 are likewise computationally indistinguishable, since the simulator forwards all corrupted-party messages to $\mathcal{P'}^{\text{FRoll}}$ and signature randomness is the only source of variation.

    \medskip
    \noindent\textbf{(3) Adversarial leakage.} In both executions, the simulator extracts the plaintext leakage (under our plaintext-leakage assumption) from the ideal functionality and uses it to drive $\mathcal{P'}^{\text{FRoll}}$, generating leakage to the adversary according to the corruption status. Since the two executions feed $\mathcal{P'}^{\text{FRoll}}$ the same inputs (those accepted by $\mathcal{F}^{\text{FRoll}}_{\text{submit}}$) and run the same real-world protocol, $\mathsf{internalState}$ at every honest participant, including client, verifier, operator, and watchtower roles, changes identically across the two executions. The interposition of $\mathcal{F}^{\text{FRoll}}_{\text{join}}$ affects only how the join I/O output is produced (via the ideal functionality's checks rather than directly by the simulator); it does not change any network-side message exchange or any honest party's internal computation, so the leakage delivered to corrupted parties is unaffected. Consequently, the leaked $\mathsf{internalState}$ and received messages delivered to corrupted parties are computationally indistinguishable across the two executions.

    \medskip
    Conclusively,
    $\mathsf{EXEC}^{\mathcal{F}^{\text{FRoll}}_{\text{layer2-submit}}}_{\mathcal{S}_{\text{FRoll-submit}}, \mathcal{E}}(k)\stackrel{c}{\approx}\mathsf{EXEC}^{\mathcal{F}^{\text{FRoll}}_{\text{layer2-join}}}_{\mathcal{S}_{\text{FRoll-join}}, \mathcal{E}}(k)$.
\end{proof}

Next, we extend $\mathcal{F}^{\text{FRoll}}_{\text{layer2-join}}$ with the update subroutine, yielding $\mathcal{F}^{\text{FRoll}}_{\text{layer2-update}} = (\mathcal{F}_{\text{client-update}}, \mathcal{F}_{\text{ledger}} \mid \mathcal{F}^{\text{FRoll}}_{\text{submit}}, \mathcal{F}^{\text{FRoll}}_{\text{join}}, \mathcal{F}^{\text{FRoll}}_{\text{update}})$.

\vspace{1em}\begin{functionality}{Description of $\mathcal{M}_{\text{client-update}}$ of $\mathcal{F}^{\text{FRoll}}_{\text{layer2-update}}$}{

\textbf{Implemented role(s):} \{client-update\}

\vspace{0.5em}\noindent \noindent \textbf{Main:}

\textbf{recv} \{Submit, $\mathit{request}$\} \textbf{from} I/O:

\begin{enumerate}[itemsep=0.5em]
\item \textbf{send} \{Submit, $\mathit{request}$, $\mathsf{internalState}$\} \textbf{to} $(\pcur, \scur, \mathcal{F}^{\text{FRoll}}_{\text{submit}}:\text{submit})$,
\textbf{wait for} \{Submit, $\mathit{response}$\} s.t. $\mathit{response} \in \{\text{true}, \text{false}\}$;
\item \textbf{if} $\mathit{response} =$ true: $\mathsf{requestQueue}.\text{add}(\mathit{request})$;
\textbf{send} $\mathit{request}$ \textbf{to} $\mathcal{S}$ via NET;
\end{enumerate}

\hrule
\vspace{0.5em}

\textbf{recv} \{Join, $\mathit{Attachment}$\} \textbf{from} NET:

\begin{enumerate}[itemsep=0.5em]
\item \textbf{send} \{Join, $\mathit{Attachment}$, $\mathsf{internalState}$\} \textbf{to} $(\pcur, \scur, \mathcal{F}^{\text{FRoll}}_{\text{join}}:\text{join})$,
\textbf{wait for} \{Join, $\mathit{response}$\} s.t. $\mathit{response} \in \{\text{true}, \text{false}\}$;
\item \textbf{if} $\mathit{response} =$ true: update $\mathsf{internalState}$ according to $\mathit{Attachment}$;
\textbf{reply} \{Join, $s_{\mathit{init}}$\} via I/O;
\end{enumerate}

\hrule
\vspace{0.5em}

\textbf{recv} \{Update, $\mathit{Attachment}$\} \textbf{from} NET:

\begin{enumerate}[itemsep=0.5em]
\item \textbf{send} \{Update, $\mathit{Attachment}$, $\mathsf{internalState}$\} \textbf{to} $(\pcur, \scur, \mathcal{F}^{\text{FRoll}}_{\text{update}}:\text{update})$,
\textbf{wait for} \{Update, $\mathit{response}$, $\mathit{newState}$, $\mathit{executedReq}$\};
\item \textbf{if} $\mathit{response} =$ true: update $\mathsf{internalState}$ with $\mathit{newState}$ and $\mathit{executedReq}$;
\end{enumerate}

\hrule
\vspace{0.5em}

\textbf{recv} any request \textbf{from} I/O:
\begin{enumerate}[itemsep=0.5em]
    \item Forward request to $\mathcal{S}$ through NET;
\end{enumerate}

\hrule
\vspace{0.5em}

\textbf{recv} any message \textbf{from} NET:

\begin{enumerate}
    \item Output the message to $\mathcal{E}$ through I/O;
\end{enumerate}

}\end{functionality}\vspace{.5em}

\begin{functionality}{Description of simulator $\mathcal{S}_{\text{XRoll-update}}$}{

The simulator $\mathcal{S}_{\text{XRoll-update}}$ behaves identically to $\mathcal{S}_{\text{XRoll-join}}$, except for the following additional behavior upon detecting a confirmed state update:

\vspace{0.5em}
\textbf{State update interaction with $\mathcal{F}^{\text{FRoll}}_{\text{layer2-update}}$:}

\begin{enumerate}[itemsep=0.5em]
\item $\mathcal{S}_{\text{XRoll-update}}$ monitors $\mathcal{P'}^{\text{FRoll}}$ for confirmed state updates. For regular requests, it detects when a batch is committed on L1. For fast-finality requests, it detects when $f{+}1$ watchtower signatures are committed on L1.
\item Upon detecting a confirmed update, $\mathcal{S}_{\text{XRoll-update}}$ prepares $\mathit{Attachment}$ by extracting $\mathit{requestBatch}$ and $\mathit{newState}$ from the simulation state.
\item $\mathcal{S}_{\text{XRoll-update}}$ sends $\{\text{Update}, \mathit{Attachment}\}$ to $\mathcal{F}^{\text{FRoll}}_{\text{layer2-update}}$ via NET.
\end{enumerate}

}\end{functionality}\vspace{1em}

\begin{lemma}
\label{lem:XRoll4}
    For all PPT adversaries $\mathcal{A}$, there exists a PPT simulator $\mathcal{S}_{\text{XRoll-update}}$ such that for all PPT environments $\mathcal{E}$ and all security parameters $k\in\mathbb{N}$,
    \[
    \mathsf{EXEC}^{\mathcal{F}^{\text{FRoll}}_{\text{layer2-join}}}_{\mathcal{S}_{\text{XRoll-join}},\mathcal{E}}(k)
    \ \stackrel{c}{\approx}\
    \mathsf{EXEC}^{\mathcal{F}^{\text{FRoll}}_{\text{layer2-update}}}_{\mathcal{S}_{\text{XRoll-update}},\mathcal{E}}(k).
    \]
\end{lemma}

\begin{proof}
    Fix an arbitrary PPT environment $\mathcal{E}$. The only difference between the two games is that $\mathcal{F}^{\text{FRoll}}_{\text{layer2-update}}$ routes the Update request (from NET) through the subroutine $\mathcal{F}^{\text{FRoll}}_{\text{update}}$. We analyze the three observable components.

    \medskip
    \noindent As defined in both ideal functionalities, the Update request does not directly produce I/O outputs to $\mathcal{E}$; it only modifies $\mathsf{internalState}$. However, like Arbitrum, FRoll requires posting transaction batches, watchtower attestations, and fraud proofs to L1, which are observable by $\mathcal{E}$. We must argue that the same on-chain view and the same evolution of $\mathsf{internalState}$ arise in both executions.

    \medskip
    \noindent\textbf{(1) I/O outputs.} Although the Update request itself produces no I/O output, a divergence in $\mathsf{internalState}$ would cause observable differences in subsequent Read or Settlement outputs. In $\mathcal{P'}^{\text{FRoll}}$, a state update is accepted along one of two paths.

    \emph{Optimistic path.} A regular checkpoint is accepted as a valid state update only after: (i)~the operator publishes the checkpoint transaction $\mathit{TX}_{\mathit{batch}} = (\pcur, \mathit{ContractAddr}, \epsilon, \{\mathit{batch}, \mathit{resultState}\})$ on L1, (ii)~$\mathit{resultState}$ is the correct output of sequentially executing $\mathit{batch}$ from the previous confirmed state, (iii)~no fraud proof is committed within $T_{\text{challenge}}$, and (iv)~each honest sender's transaction in $\mathit{batch}$ has a corresponding entry in $\mathsf{requestQueue}$. If the checkpoint contains an incorrect transition, the honest client acting as verifier publishes a fraud proof, and the checkpoint is invalidated.

    \emph{Fast-finality path.} A fast-finality transaction $\mathit{TX}_{\mathit{fast}}$ is accepted once more than $f{+}1$ valid watchtower certificate transactions $\mathit{TX}_{\mathit{WT}}^{(j)} = (\mathit{TX}_{\mathit{fast}}, \sigma, \sigma_{W_j})$ are committed on L1 for the same $\mathit{TX}_{\mathit{fast}}$. Under the honest-majority watchtower assumption (at most $f$ out of $2f{+}1$ corrupted), at least $f{+}1$ honest watchtowers are available, and the EUF-CMA security of $\mathcal{F}_{\text{sig}}$ prevents the adversary from forging watchtower signatures, so any quorum reaching the threshold contains at least one honest signature on a well-formed $\mathit{TX}_{\mathit{fast}}$.

    These two paths are precisely the alternative checks enforced by $\mathcal{F}^{\text{FRoll}}_{\text{update}}$ (Steps~1--5). Since $\mathcal{S}_{\text{FRoll-update}}$ sends the Update message precisely when an update is confirmed in $\mathcal{P'}^{\text{FRoll}}$ along either path, $\mathcal{F}^{\text{FRoll}}_{\text{update}}$ accepts if and only if the corresponding update was accepted in $\mathcal{P'}^{\text{FRoll}}$. The existence of the challenge period and the unforgeability guaranteed by EUF-CMA prevent $\mathcal{P'}^{\text{FRoll}}$ from advancing any honest client's $\mathsf{internalState}$ on an incorrect transition or a forged transaction; therefore $\mathsf{internalState}$ changes identically across the two executions.

    \medskip
    \noindent\textbf{(2) On-chain transactions.} The operator checkpoint transactions, watchtower certificate transactions, and verifier fraud proofs are generated by $\mathcal{P'}^{\text{FRoll}}$, which runs identically in both simulators. The game hop only interposes $\mathcal{F}^{\text{FRoll}}_{\text{update}}$ as a filter on which updates are reflected in $\mathsf{internalState}$, not on which transactions are published to $\mathcal{F}_{\text{ledger}}$. Since the filter accepts exactly those updates that $\mathcal{P'}^{\text{FRoll}}$ would accept, and the ideal signature functionality guarantees signatures on the same messages are computationally indistinguishable, the on-chain transactions are computationally indistinguishable.

    \medskip
    \noindent\textbf{(3) Adversarial leakage.} In both executions, the simulator extracts the plaintext leakage (under our plaintext-leakage assumption) from the ideal functionality and uses it to drive $\mathcal{P'}^{\text{FRoll}}$, generating leakage to the adversary according to the corruption status. Since the two executions feed $\mathcal{P'}^{\text{FRoll}}$ the same accepted inputs and run the same real-world protocol, $\mathsf{internalState}$ at every honest participant -- including client, verifier, operator, and watchtower roles -- evolves identically across the two executions. The interposition of $\mathcal{F}^{\text{FRoll}}_{\text{update}}$ affects only how update acceptance is gated within the ideal functionality; it does not alter any network-side message exchange or any honest party's internal computation. Consequently, the leaked $\mathsf{internalState}$ and received messages delivered to corrupted parties are computationally indistinguishable across the two executions.

    \medskip
    Conclusively,
    $\mathsf{EXEC}^{\mathcal{F}^{\text{FRoll}}_{\text{layer2-join}}}_{\mathcal{S}_{\text{FRoll-join}}, \mathcal{E}}(k)
    \stackrel{c}{\approx}
    \mathsf{EXEC}^{\mathcal{F}^{\text{FRoll}}_{\text{layer2-update}}}_{\mathcal{S}_{\text{FRoll-update}}, \mathcal{E}}(k)$.
\end{proof}

Next, we extend $\mathcal{F}^{\text{FRoll}}_{\text{layer2-update}}$ with the read subroutine, yielding $\mathcal{F}^{\text{FRoll}}_{\text{layer2-read}} = (\mathcal{F}_{\text{client-read}}, \mathcal{F}_{\text{ledger}} \mid \mathcal{F}^{\text{FRoll}}_{\text{submit}}, \mathcal{F}^{\text{FRoll}}_{\text{join}}, \mathcal{F}^{\text{FRoll}}_{\text{update}}, \mathcal{F}^{\text{FRoll}}_{\text{read}})$.

\vspace{1em}\begin{functionality}{Description of $\mathcal{M}_{\text{client-read}}$ of $\mathcal{F}^{\text{FRoll}}_{\text{layer2-read}}$}{

\textbf{Implemented role(s):} \{client-read\}

\vspace{0.5em}\noindent \noindent \textbf{Main:}

\textbf{recv} \{Submit, $\mathit{request}$\} \textbf{from} I/O:

\begin{enumerate}[itemsep=0.5em]
\item \textbf{send} \{Submit, $\mathit{request}$, $\mathsf{internalState}$\} \textbf{to} $(\pcur, \scur, \mathcal{F}^{\text{FRoll}}_{\text{submit}}:\text{submit})$,
\textbf{wait for} \{Submit, $\mathit{response}$\} s.t. $\mathit{response} \in \{\text{true}, \text{false}\}$;
\item \textbf{if} $\mathit{response} =$ true: $\mathsf{requestQueue}.\text{add}(\mathit{request})$;
\textbf{send} $\mathit{request}$ \textbf{to} $\mathcal{S}$ via NET;
\end{enumerate}

\hrule
\vspace{0.5em}

\textbf{recv} \{Join, $\mathit{Attachment}$\} \textbf{from} NET:

\begin{enumerate}[itemsep=0.5em]
\item \textbf{send} \{Join, $\mathit{Attachment}$, $\mathsf{internalState}$\} \textbf{to} $(\pcur, \scur, \mathcal{F}^{\text{FRoll}}_{\text{join}}:\text{join})$,
\textbf{wait for} \{Join, $\mathit{response}$\} s.t. $\mathit{response} \in \{\text{true}, \text{false}\}$;
\item \textbf{if} $\mathit{response} =$ true: update $\mathsf{internalState}$ according to $\mathit{Attachment}$;
\textbf{reply} \{Join, $s_{\mathit{init}}$\} via I/O;
\end{enumerate}

\hrule
\vspace{0.5em}

\textbf{recv} \{Update, $\mathit{Attachment}$\} \textbf{from} NET:

\begin{enumerate}[itemsep=0.5em]
\item \textbf{send} \{Update, $\mathit{Attachment}$, $\mathsf{internalState}$\} \textbf{to} $(\pcur, \scur, \mathcal{F}^{\text{FRoll}}_{\text{update}}:\text{update})$,
\textbf{wait for} \{Update, $\mathit{response}$, $\mathit{newState}$, $\mathit{executedReq}$\};
\item \textbf{if} $\mathit{response} =$ true: update $\mathsf{internalState}$ with $\mathit{newState}$ and $\mathit{executedReq}$;
\end{enumerate}

\hrule
\vspace{0.5em}

\textbf{recv} \{Read\} \textbf{from} I/O:

\begin{enumerate}[itemsep=0.5em]
\item \textbf{send} \{Read, $\mathsf{internalState}$\} \textbf{to} $(\pcur, \scur, \mathcal{F}^{\text{FRoll}}_{\text{read}}:\text{read})$,
\textbf{wait for} $\mathit{ReadResult}$;
\item \textbf{if} $\mathit{ReadResult} \neq \bot$: \textbf{reply} \{Read, $\mathit{ReadResult}$\} via I/O;
\end{enumerate}

\hrule
\vspace{0.5em}

\textbf{recv} any request \textbf{from} I/O:
\begin{enumerate}[itemsep=0.5em]
    \item Forward request to $\mathcal{S}$ through NET;
\end{enumerate}

\hrule
\vspace{0.5em}

\textbf{recv} any message \textbf{from} NET:

\begin{enumerate}
    \item Output the message to $\mathcal{E}$ through I/O;
\end{enumerate}

}\end{functionality}\vspace{.5em}

\begin{functionality}{Description of simulator $\mathcal{S}_{\text{XRoll-read}}$}{
The simulator $\mathcal{S}_{\text{XRoll-read}}$ behaves the same as $\mathcal{S}_{\text{XRoll-update}}$.
}\end{functionality}\vspace{1em}

\begin{lemma}
\label{lem:XRoll5}
    For all PPT adversaries $\mathcal{A}$, there exists a PPT simulator $\mathcal{S}_{\text{XRoll-read}}$ such that for all PPT environments $\mathcal{E}$ and all security parameters $k\in\mathbb{N}$,
    \[
    \mathsf{EXEC}^{\mathcal{F}^{\text{FRoll}}_{\text{layer2-update}}}_{\mathcal{S}_{\text{XRoll-update}},\mathcal{E}}(k)
    \ \stackrel{c}{\approx}\
    \mathsf{EXEC}^{\mathcal{F}^{\text{FRoll}}_{\text{layer2-read}}}_{\mathcal{S}_{\text{XRoll-read}},\mathcal{E}}(k).
    \]
\end{lemma}

\begin{proof}
    Fix an arbitrary PPT environment $\mathcal{E}$. The only difference is that in $\mathsf{EXEC}^{\mathcal{F}^{\text{FRoll}}_{\text{layer2-read}}}_{\mathcal{S}_{\text{FRoll-read}},\mathcal{E}}(k)$ the read result is decided by $\mathcal{F}^{\text{FRoll}}_{\text{read}}$ based on $\mathsf{internalState}$, whereas in $\mathsf{EXEC}^{\mathcal{F}^{\text{FRoll}}_{\text{layer2-update}}}_{\mathcal{S}_{\text{FRoll-update}},\mathcal{E}}(k)$ read results are produced entirely by the simulator. We analyze the three observable components.

    \medskip
    \noindent\textbf{(1) I/O outputs.} By Lemma~\ref{lem:XRoll4}, the L1 ledger state and $\mathsf{internalState}$ are identical in both executions. In both executions, read results are reconstructed from $\mathcal{F}_{\text{ledger}}$ using the same reconstruction rule: the intersection of $\mathit{L1ReadResult}$ and $\mathsf{internalState}$, with the L2 view derived from the latest L1-final checkpoint surviving $T_{\text{challenge}}$ and any fast-finality transaction backed by $\geq f{+}1$ watchtower attestations on L1. Since both executions rely on the same ledger and the same $\mathsf{internalState}$, the read outputs delivered to $\mathcal{E}$ are distributionally identical.

    \medskip
    \noindent\textbf{(2) On-chain transactions.} Read requests do not publish any transactions to $\mathcal{F}_{\text{ledger}}$. Hence on-chain transactions are identical in both games.

    \medskip
    \noindent\textbf{(3) Adversarial leakage.} The game hop only interposes $\mathcal{F}^{\text{FRoll}}_{\text{read}}$ on the Read request, which is an I/O-facing operation based on ideal fcuntionalitie's $\mathsf{internalState}$ that produces no network messages to corrupted parties. Both simulators maintain synchronized corruption status across all role types (client, verifier, operator, watchtower) and run the same protocol logic inside $\mathcal{P'}^{\text{FRoll}}$. The leakage is computationally indistinguishable.

    \medskip
    Conclusively,
    $\mathsf{EXEC}^{\mathcal{F}^{\text{FRoll}}_{\text{layer2-update}}}_{\mathcal{S}_{\text{FRoll-update}}, \mathcal{E}}(k)
    \stackrel{c}{\approx}
    \mathsf{EXEC}^{\mathcal{F}^{\text{FRoll}}_{\text{layer2-read}}}_{\mathcal{S}_{\text{FRoll-read}}, \mathcal{E}}(k)$.
\end{proof}

Next, we extend $\mathcal{F}^{\text{FRoll}}_{\text{layer2-read}}$ with the settlement subroutine, yielding $\mathcal{F}^{\text{FRoll}}_{\text{layer2-settlement}} = (\mathcal{F}_{\text{client-settlement}}, \mathcal{F}_{\text{ledger}} \mid \mathcal{F}^{\text{FRoll}}_{\text{submit}}, \mathcal{F}^{\text{FRoll}}_{\text{join}},\\ \mathcal{F}^{\text{FRoll}}_{\text{update}}, \mathcal{F}^{\text{FRoll}}_{\text{read}}, \mathcal{F}^{\text{FRoll}}_{\text{settlement}})$.

\vspace{1em}\begin{functionality}{Description of $\mathcal{M}_{\text{client-settlement}}$ of $\mathcal{F}^{\text{FRoll}}_{\text{layer2-settlement}}$}{

\textbf{Implemented role(s):} \{client-settlement\}

\vspace{0.5em}\noindent \noindent \textbf{Main:}

\textbf{recv} \{Submit, $\mathit{request}$\} \textbf{from} I/O:

\begin{enumerate}[itemsep=0.5em]
\item \textbf{send} \{Submit, $\mathit{request}$, $\mathsf{internalState}$\} \textbf{to} $(\pcur, \scur, \mathcal{F}^{\text{FRoll}}_{\text{submit}}:\text{submit})$,
\textbf{wait for} \{Submit, $\mathit{response}$\} s.t. $\mathit{response} \in \{\text{true}, \text{false}\}$;
\item \textbf{if} $\mathit{response} =$ true: $\mathsf{requestQueue}.\text{add}(\mathit{request})$;
\textbf{send} $\mathit{request}$ \textbf{to} $\mathcal{S}$ via NET;
\end{enumerate}

\hrule
\vspace{0.5em}

\textbf{recv} \{Join, $\mathit{Attachment}$\} \textbf{from} NET:

\begin{enumerate}[itemsep=0.5em]
\item \textbf{send} \{Join, $\mathit{Attachment}$, $\mathsf{internalState}$\} \textbf{to} $(\pcur, \scur, \mathcal{F}^{\text{FRoll}}_{\text{join}}:\text{join})$,
\textbf{wait for} \{Join, $\mathit{response}$\} s.t. $\mathit{response} \in \{\text{true}, \text{false}\}$;
\item \textbf{if} $\mathit{response} =$ true: update $\mathsf{internalState}$ according to $\mathit{Attachment}$;
\textbf{reply} \{Join, $s_{\mathit{init}}$\} via I/O;
\end{enumerate}

\hrule
\vspace{0.5em}

\textbf{recv} \{Update, $\mathit{Attachment}$\} \textbf{from} NET:

\begin{enumerate}[itemsep=0.5em]
\item \textbf{send} \{Update, $\mathit{Attachment}$, $\mathsf{internalState}$\} \textbf{to} $(\pcur, \scur, \mathcal{F}^{\text{FRoll}}_{\text{update}}:\text{update})$,
\textbf{wait for} \{Update, $\mathit{response}$, $\mathit{newState}$, $\mathit{executedReq}$\};
\item \textbf{if} $\mathit{response} =$ true: update $\mathsf{internalState}$ with $\mathit{newState}$ and $\mathit{executedReq}$;
\end{enumerate}

\hrule
\vspace{0.5em}

\textbf{recv} \{Read\} \textbf{from} I/O:

\begin{enumerate}[itemsep=0.5em]
\item \textbf{send} \{Read, $\mathsf{internalState}$\} \textbf{to} $(\pcur, \scur, \mathcal{F}^{\text{FRoll}}_{\text{read}}:\text{read})$,
\textbf{wait for} $\mathit{ReadResult}$;
\item \textbf{if} $\mathit{ReadResult} \neq \bot$: \textbf{reply} \{Read, $\mathit{ReadResult}$\} via I/O;
\end{enumerate}

\hrule
\vspace{0.5em}

\textbf{recv} \{Settlement, $\mathit{Attachment}$\} \textbf{from} NET:

\begin{enumerate}[itemsep=0.5em]
\item \textbf{send} \{Settlement, $\mathit{Attachment}$, $\mathsf{internalState}$\} \textbf{to} $(\pcur, \scur, \mathcal{F}^{\text{FRoll}}_{\text{settlement}}:\text{settlement})$,
\textbf{wait for} \{Settlement, $\mathit{response}$, $s_{\mathit{settle}}$\} s.t. $\mathit{response} \in \{\text{true}, \text{false}\}$;
\item \textbf{if} $\mathit{response} =$ true: update $\mathsf{internalState}$;
\textbf{reply} \{Settlement, $s_{\mathit{settle}}$\} via I/O;
\end{enumerate}

\hrule
\vspace{0.5em}

\textbf{recv} any request \textbf{from} I/O:
\begin{enumerate}[itemsep=0.5em]
    \item Forward request to $\mathcal{S}$ through NET;
\end{enumerate}

\hrule
\vspace{0.5em}

\textbf{recv} any message \textbf{from} NET:

\begin{enumerate}
    \item Output the message to $\mathcal{E}$ through I/O;
\end{enumerate}

}\end{functionality}\vspace{.5em}

\begin{functionality}{Description of simulator $\mathcal{S}_{\text{FRoll-settlement}}$}{

The simulator $\mathcal{S}_{\text{FRoll-settlement}}$ behaves identically to $\mathcal{S}_{\text{FRoll-read}}$, except for the following additional behavior upon detecting a completed settlement:

\vspace{0.5em}
\textbf{Settlement interaction with $\mathcal{F}^{\text{FRoll}}_{\text{layer2-settlement}}$:}

\begin{enumerate}[itemsep=0.5em]

\item $\mathcal{S}_{\text{FRoll-settlement}}$ monitors $\mathcal{P'}^{\text{FRoll}}$. When it detects that a client entity is about to produce a successful settlement output inside the simulation (either via \texttt{collaborate} or \texttt{escape-hatch}), it intercepts this output.

\item $\mathcal{S}_{\text{FRoll-settlement}}$ prepares $\mathit{Attachment}$ by extracting:
\begin{itemize}
    \item $\mathit{TX}_{\mathit{peg\text{-}out}}$: the peg-out transaction committed on $\mathcal{F}_{\text{ledger}}$ in $\mathcal{P'}^{\text{FRoll}}$;
\end{itemize}

\item $\mathcal{S}_{\text{FRoll-settlement}}$ sends $\{\text{Settlement}, \mathit{Attachment}\}$ to $\mathcal{F}^{\text{FRoll}}_{\text{layer2-settlement}}$ via NET.

\end{enumerate}

}\end{functionality}\vspace{1em}

\begin{lemma}
\label{lem:XRoll6}
    For all PPT adversaries $\mathcal{A}$, there exists a PPT simulator $\mathcal{S}_{\text{XRoll-settlement}}$ such that for all PPT environments $\mathcal{E}$ and all security parameters $k\in\mathbb{N}$,
    \[
    \mathsf{EXEC}^{\mathcal{F}^{\text{FRoll}}_{\text{layer2-read}}}_{\mathcal{S}_{\text{XRoll-read}},\mathcal{E}}(k)
    \ \stackrel{c}{\approx}\
    \mathsf{EXEC}^{\mathcal{F}^{\text{FRoll}}_{\text{layer2-settlement}}}_{\mathcal{S}_{\text{XRoll-settlement}},\mathcal{E}}(k).
    \]
\end{lemma}

\begin{proof}
    Fix an arbitrary PPT environment $\mathcal{E}$. The only difference between the two executions is that $\mathcal{F}^{\text{FRoll}}_{\text{layer2-settlement}}$ routes the Settlement request (from NET) through $\mathcal{F}^{\text{FRoll}}_{\text{settlement}}$. We analyze the three observable components, noting that settlement in FRoll can proceed via two paths: \texttt{collaborate} (through the operator, with the peg-out enclosed in an operator checkpoint) or \texttt{escape-hatch} (with the peg-out submitted directly to L1 by the client).

    \medskip
    \noindent\textbf{(1) I/O outputs.} The I/O output affected is $\{\text{Settlement}, \mathsf{onchainState}\}$.

    \emph{Case~1: Successful settlement (either path).} In $\mathcal{P'}^{\text{FRoll}}$, a settlement succeeds when $\mathit{TX}_{\mathit{peg\text{-}out}}$ is committed on L1 without a fraud proof during $T_{\text{challenge}}$. Along the \texttt{collaborate} path, $\mathit{TX}_{\mathit{peg\text{-}out}}$ is enclosed in an operator checkpoint that itself survives $T_{\text{challenge}}$; along the \texttt{escape-hatch} path, $\mathit{TX}_{\mathit{peg\text{-}out}}$ is published standalone by the client and survives $T_{\text{challenge}}$ in its own right. The simulator $\mathcal{S}_{\text{FRoll-settlement}}$ triggers $\mathcal{F}^{\text{FRoll}}_{\text{settlement}}$ precisely when this occurs along either path. The checks in $\mathcal{F}^{\text{FRoll}}_{\text{settlement}}$: (1)~$\mathit{TX}_{\mathit{peg\text{-}out}}$ carries the latest state from $\mathsf{stateList}$, (2)~the settling client is registered in $\mathsf{identities}$, (3)~$\mathit{TX}_{\mathit{peg\text{-}out}}$ is committed on L1, and (4)~the post-settlement state is recorded on L1, are exactly the conditions that hold when the simulated settlement succeeds. Under the assumption of at least one honest verifier (who publishes a fraud proof against any peg-out inconsistent with the latest state within $T_{\text{challenge}}$), L1 liveness and safety, and the EUF-CMA security of $\mathcal{F}_{\text{sig}}$ (which prevents the adversary from forging the client's signature on a peg-out), $\mathcal{F}^{\text{FRoll}}_{\text{settlement}}$ accepts whenever $\mathcal{P'}^{\text{FRoll}}$ completes. Hence the output is identical.

    \emph{Case~2: Failed settlement.} If the operator does not publish the peg-out (under \texttt{collaborate}), or the client fails to publish $\mathit{TX}_{\mathit{peg\text{-}out}}$ to L1 (under \texttt{escape-hatch}), or a fraud proof invalidates the peg-out within $T_{\text{challenge}}$, $\mathcal{P'}^{\text{FRoll}}$ does not complete the settlement. Correspondingly, $\mathcal{S}_{\text{FRoll-settlement}}$ does not trigger $\mathcal{F}^{\text{FRoll}}_{\text{settlement}}$, so no output is produced in either execution.

    \medskip
    \noindent\textbf{(2) On-chain transactions.} The peg-out transaction published to $\mathcal{F}_{\text{ledger}}$ (whether enclosed in an operator checkpoint along the \texttt{collaborate} path or published standalone along the \texttt{escape-hatch} path) is generated by $\mathcal{P'}^{\text{FRoll}}$, which runs identically in both simulators. The game hop only affects when the ideal functionality generates the I/O output, not which transactions are published. By the EUF-CMA security of $\mathcal{F}_{\text{sig}}$, the distributions of on-chain transactions are computationally indistinguishable.

    \medskip
    \noindent\textbf{(3) Adversarial leakage.} Both simulators maintain synchronized corruption status across all role types (client, verifier, operator, watchtower) and execute the same protocol logic inside $\mathcal{P'}^{\text{FRoll}}$. The game hop interposes $\mathcal{F}^{\text{FRoll}}_{\text{settlement}}$ between the simulator and the I/O output but does not alter the simulator's internal execution or its interaction with corrupted parties. The leakage is identical.

    \medskip
    Conclusively,
    $\mathsf{EXEC}^{\mathcal{F}^{\text{FRoll}}_{\text{layer2-read}}}_{\mathcal{S}_{\text{FRoll-read}}, \mathcal{E}}(k)\stackrel{c}{\approx}\mathsf{EXEC}^{\mathcal{F}^{\text{FRoll}}_{\text{layer2-settlement}}}_{\mathcal{S}_{\text{FRoll-settlement}}, \mathcal{E}}(k)$.
\end{proof}

As a final step, we extend $\mathcal{F}^{\text{FRoll}}_{\text{layer2-settlement}}$ with the subroutine $\mathcal{F}^{\text{FRoll}}_{\text{updRnd}}$ to reach $\mathcal{F}^{\text{FRoll}}_{\text{layer2}} = (\mathcal{F}_{\text{client}}, \mathcal{F}_{\text{ledger}} \mid \mathcal{F}^{\text{FRoll}}_{\text{submit}}, \mathcal{F}^{\text{FRoll}}_{\text{join}}, \mathcal{F}^{\text{FRoll}}_{\text{update}},\\ \mathcal{F}^{\text{FRoll}}_{\text{read}}, \mathcal{F}^{\text{FRoll}}_{\text{settlement}}, \mathcal{F}^{\text{FRoll}}_{\text{updRnd}})$.

\vspace{1em}\begin{functionality}{Description of $\mathcal{M}_{\text{client}}$ of $\mathcal{F}^{\text{FRoll}}_{\text{layer2}}$}{

\textbf{Implemented role(s):} \{client\}

\noindent \textbf{Main:}

\vspace{0.5em}

\textbf{recv} \{Submit, $\mathit{request}$\} \textbf{from} I/O:
\begin{enumerate}[itemsep=0.5em]
\item \textbf{send} \{Submit, $\mathit{request}$, $\mathsf{internalState}$\} \textbf{to} $(\pcur, \scur, \mathcal{F}^{\text{FRoll}}_{\text{submit}}:\text{submit})$,
\textbf{wait for} \{Submit, $\mathit{response}$\} s.t. $\mathit{response} \in \{\text{true}, \text{false}\}$;
\item \textbf{if} $\mathit{response} =$ true: $\mathsf{requestQueue}.\text{add}(\mathit{request})$;
\textbf{send} $\mathit{request}$ \textbf{to} $\mathcal{S}$ via NET;
\end{enumerate}

\hrule\vspace{0.5em}

\textbf{recv} \{Join, $\mathit{Attachment}$\} \textbf{from} NET:
\begin{enumerate}[itemsep=0.5em]
\item \textbf{send} \{Join, $\mathit{Attachment}$, $\mathsf{internalState}$\} \textbf{to} $(\pcur, \scur, \mathcal{F}^{\text{FRoll}}_{\text{join}}:\text{join})$,
\textbf{wait for} \{Join, $\mathit{response}$\} s.t. $\mathit{response} \in \{\text{true}, \text{false}\}$;
\item \textbf{if} $\mathit{response} =$ true: update $\mathsf{internalState}$ according to $\mathit{Attachment}$;
\textbf{reply} \{Join, $s_{\mathit{init}}$\} via I/O;
\end{enumerate}

\hrule\vspace{0.5em}

\textbf{recv} \{Update, $\mathit{Attachment}$\} \textbf{from} NET:
\begin{enumerate}[itemsep=0.5em]
\item \textbf{send} \{Update, $\mathit{Attachment}$, $\mathsf{internalState}$\} \textbf{to} $(\pcur, \scur, \mathcal{F}^{\text{FRoll}}_{\text{update}}:\text{update})$,
\textbf{wait for} \{Update, $\mathit{response}$, $\mathit{newState}$, $\mathit{executedReq}$\};
\item \textbf{if} $\mathit{response} =$ true: update $\mathsf{internalState}$ with $\mathit{newState}$ and $\mathit{executedReq}$;
\end{enumerate}

\hrule\vspace{0.5em}

\textbf{recv} \{Read\} \textbf{from} I/O:
\begin{enumerate}[itemsep=0.5em]
\item \textbf{send} \{Read, $\mathsf{internalState}$\} \textbf{to} $(\pcur, \scur, \mathcal{F}^{\text{FRoll}}_{\text{read}}:\text{read})$,
\textbf{wait for} $\mathit{ReadResult}$;
\item \textbf{if} $\mathit{ReadResult} \neq \bot$: \textbf{reply} \{Read, $\mathit{ReadResult}$\} via I/O;
\end{enumerate}

\hrule\vspace{0.5em}

\textbf{recv} \{Settlement, $\mathit{Attachment}$\} \textbf{from} NET:
\begin{enumerate}[itemsep=0.5em]
\item \textbf{send} \{Settlement, $\mathit{Attachment}$, $\mathsf{internalState}$\} \textbf{to} $(\pcur, \scur, \mathcal{F}^{\text{FRoll}}_{\text{settlement}}:\\\text{settlement})$,
\textbf{wait for} \{Settlement, $\mathit{response}$, $s_{\mathit{settle}}$\} s.t. $\mathit{response} \in \{\text{true}, \text{false}\}$;
\item \textbf{if} $\mathit{response} =$ true: update $\mathsf{internalState}$;
\textbf{reply} \{Settlement, $s_{\mathit{settle}}$\} via I/O;
\end{enumerate}

\hrule\vspace{0.5em}

\textbf{recv} \{UpdateRound\} \textbf{from} NET:
\begin{enumerate}[itemsep=0.5em]
\item \textbf{send} \{UpdateRound, $\mathsf{internalState}$\} \textbf{to} $(\pcur, \scur, \mathcal{F}^{\text{FRoll}}_{\text{updRnd}}:\text{updRnd})$,
\textbf{wait for} \{UpdateRound, $\mathit{response}$\} s.t. $\mathit{response} \in \{\text{true}, \text{false}\}$;
\item \textbf{if} $\mathit{response} =$ true: $\mathsf{round} \leftarrow \mathsf{round} + 1$;
\textbf{reply} \{UpdateRound, $\mathit{response}$\} via NET;
\end{enumerate}

\hrule\vspace{0.5em}

\textbf{recv} \{GetCurRound\} \textbf{from} I/O or NET:
\begin{enumerate}[itemsep=0.5em]
    \item \textbf{reply} \{GetCurRound, $\mathsf{round}$\};
\end{enumerate}

\hrule\vspace{0.5em}

\textbf{recv} \{ReadL1\} \textbf{from} I/O or NET:
\begin{enumerate}[itemsep=0.5em]
    \item \textbf{send} \{Read\} \textbf{to} $(\pcur, \scur, \mathcal{F}_{\text{ledger}}: \text{client}_{\text{L1}})$;
    \textbf{wait for} $\mathit{L1ReadResult}$;
    \item \textbf{reply} \{ReadL1, $\mathit{L1ReadResult}$\} via I/O;
\end{enumerate}

}\end{functionality}\vspace{.5em}

\begin{functionality}{Description of simulator $\mathcal{S}_{\text{XRoll-updRnd}}$}{

The simulator $\mathcal{S}_{\text{XRoll-updRnd}}$ behaves identically to $\mathcal{S}_{\text{XRoll-settlement}}$, except for the following additional behavior when handling round-update requests:

\vspace{0.5em}
\textbf{Round update interaction with $\mathcal{F}^{\text{FRoll}}_{\text{layer2}}$:}

\begin{enumerate}[itemsep=0.5em]
\item Whenever $\mathcal{S}_{\text{XRoll-updRnd}}$ receives a round-update instruction from $\mathcal{A}$ (i.e., $\mathcal{A}$ advances the clock in the simulated $\mathcal{P'}^{\text{FRoll}}$), $\mathcal{S}_{\text{XRoll-updRnd}}$ simultaneously sends $\{\text{UpdateRound}\}$ to $\mathcal{F}^{\text{FRoll}}_{\text{layer2}}$ via NET.
\end{enumerate}

}\end{functionality}\vspace{1em}

\begin{lemma}
\label{lem:XRoll7}
    For all PPT adversaries $\mathcal{A}$, there exists a PPT simulator $\mathcal{S}_{\text{XRoll-updRnd}}$ such that for all PPT environments $\mathcal{E}$ and all security parameters $k\in\mathbb{N}$,
    \[
    \mathsf{EXEC}^{\mathcal{F}^{\text{FRoll}}_{\text{layer2-settlement}}}_{\mathcal{S}_{\text{XRoll-settlement}},\mathcal{E}}(k)
    \ \stackrel{c}{\approx}\
    \mathsf{EXEC}^{\mathcal{F}^{\text{FRoll}}_{\text{layer2}}}_{\mathcal{S}_{\text{XRoll-updRnd}},\mathcal{E}}(k).
    \]
\end{lemma}

\begin{proof}
    Fix an arbitrary PPT environment $\mathcal{E}$. The only difference between the two games is that $\mathcal{F}^{\text{FRoll}}_{\text{layer2}}$ routes the UpdateRound request (from NET) through $\mathcal{F}^{\text{FRoll}}_{\text{updRnd}}$. We analyze the three observable components.

    \medskip
    \noindent\textbf{(1) I/O outputs.} The only I/O output affected is $\{\text{GetCurRound}, \mathsf{round}\}$. In $\mathsf{EXEC}^{\mathcal{F}^{\text{FRoll}}_{\text{layer2-settlement}}}_{\mathcal{S}_{\text{FRoll-settlement}},\mathcal{E}}(k)$, $\mathcal{S}_{\text{FRoll-settlement}}$ maintains the round counter inside the simulated $\mathcal{P'}^{\text{FRoll}}$, whose clock is advanced by $\mathcal{A}$. In $\mathsf{EXEC}^{\mathcal{F}^{\text{FRoll}}_{\text{layer2}}}_{\mathcal{S}_{\text{FRoll-updRnd}},\mathcal{E}}(k)$, $\mathcal{S}_{\text{FRoll-updRnd}}$ additionally forwards each round-update request from $\mathcal{A}$ to $\mathcal{F}^{\text{FRoll}}_{\text{updRnd}}$, which enforces three liveness predicates over $\{\mathit{TX}\}_{\text{L1}}$ before accepting: each pending self-submitted regular transaction from an honest client must be L1-final within $T_{L_1} + T_{\text{challenge}}$, each pending self-submitted fast-finality transaction must be L1-final within $T_{L_1}$ (bypassing the challenge period via the watchtower quorum), and each pending escape-hatch settlement must be L1-final within $T_{L_1} + T_{\text{challenge}}$. In $\mathcal{P'}^{\text{FRoll}}$, the operator is triggered by $\mathcal{F}^{\text{FRoll}}_{\text{com}}$ to publish a checkpoint transaction periodically via the \{UpdateRequest\} request, and watchtowers publish their attestations on observing fast-finality transactions on L1. As long as $\mathcal{F}_{\text{ledger}}$ guarantees L1 liveness, the honest verifier publishes fraud proofs against invalid checkpoints within $T_{\text{challenge}}$, and the honest-majority watchtower committee (at most $f$ out of $2f{+}1$ corrupted) accumulates at least $f{+}1$ attestations on each fast-finality transaction within $T_{L_1}$, all three liveness predicates hold at the same rounds in which $\mathcal{P'}^{\text{FRoll}}$ would advance its clock. Therefore the round counters evolve identically across the two executions. All other I/O outputs (Join, Read, Settlement) are unaffected by this game hop.

    \medskip
    \noindent\textbf{(2) On-chain transactions.} The UpdateRound request itself does not publish any new transactions to $\mathcal{F}_{\text{ledger}}$; it only reads L1 to verify the three liveness predicates. The checkpoint publications, watchtower attestation transactions, and any verifier fraud proofs are generated by the operator's \{UpdateRequest\} request, watchtowers' \{CheckFastL1\} request, and the verifier's \{UpdateCheck\} request in $\mathcal{P'}^{\text{FRoll}}$ respectively, all of which run identically in both simulators. Hence on-chain transactions are identical in both games.

    \medskip
    \noindent\textbf{(3) Adversarial leakage.} The game hop only interposes $\mathcal{F}^{\text{FRoll}}_{\text{updRnd}}$ on the UpdateRound request. Since $\mathcal{F}^{\text{FRoll}}_{\text{updRnd}}$ only reads L1 and does not generate any network messages to corrupted parties, the simulator's internal execution of $\mathcal{P'}^{\text{FRoll}}$ and its interaction with corrupted parties (across all role types: client, verifier, operator, watchtower) are unaffected. The leakage delivered to $\mathcal{A}$ is computationally indistinguishable.

    \medskip
    Conclusively,
    $\mathsf{EXEC}^{\mathcal{F}^{\text{FRoll}}_{\text{layer2-settlement}}}_{\mathcal{S}_{\text{FRoll-settlement}}, \mathcal{E}}(k)\stackrel{c}{\approx}\mathsf{EXEC}^{\mathcal{F}^{\text{FRoll}}_{\text{layer2}}}_{\mathcal{S}_{\text{FRoll-updRnd}}, \mathcal{E}}(k)$.
\end{proof}

\ThmrealizeCross*
\begin{proof}
    Let $\mathcal{A}$ be any PPT adversary and let $\mathcal{E}$ be any PPT environment. By Lemmas~\ref{lem:XRoll1}--\ref{lem:XRoll7}, the sequence of game hops yields a chain of computationally indistinguishable execution ensembles, each adjacent pair differing only in how one subroutine is implemented:
    \[
    \mathsf{EXEC}^{\mathcal{P}^{\text{FRoll}}}_{\mathcal{A},\mathcal{E}}
    \ \stackrel{c}{\approx}\ 
    \mathsf{EXEC}^{\mathcal{F}^{\text{FRoll}}_{\text{dummy}}}_{\mathcal{S}_{\text{FRoll}},\mathcal{E}}
    \ \stackrel{c}{\approx}\ 
    \cdots
    \ \stackrel{c}{\approx}\ 
    \mathsf{EXEC}^{\mathcal{F}^{\text{FRoll}}_{\text{layer2}}}_{\mathcal{S}_{\text{XRoll-updRnd}},\mathcal{E}}.
    \]
    By transitivity of computational indistinguishability, we obtain
    $
    \mathsf{EXEC}^{\mathcal{P}^{\text{FRoll}}}_{\mathcal{A},\mathcal{E}}
    \stackrel{c}{\approx}
    \mathsf{EXEC}^{\mathcal{F}^{\text{FRoll}}_{\text{layer2}}}_{\mathcal{S}_{\text{XRoll-updRnd}},\mathcal{E}}$, which proves that $\mathcal{P}^{\text{FRoll}}$ iUC-realizes $\mathcal{F}^{\text{FRoll}}_{\text{layer2}}$.
\end{proof}
\end{document}